\definecolor{pku-red}{RGB}{139,0,18}
\renewcommand\url[1]{#1}
\pgfplotsset{compat=newest}
\newcommand{\wt}{\widetilde}
\newcommand{\calA}{\mathcal{A}}
\newcommand{\calG}{\mathcal{G}}
\newcommand{\calM}{\mathcal{M}}
\newcommand{\calN}{\mathcal{N}}
\newcommand{\va}{\mathbf{a}}
\newcommand{\vb}{\mathbf{b}}
\newcommand{\ee}{\mathbf{e}}
\newcommand{\xx}{\mathbf{x}}
\newcommand{\yy}{\mathbf{y}}
\newcommand{\zz}{\mathbf{z}}
\newcommand{\bv}{\mathbf{v}}
\renewcommand{\gg}{\mathbf{g}}
\newtheorem{proposition}{Proposition}[section]
\newtheorem{lemma}[proposition]{Lemma}
\newtheorem{theorem}[proposition]{Theorem}
\newtheorem*{theorem*}{Theorem}
\newtheorem{observation}[proposition]{Observation}
\newtheorem{definition}{Definition}
\newtheorem{claim}{Claim}
\newtheorem*{maintheorem}{Main Theorem}
\DeclareMathOperator{\closure}{cl}
\DeclareMathOperator{\BR}{BR}
\DeclareMathOperator*{\argmax}{argmax}
\DeclareMathOperator*{\argmin}{argmin}
\DeclareMathOperator{\rank}{rank}
\newcommand{\rankp}[1]{\rank\left(#1\right)}
\newcommand{\ER}{\textnormal{ER}}
\newcommand{\SSE}{\textnormal{SSE}}
\newcommand{\AER}{\calA_{\ER}}
\newcommand{\ASSE}{\calA_{\SSE}}
\newcommand{\ots}{{\{1,2\}}}
\Crefname{equation}{Eq.}{Eqs.}
\newcommand{\detailsprovidedin}[1]{[cf. #1]}
\newmdenv[skipabove=2mm, skipbelow=2mm, backgroundcolor=black!10]{boxedtext*}
\newmdenv[skipabove=2mm, skipbelow=2mm, rightline=true, leftline=true]{boxedtext}
\newcommand\NoCaseChange[1]{#1}
\newcommand\email[1]{#1}
\title{Learning to Manipulate a Commitment Optimizer}
\author{Yurong Chen \\Peking University \\\email{chenyurong@pku.edu.cn} \and  Xiaotie Deng \\Peking University \\ \email{xiaotie@pku.edu.cn} \and Jiarui Gan$^*$\\ University of Oxford \\\email{jiarui.gan@cs.ox.ac.uk} \and Yuhao Li \\Columbia University \\ \email{yuhaoli@cs.columbia.edu}}
\date{}
\begin{document}

\maketitle

{\let\thefootnote\relax\footnotetext{$^*$Corresponding author.}}

\begin{abstract}
It is shown in recent studies that in a Stackelberg game the follower can manipulate the leader by deviating from their true best-response behavior. Such manipulations are computationally tractable and can be highly beneficial for the follower. Meanwhile, they may result in significant payoff losses for the leader, sometimes completely defeating their first-mover advantage. A warning to commitment optimizers, the risk these findings indicate appears to be alleviated to some extent by a strict information advantage the manipulations rely on. That is, the follower knows the full information about both players' payoffs whereas the leader only knows their own payoffs. In this paper, we study the manipulation problem with this information advantage relaxed. We consider the scenario where the follower is not given any information about the leader's payoffs to begin with but has to {\em learn} to manipulate by interacting with the leader. The follower can gather necessary information by querying the leader's optimal commitments against contrived best-response behaviors. Our results indicate that the information advantage is {\em not} entirely indispensable to the follower's manipulations: the follower can learn the optimal way to manipulate in polynomial time with polynomially many queries of the leader's optimal commitment.
\end{abstract}

\section{Introduction}

Strategy commitment is a useful tactic in many game-theoretic scenarios.
In anticipation that the other player, i.e., the follower, will respond optimally, a commitment optimizer, i.e., the leader, picks a strategy that maximizes their own payoff.
The interaction between the leader and the follower is often modeled and known as a {\em Stackelberg game} \citep{von1934marktform,von2010leadership}. The equilibrium of the game, called the {\em Stackelberg equilibrium}, captures the leader's optimal commitment.  
It is well-known that the leader's first-mover position comes with a payoff benefit: an optimal commitment always yields a higher payoff (often strictly higher) for the leader than what they obtain in any Nash equilibrium of the same game \citep{von2010leadership}. 

Yet, this first-mover advantage is not without a caveat.
The computation of an optimal commitment relies crucially on the follower's payoff information. This strong reliance offers the follower a means to manipulate the leader's commitment.
A series of recent studies considered this issue and investigated how a follower can induce an equilibrium different from the original one by misreporting their payoffs \citep{gan2019imitative, gan2019manipulating, nguyen2019imitative, birmpas2021optimally, chen2022optimal}. It is shown that finding an optimal manipulation is computationally tractable and there is a large space of outcomes that are realizable through manipulations. 
In the worst case, a manipulation may completely defeat the first-mover advantage of the leader and cause a significant payoff loss.\footnote{\citet{birmpas2021optimally} showed that any outcome can be induced as a Stackleberg equilibrium as long as it offers the leader at least the maximin value of the game---a lower bound of the leader's Nash payoff, in contrast to the upper bound promised by an optimal commitment when the follower behaves truthfully.}
Moreover, even if the leader is well aware of the possibility of such manipulations, they face an NP-hard problem to compute an optimal mechanism to counteract \citep{gan2019imitative}.

A warning to commitment optimizers, the risk these findings indicate may appear to be alleviated to some extent by a strict information advantage required by the manipulations. That is, the follower knows the full information about both players' payoffs, whereas the leader only knows their own.
This may not be the case in practice. 
In this paper, we consider a setting with this information asymmetry relaxed, where neither the leader nor the follower knows any payoff information of the opponent to begin with. 
The follower has to {\em learn} to manipulate by interacting with the leader and can gather necessary information by querying the leader's optimal commitments against contrived payoff functions. We are interested in understanding whether the follower can {\em efficiently} learn to solve the optimal manipulation problem. 

More specifically, we assume that the follower has query access to the leader's optimal commitment: there is an equilibrium oracle which answers whether a certain Stackelberg equilibrium can be induced by a given (fake) payoff function of the follower. 
The query access resembles an information exchange process during the course of interaction: the follower (mis)reports a payoff function to the leader, and the leader reacts by committing optimally with respect to the report; the optimal commitment is then observed by the follower. 
Payoff reporting may take the form of direct information exchange. 
For example, in online platforms, users (follower) set up their profiles and the platform (leader) offers personalized recommendations or pricing based on the information provided.\footnote{\url{https://medium.com/swlh/why-is-your-friend-getting-a-cheaper-uber-fare-than-you-ai-and-the-new-frontier-in-dynamic-pricing-2b7d908deed0}}
Additionally, it can also be realized via another layer of active learning from the leader's side, e.g., in green security games, the defender (leader) learns the optimal patrolling strategy by interacting with poachers (followers) \citep{fang2016green}.  

Our main result, presented as follows, is an affirmative answer to the question asked above.

\begin{maintheorem}[informal]
With access to an equilibrium oracle, a follower can learn an optimal payoff function to misreport in \textbf{polynomial time} and via \textbf{polynomially many queries}. 
The payoff function induces a Stackelberg equilibrium that maximizes the follower's (real) payoff among all inducible equilbria (i.e., equilibria that can be induced by some fake payoff matrix of the follower).
\end{maintheorem}

The result indicates that the information advantage is {\em not} entirely indispensable to the follower's manipulation, so it may not be safe to take it as a protection against manipulations.
Indeed, the issue is fairly widespread. The flourish of e-commerce and other online platforms, including various financial activities on crypto-currencies \citep{chen2022absnft}, offers many testbeds and realistic application areas for the problem we study.
In these domains, the leader interacts individually with a large number of different followers and aims to commit optimally against each of them.
A follower who intends to manipulate can easily forge pseudonym identities in a short period of time at low cost and without being detected by the leader 
(e.g., by creating multiple accounts for a web-based service such as crypto-currency).
Adding to the fact, the manipulations we consider are {\em imitative}---a term coined by \citet{gan2019imitative} meaning that the follower keeps behaving in accordance with the reported payoff function.
Such manipulations are almost impossible for the leader to detect in many cases.
Due caution is needed when one seeks to exploit the power of commitment in these domains.

Our approach to deriving the main result consists of two main components: (1) learning the gradient information of the leader's payoff function, and 
(2) constructing strategically equivalent games with this information to compute the follower's best manipulation strategy.
The latter requires computing a payoff matrix of the follower to induce her maximin value in the original game. Even in the full information setting, this problem requires a non-trivial approach in order to derive an efficient solution \cite{birmpas2021optimally}. In our partial information setting, additional difficulties come from the fact that we cannot fully recover all gradient information of the leader's payoff function. We note that the equilibrium oracle does not directly reveal the leader's payoff information. 
Therefore, to acquire useful information from the oracle, it requires carefully designed queries that expose strategy profiles of interest as Stackelberg equilibria. This task becomes more challenging, as the numbers of players' actions increase. 

\subsection{Related Work}

Our work directly relates to the recent line of work on follower deception in Stackelberg games as we mentioned above.
This line of work is motivated by an active learning approach to finding an optimal strategy to commit to \citep{letchford2009learning, balcan2015committment, blum2015learning, roth2016watch, peng2019learning}, which asks whether the leader's optimal commitment can be learned with query access to the follower's best response. 
\citet{gan2019imitative} first pointed out that this approach leads to an untruthful mechanism that can be manipulated by the following imitating responses as if they have a different payoff function. 
They also showed that designing an optimal mechanism to counteract the follower's manipulation is in general NP-hard even to find an approximate solution. The hardness contrasts the tractability of the computation of an optimal commitment in the full information setting, as shown in an early work of \citet{conitzer2006computing}.
\citet{gan2019manipulating} considered a security game scenario and showed that, under mild assumptions, in a Stackelberg security game the follower's optimal manipulation is always to misreport payoffs that make the game zero-sum; the leader gains only their maximin payoff as a result. This result does not hold true in general bi-matrix Stackelberg games, but the later result of \citet{birmpas2021optimally} indeed also revealed a general connection between payoff manipulations in Stackelberg games and the leader's maximin payoff. As we will discuss in detail later, this connection is also one of the cornerstones of our main result.
\citet{nguyen2019imitative} studied a similar security game scenario and also considered the follower's manipulation strategy when their payoff reporting is restricted in a ball of the true payoff.
More recently, \citet{chen2022optimal} further extended the line of work to extensive-form games. In another line of work \citet{kolumbus2022how, kolumbus2022auctions} studied the Nash equilibrium of the meta-game when multiple players attempt to manipulate simultaneously.

Besides the above line of work, in a more recent study, \citet{haghtalab2022learning} approached manipulations in Stackelberg games via a repeated game model and interpreted the follower's manipulations as a strategic behavior resulting from their far-sightedness. A framework is proposed in this study to derive efficient learning algorithm for a more patient leader (who does not discount future rewards) to induce truthful best responses from the less patient non-myopic follower (who discounts future rewards). Our setting is analogous to the opposite, where a more patient follower plays with a less patient leader. The patient follower explores the leader's payoff structure until having learned the optimal manipulation and behaves accordingly afterwards to exploit the leader.
We also note that similar interactions against non-myopic bidders have also been extensively explored in the online auction literature~\cite{amin2013learning, amin2014repeated, mohri2014optimal, liu2018learning, abernethy2019learning, golrezaei2021dynamic}.

\section{Preliminaries}
\label{sec:preliminaries}

Stackelberg games are a standard framework for studying strategy commitment in game theory.
In a Stackelberg game, a leader commits to a strategy and a follower best responds to this commitment.
We consider general bi-matrix games in this paper. A bi-matrix game $\calG = (u^L,u^F)$ is given by two matrices $u^L,u^F \in \mathbb{R}^{m \times n}$, which specify the leader's and the follower's payoffs, respectively.
The leader has $m$ actions (i.e., pure strategies) at their disposal, each corresponding to a row of the payoff matrices; 
and the follower has $n$ actions, each corresponding to a column. 
The entries $u^L (i,j)$ and $u^F(i,j)$ are the payoffs of the leader and the follower when a pair $(i,j) \in [m] \times [n]$ of actions is played.\footnote{For any positive integer $x$ we write $[x] \coloneqq \{1,\dots, x\}$.}

In the mixed strategy setting, the leader can further randomize their actions and the resulting distribution over the actions is called a mixed strategy, e.g., $\xx \in \Delta_m \coloneqq \left\{\xx \in \mathbb{R}_{\ge0}^m : \sum_{i \in [m]} x_i = 1 \right\}$.
Slightly abusing notation, we denote by $u^L(\xx,j) = \sum_{i \in [m]} x_i \cdot u^L(i,j)$ the expected payoff of the leader for a strategy profile $(\xx,j)$, and similarly denote by $u^F(\xx,j) = \sum_{i \in [m]} x_i \cdot u^F(i,j)$ the expected payoff of the follower. 
We will refer to a payoff matrix and the corresponding payoff function interchangeably throughout this paper.

A (pure) best response of the follower is then given by $j \in \BR(\xx)$, where 
\[
\BR(\xx) \coloneqq \argmax_{j \in [n]} u^F(\xx, j)
\]
is called the follower's {\em best response set}, or a {\em BR-correspondence} as a function $\BR: \Delta_{m} \to 2^{[n]}$. 
In most cases, it is without loss of generality to consider only pure strategy responses, because there always exists an optimal strategy that is pure. Hence, unless otherwise specified, all best responses are pure strategies throughout.
It will also be useful to define the inverse function of $\BR$: for any $j \in [n]$,
\[
\BR^{-1}(j) \coloneqq \left\{\xx \in \Delta_m : j \in \BR(\xx) \right\}
\]
is the set of leader strategies that incentivize the follower to best-respond $j$.

The optimal commitments of the leader are captured by the following optimization, with the assumption that the follower breaks ties by picking a $j \in \BR(\xx)$ in favor of the leader when there are multiple best responses in $\BR(\xx)$:
\begin{align}
\label{eq:sse}
	(\xx, j) \in \argmax_{\xx' \in \Delta_{m},\ j' \in \BR(\xx') } u^L (\xx', j').
\end{align}
The strategy profile $( \xx, j )$ is called a {\em strong Stackelberg equilibrium} (SSE). 
Alternatively, using the inverse function of $\BR$ gives the following equivalent definition of an SSE:
\begin{align}
\label{eq:sse-inv}
	(\xx, j) \in \argmax_{\xx' \in \BR^{-1}(j'),\ j' \in [n]} u^L (\xx', j').
\end{align}
SSE is the most widely used solution concept in the literature on Stackelberg games. The optimistic tie-breaking assumption adopted by it is justified by noting that this tie-breaking behavior can often be induced by an infinitesimal perturbation in the leader's strategy \citep{von2004leadership}. 

\begin{definition}[SSE and SSE response]
\label{def:sse}
A strategy profile $(\xx,j) \in \Delta_m \times [n]$ is said to be an SSE of a game $\calG = (u^L,u^F)$ if and only if \Cref{eq:sse} (or equivalently, \Cref{eq:sse-inv}) holds. 
An action $j$ of the follower is called an {\em SSE response} of $\calG$ if and only if $(\xx, j)$ is an SSE of $\calG$ for some $\xx \in \Delta_m$.
\end{definition}

\subsection{Equilibrium Manipulation via Payoff Misreporting}

According to the above definition, the optimal commitment of the leader, or the SSE, is a function of the follower's payoff matrix.
When this payoff information is private to the follower, the follower has a chance to manipulate the leader's commitment by reporting a fake payoff matrix.
At a high-level, to find out the optimal way to manipulate amounts to solving the following optimization problem:
\begin{subequations}
\label{eq:opt-deception}
\begin{align}
\max_{\tilde{u}^F, \xx, j} \quad & u^F(\xx, j) \tag{\ref{eq:opt-deception}} \\
\text{s.t.} \quad
& (\xx, j ) \text{ is an SSE of } \widetilde{\calG} = (u^L, \tilde{u}^F). \label{eq:opt-deception-const}
\end{align}
\end{subequations}
\citet{birmpas2021optimally} showed that this problem is tractable in the full information setting (i.e., when $u^L$ is known to the follower) and presented an elegant characterization of strategy profiles that satisfy~\Cref{eq:opt-deception-const}.
This characterization, summarized in \Cref{thm:maximin-characterization} below, forms a foundation of our technical results.
We follow the terminology by \citet{birmpas2021optimally} and define the inducibility as follows.
Note that \Cref{eq:opt-deception-const} also means that the manipulation is {imitative}:
the follower is required to respond according to the reported payoff $\tilde{u}^F$, and this ``best response'' decides their equilibrium utility.

\begin{definition}[Inducibility]
A payoff matrix $\tilde{u}^F \in \mathbb{R}^{m \times n}$ {\em induces} a strategy profile $(\xx,j)$ (with respect to $u^L$) if and only if $(\xx,j)$ is an SSE of $(u^L, \tilde{u}^F)$.
Moreover, $(\xx,j)$ is said to be {\em inducible} (with respect to $u^L$) if and only if it is induced by some $\tilde{u}^F$.
\end{definition}

\begin{theorem}[\citealt{birmpas2021optimally}]
\label{thm:maximin-characterization}
A strategy profile $(\xx,j)$ is inducible if and only if 
\[
u^L(\xx,j)\geq M_{[n]} \coloneqq \max_{\yy \in \Delta_m}\min_{k\in [n]}u^L(\yy,k).
\]
Moreover, a payoff matrix $\tilde{u}^F$ that induces $(\xx,j)$ can be computed in polynomial time.
\end{theorem}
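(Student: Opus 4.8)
\emph{Proof plan.} I would prove the two directions separately; write $M:=M_{[n]}$, and for a candidate matrix $\tilde{u}^F$ let $\widetilde{\BR}$ denote the follower's best-response correspondence in $\widetilde{\calG}=(u^L,\tilde{u}^F)$. \emph{Necessity} is the easy direction. Suppose $\tilde{u}^F$ induces $(\xx,j)$, and let $\yy^\star\in\argmax_{\yy\in\Delta_m}\min_{k\in[n]}u^L(\yy,k)$, so $\min_k u^L(\yy^\star,k)=M$. In $\widetilde{\calG}$ the leader could commit to $\yy^\star$ instead: for any best response $k^\star\in\widetilde{\BR}(\yy^\star)$ the follower might give, the leader's payoff would be $u^L(\yy^\star,k^\star)\ge\min_k u^L(\yy^\star,k)=M$. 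Since $(\xx,j)$ is an SSE of $\widetilde{\calG}$, i.e.\ an optimal commitment, \Cref{eq:sse} gives $u^L(\xx,j)\ge u^L(\yy^\star,k^\star)\ge M$.

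\emph{Sufficiency.} Fix $(\xx,j)$ with $V:=u^L(\xx,j)\ge M$. It suffices to build $\tilde{u}^F$ such that (i) $j\in\widetilde{\BR}(\xx)$, and (ii) $u^L(\yy,k)\le V$ for every $\yy\in\Delta_m$ and every $k\in\widetilde{\BR}(\yy)$. Indeed, (ii) caps the leader's payoff at every commitment by $V$ (also under the optimistic tie-breaking, as every tied response there has value at most $V$), while (i) together with $u^L(\xx,j)=V$ shows this cap is met at $(\xx,j)$, so $(\xx,j)$ satisfies \Cref{eq:sse} and is inducible. The lever enabling (ii) is that $V\ge M$ forces the sublevel sets $S_k:=\{\yy\in\Delta_m:u^L(\yy,k)\le V\}$ to cover $\Delta_m$, since $\min_k u^L(\yy,k)\le M\le V$ for every $\yy$.

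\emph{The construction.} Regarding the $n$ columns of $\tilde{u}^F$ as affine functions on $\Delta_m$, I would design their upper envelope so that (a) the cell $\widetilde{\BR}^{-1}(k)$ of every column $k$ is contained in $S_k$ — this is exactly (ii) — and (b) the cell of column $j$ contains $\xx$, with $u^L(\cdot,j)$ maximised over that cell at $\xx$ — this gives (i) and keeps the leader's payoff at $(\xx,j)$ equal to $V$. Concretely this amounts to realising, by a suitable choice of fake payoffs, a regular subdivision of $\Delta_m$ subordinate to the cover $\{S_k\}$ in which $\xx$ lands in the $j$-cell. Once such a $\tilde{u}^F$ is exhibited the efficiency claim is routine: $M$ and an associated maximin certificate (equivalently, by minimax duality, a mixed ``column'' strategy witnessing $\sum_k q_k u^L(i,k)\le V$ for all $i$) come from polynomial-size linear programs, the remaining choices defining $\tilde{u}^F$ are explicit, and the resulting matrix has polynomially bounded encoding length.

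\emph{Where the difficulty lies.} The delicate step is (b), the ``pinning''. Taking $\tilde{u}^F(\cdot,k):=V-u^L(\cdot,k)$ for $k\ne j$ does force $\widetilde{\BR}^{-1}(k)\subseteq S_k$ for free (the winning column at any $\yy$ then has nonnegative fake value, hence $u^L(\yy,\cdot)\le V$ there), but it breaks (i) whenever some column $k\ne j$ has $u^L(\xx,k)<V$: that column's positive fake value at $\xx$ beats column $j$'s fake value $V-u^L(\xx,j)=0$. One cannot repair this by merely boosting column $j$ — because $\xx$ lies on the bounding hyperplane $\{u^L(\cdot,j)=V\}$, any affine boost letting column $j$ win at $\xx$ also lets it win just outside $S_j$, so its cell leaks past $\{u^L(\cdot,j)=V\}$ and the leader can exceed $V$; a short computation shows this route is viable only under the false hypothesis $\min_{k\ne j}u^L(\xx,k)\ge M$. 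The fix must therefore reshape the other columns' cells around $\xx$ as well, and the covering $\bigcup_k S_k=\Delta_m$ is precisely the slack that makes room for this; checking that the reshaped collection is still a genuine (regular) subdivision with every cell inside its sublevel set is the heart of the argument, and it is where $V\ge M$ is used most sharply — in particular in the edge case $V=M$, where a point of $\{u^L(\cdot,j)=V\}$ can fail to lie in any $S_k$ with $k\ne j$ and needs separate treatment.
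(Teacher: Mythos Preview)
The paper does not prove this theorem; it is stated as a result of \citet{birmpas2021optimally} and used as a black box throughout. So there is no in-paper proof to compare against, and I evaluate your proposal on its own.

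Your necessity argument is complete and correct. Your setup for sufficiency is also right: conditions (i) and (ii) are exactly what is needed, and the covering $\bigcup_k S_k = \Delta_m$ (equivalent to $V \ge M$) is indeed the structural lever.

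The gap is that you never actually give the construction. You correctly diagnose that the naive choice $\tilde{u}^F(\cdot,k)=V-u^L(\cdot,k)$ fails the pinning requirement, and you correctly argue that an affine boost of column $j$ alone cannot repair it because $\xx$ sits on the bounding hyperplane $\{u^L(\cdot,j)=V\}$. But then you write that ``the fix must therefore reshape the other columns' cells around $\xx$'' and that ``checking that the reshaped collection is still a genuine (regular) subdivision with every cell inside its sublevel set is the heart of the argument'' --- and stop there. That reshaping is not a routine verification left to the reader; in \citet{birmpas2021optimally} it is one of the paper's main technical contributions, requiring a specific construction (exploiting the maximin witness) that simultaneously pins column $j$ at $\xx$ while keeping every other column's best-response region inside its sublevel set, and handling the boundary case $V=M$ that you yourself flag. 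Without exhibiting such a $\tilde{u}^F$ and verifying (i)--(ii) for it, the sufficiency direction --- and with it the polynomial-time computability claim --- remains unproved. Your plan locates the difficulty precisely but does not resolve it.
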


The above result establishes an interesting connection between the follower's optimal manipulation and the leader's maximin value $M_{[n]}$. 
Intuitively, to induce $(\xx,j)$, the follower can respond in a way that is completely adversarial against the leader (whereby the leader only gets $M_{[n]}$) unless the leader plays $\xx$.
Hereafter, we extend the notation $M_{[n]}$ to every subset $S \subseteq [n]$ of the follower's actions, and define
\begin{align}
\label{eq:def-M}
M_S \coloneqq \max_{\xx \in \Delta_m} \min_{j \in S} u^L(\xx, j)\quad 
\text{ and}\quad
\calM_S \coloneqq \argmax_{\xx \in \Delta_m} \min_{j \in S} u^L(\xx, j),
\end{align}
which are the leader's maximin payoff and the set of maximin strategies when the follower's responses are restricted in $S$.
These two notations will be frequently used throughout the paper.

\subsection{Main Problem: Learning to Manipulate with SSE Oracle}

We consider the learning version of the follower's optimal manipulation problem, where {\em the leader's payoff matrix is unknown} to the follower (i.e., unknown to us). 
Instead, the follower only has query access to an {\em SSE oracle}, denoted $\ASSE$, which is able to answer whether a given solution $(\tilde{u}^F, \xx, j)$ to the above optimization problem satisfies \Cref{eq:opt-deception-const} or not---in other words, whether $(\xx,j)$ is induced by $\tilde{u}^F$.\footnote{The reason that we also specify $(\xx, j)$ as part of the input to $\ASSE$ is to sidestep the tricky case where there are multiple SSEs. Our results do not apply to the setting where we do not have the power to specify a particular SSE. We leave this setting as an interesting open problem. See our discussion in \Cref{sec:conclusion}.}
Conceptually, when the follower can interact repeatedly with the leader, they can try reporting different payoff matrices $\tilde{u}^F$ and observe the leader's optimal commitment against these matrices. 
The SSE oracle abstract this process.
Note that in our model the leader is unaware of the fact that the follower keeps changing their payoffs throughout the process. Rather, the leader thinks that they are interacting with different followers and, as we discussed earlier, this applies to scenarios where the follower can easily forge a large number of fake identities to elicit the leader's payoff information.
To put it differently, the leader commits to playing the optimal commitment against every reported payoff matrix.

\begin{definition}[SSE oracle]
\label{def:SSE-oracle}
Given a matrix $\tilde{u}^F$ and a strategy profile $(\xx, j) \in \Delta_m \times [n]$, the {\em SSE oracle}, denoted $\ASSE$, outputs whether $\tilde{u}^F$ induces $(\xx, j)$ or not.
\end{definition}

We consider the time and query complexity with respect to the bit-size of the leader's payoff matrix and assume that its entries are rational numbers.
Throughout, when we say that something can be computed efficiently, we mean that it can be computed with polynomially many queries to $\ASSE$ \textit{and} in polynomial time.
Our algorithms will make frequent use of binary search to find the exact value of an unknown rational number. We note that this can be done {\em exactly} in time polynomial in the bit-size of the target number, using a standard approach that searches on the Stern-brocot tree~\citep{graham1989concrete}.

\section{Warm-up and Approach Overview}

In this section, we provide an overview of our approach to solving the problem defined above.
We start with a warm-up, with two examples showing how the SSE oracle can be utilized to obtain some basic information that will be useful throughout the paper.

\subsection{Warm-up with $\ASSE$}

The oracle $\ASSE$ does not directly reveal any payoff values, but it exposes payoff information about SSEs. For example, if through the oracle we can confirm that two strategy profiles $(\xx,j)$ and $(\xx', j')$ are both SSEs, then we know that $u^L(\xx, j) = u^L(\xx', j')$. 
Hence, a main approach to obtaining information via $\ASSE$ is by designing payoff matrices that induce strategy profiles of interest as SSEs. 
In particular, in order for a profile $(\xx, j)$ to be an SSE, a necessary condition according to \eqref{eq:sse-inv} is that $\xx \in \argmax_{\xx' \in \wt{\BR}^{-1}(j)} u^L(\xx', j)$ with respect to some follower action $j$ and the BR-correspondence $\wt{\BR}^{-1}(j)$ of a fake payoff matrix. 
Using this necessary condition and $\ASSE$, we can easily, and efficiently, identify the following handy information; we  henceforth assume that they are known in the remainder of the paper.

\begin{observation}
\label{asn:M-j}
With query access to $\ASSE$, for every $j \in [n]$, the set 
$I_j \coloneqq \argmax_{i \in [m]} u^L(i, j)$
of the leader's (pure) best responses against $j$ can be computed efficiently.
Hence, $\calM_{\{j\}}$, which is the convex hull of $I_j$, can also be computed efficiently.
\end{observation}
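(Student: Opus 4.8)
The plan is to show that for each fixed follower action $j \in [n]$, a leader action $i$ belongs to $I_j = \argmax_{i' \in [m]} u^L(i', j)$ exactly when we can construct a fake follower payoff matrix $\tilde{u}^F$ that makes the pure profile $(i, j)$ an SSE of $(u^L, \tilde{u}^F)$, and that membership can therefore be tested by a single query to $\ASSE$. So the whole job reduces to designing, for each candidate pair $(i,j)$, a suitable $\tilde{u}^F$ and verifying the equivalence.

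First I would describe the gadget matrix. Fix $j$ and a candidate row $i$. Define $\tilde{u}^F$ so that column $j$ strictly dominates every other column \emph{regardless of the leader's strategy}: e.g. set $\tilde{u}^F(\cdot, j) \equiv 1$ and $\tilde{u}^F(\cdot, k) \equiv 0$ for all $k \neq j$. Then $\wt{\BR}(\xx) = \{j\}$ for every $\xx \in \Delta_m$, so $\wt{\BR}^{-1}(j) = \Delta_m$ and the SSE optimization \eqref{eq:sse-inv} collapses to $\max_{\xx \in \Delta_m} u^L(\xx, j)$, whose optimal value is $\max_{i' \in [m]} u^L(i', j)$ and whose optimal face is precisely $\conv(I_j) = \calM_{\{j\}}$. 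Hence a pure profile $(i,j)$ is an SSE of this game iff $i \in I_j$. Querying $\ASSE(\tilde{u}^F, i, j)$ for each $i \in [m]$ thus recovers $I_j$ with $m$ queries and $O(mn)$ queries over all $j$, each query preceded by writing down a constant-size-description matrix — all polynomial in the input bit-size. Finally, $\calM_{\{j\}}$ is the convex hull of the vertex set $I_j$ by the standard fact that maximizing a linear function $u^L(\cdot, j)$ over the simplex attains its max on the face spanned by the argmax vertices; since $I_j$ is now in hand explicitly, $\calM_{\{j\}}$ is too.

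One technical subtlety to handle carefully is the optimistic tie-breaking built into the SSE definition: with our $\tilde{u}^F$ there is never a tie in $\wt{\BR}(\xx)$ (column $j$ is the unique best response everywhere), so tie-breaking plays no role and the SSE responses are unambiguously determined by $u^L$ alone — this is exactly why the gadget is clean. A second point worth a sentence is that we are allowed to feed a specific profile $(i,j)$ to $\ASSE$ (per the oracle definition), which sidesteps any worry about which SSE the leader would "pick" when $I_j$ has several elements. I do not anticipate a genuine obstacle here; the only mild care needed is to confirm that the trivial dominance gadget indeed forces $\wt{\BR}^{-1}(j) = \Delta_m$ and that SSE existence for $(i,j)$ is equivalent to $i$ attaining the column maximum — both of which are immediate from \eqref{eq:sse-inv}. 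This observation is deliberately a warm-up, and the real difficulty (exposing profiles involving \emph{mixed} leader strategies and \emph{proper subsets} $S \subseteq [n]$ as SSEs) is deferred to later sections.
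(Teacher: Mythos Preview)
Your proposal is correct and takes essentially the same approach as the paper: construct a follower payoff matrix in which column $j$ is strictly dominant (the paper uses values $0$ and $-1$, you use $1$ and $0$, which is immaterial), so that $\wt{\BR}^{-1}(j) = \Delta_m$ and a single $\ASSE$ query on $(i,j)$ decides whether $i \in I_j$. Your added remarks on tie-breaking and on the oracle accepting a specified profile are accurate and do no harm.
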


Specifically, to decide whether $i \in I_j$ for a pure strategy $i \in [m]$, we can construct a matrix $\tilde{u}^F$ such that $\tilde{u}^F(i,j)=0$ and $\tilde{u}^F(i, j')=-1$ for all $j' \neq j$. This way $j$ is the strictly dominant strategy of the follower, 
so we have $\wt{\BR}^{-1}(j) = \Delta_m$ and $\wt{\BR}^{-1}(j') = \emptyset$ for all $j' \neq j$.
We then query $\ASSE$ to check whether $\tilde{u}^F$ induces $(i,j)$ as an SSE.
A ``yes'' answer implies that 
\[
u^L(i,j) = \max_{\xx \in \wt{\BR}^{-1}(j)} u^L(\xx,j) = \max_{\xx \in \Delta_m} u^L(\xx,j) \ge \max_{i' \in [m]} u^L(i',j).
\]
So $i \in I_j$.
A ``no'' answer implies the existence of $\xx \in \wt{\BR}^{-1}(j) = \Delta_m$ such that $u^L(i,j) < u^L(\xx,j)$. 
It follows that
$u^L(i,j) < u^L(\xx,j) \le \max_{i' \in [m]} u^L(i',j)$.
Hence, $i \notin I_j$.

\begin{observation}
\label{asn:relation-jik}
For every $i \in [m]$ and $j,k \in [n]$, the relation (i.e., $>$, $=$, or $<$) between $M_{\{j\}}$ and $u^L(i,k)$ can be decided efficiently with query access to $\ASSE$. 
\end{observation}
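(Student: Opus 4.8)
The plan is to reduce each three-way comparison to a constant number of calls to $\ASSE$ by crafting a fake follower payoff matrix $\tilde{u}^F$ that exposes exactly the two numbers $M_{\{j\}}$ and $u^L(i,k)$ as the leader's payoffs at two explicitly identifiable candidate SSEs of $(u^L,\tilde{u}^F)$. The degenerate case $j=k$ needs no query at all: we always have $u^L(i,j)\le M_{\{j\}}$, and by \Cref{asn:M-j} the set $I_j=\argmax_{i'\in[m]}u^L(i',j)$ is already known, so the answer is ``$=$'' if $i\in I_j$ and ``$<$'' otherwise.

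For $j\ne k$, I would take $\tilde{u}^F$ with entries in $\{0,-1\}$: set $\tilde{u}^F(i',j)=0$ for every $i'\in[m]$; set $\tilde{u}^F(i,k)=0$ and $\tilde{u}^F(i',k)=-1$ for $i'\ne i$; and set $\tilde{u}^F(i',j')=-1$ for every $i'\in[m]$ and every $j'\notin\{j,k\}$. A short calculation shows that, against $\tilde{u}^F$, action $j$ is a best response at every leader strategy, action $k$ is a best response only at the leader's pure strategy $i$ and nowhere else (since $\tilde{u}^F(\xx,k)=x_i-1<0=\tilde{u}^F(\xx,j)$ whenever $\xx$ is not the vertex $i$), and no other action is ever a best response; hence $\wt{\BR}^{-1}(j)=\Delta_m$, $\wt{\BR}^{-1}(k)=\{i\}$, and $\wt{\BR}^{-1}(j')=\emptyset$ for $j'\notin\{j,k\}$. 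By \eqref{eq:sse-inv}, the feasible pairs $(\xx',j')$ with $j'\in\wt{\BR}(\xx')$ are precisely $\{(\xx',j):\xx'\in\Delta_m\}\cup\{(i,k)\}$, so the leader's SSE payoff in $(u^L,\tilde{u}^F)$ equals $\max\{M_{\{j\}},\,u^L(i,k)\}$. Fixing any $\bar{i}\in I_j$ (known by \Cref{asn:M-j}, with $u^L(\bar{i},j)=M_{\{j\}}$), it follows that $(\bar{i},j)$ is an SSE of $(u^L,\tilde{u}^F)$ iff $M_{\{j\}}\ge u^L(i,k)$, and $(i,k)$ is an SSE of $(u^L,\tilde{u}^F)$ iff $u^L(i,k)\ge M_{\{j\}}$.

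Thus I would issue the two queries $\ASSE(\tilde{u}^F,\bar{i},j)$ and $\ASSE(\tilde{u}^F,i,k)$ and read off the relation directly: ``yes/no'' certifies $M_{\{j\}}>u^L(i,k)$, ``yes/yes'' certifies equality, ``no/yes'' certifies $M_{\{j\}}<u^L(i,k)$, and ``no/no'' cannot occur. The whole procedure uses $O(1)$ queries and $\poly(m,n)$ time per triple $(i,j,k)$. I do not foresee a genuine obstacle here; the only points needing care are verifying that the hand-built $\tilde{u}^F$ makes $\wt{\BR}^{-1}(k)$ exactly the single vertex $\{i\}$ rather than a larger face of $\Delta_m$ (which is why action $k$ must be a strict loss for the follower everywhere except at $i$), and handling the degenerate case $j=k$ separately via \Cref{asn:M-j} as above.
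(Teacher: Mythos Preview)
Your proposal is correct and follows essentially the same construction as the paper: you build $\tilde{u}^F$ so that column $j$ is identically $0$, the single entry $(i,k)$ is $0$, and everything else is $-1$, yielding $\wt{\BR}^{-1}(j)=\Delta_m$, $\wt{\BR}^{-1}(k)=\{i\}$, and $\wt{\BR}^{-1}(j')=\emptyset$ otherwise, then read off the comparison from the two $\ASSE$ queries at $(\bar{i},j)$ with $\bar{i}\in I_j$ and at $(i,k)$. The only difference is that you explicitly dispatch the degenerate case $j=k$ via \Cref{asn:M-j}, which the paper leaves implicit.
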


To decide the above relation, we slightly modify $\tilde{u}^F$ defined for deriving \Cref{asn:M-j}; we let $\tilde{u}^F(i,k)=0$ and keep all other entries the same. This gives the following BR-correspondence of $\tilde{u}^F$:
\begin{align*}
\wt{\BR}^{-1}(k) = \{i\},\quad
\wt{\BR}^{-1}(j) = \Delta_m,\quad
\text{and } \
\wt{\BR}^{-1}(j') = \emptyset \text{ for all } j' \in [n] \setminus \{j,k\}.
\end{align*}
Consequently, the best strategies to induce responses $j$ and $k$ 
yield payoffs $M_{\{j\}}$ and $u^L(i,k)$, respectively, for the leader. We can then query $\ASSE$ to check if $(i',j)$ and $(i,k)$ are SSEs for some $i' \in I_j$ to decide the relation between these two payoffs.

\subsection{Approach Overview}
\label{sec:approach-overview}

Now we present an overview of our approach.
Given the characterization by \Cref{thm:maximin-characterization}, the problem we want to solve, formulated as Problem~\ref{eq:opt-deception}, boils down to solving the following linear program (LP) for every $j\in[n]$ and selecting the one with the maximum optimal value. 
\begin{subequations}
\label{lp:maximin-inducibility}
\begin{align}
\max_{\xx \in \Delta_m} \quad & u^F(\xx, j) \tag{\ref{lp:maximin-inducibility}} \\
\text{s.t.} \quad
& u^L(\xx, j )\geq M_{[n]} \quad \text{\em (i.e., $(\xx,j)$ is inducible)}
\label{eq:maximin-inducibility-cons1}
\end{align}
\end{subequations}

With only access to $\ASSE$, we cannot hope to learn the exact payoff function $u^L$ to solve the above LP.
Hence, the idea is to learn a strategically equivalent representation of $u^L$, that is, a vector $\va_j = (a_{j1}, \dots, a_{j{m}}) \in \mathbb{Q}^{m}$ such that
\begin{equation}
\label{eq:aj}
u^L(\xx, j) \equiv \gamma_j \cdot \va_j \cdot \xx + \beta_j
\quad \text{ for some } \gamma_j \in \mathbb{R}_{>0} \text{ and } \beta_j \in \mathbb{R},
\end{equation}
where $\va_j \cdot \xx$ denotes the inner product of $\va_j$ and $\xx$.
Intuitively, $\va_j$ indicates the direction of the gradient of $u^L(\cdot, j)$.
As we demonstrate in \Cref{lmm:opt-to-inducibility}, knowing $\va_j$ alone (without $\gamma_j$ and $\beta_j$) allows us to reduce the LP to an {\em inducibility problem}: {\em decide whether a given strategy profile $(\xx, j)$ is inducible or not}.
To solve the inducibility problem means comparing $u^L(\xx, j)$ with the maximin payoff $M_{[n]}$ and this requires constructing a $\tilde{u}^F$ that induces either $(\xx, j)$ or a strategy profile that gives the maximin payoff.

In more detail, these procedures are summarized in \Cref{fig:main-approach}, which also include a special treatment of a degenerate case (i.e., when $M_{\{j\}} = M_{[n]}$) that prevents us from even learning $\va_j$.
The detailed implementations of Steps 1 and 2 are presented in the next sections: learning $\va_j$ in \Cref{sec:learningaj}, and learning $J$ and $\xx^*$ in 
\Cref{sec:learn-j-xxs}.
The reason that LP~\eqref{lp:maximin-inducibility} can be solved efficiently in Step~3 is given by \Cref{thm:equ-game}. 

\begin{figure}[t]
\begin{boxedtext*}
\begin{itemize}[leftmargin=*,itemsep=1mm]
\item[1.]
For each $j \in [n]$, decide if $M_{\{j\}} = M_{[n]}$.
If $M_{\{j\}} \neq M_{[n]}$, learn a vector $\va_j$ that satisfies \Cref{eq:aj}.
\detailsprovidedin{\Cref{sec:learningaj}}

\item[2.]
Identify a subset $J \subseteq [n]$ along with a leader strategy $\xx^* \in \Delta_m$, such that 
\begin{align}
\label{eq:J-M}
u^L(\xx^*,j) = M_J = M_{[n]}, \quad \text{ for all } j \in J.
\end{align}
\detailsprovidedin{\Cref{sec:learn-j-xxs,sec:first-reference-pair,sec:second-ref}}

\item[3.] 
Solve LP~\eqref{lp:maximin-inducibility} for every $j \in [n]$ with information obtained in the above two steps, hence also solving Problem~\eqref{eq:opt-deception}.
\detailsprovidedin{\Cref{thm:equ-game}}
\end{itemize}
\end{boxedtext*}
\vspace{-3mm}
\caption{Main approach overview. 
\label{fig:main-approach}}
\end{figure}

\begin{lemma}
\label{lmm:opt-to-inducibility}
Suppose that there is an efficient algorithm that decides correctly whether any given strategy profile $(\yy, k) \in \Delta_m \times [n]$ is inducible or not.
Then given $\va_j$ satisfying \eqref{eq:aj}, LP~\eqref{lp:maximin-inducibility} can be solved in polynomial time with query access to $\ASSE$.
The same can be achieved without $\va_j$ if $M_{\{j\}} = M_{[n]}$.
\end{lemma}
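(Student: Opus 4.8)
The plan is to isolate the only unknown ingredient of LP~\eqref{lp:maximin-inducibility}: its feasible region. The objective $u^F(\cdot,j)$ is the follower's own, known, payoff function, so it suffices to produce a linear description of $\{\xx\in\Delta_m:(\xx,j)\text{ is inducible}\}$. By \Cref{thm:maximin-characterization} this set equals $\{\xx\in\Delta_m:u^L(\xx,j)\ge M_{[n]}\}$, and by \eqref{eq:aj} (using $\gamma_j>0$) it equals $P_j:=\{\xx\in\Delta_m:\va_j\cdot\xx\ge t_j\}$ for the a priori unknown threshold $t_j:=(M_{[n]}-\beta_j)/\gamma_j$. Note also that $P_j\supseteq\calM_{\{j\}}\neq\emptyset$, so the LP is always feasible. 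Everything thus reduces to pinning down $t_j$ exactly, after which LP~\eqref{lp:maximin-inducibility} becomes an ordinary linear program over the explicit polytope $P_j$, solvable exactly in polynomial time.

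I would first dispose of the degenerate case $M_{\{j\}}=M_{[n]}$ (recall $M_{\{j\}}=\max_\xx u^L(\xx,j)\ge M_{[n]}$ always). Here $(\xx,j)$ is inducible iff $u^L(\xx,j)=M_{\{j\}}$, i.e.\ iff $\xx\in\calM_{\{j\}}=\conv(I_j)$, which is known by \Cref{asn:M-j}; so LP~\eqref{lp:maximin-inducibility} is simply $\max_{\xx\in\conv(I_j)}u^F(\xx,j)$ and $\va_j$ is not needed. In the remaining case $M_{\{j\}}>M_{[n]}$, I would next check whether $P_j=\Delta_m$ by running the inducibility algorithm on $(i_0,j)$, where $i_0\in\argmin_i a_{ji}$: since $\gamma_j>0$ this is the pure strategy minimizing $u^L(\cdot,j)$, so it is inducible exactly when every $(\xx,j)$ is, in which case LP~\eqref{lp:maximin-inducibility} is just $\max_i u^F(i,j)$.

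Otherwise $P_j\subsetneq\Delta_m$, and then $\min_i a_{ji}<t_j<\max_i a_{ji}$: the left inequality holds because $(i_0,j)$ is not inducible, and the right because $i_1\in\argmax_i a_{ji}=I_j$ gives $u^L(i_1,j)=M_{\{j\}}>M_{[n]}$, so $(i_1,j)$ is inducible. I would recover $t_j$ by binary search on the interval between $a_{ji_0}$ and $a_{ji_1}$: for a trial value $t$ there, form $\xx_t\in\Delta_m$ supported on $\{i_0,i_1\}$ with weight $(t-a_{ji_0})/(a_{ji_1}-a_{ji_0})$ on $i_1$, so that $\va_j\cdot\xx_t=t$, and query the inducibility algorithm on $(\xx_t,j)$. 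Because $u^L(\cdot,j)$ is affine, $(\xx_t,j)$ is inducible iff $t\ge t_j$, so each query locates $t_j$ relative to $t$. This terminates in polynomially many rounds (exact binary search on the Stern--Brocot tree) once we know $t_j$ has polynomial bit-size: writing $\gamma_j=(u^L(i_1,j)-u^L(i_0,j))/(a_{ji_1}-a_{ji_0})$ and $\beta_j=u^L(i_1,j)-\gamma_j a_{ji_1}$ gives $t_j=a_{ji_1}+\bigl(M_{[n]}-u^L(i_1,j)\bigr)\,(a_{ji_1}-a_{ji_0})/\bigl(u^L(i_1,j)-u^L(i_0,j)\bigr)$, where $M_{[n]}$, $u^L(i_0,j)$, $u^L(i_1,j)$ all have bit-size polynomial in that of $u^L$ and $\va_j$ is assumed rational of polynomial bit-size. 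With $t_j$ in hand, solving $\max_{\xx\in P_j}u^F(\xx,j)$ finishes the proof; every phase makes only polynomially many calls to the (efficient) inducibility algorithm and runs in polynomial time.

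The step I expect to be the real obstacle — rather than the case split, which is routine — is the exactness requirement: we must extract $t_j$ as an \emph{exact} rational from a yes/no oracle. This is what forces both the use of the affine structure of $u^L(\cdot,j)$ (so that a single inducibility query at \emph{any} point of the chosen segment certifies the side of $t_j$) and the explicit bit-size bound on $t_j$ that makes the binary search run in polynomial time.
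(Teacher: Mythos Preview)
Your proposal is correct and follows essentially the same route as the paper: the same case split on $M_{\{j\}}=M_{[n]}$, the same reduction of the constraint to $\va_j\cdot\xx\ge t_j$, and the same binary search for $t_j$ via inducibility queries. You are in fact more explicit than the paper on two points it glosses over---how to construct a test point with a prescribed value of $\va_j\cdot\xx$ (you parameterize a segment between extremal pure strategies, whereas the paper just says ``pick arbitrary $\xx$ with $\va_j\cdot\xx=d$''), and why $t_j$ has polynomial bit-size so the Stern--Brocot search terminates---but the underlying argument is the same.
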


\begin{proof}
In the case where $M_{\{j\}} = M_{[n]}$, we have $u^L(\xx, j) \ge M_{[n]}$ if and only if $\xx \in \calM_{\{j\}}$. 
By \Cref{asn:M-j}, LP~\eqref{lp:maximin-inducibility} is then equivalent to solving $\max_{\xx \in \Delta_m} u^F(\xx, j)$, subject to $x_i = 0$ for all $i \notin I_j$, which is an LP with all parameters known. Hence, LP~\eqref{lp:maximin-inducibility} can be solved efficiently.

In the case where $M_{\{j\}} \neq M_{[n]}$, we are given $\va_j$.
We can rewrite the constraint \Cref{eq:maximin-inducibility-cons1} as $\va_j \cdot \xx \ge d_j^*$, where $d_j^* = (M_{[n]} - \beta_j)/\gamma_j$.
Note that $d_j^*$ is still unknown since $\beta_j$ and $\gamma_j$ are unknown. 
However, now that there is an algorithm to solve the inducibility problem, we can learn $d_j^*$ using binary search as follows. For any given $d$, pick arbitrary $\xx \in \Delta_m$ such that $\va_j \cdot \xx = d$. If $(\xx, j)$ is inducible then we know that $u^L(\xx, j) \ge M_{[n]}$ by \Cref{thm:maximin-characterization} and hence, $d \ge d_j^*$; otherwise, we know that $d < d_j^*$.  
Knowing $d_j^*$, we can then solve LP~\eqref{lp:maximin-inducibility} efficiently.
\end{proof}

\begin{restatable}{theorem}{thmequgame}
\label{thm:equ-game}
Suppose that the following elements are given: a vector $\va_j$ satisfying \eqref{eq:aj} for every $j \in [n]$ such that $M_{\{j\}} \neq M_{[n]}$; moreover, $J \subseteq [n]$ and $\xx^* \in \Delta_m$ satisfying \eqref{eq:J-M}. 
Then for every $j \in [n]$, LP~\eqref{lp:maximin-inducibility} can be solved in polynomial time with query access to $\ASSE$.
\end{restatable}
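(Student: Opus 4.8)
The plan is to obtain \Cref{thm:equ-game} from \Cref{lmm:opt-to-inducibility}: by that lemma it suffices to produce an efficient subroutine that, given any profile $(\yy,k)\in\Delta_m\times[n]$, decides whether $(\yy,k)$ is inducible; plugging such a subroutine into the lemma's binary search then solves LP~\eqref{lp:maximin-inducibility} for every $j$ (columns with $M_{\{j\}}=M_{[n]}$ being dispatched inside the lemma using only \Cref{asn:M-j}). By \Cref{thm:maximin-characterization}, ``$(\yy,k)$ inducible'' is exactly ``$u^L(\yy,k)\ge M_{[n]}$'', so everything reduces to deciding this single inequality. Two sub-cases cost nothing: if $M_{\{k\}}=M_{[n]}$ then the inequality says $\yy$ maximizes $u^L(\cdot,k)$, i.e.\ $\yy\in\conv(I_k)$, readable off \Cref{asn:M-j}; and if $k\in J$ then, since $u^L(\xx^*,k)=M_{[n]}$ and $u^L(\xx,k)=\gamma_k\,\va_k\cdot\xx+\beta_k$ with $\gamma_k>0$ by \eqref{eq:aj}, the inequality is simply $\va_k\cdot\yy\ge\va_k\cdot\xx^*$. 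The work is the remaining case: $k\notin J$ and $M_{\{k\}}>M_{[n]}$, where $\va_k$ is available, $u^L(\xx^*,k)>M_{[n]}$, and we must call $\ASSE$.

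In that case the plan is to craft a fake follower matrix $\tilde u^F$ so that an $\ASSE$ query reveals the sign of $u^L(\yy,k)-M_{[n]}$. The matrix uses three kinds of columns. Column $k$ gets an affine payoff with gradient proportional to $-\va_k$, so that (before competition) $k$ is preferred exactly on a $\va_k$-half-space of $\Delta_m$ whose exposed (top) face, on which $u^L(\cdot,k)$ is constant and equals $u^L(\yy,k)$, contains $\yy$; the goal is thus that the leader's best payoff achievable with response $k$ be $u^L(\yy,k)$, attained at $\yy$. Each column $j_0\in J$ gets (essentially) the affine payoff $\va_{j_0}\cdot\xx^*-\va_{j_0}\cdot\xx$; the crucial fact $\max_{j_0\in J}\bigl(\va_{j_0}\cdot\xx^*-\va_{j_0}\cdot\xx\bigr)\ge 0$ for all $\xx\in\Delta_m$ --- which holds because otherwise $u^L(\xx,j_0)>u^L(\xx^*,j_0)=M_{[n]}$ for every $j_0\in J$, contradicting $M_J=M_{[n]}$ --- guarantees that wherever $k$ is not a best response the follower's best response among $J$ is some $j_0$ with $u^L(\xx,j_0)\le M_{[n]}$, while the value $M_{[n]}$ itself is attained at $\xx^*$. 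All other columns get a sufficiently small constant payoff and are strictly dominated. Then the leader's SSE value is $\max\{u^L(\yy,k),M_{[n]}\}$, and (using the optimistic tie-breaking of SSE, which is exactly what resolves the behaviour at the boundary point $\yy$) the profile $(\yy,k)$ lies in the SSE set iff $u^L(\yy,k)\ge M_{[n]}$ --- the bit returned by the oracle.

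The step I expect to be the main obstacle is making this construction precise in the \emph{blind} setting. The delicate point is the interface between column $k$'s region and the $J$-columns' regions at $\yy$: because affine payoffs only produce half-space boundaries, $\yy$ must sit on the boundary of column $k$'s region, so one has to choose the affine offsets of column $k$ and of the $J$-columns so that $k$ is among the follower's best responses at $\yy$ while the region of $k$ does not ``leak'' onto the $\va_k\cdot\xx>\va_k\cdot\yy$ side (a leak would make the oracle compare against the wrong payoff level), and symmetrically so that no $J$-column's region extends into $\{\xx:u^L(\xx,j_0)>M_{[n]}\}$. A convenient way to absorb the unavoidable residual slack is to carry a scaling parameter of polynomially many bits in the offsets (or, in the spirit of the standard SSE tie-breaking justification, to query about an infinitesimal perturbation of $\yy$), so that the oracle effectively tests $u^L(\yy,k)+\varepsilon\ge M_{[n]}$ for a negligible rational $\varepsilon$; since the threshold $d_j^*=(M_{[n]}-\beta_j)/\gamma_j$ of \Cref{lmm:opt-to-inducibility} is a rational of bit-size polynomial in that of $u^L$, its exact value is recovered from the approximate test by a Stern--Brocot search. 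Granting the subroutine, the theorem follows: run it inside \Cref{lmm:opt-to-inducibility} for each $j$ (using the shortcuts above when they apply), obtain each threshold $d_j^*$, and solve the resulting $n$ instances of LP~\eqref{lp:maximin-inducibility}, now with entirely known data; all of this uses polynomial time and polynomially many oracle calls.
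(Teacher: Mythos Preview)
Your overall plan matches the paper's: reduce to inducibility via \Cref{lmm:opt-to-inducibility}, and build a witness in which the $J$-columns threaten the leader with $M_{[n]}$ while column $k$ offers $u^L(\yy,k)$. The paper, however, does not try to write down $\tilde u^F$ directly. It instead constructs a fully known \emph{surrogate leader matrix}
\[
\tilde u^L(\xx,j)=
\begin{cases}
\va_j\cdot\xx-\va_j\cdot\xx^*,& j\in J\setminus\{k\};\\
\va_k\cdot\xx-\va_k\cdot\yy,& j=k;\\
1,&\text{otherwise,}
\end{cases}
\]
(first defining $\va_j$ as the indicator of $I_j$ for those $j\in J$ with $M_{\{j\}}=M_{[n]}$, a sub-case you do not address), applies \Cref{thm:maximin-characterization} in the full-information setting to this $\tilde u^L$ to obtain a $\tilde u^F$ making $(\yy,k)$ an SSE of $(\tilde u^L,\tilde u^F)$, and then proves that the very same $\tilde u^F$ is an inducibility witness with respect to the true $u^L$. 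One exact query to $\ASSE$ on $(\tilde u^F,\yy,k)$ then decides inducibility.

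The obstacle you flag is real, and your proposed patch does not close it. The oracle $\ASSE$ is asked whether the \emph{specific} profile $(\yy,k)$ is an SSE; it returns ``no'' whenever $\yy$ fails to be an exact maximizer of $u^L(\cdot,k)$ over $\wt\BR^{-1}(k)$, irrespective of whether $u^L(\yy,k)\ge M_{[n]}$. With affine column payoffs the boundary of $\wt\BR^{-1}(k)$ against the $J$-columns is not the hyperplane $\va_k\cdot\xx=\va_k\cdot\yy$ but a piecewise-linear surface tilted by the $\va_{j_0}$'s, and for generic $\yy$ its top (in the $\va_k$ direction) is not at $\yy$. A large scaling $\lambda$ shrinks this displacement but never removes it, so the oracle is comparing $u^L(\xx^\dagger,k)$ at an uncontrolled $\xx^\dagger\neq\yy$ to $M_{[n]}$; querying about $(\yy,k)$ then yields ``no'' even in the inducible case, and a Stern--Brocot search on $\yy$ cannot recover. (One could instead compute $\xx^\dagger$ and query $(\xx^\dagger,k)$, turning this into an approximate threshold test with one-sided error $O(1/\lambda)$ and then round---but that is a different argument from the one you wrote.) The paper's detour through $\tilde u^L$ is precisely what absorbs this difficulty: the Birmpas et al.\ construction, applied to the known $\tilde u^L$, already handles the delicate boundary placement, and the short equivalence argument transfers the witness back to $u^L$.
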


\begin{proof}[Proof sketch]
By \Cref{lmm:opt-to-inducibility}, LP~\eqref{lp:maximin-inducibility} reduces to  deciding the inducibility of any given strategy profile $(\yy, k)$.
We show that this inducibility problem can be efficiently solved.
Specifically, we argue that the following algorithm produces an {\em inducibility witness} $\tilde{u}^F$ for $(\yy, k)$ in polynomial time.
That is, $(\yy, k)$ is inducible (with respect to $u^L$) if and only if it is an SSE in $(u^L, \tilde{u}^F)$.
Hence, by querying $\ASSE$ to check whether $(\yy, k)$ is an SSE of $(u^L, \tilde{u}^F)$, we can decide the inducibility of  $(\yy, k)$. The stated result then follows.

\begin{boxedtext}
Construct an inducibility witness $\tilde{u}^F$ of $(\yy, k)$:
\begin{itemize}[leftmargin=*,itemsep=1mm]
\item[1.] 
For every $j$ such that $M_{\{j\}} = M_{[n]}$, since $\va_j$ is not given, let $\va_j$ be a vector in $\mathbb{R}^{m}$ such that: $a_{ji} = 1$ if $i \in \calM_{\{j\}}$, and $a_{ji} = 0$ otherwise.

\item[2.] 
Construct a payoff matrix $\tilde{u}^L \in \mathbb{R}^{m \times n}$ corresponding to the following payoff function:
\begin{equation}
\label{eq:equ-uL}
\tilde{u}^L(\xx, j) = 
\begin{cases}
\va_j \cdot \xx - \va_j \cdot \xx^*, & \text{ if } j \in J \setminus\{k\}; \\
\va_k \cdot \xx - \va_k \cdot \yy, & \text{ if } j = k; \\
1, & \text{ otherwise. }
\end{cases}
\end{equation}

\item[3.] 

Decide if $(\yy, k)$ is inducible with respect to $\tilde{u}^L$:
\begin{itemize}
\item 
If it is inducible, output a payoff matrix $\tilde{u}^F \in \mathbb{R}^{m \times n}$ such that $(\yy, k)$ is an SSE in $(\tilde{u}^L, \tilde{u}^F)$;

\item
Otherwise, output an arbitrary $\tilde{u}^F \in \mathbb{R}^{m \times n}$.
\end{itemize}

\end{itemize}
\end{boxedtext}

Namely, the above algorithm constructs a ``surrogate'' matrix $\tilde{u}^L$ using the given information $\va_j$, $J$, and $\xx^*$.
Then a witness $\tilde{u}^F$ is produced using $\tilde{u}^L$. 
The polynomial run-time of the algorithm is readily seen. In particular, Step~3 can be done in polynomial time according to \Cref{thm:maximin-characterization}. (Note that all parameters of $\tilde{u}^L$ is known, so this is a full-information setting.) 
It can also be proven that $(\yy, k)$ is inducible if and only if it is an SSE in $(u^L, \tilde{u}^F)$, where $\tilde{u}^F$ is the matrix produced by the above algorithm.
The details can be found in \Cref{sec:proof-of-thm:equ-game}.
\end{proof}

\section{Learning $\va_j$ - Directions towards which Leader's Utility Increases}
\label{sec:learningaj}

\begin{figure}[t]
	\center
	\begin{tikzpicture}
	
	\begin{ternaryaxis}[
	ternary limits relative=false,
	width= 45mm,
 	xtick=\empty,
 	ytick=\empty,
 	ztick=\empty,
	area style,
	grid=none,
	clip=false,
	xmin=0, xmax=1.0,
	ymin=0, ymax=1.0,
	zmin=0, zmax=1.0,
    axis on top,
	]
	
	\addplot3 coordinates {
	    (1, 0, 0)
		(0.32, 0.68, 0)
		(0.45, 0, 0.55)
        (1, 0, 0)
	};
	
	\addplot3 coordinates {
		(0, 1, 0)
		(0.6, 0.4, 0)
        (0.7, 0, 0.3)
		(0, 0, 1)
	};

 	\draw[black, thick] (0.57, 0.53, -0.1) -- (0.72, -0.1, 0.38);

	\node at (0.76, 0.12, 0.12) {\small $C$};
	\node at (0.36, 0.32, 0.32) {\footnotesize $\widetilde{\BR}^{-1}(n)$};
    
    \addplot3[only marks, color=BrickRed] coordinates{
        (0.6, 0.4, 0)
        (0.7, 0, 0.3)
    };
    
    \addplot3[only marks, color=black] coordinates{
        (1, 0, 0)
    };
	
	\node [black, rotate=14, anchor=west] at (0.72, -0.1, 0.38) {\small $\va_{n} \cdot \xx = d$};
	
	\end{ternaryaxis}
	
	\end{tikzpicture}
    \caption{
    A BR-correspondence $\widetilde{\BR}$ constructed for learning $\va_n$.
    The triangle represents the strategy space $\Delta_m$ of the leader.
    The region labeled $C$ is $\bigcup_{j \in [n-1]} \widetilde{\BR}^{-1}(j)$, which contains strategies inducing responses $j\neq n$ of the follower.
    The black dot at the top represents $\calM_{\{n\}}$ in this example. 
    The two red dots on the boundary of $\widetilde{\BR}^{-1}(n)$ are two critical points defining the boundary hyperplane.
	\label{fig:aj}}
\end{figure}

For notational simplicity, we present an algorithm for learning $\va_n$ instead of $\va_j$. The cases with other $\va_j$'s are analogous and can be handled by appropriate relabeling.

To learn $\va_n$, the high-level idea is to construct a BR-correspondence $\widetilde{\BR}$, such that the boundary of $\widetilde{\BR}^{-1}(n)$ is aligned with a hyperplane with norm vector $\va_n$.
Since we are searching ``in the dark'' and only have access to $\ASSE$, 
we scan through possible positions of the boundary in the hope of a position where all the points on the boundary are SSE strategies.
Take \Cref{fig:aj} as an example, we want to adjust the two red points defining the boundary to a position where both points are SSE strategies (which form SSEs along with response $n$ of the follower). The boundary then aligns with the contour of $u^L(\cdot,n)$ and its norm vector aligns with $\va_n$. 

The above idea is formalized in Lemma~\ref{lmm:ref-points-conditions} below, which also considers degenerate cases where $\va_n$ is parallel to a facet of $\Delta_m$ (e.g., when $\va_n$ is parallel to an edge of the simplex in \Cref{fig:aj}). Special treatment is needed for such degenerate cases, as it shall be clear in the sequel.
For ease of description, we reorder the leader's actions according to \Cref{asn:uL-large-small} throughout this section.
(The payoff information needed for the reordering is known because of \Cref{asn:M-j}.)
Specifically, in the case where $m_1=1$, we can simply let $\va_n = \mathbf{0}$ without further learning it, so we can assume that $m_1 > 1$.

\begin{observation}
\label{asn:uL-large-small}
Without loss of generality, we can assume that
$u^L(m_1,n)= u^L(m_1+1,n) = \cdots= u^L(m,n)= M_{\{n\}}$, $1 < m_1 \le m$, and $u^L(k,n) < M_{\{n\}}$ for all $k = 1,\dots, m_1-1$.
\end{observation}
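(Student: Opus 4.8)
The plan is to obtain this observation as an immediate bookkeeping consequence of \Cref{asn:M-j}, together with the linear structure of $u^L(\cdot, n)$. First I would unpack the definition in \eqref{eq:def-M} for the singleton $S = \{n\}$: there $M_{\{n\}} = \max_{\xx \in \Delta_m} \min_{j \in \{n\}} u^L(\xx, j) = \max_{\xx \in \Delta_m} u^L(\xx, n)$, and since $u^L(\cdot, n)$ is linear on the simplex its maximum is attained at a vertex. Hence $M_{\{n\}} = \max_{i \in [m]} u^L(i, n)$, the set of maximizers is exactly $I_n \coloneqq \argmax_{i \in [m]} u^L(i, n)$, and consequently $u^L(i, n) = M_{\{n\}}$ for every $i \in I_n$ while $u^L(k, n) < M_{\{n\}}$ for every $k \in [m] \setminus I_n$.

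Next I would invoke \Cref{asn:M-j}, which guarantees that $I_n$ (hence also $\calM_{\{n\}}$) can be computed efficiently via $\ASSE$. With $I_n$ in hand, I would relabel the leader's actions so that the $m_1 - 1$ indices in $[m] \setminus I_n$ come first (in arbitrary order) and the $|I_n|$ indices in $I_n$ come last (in arbitrary order), where $m_1 \coloneqq m - |I_n| + 1$; since $I_n \neq \emptyset$ we have $m_1 \le m$, and this ordering is exactly the one asserted. Finally, to justify the restriction $m_1 > 1$, I would observe that $m_1 = 1$ forces $I_n = [m]$, i.e.\ $u^L(\cdot, n) \equiv M_{\{n\}}$ is constant on $\Delta_m$; in that degenerate case \eqref{eq:aj} holds trivially with $\va_n = \mathbf{0}$, $\gamma_n = 1$, and $\beta_n = M_{\{n\}}$, so $\va_n$ need not be learned and we may assume $m_1 > 1$ for the remainder of the section.

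I do not expect any genuine obstacle here: the observation is purely a relabeling convention, and every ingredient is already available. The one place I would be careful is the identification $M_{\{n\}} = \max_{i} u^L(i, n)$, which uses both that $S$ is a singleton and that $u^L(\cdot, n)$ is linear over $\Delta_m$; once that is in place, computability of $I_n$ from \Cref{asn:M-j} and the relabeling complete the argument.
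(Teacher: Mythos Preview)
Your proposal is correct and follows precisely the paper's own justification: the paper treats this observation as a relabeling of the leader's actions using $I_n$ (computable by \Cref{asn:M-j}), and explicitly notes just before the observation that when $m_1 = 1$ one can set $\va_n = \mathbf{0}$ and skip learning it, which is exactly your handling of the degenerate case.
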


\begin{restatable}{lemma}{lmmrefpointsconditions}
\label{lmm:ref-points-conditions}
Suppose that the following properties hold for strategies $\xx^1,\dots, \xx^{m_1-1} \in\Delta_m$:
\begin{itemize}[itemsep=1mm]
\item[(a)] $u^L(\xx^1, n)=\dots=u^L(\xx^{m_1-1},n)$;
\item[(b)] $x_i^i>0$ for all $i\in[m_1-1]$; and
\item[(c)] $1-x_i^i = \sum_{k=m_1}^{m} x_k^i$ for all $i\in[m_1-1]$.
\end{itemize}
Let $\vb =\left(-\frac{1}{x_{1}^{1}},-\frac{1}{x_{2}^{2}},\cdots,-\frac{1}{x_{m_1-1}^{m_1-1}},0,\cdots,0 \right)$. 
Then there exists $\gamma \in \mathbb{R}_{>0}$ and $\beta \in \mathbb{R}$ such that $u^L(\xx, n) = \gamma \cdot \vb \cdot \xx+\beta$ for all $\xx \in \Delta_m$.
\end{restatable}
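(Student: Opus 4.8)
The plan is to show that the affine function $u^L(\cdot, n)$ restricted to $\Delta_m$ depends only on the coordinates $x_1, \dots, x_{m_1-1}$ (modulo the simplex constraint) and that, after this reduction, conditions (a)--(c) pin down its gradient direction up to a positive scalar, which will turn out to be $\vb$. First I would use \Cref{asn:uL-large-small}: since $u^L(k,n) = M_{\{n\}}$ for all $k \ge m_1$, write $u^L(\xx, n) = \sum_{i=1}^m x_i\, u^L(i,n)$ and substitute $\sum_{k \ge m_1} x_k = 1 - \sum_{i=1}^{m_1-1} x_i$ (valid on $\Delta_m$). This gives
\begin{equation*}
u^L(\xx, n) = M_{\{n\}} + \sum_{i=1}^{m_1-1} x_i \big( u^L(i,n) - M_{\{n\}} \big) = M_{\{n\}} + \sum_{i=1}^{m_1-1} c_i\, x_i,
\end{equation*}
where $c_i \coloneqq u^L(i,n) - M_{\{n\}} < 0$ for every $i \in [m_1-1]$, again by \Cref{asn:uL-large-small}. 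So on $\Delta_m$ the function is affine in $(x_1,\dots,x_{m_1-1})$ with all coefficients strictly negative.

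Next I would plug the strategies $\xx^1, \dots, \xx^{m_1-1}$ into this expression. Condition (c) says $\sum_{k\ge m_1} x_k^i = 1 - x_i^i$, i.e. $\sum_{\ell \in [m_1-1]} x_\ell^i = x_i^i$, so for each $i$ the only nonzero coordinate of $\xx^i$ among the first $m_1-1$ is $x_i^i$ itself (since all these coordinates are nonnegative and sum to $x_i^i$). Hence $u^L(\xx^i, n) = M_{\{n\}} + c_i\, x_i^i$. Condition (a) then gives $c_i\, x_i^i = c_\ell\, x_\ell^\ell$ for all $i,\ell \in [m_1-1]$; call this common value $-t$, with $t > 0$ because each $c_i < 0$ and each $x_i^i > 0$ (condition (b)). Therefore $c_i = -t / x_i^i$ for every $i$, which is exactly $t$ times the $i$-th coordinate of $\vb$. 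Consequently
\begin{equation*}
u^L(\xx, n) = M_{\{n\}} + \sum_{i=1}^{m_1-1} c_i\, x_i = M_{\{n\}} + t \sum_{i=1}^{m_1-1} \Big(-\tfrac{1}{x_i^i}\Big) x_i = t\,(\vb \cdot \xx) + M_{\{n\}},
\end{equation*}
using that the last $m - m_1 + 1$ entries of $\vb$ are zero. Setting $\gamma = t \in \mathbb{R}_{>0}$ and $\beta = M_{\{n\}}$ completes the argument.

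I do not expect a serious obstacle here — the statement is essentially a linear-algebra bookkeeping lemma — but the one place that needs care is the deduction that each $\xx^i$ has support among the first $m_1 - 1$ coordinates concentrated on index $i$: this uses nonnegativity of strategy entries together with condition (c) in the specific form that the partial sum over $[m_1-1]\setminus\{i\}$ vanishes. It is also worth stating explicitly why $t > 0$ (equivalently $\gamma > 0$): this is where \Cref{asn:uL-large-small}'s strict inequality $u^L(k,n) < M_{\{n\}}$ for $k < m_1$ is used, and it is the reason the degenerate case $m_1 = 1$ (where $\vb = \mathbf{0}$ and no such $\gamma$ exists) must be excluded beforehand, as the text already notes.
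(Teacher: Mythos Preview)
Your proof is correct and follows essentially the same approach as the paper's: both reduce $u^L(\cdot,n)$ on $\Delta_m$ to $M_{\{n\}} + \sum_{i<m_1}(u^L(i,n)-M_{\{n\}})\,x_i$ via \Cref{asn:uL-large-small}, use Property~(c) (together with the simplex constraint) to evaluate at each $\xx^i$, equate via Property~(a), and use Property~(b) plus the strict inequality in \Cref{asn:uL-large-small} to conclude $\gamma>0$. Your write-up is in fact slightly more explicit than the paper's in spelling out why Property~(c) forces $x_k^i=0$ for $k\in[m_1-1]\setminus\{i\}$.
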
 

\begin{proof}
Recall that by \Cref{asn:uL-large-small}, $u^L(m_1,n)=\cdots=u^L(m,n)$. 
Hence, 
\[u^L(\xx,n) \equiv \sum_{i=1}^m u^L(i,n) \cdot x_i \equiv \sum_{i=1}^{m_1-1} \left(u^L(i,n)-u^L(m,n) \right)\cdot x_i+u^L(m,n).\]
Given Property~(c), for each $i\in[m_1-1]$, we have 
$u^L(\xx^i,n)= \left(u^L(i,n)-u^L(m,n) \right) \cdot x_i^i+u^L(m,n)$, so Property~(a) implies that:
\[
\left (u^L(1,n)-u^L(m,n) \right) \cdot x_1^1=\cdots= \left( u^L(m_1-1,n)-u^L(m,n) \right) \cdot x_{m_1-1}^{m_1-1}.
\]
Let $\gamma=- \left(u^L(1,n)-u^L(m,n) \right) \cdot x_1^1$ and $\beta=u^L(m,n)$.
Hence, $\gamma > 0$, and since $x_i^i > 0$ by Property~(b), we get that $u^L(\xx, n) \equiv \gamma \cdot \vb \cdot \xx + \beta$.
\end{proof}

Following \Cref{lmm:ref-points-conditions}, we aim to find $m_1 - 1$ critical points $\xx^1,\dots,\xx^{m_1-1}$ with the listed properties. In  \Cref{fig:aj}, These are the red points defining the boundary hyperplane shared by area $C$.
To ensure Property~(a), the critical points need to form SSEs with action $n$. 
Ideally, we can design the BR-correspondence in a way such that the follower responds to strategies in $C$ (as in \Cref{fig:aj}) with a bad action to the leader, so that the best strategy of the leader in $C$ does not outperform the critical points. 
Sometimes this requires using more than one follower response to ``cover'' $C$.
We introduce the following useful concept called {\em maximin-cover}, or {\em cover} for short.

\begin{definition}[Cover]
\label{def:cover}
A payoff matrix $\mu \in \mathbb{R}^{m \times n}$ of the follower is said to be a {\em maximin-cover} (or {\em cover}) of $S \subseteq [n]$ if and only if
\begin{equation}
\label{eq:cover}
\max_{\xx \in \calM_S} \max_{j \in \wt{\BR}(\xx)} u^L(\xx, j) < M_S,
\end{equation}
where $\wt{\BR}$ denotes the BR-correspondence of $\mu$.
It is said to be a {\em proper cover} of $S$ if it holds in addition that $S \cap \wt{\BR}(\xx) = \emptyset$ for all $\xx \in \Delta_m$.
\end{definition}

We  arrange the BR-correspondence in region $C$ according to a cover of $\{n\}$, in an attempt to make the leader's maximum attainable payoff in $\widetilde{\BR}^{-1}(n)$ surpass that in $C$.
To see how this could work, consider moving the hyperplane $\va_n \cdot \xx = d$ in \Cref{fig:aj} towards the top. As it approaches the vertex at the top, $\widetilde{\BR}^{-1}(n)$ and $C$ will approach $\Delta_m$ and $\calM_{\{n\}}$, respectively.
The maximum attainable payoffs in $\widetilde{\BR}^{-1}(n)$ and $C$ then approaches 
$\max_{\xx \in \Delta_m} u^L(\xx, n) = M_{\{n\}}$ and $\max_{\xx \in \calM_{\{n\}}, j \in \wt{\BR}(\xx)} u^L(\xx, j)$, respectively, which correspond to the right and left sides of \Cref{def:cover} (with $S = \{n\}$).
Hence, given \Cref{def:cover}, Property~(a) can be achieved when the hyperplane is placed sufficiently close to the top.
The additional requirement that a cover is {\em proper} is useful as we do not want any strategies in $C$ to induce $n$, which will eventually alter the boundary of $\widetilde{\BR}^{-1}(n)$. 
It turns out that this requirement is actually not strictly more demanding: according to \Cref{lmm:proper-cover}, any cover can be efficiently converted into a proper cover. Therefore, in the remainder of the paper, we simply refer to a proper cover as a cover.
Moreover, a simple characterization of a cover is given in \Cref{lmm:cover-iff}: the existence of a cover requires a gap between $M_S$ and the maximin value $M_{[n]}$.

\begin{restatable}{lemma}{lmmpropercover}
\label{lmm:proper-cover}
Given a cover of set $S \subseteq [n]$, a proper cover of $S$ can be constructed in polynomial time.
\end{restatable}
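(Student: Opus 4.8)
The plan is to keep the given cover $\mu$ unchanged on every column indexed by $[n]\setminus S$ and to overwrite the columns indexed by $S$ with a value so small that those follower actions become strictly dominated, hence never best responses. Concretely, let $\wt{\BR}$ denote the BR-correspondence of $\mu$, put $c \coloneqq \min_{i',j'}\mu(i',j') - 1$, and define $\mu'\in\mathbb{R}^{m\times n}$ by $\mu'(i,j)=\mu(i,j)$ for $j\notin S$ and $\mu'(i,j)=c$ for $j\in S$; this is clearly computable from $\mu$ in polynomial time. I claim $\mu'$ is a proper cover of $S$; write $\wt{\BR}'$ for its BR-correspondence. The one subtle point, on which everything hinges, is why lowering the $S$-columns cannot spoil the cover inequality \eqref{eq:cover} --- and this comes from a structural property of covers established next.

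That property is: if $\mu$ is a cover of $S$, then $\wt{\BR}(\xx)\cap S=\emptyset$ for every $\xx\in\calM_S$. Indeed, $\xx\in\calM_S$ means $\min_{j\in S}u^L(\xx,j)=M_S$, so $u^L(\xx,j)\ge M_S$ for all $j\in S$; if some such $j$ were in $\wt{\BR}(\xx)$, the left-hand side of \eqref{eq:cover} would be at least $M_S$, contradicting the strict inequality defining a cover. The same argument with $S$ replaced by $[n]$ shows that no cover of $[n]$ exists (for $\xx\in\calM_{[n]}$ every action yields the leader at least $M_{[n]}$), so $S\subsetneq[n]$ and there is at least one column outside $S$.

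Both requirements on $\mu'$ now follow. \emph{Properness}: for every $\xx\in\Delta_m$ and $j\in S$ we have $\mu'(\xx,j)=c<\min_{i',j'}\mu(i',j')\le\max_{k\notin S}\mu'(\xx,k)$, using $[n]\setminus S\ne\emptyset$; hence every action in $S$ is strictly dominated under $\mu'$, so $S\cap\wt{\BR}'(\xx)=\emptyset$ for all $\xx$. \emph{Cover}: fix $\xx\in\calM_S$. By the structural property, $\wt{\BR}(\xx)\subseteq[n]\setminus S$, and since $\mu$ and $\mu'$ agree on the columns in $[n]\setminus S$ while the $S$-columns of $\mu'$ lie strictly below $\max_{k\notin S}\mu'(\xx,k)$, the two best-response sets coincide at $\xx$: $\wt{\BR}'(\xx)=\wt{\BR}(\xx)$. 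Therefore the left-hand side of \eqref{eq:cover} for $\mu'$ equals that for $\mu$, which is $<M_S$. Thus $\mu'$ is a proper cover of $S$, constructed in polynomial time. The main (and essentially only) obstacle is recognizing the structural property above; once it is in hand, the ``knock the $S$-columns down'' construction works verbatim.
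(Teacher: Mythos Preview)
Your proof is correct and follows essentially the same approach as the paper: both constructions overwrite the $S$-columns with a value strictly below the minimum entry (the paper uses $\min_{i,j\in[m]\times([n]\setminus S)}\mu(i,j)-1$, you use $\min_{i',j'}\mu(i',j')-1$), and both rely on the same key structural observation that a cover already satisfies $\wt{\BR}(\xx)\cap S=\emptyset$ on $\calM_S$, so the best-response sets on $\calM_S$ are unchanged. Your explicit remark that $S\subsetneq[n]$ (since no cover of $[n]$ can exist) is a nice clarification the paper leaves implicit.
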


\begin{restatable}{lemma}{lmmcoveriff}
\label{lmm:cover-iff}
For any $S\subseteq[n]$, a cover of $S$ exists if and only if $M_S > M_{[n]}$. 
\end{restatable}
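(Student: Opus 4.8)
The plan is to prove the two directions separately, with the forward direction ($M_S > M_{[n]}$ $\Rightarrow$ a cover exists) being the constructive and more involved one, and the reverse direction ($M_S \le M_{[n]}$ $\Rightarrow$ no cover exists) following from a fairly short averaging/extension argument.

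For the forward direction, suppose $M_S > M_{[n]}$. I want to build a follower payoff matrix $\mu$ whose BR-correspondence makes the leader's best attainable payoff on $\calM_S$ strictly below $M_S$. The key observation is that $M_{[n]} < M_S$ means there is a maximin strategy $\yy \in \calM_{[n]}$ with $\min_{k \in [n]} u^L(\yy, k) = M_{[n]} < M_S$, and more importantly, for \emph{every} leader strategy $\xx$ (in particular every $\xx \in \calM_S$) there is \emph{some} follower action $k(\xx) \in [n]$ with $u^L(\xx, k(\xx)) \le M_{[n]} < M_S$. I would design $\mu$ so that on the region $\calM_S$ (or a neighborhood of it), the follower is incentivized to play exactly such ``punishing'' actions. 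Concretely, one can try to use a handful of actions $k$ drawn from $[n]$ and set $\mu$ so that $\wt{\BR}(\xx)$ for $\xx$ near $\calM_S$ only contains actions $k$ with $u^L(\xx,k) \le M_{[n]}$; since $\calM_S$ is a polytope, finitely many such ``cover pieces'' suffice. Then $\max_{\xx \in \calM_S}\max_{j \in \wt{\BR}(\xx)} u^L(\xx,j) \le M_{[n]} < M_S$, which is exactly \Cref{eq:cover}. The construction should mirror the gadget-building style already used in the warm-up section (dominant/near-dominant action tricks) so that the relevant $\wt{\BR}^{-1}$ sets are controlled, and it should be checkable that everything is polynomial-time computable given \Cref{asn:M-j}.

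For the reverse direction, suppose $M_S \le M_{[n]}$ and, toward a contradiction, that $\mu$ is a cover of $S$ with BR-correspondence $\wt{\BR}$. Pick $\xx \in \calM_S$ achieving $\max_{j \in \wt{\BR}(\xx)} u^L(\xx,j) < M_S$; in particular, for the follower's actual best response $j \in \wt{\BR}(\xx)$ we have $u^L(\xx, j) < M_S \le M_{[n]}$. The idea is that this says the leader, restricted to playing $\xx$, cannot even secure $M_{[n]}$ against $\wt{\BR}$ — but I need to turn this pointwise statement into a contradiction with the \emph{global} quantity $M_{[n]} = \max_{\yy}\min_k u^L(\yy,k)$. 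The cleanest route is: since $\xx \in \calM_S$, we have $\min_{k \in S} u^L(\xx,k) = M_S$, so $u^L(\xx,k) \ge M_S$ for all $k \in S$; combined with $u^L(\xx,j) < M_S$ this forces $j \notin S$, i.e.\ $\wt{\BR}(\xx) \cap S = \emptyset$ at this $\xx$ — which is not yet a contradiction on its own, so I will need to leverage the freedom in choosing $\xx$ within $\calM_S$ together with the definition of $M_{[n]}$, probably by showing that $\min_{k\in[n]} u^L(\xx,k) < M_S \le M_{[n]}$ must fail to improve on $M_{[n]}$ for \emph{some} maximin-type strategy, or by invoking a minimax/LP-duality argument that $M_S \le M_{[n]}$ already guarantees $\calM_S$ contains a strategy securing $M_{[n]}$ against all of $[n]$. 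I would finish by deriving that such a strategy contradicts \Cref{eq:cover}.

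The main obstacle I anticipate is the forward construction: ensuring that a single matrix $\mu$ simultaneously (i) covers all of $\calM_S$ with punishing responses, (ii) does so using only actions whose leader-payoff on the relevant region is provably $\le M_{[n]}$, and (iii) has a BR-correspondence we can pin down well enough to verify \Cref{eq:cover} — all while keeping the construction polynomial-time. The subtlety is that $\calM_S$ can be a nontrivial face/polytope and the ``punishing'' action may vary across it, so the gadget must stitch together several linear pieces; I expect this to require the same kind of careful case analysis (and possibly an appeal to \Cref{lmm:proper-cover} to clean up afterward) that the surrounding lemmas use. The reverse direction I expect to be routine modulo correctly invoking minimax duality to pass from the restricted value $M_S$ to the global value $M_{[n]}$.
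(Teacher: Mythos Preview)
Your forward direction is vastly more complicated than necessary. The paper's construction is a single line: take $\mu = -u^L$. Then for every $\xx$ the best response under $\mu$ is exactly $\argmin_{k} u^L(\xx,k)$, so for any $\xx \in \calM_S$ and any $j \in \wt{\BR}(\xx)$ one has $u^L(\xx,j) = \min_k u^L(\xx,k) \le M_{[n]} < M_S$, giving \Cref{eq:cover} immediately. There is no need for a polytope cover by finitely many ``pieces,'' no gadget construction, and no polynomial-time concern (the lemma is purely existential). Your instinct that the follower should be made adversarial is exactly right; you just missed that a single globally adversarial matrix already does the job uniformly over all of $\calM_S$.

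Your reverse direction stalls at a point where the completion is one line, and the line you are groping for (minimax duality, ``some maximin-type strategy'') is not needed. You already observed that for your chosen $\xx \in \calM_S$ there is $j \in \wt{\BR}(\xx)$ with $u^L(\xx,j) < M_S$, hence $\min_{k \in [n]} u^L(\xx,k) < M_S$. But the cover inequality \Cref{eq:cover} gives this for \emph{every} $\xx \in \calM_S$, not just one. And for $\xx \notin \calM_S$ you trivially have $\min_{k \in [n]} u^L(\xx,k) \le \min_{k \in S} u^L(\xx,k) < M_S$. Putting the two cases together, $\min_{k \in [n]} u^L(\xx,k) < M_S$ holds for all $\xx \in \Delta_m$, so $M_{[n]} < M_S$ directly --- contradicting $M_S \le M_{[n]}$. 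No duality argument is required.
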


Next, we first demonstrate in \Cref{sec:find-xxs} that given a cover of $\{n\}$ a set of strategies $\xx^1, \dots, \xx^{m_1 - 1}$ satisfying the properties in \Cref{lmm:ref-points-conditions} can be computed efficiently, and hence we learn $\va_j$.
In the special case where $\{n\}$ does not admit a cover, \Cref{lmm:cover-iff} implies that $M_{\{n\}}=M_{[n]}$. This means that $J = \{n\}$ and an arbitrary $\xx^* \in \calM_{\{n\}}$ already form a tuple satisfying \eqref{eq:J-M}, so by \Cref{thm:equ-game}, we are done without learning $\va_n$.
We demonstrate how to compute (or decide the existence of) a cover in \Cref{sec:computing-cover}.
In summary, the approach to learning $\va_n$ is given in \Cref{fig:learning-an}.

\begin{figure}[t]
\begin{boxedtext*}
\begin{itemize}[leftmargin=*,itemsep=1mm]
\item[1.]
Compute a cover $\mu$ of $\{n\}$. 
\detailsprovidedin{\Cref{sec:computing-cover}}

\begin{itemize}
\item If no cover of $\{n\}$ exists, claim that $M_{\{n\}}=M_{[n]}$; Pick $J = \{n\}$ and an arbitrary $\xx^* \in \calM_{\{n\}}$, and go to Step~3 in \Cref{fig:main-approach}.
\end{itemize}

\item[2.] 
Use $\mu$ to compute strategies $\xx^1, \dots, \xx^{m_1-1}$ satisfying the properties in \Cref{lmm:ref-points-conditions};
Output $\va_n = \left(-\frac{1}{x_{1}^{1}},-\frac{1}{x_{2}^{2}},\cdots,-\frac{1}{x_{m_1-1}^{m_1-1}},0,\cdots,0 \right)$.
\detailsprovidedin{\Cref{sec:find-xxs}}

\end{itemize}
\end{boxedtext*}
\vspace{-3mm}
\caption{Summary of the approach to learning $\va_n$. (Relabeling $n$ to $j$ gives the procedures to learning each $\va_j$.) 
\label{fig:learning-an}}
\end{figure}

\subsection{Computing $\xx^1, \dots, \xx^{m_1-1}$}
\label{sec:find-xxs}

\begin{figure}[t]
    \centering
    \includegraphics[width=0.45\textwidth]{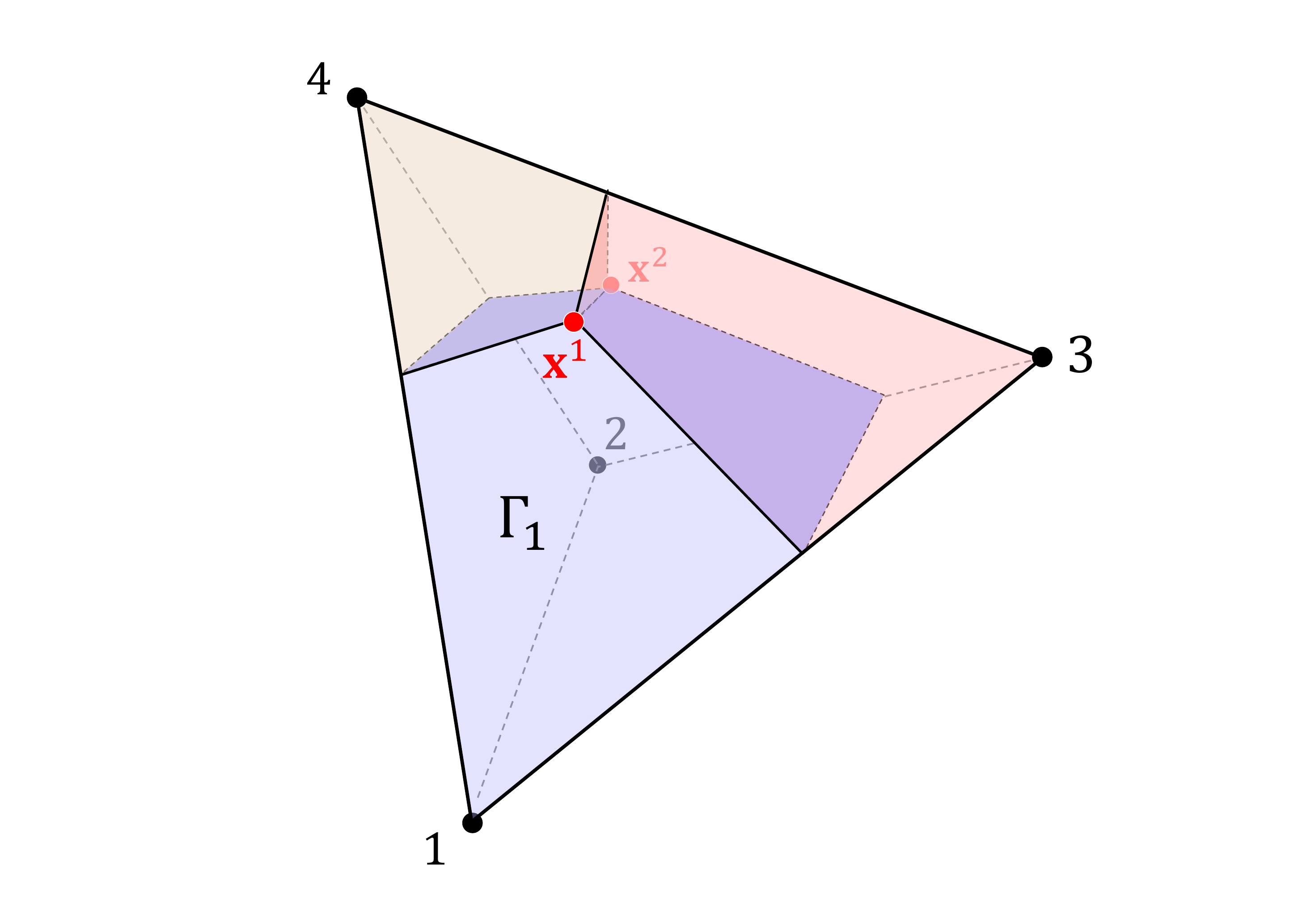}
    \caption{An illustration of $f_{\gg}(n)$ where $m=4$, $n=3$, and $m_1 = 3$ (which means $u^L(n, 3) = u^L(n, 4) = M_{\{n\}}$ according to \Cref{asn:uL-large-small}).
    The polytope represents $\Delta_m$.
    The three hyperplanes separating the regions are boundaries of $f_\gg^{-1}(j)$. 
    $\Gamma_1$ defined in \Cref{eq:Qi} is the surface at the front, where $x_1 + x_3 + x_4 = 1$ for all $\xx$.}
    \label{fig:lmm-boundary-point}
\end{figure}

We now describe how to compute $\xx^1, \dots, \xx^{m_1-1}$, given a cover $\mu$ of $\{n\}$.
For each $i\in [m_1-1]$, define the following set 
\begin{equation}
\label{eq:Qi}
\Gamma_i \coloneqq \left\{\xx \in \Delta_m: 1 - x_i = \sum_{k=m_1}^m x_k \right\},
\end{equation}
which is a facet of $\Delta_m$ that contains leader strategies satisfying Property~(c) stated in \Cref{lmm:ref-points-conditions}.  
We aim to construct a function $\tilde{u}^F$ that induces a strategy $\xx^i\in \Gamma_i$ to form an SSE with $n$, for each $i\in [m_1-1]$. 
This requires that the leader's maximum attainable payoff with respect to $\tilde{u}^F$ is achievable at every $\Gamma_i$. If $\calM_{\{n\}}$ is covered by only one action of the follower, say action $i$, this is relatively easy to achieve because the boundary separating $\widetilde{\BR}^{-1}(n)$ and $\widetilde{\BR}^{-1}(i)$ will be a hyperplane as in \Cref{fig:aj}. However, if more than one action is used to cover $\calM_{\{n\}}$, the shape of the separation surface may become irregular, possibly with vertices sticking out in its interior. We need a more sophisticated construction to ensure that these interior points do not yield higher payoffs than the best leader strategies in $\Gamma_i$.

Our construction proceeds as follows, where we let $\xx^1, \dots, \xx^{m_1-1}$ be parameterized by a vector $\gg = (g_{1}, \dots, g_{m_1 -1}) \in \mathbb{R}^{m_1 - 1}$.
The construction ensures that one optimal strategy of the leader always appears in some $\Gamma_i$, and with fine tuning of $\gg$ this can further be guaranteed for all $\Gamma_i$.
\begin{boxedtext}
\begin{itemize}[leftmargin=*,itemsep=1mm]
\item[1.] 
Given $\gg = (g_{1}, \dots, g_{m_1 -1})$, let
\begin{align}
\label{eq:tuF-g}
\tilde{u}_\gg^F(\xx,j) \coloneqq 
\begin{cases}
\sum_{i = m_1}^{m} x_i \cdot \mu(i, j) & \text{ if } j \in [n-1]; \\
\sum_{i = 1}^{m_1 - 1} x_i \cdot g_i + W \cdot \sum_{i = m_1}^{m} x_i & \text{ if } j = n.
\end{cases}
\end{align}
where $W \coloneqq \min_{i, j} \mu(i,j) - 1$.

\item[2.]
For every $i\in[m_1-1]$, pick an arbitrary leader strategy $\xx^i$ such that
\begin{equation}
\label{eq:xx-i}
\xx^i \in \argmin_{\xx \in \Gamma_i \cap f_\gg^{-1}(n)} x_i,
\end{equation}
where $f_\gg: \Delta_m \to 2^{[n]}$ denotes the BR-correspondence of $\tilde{u}^F_\gg$.
\end{itemize}
\end{boxedtext}

For simplicity, we omit the dependencies of $\xx^i$ on $\gg$ in the notation. 
We shall show soon that the arbitrary choice of $\xx^i$ in \eqref{eq:xx-i} suffices for our purpose.
\Cref{fig:lmm-boundary-point} provides an illustration of the notions defined above.

We show how to find an appropriate $\gg$, so that $\xx^1, \dots, \xx^{m_1-1}$ selected above satisfy the properties in \Cref{lmm:ref-points-conditions}.
Indeed, the following lemma indicates that Properties~(b) and (c) hold as long as we choose $\gg$ with $g_i > 0$ for all $i$.

\begin{restatable}{lemma}{lmmxigt}
\label{lmm:xi-gt-0}
$x_i^i > 0$ for any choice of $\gg$. Moreover, if $g_i > 0$ then
$\Gamma_i \cap f_\gg^{-1}(n) \neq \emptyset$.
\end{restatable}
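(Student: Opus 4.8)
The plan is to analyze the structure of the BR-correspondence $f_\gg$ of $\tilde{u}^F_\gg$, with special attention to the set $f_\gg^{-1}(n)$ and its intersection with each facet $\Gamma_i$. First I would unpack the two claims: the lemma asserts (i) $x_i^i>0$ for the strategy $\xx^i$ selected in \eqref{eq:xx-i}, for \emph{every} choice of $\gg$, and (ii) whenever $g_i>0$, the set $\Gamma_i\cap f_\gg^{-1}(n)$ over which the minimization in \eqref{eq:xx-i} is taken is nonempty, so that $\xx^i$ is well-defined. Since $\xx^i$ is chosen to minimize $x_i$ over $\Gamma_i\cap f_\gg^{-1}(n)$, claim (i) is really the statement that no strategy in $f_\gg^{-1}(n)$ lying on the facet $\Gamma_i$ can have its $i$-th coordinate equal to zero.

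For claim (i), I would argue by contradiction: suppose $\xx\in\Gamma_i\cap f_\gg^{-1}(n)$ with $x_i=0$. Then the defining equation of $\Gamma_i$, namely $1-x_i=\sum_{k=m_1}^m x_k$, forces $\sum_{k=m_1}^m x_k = 1$, i.e. $\xx$ is supported entirely on the ``large'' actions $\{m_1,\dots,m\}$. Now evaluate $\tilde{u}^F_\gg(\xx,\cdot)$ at such an $\xx$: from \eqref{eq:tuF-g}, $\tilde{u}^F_\gg(\xx,n) = W\cdot\sum_{i=m_1}^m x_i = W$, whereas for $j\in[n-1]$ we get $\tilde{u}^F_\gg(\xx,j) = \sum_{i=m_1}^m x_i\cdot\mu(i,j)$, which is a convex combination of entries $\mu(i,j)\ge\min_{i,j}\mu(i,j) = W+1 > W$. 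Hence every $j\in[n-1]$ strictly beats $n$ at $\xx$, so $n\notin f_\gg(\xx)$, contradicting $\xx\in f_\gg^{-1}(n)$. This shows any $\xx\in\Gamma_i\cap f_\gg^{-1}(n)$ has $x_i>0$, and in particular the minimizer $\xx^i$ has $x_i^i>0$, giving (i) regardless of $\gg$.

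For claim (ii), assume $g_i>0$; I would exhibit an explicit point of $\Gamma_i\cap f_\gg^{-1}(n)$. The natural candidate is the point $\xx$ with $x_i = t$ and $x_m = 1-t$ for suitable $t\in(0,1)$ (and all other coordinates zero), which lies on $\Gamma_i$ since $1-x_i = 1-t = x_m = \sum_{k=m_1}^m x_k$ (recall $1<m_1\le m$ so $m\ge m_1$). Along this segment, $\tilde{u}^F_\gg(\xx,n) = t\cdot g_i + (1-t)\cdot W$ and $\tilde{u}^F_\gg(\xx,j) = (1-t)\cdot\mu(m,j)$ for $j\in[n-1]$. As $t\to 1$, the $n$-payoff tends to $g_i>0$ while each competing payoff tends to $0$; hence for $t$ close enough to $1$, action $n$ is a (strict) best response, so $\xx\in f_\gg^{-1}(n)$. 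Combined with $\xx\in\Gamma_i$, this gives $\Gamma_i\cap f_\gg^{-1}(n)\neq\emptyset$.

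The main obstacle I anticipate is claim (ii): I must be careful that the candidate segment really lives in $\Delta_m$ and on $\Gamma_i$ for the relevant $i\in[m_1-1]$ (which requires $i\le m_1-1 < m_1\le m$, so $i\neq m$ and the two coordinates are distinct), and that $n$ genuinely becomes the \emph{unique} best response rather than merely tying — although for the existence claim a tie would already suffice since $f_\gg(\xx)$ is a set and we only need $n\in f_\gg(\xx)$. A secondary subtlety is ensuring the argument does not secretly use properties of the cover $\mu$ beyond the bound $W = \min_{i,j}\mu(i,j)-1$; indeed both parts only use this bound and the sign of $g_i$, which matches the lemma's generality. Everything else is routine evaluation of the piecewise-linear payoff function.
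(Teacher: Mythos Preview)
Your proof is correct and follows essentially the same approach as the paper. For part (i), you unpack explicitly the computation that the paper summarizes in one line (the paper just says ``because $\tilde{u}_\gg^F$ is a cover of $\{n\}$'', which really amounts to your direct comparison $\tilde{u}^F_\gg(\xx,n)=W<W+1\le \tilde{u}^F_\gg(\xx,j)$). For part (ii), the paper takes the simpler witness $\xx=\ee_i$ (i.e., your $t=1$), where $\tilde{u}^F_\gg(\xx,n)=g_i>0=\tilde{u}^F_\gg(\xx,j)$ for $j\in[n-1]$, so no limiting argument is needed; your segment construction is correct but unnecessarily elaborate.
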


\begin{proof}
For any $\xx \in \Gamma_i$, if $x_i = 0$ then $\xx \notin f_\gg^{-1}(n)$ because $\tilde{u}_{\gg}^F$ is a cover of $\{n\}$.
Moreover, if $g_i > 0$, then consider a special $\xx \in \Gamma_i$ such that $x_i = 1$ and $x_k = 0$ for all $k \in [n] \setminus \{i\}$. It can be verified that $\xx \in f_\gg^{-1}(n)$.
\end{proof}

\Cref{lmm:minxi-maxuL} shows that the leader's utility on $(\xx,n)$ for $\xx \in \Gamma_i$ only depends on the value of $x_i$. 

\begin{restatable}{lemma}{lmmminximaxuL}
\label{lmm:minxi-maxuL}
For all $\xx \in \Gamma_i$, it holds that $u^L(\xx, n) =  c \cdot x_i + d$ for some constants $c < 0$ and $d$.
\end{restatable}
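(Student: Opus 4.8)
The plan is to exploit the specific structure of $\Gamma_i$ together with \Cref{asn:uL-large-small}, showing that $u^L(\cdot, n)$ restricted to $\Gamma_i$ collapses to an affine function of the single coordinate $x_i$. First I would recall that by \Cref{asn:uL-large-small} we have $u^L(m_1, n) = \dots = u^L(m, n) = M_{\{n\}}$, so that
\[
u^L(\xx, n) = \sum_{k=1}^{m} u^L(k,n) \cdot x_k = \sum_{k=1}^{m_1-1} u^L(k,n)\cdot x_k + M_{\{n\}} \sum_{k=m_1}^{m} x_k.
\]
Now the defining equation of $\Gamma_i$ in \eqref{eq:Qi}, namely $1 - x_i = \sum_{k=m_1}^m x_k$, forces every $\xx \in \Gamma_i$ to put zero weight on each coordinate $k \in [m_1-1]\setminus\{i\}$: since $x_i + \sum_{k=m_1}^m x_k = 1$ and all coordinates are nonnegative and sum to $1$, the remaining coordinates must vanish. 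Substituting this into the expression above kills all terms $u^L(k,n)\cdot x_k$ for $k \ne i$, leaving $u^L(\xx, n) = u^L(i,n)\cdot x_i + M_{\{n\}}(1 - x_i)$, which is affine in $x_i$ with slope $c = u^L(i,n) - M_{\{n\}}$ and intercept $d = M_{\{n\}}$.

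Next I would verify the sign claim $c < 0$. This is immediate from \Cref{asn:uL-large-small}, which states $u^L(k,n) < M_{\{n\}}$ for all $k = 1,\dots,m_1-1$, and in particular for $k = i$ (recall $i \in [m_1-1]$). Hence $c = u^L(i,n) - M_{\{n\}} < 0$, as required, and the constants $c, d$ depend only on $i$, not on $\xx$.

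I do not anticipate a genuine obstacle here — the lemma is essentially a bookkeeping consequence of the reordering convention and the geometry of the facet $\Gamma_i$. The one point that warrants a sentence of care is the vanishing of the off-diagonal $[m_1-1]$ coordinates on $\Gamma_i$: it relies on $\xx \in \Delta_m$ (nonnegativity plus normalization) in addition to the facet constraint, so I would spell out that $\sum_{k \in [m]} x_k = 1$ combined with $x_i + \sum_{k=m_1}^m x_k = 1$ gives $\sum_{k \in [m_1-1]\setminus\{i\}} x_k = 0$, and nonnegativity then forces each such $x_k = 0$. With that observation in place, the rest is the one-line substitution above.
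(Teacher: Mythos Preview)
Your proposal is correct and follows essentially the same argument as the paper: both use the facet structure of $\Gamma_i$ (which forces $x_k=0$ for $k\in[m_1-1]\setminus\{i\}$) together with \Cref{asn:uL-large-small} to reduce $u^L(\xx,n)$ to $u^L(i,n)\cdot x_i + M_{\{n\}}(1-x_i)$. Your write-up is in fact a bit more explicit than the paper's---you spell out why the off-diagonal coordinates vanish and why $c<0$---but there is no substantive difference in approach.
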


\begin{proof}
Since $\xx \in \Gamma_i$, we have $x_k = 0$ for all $k = [m_1 - 1] \setminus \{i\}$, and hence 
$u^L(\xx, n) 
= \sum_{k=m_1}^m u^L(k,n) \cdot x_k + u^L(i,n) \cdot x_i$. Replacing $u^L(k,n)$ with $M_{\{n\}}$ for all $k = m_1, \dots, m$ (by \Cref{asn:uL-large-small}) and rearranging the terms give the desired result.
\end{proof}

To further ensure Property~(a), we define the following set for every $\gg$ with $g_i > 0$ for all $i$:
\[
I_\gg \coloneqq \left\{ i \in [m_1 - 1] : (\xx^i,n) \text{ is an SSE in game } \left(u^L, \tilde{u}_\gg^F \right) \right\}.
\]
Clearly, given $\xx^1,\dots,\xx^{m_1-1}$, $I_\gg$ can be computed using oracle $\ASSE$.
Moreover, \Cref{lmm:minxi-maxuL} implies that $\argmin_{\xx \in \Gamma_i \cap f_\gg^{-1}(n)} x_i = \argmax_{\xx \in \Gamma_i \cap f_\gg^{-1}(n)} u^L(\xx, n)$, so $I_\gg$ is well-defined---it is independent of the specific choice of $\xx^i$ in \eqref{eq:xx-i}. 
We then aim to find $\gg$ such that $I_\gg = [m_1-1]$, so that Property~(a) holds.
\Cref{thm:xxi} implies that 
this can be achieved
in polynomial time by inductively applying this result, thus leading to an efficient algorithm for computing $\xx^1, \dots,\xx^{m_1 -1}$.

\begin{restatable}{theorem}{thmxxi}
\label{thm:xxi}
Suppose that $\gg \in \mathbb{Q}_{>0}^{m_1 - 1}$.
If $I_\gg \neq [m_1 -1]$, then there exists $\gg' \in \mathbb{Q}_{>0}^{m_1 - 1}$, such that $I_\gg \subsetneq I_{\gg'}$.
Moreover, $\gg'$ can be computed in time polynomial in the bit-size of $\gg$.
\end{restatable}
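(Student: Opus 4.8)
The plan is to reduce the statement to one‑dimensional monotone searches driven by a closed‑form description of $u^L(\xx^i,n)$. For $i\in[m_1-1]$ put $c_i:=M_{\{n\}}-u^L(i,n)>0$ (so that $u^L(\xx,n)=M_{\{n\}}-c_i x_i$ for $\xx\in\Gamma_i$, by \Cref{lmm:minxi-maxuL}), let $W:=\min_{i,j}\mu(i,j)-1$ as in \eqref{eq:tuF-g}, and let $N^*:=\min_{\yy}\max_{j\in[n-1]}\sum_{k=m_1}^m y_k\,\mu(k,j)$, the minimum ranging over distributions $\yy$ on the rows $m_1,\dots,m$; note $N^*\ge W+1$. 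The first step is to establish, for every $\gg$ and every $i$,
\[
\min_{\xx\in\Gamma_i\cap f_\gg^{-1}(n)} x_i \;=\; \frac{N^*-W}{\,g_i+N^*-W\,}
\quad\text{and}\quad
\phi_i(g_i)\;:=\;u^L(\xx^i,n)\;=\;M_{\{n\}}-\frac{c_i\,(N^*-W)}{\,g_i+N^*-W\,}.
\]
Writing $\xx\in\Gamma_i$ with $x_i=s<1$ as $x_k=(1-s)y_k$ for $k\ge m_1$, membership $\xx\in f_\gg^{-1}(n)$ is the family of inequalities $g_i s\ge(1-s)\bigl(\sum_{k\ge m_1}y_k\mu(k,j)-W\bigr)$ over $j\in[n-1]$, i.e.\ $s\ge\frac{N(\yy)-W}{g_i+N(\yy)-W}$ with $N(\yy):=\max_{j\in[n-1]}\sum_k y_k\mu(k,j)$; since the right side is increasing in $N(\yy)$, the minimum over $\xx$ is attained at $N(\yy)=N^*$, and the formula for $\phi_i$ follows from \Cref{lmm:minxi-maxuL}. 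In particular $\phi_i(g_i)$ depends only on $g_i$, is continuous and strictly increasing, lies strictly below $M_{\{n\}}$, tends to $M_{\{n\}}$ as $g_i\to\infty$, and $N^*$ is the same for every $\Gamma_i$.

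Second, I would prove the structural identity $V_n(\gg):=\max_{\xx\in f_\gg^{-1}(n)}u^L(\xx,n)=\max_{i\in[m_1-1]}\phi_i(g_i)$. ``$\ge$'' is immediate since $\xx^i\in f_\gg^{-1}(n)$. For ``$\le$'', write $u^L(\xx,n)=M_{\{n\}}-\sum_{i<m_1}c_i x_i$ and note that $\xx\in f_\gg^{-1}(n)$ forces, by the same rearrangement applied to the good‑row part of $\xx$ (now using $N(\cdot)\ge N^*$), the single inequality $\sum_{i<m_1}(g_i+N^*-W)\,x_i\ge N^*-W$; minimizing $\sum_{i<m_1}c_i x_i$ over the pointed polyhedron $P:=\{z\in\mathbb{R}_{\ge 0}^{m_1-1}:\sum_i(g_i+N^*-W)z_i\ge N^*-W\}$ is a bounded LP whose optimum is attained at a vertex, and every vertex of $P$ is supported on a single coordinate $i$ with value $c_i\frac{N^*-W}{g_i+N^*-W}=M_{\{n\}}-\phi_i(g_i)$, whence $u^L(\xx,n)\le\max_i\phi_i(g_i)$. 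Writing also $V_C(\gg):=\max_{j\in[n-1]}\max_{\xx\in f_\gg^{-1}(j)}u^L(\xx,j)$, the leader's best attainable payoff against $\tilde u_\gg^F$ is $V(\gg)=\max\bigl(V_n(\gg),V_C(\gg)\bigr)$, and since $\xx^i\in f_\gg^{-1}(n)$ automatically, $i\in I_\gg$ iff $\phi_i(g_i)=V(\gg)$. As $\phi_i(g_i)\le V_n(\gg)$, this gives a dichotomy: either $V_C(\gg)>V_n(\gg)$ and $I_\gg=\emptyset$, or $V_C(\gg)\le V_n(\gg)$ and then $V(\gg)=V_n(\gg)$ with $I_\gg=\argmax_i\phi_i(g_i)\ne\emptyset$; which case holds is precisely whether $I_\gg=\emptyset$, which we test with $\ASSE$ (compute $\xx^1,\dots,\xx^{m_1-1}$ and query each $(\xx^i,n)$).

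Now the construction. If $I_\gg\ne\emptyset$, fix $i_0\notin I_\gg$, so $\phi_{i_0}(g_{i_0})<V_n(\gg)$ while $\max_{i'\ne i_0}\phi_{i'}(g_{i'})=V_n(\gg)$. I would increase only $g_{i_0}$: for $t\ge g_{i_0}$ let $\gg'$ agree with $\gg$ except $g'_{i_0}=t$. The values $\phi_{i'}(g'_{i'})$ for $i'\ne i_0$ are unchanged, and $V_C$ can only decrease with $t$ (a larger $n$-column shrinks each $f_\gg^{-1}(j)$, $j\in[n-1]$), so $V(\gg')$ stays equal to $V_n(\gg)$ for all $t$ with $\phi_{i_0}(t)\le V_n(\gg)$; hence $(\xx^{i_0},n)$ is an SSE of $(u^L,\tilde u_{\gg'}^F)$ iff $t$ reaches the threshold $t^*$ at which $\phi_{i_0}$ equals $V_n(\gg)$, which exists and exceeds $g_{i_0}$ since $\phi_{i_0}$ increases continuously to $M_{\{n\}}>V_n(\gg)$. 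From the closed forms $t^*=c_{i_0}(g_{i^*}+N^*-W)/c_{i^*}-(N^*-W)$ for any $i^*\in I_\gg$, a rational of bit‑size polynomial in that of $\gg$ and of $u^L,\mu$, and the predicate ``$(\xx^{i_0},n)$ is an SSE'' is monotone in $t$ (false below $t^*$, true at and above), so a Stern--Brocot binary search with $\ASSE$ pinpoints $t^*$ exactly; at $t=t^*$, $V(\gg')=V_n(\gg)$, so every $i'\in I_\gg$ stays in $I_{\gg'}$ and $i_0$ joins, giving $I_\gg\subsetneq I_{\gg'}$. If instead $I_\gg=\emptyset$, I would scale: $\gg'(\lambda):=\lambda\gg$ for $\lambda\ge 1$. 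Then $V_n(\gg'(\lambda))=\max_i\phi_i(\lambda g_i)\to M_{\{n\}}$, while a compactness argument using that $\mu$ is a (proper) cover of $\{n\}$ shows $\limsup_{\lambda\to\infty}V_C(\gg'(\lambda))\le\max_{\xx\in\calM_{\{n\}}}\max_{j\in\wt{\BR}(\xx)}u^L(\xx,j)<M_{\{n\}}$ by \Cref{def:cover}: any $\xx_\lambda\in f_{\gg'(\lambda)}^{-1}(j)$ has mass on rows $1,\dots,m_1-1$ tending to $0$, so a limit point lies in $\calM_{\{n\}}$ with $j$ a $\mu$-best response there. Since $V_n(\gg'(\lambda))$ is increasing and $V_C(\gg'(\lambda))$ non‑increasing in $\lambda$, the predicate ``$V_C\le V_n$'' (equivalently ``$I_{\gg'(\lambda)}\ne\emptyset$'') is monotone and holds for all large $\lambda$; its threshold $\lambda^*$, the relevant root of $V_n(\lambda\gg)=V_C(\lambda\gg)$, is piecewise‑rational in $\lambda$ with polynomially many pieces of polynomial bit‑size (parametric‑LP reasoning), hence poly‑size, and is found by binary search with $\ASSE$; taking $\gg'=\gg'(\lambda^*)$ gives $\emptyset=I_\gg\subsetneq I_{\gg'}$.

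The main obstacle is the first two steps — the closed form for $\phi_i$ and the identity $V_n(\gg)=\max_i\phi_i(g_i)$. The right change of variables, isolating $x_i$ from the good‑row part of $\xx\in\Gamma_i$, collapses membership in $f_\gg^{-1}(n)$ to a single half‑space whose position is governed by the quantity $N^*$, which is crucially common to all $\Gamma_i$; only after this reduction is the LP‑vertex argument short. The case $I_\gg=\emptyset$ is the secondary difficulty, being the only place where the genuine cover property of $\mu$ (rather than mere monotonicity) is invoked, via the limiting argument. The remaining work — checking $t^*,\lambda^*$ have polynomial bit‑size so the Stern--Brocot searches run in polynomial time, and that bit‑sizes grow only additively over the at most $m_1-1$ inductive applications of the theorem — is routine given the closed forms.
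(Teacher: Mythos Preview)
Your proposal is essentially correct and takes a genuinely different route from the paper. The paper never derives your closed form for $\phi_i(g_i)$; instead it proves the identity $V_n(\gg)=\max_i u^L(\xx^i,n)$ by a rank argument showing that every vertex of the polytope $f_\gg^{-1}(n)$ already lies in some $\Gamma_i$ (\Cref{lemma:boundary-point}), then treats the nonempty case through an auxiliary LP characterizing $g^*$ (\Cref{lmm:Ig-nonempty}) and the empty case by an explicit bound on a sufficient $N$ followed by doubling (\Cref{lmm:Ig-empty}). Your approach is more explicit and arguably cleaner for the nonempty case: the closed form makes the strict monotonicity of $\phi_i$ in $g_i$, the existence and rationality of $t^*$, and the independence from the other coordinates all immediate, and your proof of $V_n=\max_i\phi_i$ via the single-constraint LP on $\mathbb{R}_{\ge 0}^{m_1-1}$ is shorter than the paper's vertex/rank computation. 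The paper's argument, in exchange, never needs the quantity $N^*$ or the particular structure of $\mu$ on rows $m_1,\dots,m$.

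One caveat in the empty case: your claim that Stern--Brocot search pins down $\lambda^*$ exactly is shaky. Both $V_n(\lambda\gg)$ and $V_C(\lambda\gg)$ are only piecewise \emph{rational} in $\lambda$, not piecewise linear---a parametric constraint of the form $\lambda\,p^\top\xx+q^\top\xx\le 0$ makes the optimal value behave like $1/\lambda$ on some pieces---so the crossing $\lambda^*$ can be the root of a quadratic and hence irrational. This is harmless because you do not need $\lambda^*$ itself: any rational $\lambda>\lambda^*$ already gives $I_{\gg'(\lambda)}\ne\emptyset=I_\gg$, and doubling (as the paper does) finds one. But your compactness argument only shows $\lambda^*<\infty$, not a polynomial bound on its bit-size, so the ``parametric-LP reasoning'' remark does not close the loop. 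You should either extract an explicit bound from your own closed forms---e.g., choose $\lambda$ large enough that $\min_i\phi_i(\lambda g_i)$ exceeds an explicit rational strictly between the cover value and $M_{\{n\}}$---or simply invoke the paper's explicit $\overline{N}$ from \Cref{lmm:Ig-empty}.
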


\begin{proof}[Proof sketch]
To find $\gg'$, our approach is to increase $g_i$ if $(\xx^i, n)$ is not yet an SSE.
Intuitively, increasing each $g_i$ will cause $u^L(\xx^i, n)$ to increase, so the hope is that when $u^L(\xx^i, n)$ is sufficiently large,  $(\xx^i, n)$ becomes an SSE.
Attention needs to be paid to the possibility that increasing $g_i$ might also cause $\max_{\xx \in f^{-1}(n)\setminus (\cup_{i\in[m_1-1]}\Gamma_i)}u^L(\xx, n)$ to increase even faster for strategies not in any $\Gamma_i$, so $(\xx^i, n)$ never outperform some $(\xx, n)$ not in $\bigcup_{i\in[m_1-1]} \Gamma_i$.
Thanks to the way $\tilde{u}^F_{\gg}$ is designed, this possibility can be eliminated by \Cref{lemma:boundary-point}.
It indicates that $\xx^1, \dots, \xx^{m_1 - 1}$ are representatives of the leader's best choice in $f_\gg^{-1}(n)$.

The proof is then broken down into the following two cases.
\begin{itemize}
\item {\bf Case 1.} 
If $I_\gg = \emptyset$, we show that by increasing all $g_i$ to a sufficiently large number $N$, some $(\xx^i, n)$ will become an SSE. Moreover, $N$ can be bounded from above by a polynomial in the input size, so we can find a vector $\gg'$ such that $g'_i \ge N$ for all $i$ in polynomial time. See \Cref{lmm:Ig-empty}.

\item {\bf Case 2.} 
If $I_\gg \neq \emptyset$, we show that by increasing $g_i$ for an arbitrary $i \notin I_\gg$ to an appropriate number $g'_i$ (while fixing $g_k$ for all $k\neq i$), the strategy profile $(\xx^i, n)$ will become a new SSE in addition to the existing ones. Moreover, $g'_i$ can be computed in polynomial time. See \Cref{lmm:Ig-nonempty}. 
\qedhere
\end{itemize}

\begin{restatable}{lemma}{lemmaboundarypoint}
\label{lemma:boundary-point}
For any $\gg \in \mathbb{R}_{>0}^{m_1 -1}$, it holds that $\max_{\xx \in f_\gg^{-1}(n)} u^L(\xx, n) = \max_{i \in [m_1 - 1]} u^L({\xx}^i,n)$.
\end{restatable}

\end{proof}

\subsection{Computing a Cover of $\{n\}$}
\label{sec:computing-cover}

Now we describe how to compute a cover of $\{n\}$ to complete this section.
We first deal with an easy case, where $n \notin S^* \coloneqq \argmin_{j \in [n]} M_{\{j\}}$.\footnote{Recall that we can efficiently decide whether $n \in S^*$ or not using $\ASSE$: to compare $M_{\{j\}}$ and $M_{\{j'\}}$, pick $\xx \in \calM_{\{j'\}}$ according to \Cref{asn:M-j} and compare it with $M_{\{j\}}$ according to \Cref{asn:relation-jik}.}
In this case, any arbitrary $\ell \in S^*$ gives 
\[
u^L(\xx, \ell) \le M_{\{\ell\}} < M_{\{n\}} 
\]
for all $\xx \in \Delta_m$. 
Hence, any payoff matrix in which $\ell$ is a strictly dominant strategy forms a cover of $\{n\}$.
Using our result in \Cref{sec:find-xxs} and \Cref{lmm:ref-points-conditions}, we can then compute $\va_n$.
By relabeling $n$ to $j$, this approach can also be used to find a cover of $\{j\}$ and obtain $\va_j$ for any $j \notin S^*$. 
Hence, in what follows we can assume that $\va_j$ is given for all $j \in [n] \setminus S^*$. 

Next, consider the case where $n \in S^*$.
To deal with this case, we present a more general result, \Cref{thm:find-cover-given-aj}, which finds a cover for any $S \subseteq [n]$. This method will also be useful for our argument in the next sections, where we need to find a cover for size-2 subsets of $[n]$. 
To apply this method requires a base function that make the leader gain at most $M_S$, best responding only actions in $S$. See the following definition for details.
According to this definition, we can simply use a payoff function in which $n$ is a strictly dominant strategy of the follower as the base function for $\{n\}$.

\begin{definition}[Base function]
A payoff function $\tilde{u}^F$ with BR-correspondence $\wt{\BR}$ is a base function for set $S\subseteq [n]$ if (1)
$\wt{\BR}(\xx) \subseteq S$ for all $\xx \in \Delta_m$, and (2) the leader's SSE payoff in game $(u^L, \tilde{u}^F)$ is $M_S$. 

\end{definition}

\begin{theorem}
\label{thm:find-cover-given-aj}
Suppose that $S \subseteq [n]$ and let $Q = \{ j \in [n]: M_{\{j\}} = M_S \}$.
Moreover, we are given the following elements: $\va_j$ for all $j \in [n] \setminus (S \cup Q)$, a base function $\tilde{u}^F$ for $S$, and $\calM_S$ (as a polytope defined by a set of linear constraints).
Then, in polynomial time, we can either compute a cover of $S$ or decide correctly that $S$ does not admit a cover.
\end{theorem}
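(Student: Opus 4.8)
The plan is to reduce the task to a purely geometric covering question about the polytope $\calM_S$, solve it by linear programming, and then turn a solution into an explicit follower payoff matrix. For each $j\in[n]$ write $D_j\coloneqq\{\xx\in\Delta_m: u^L(\xx,j)<M_S\}$ for the ``good region'' of action $j$; since $u^L(\cdot,j)$ is affine, $D_j$ is a relatively open halfspace of $\Delta_m$ (possibly empty, possibly all of $\Delta_m$). The key observation is that \emph{$S$ admits a cover if and only if $\calM_S\subseteq\bigcup_{j\in[n]\setminus S}D_j$}. The ``only if'' direction is immediate: if $\mu$ is a cover and $\xx\in\calM_S$, pick any $j\in\wt{\BR}_\mu(\xx)$; then $u^L(\xx,j)<M_S$ by \eqref{eq:cover}, and $j\notin S$ because $\xx\in\calM_S$ forces $u^L(\xx,j')\ge M_S$ for every $j'\in S$. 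The ``if'' direction will come from the construction below (and, together with \Cref{lmm:cover-iff}, it then matches the criterion $M_S>M_{[n]}$, although algorithmically I test the containment directly). So the algorithm has three stages: compute each $D_j$ for $j\notin S$; test the containment by an LP; and, if it holds, realize a finite subcover as a single payoff matrix.

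For the first stage I use \Cref{asn:relation-jik} to determine the pairwise order of the values $M_{\{j\}}$, hence --- since $Q=\{j:M_{\{j\}}=M_S\}$ is given --- to place each $M_{\{j\}}$ relative to $M_S$. If some action $\ell$ has $M_{\{\ell\}}<M_S$, then necessarily $\ell\notin S$ (otherwise $M_{\{\ell\}}\ge\max_\xx\min_{j'\in S}u^L(\xx,j')=M_S$), and a follower payoff in which $\ell$ is strictly dominant is already a cover of $S$, since $u^L(\xx,\ell)\le M_{\{\ell\}}<M_S$ for all $\xx$; I output it and stop. Otherwise $M_{\{j\}}\ge M_S$ for all $j$, and I read off $D_j$ for $j\notin S$: for $j\in Q\setminus S$ we have $D_j=\Delta_m\setminus\calM_{\{j\}}=\{\xx\in\Delta_m:\sum_{i\notin I_j}x_i>0\}$, which is known because $\calM_{\{j\}}=\conv(I_j)$ is known by \Cref{asn:M-j}; for $j\in[n]\setminus(S\cup Q)$ we have $M_{\{j\}}>M_S$ and are given $\va_j$ satisfying \eqref{eq:aj}, so $D_j=\{\xx\in\Delta_m:\va_j\cdot\xx<d_j\}$ for an unknown rational offset $d_j$, which I recover by binary search: for a candidate $d$ I build a game whose follower payoff agrees with the given base function $\tilde{u}^F$ on every column except $j$, on which it increases steeply opposite to $\va_j$ with its switch at the hyperplane $\va_j\cdot\xx=d$; against that column the leader's best reply is the value $u^L(\cdot,j)$ carries on the hyperplane, while the base function still guarantees the leader $M_S$ off the $j$-region, so a fixed reference SSE (an optimistic SSE supported on a vertex of $\calM_S$ under $\tilde{u}^F$) survives exactly while $d\le d_j$; since only $\{\xx\in\calM_S:\va_j\cdot\xx<d_j\}$ ever matters and $\max_{\xx\in\calM_S}\va_j\cdot\xx$ is computable, it suffices to locate $d_j$ inside $[\,\cdot\,,\max_{\xx\in\calM_S}\va_j\cdot\xx]$, which the search does exactly via Stern--Brocot, with one extra query settling the boundary case. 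Actions of $S$ need no region, as a proper cover never best-responds an action of $S$.

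For the remaining stages, with every $D_j$ ($j\notin S$) available as an explicit halfspace and $\calM_S$ as an explicit polytope, deciding whether $\calM_S\subseteq\bigcup_{j\notin S}D_j$ is one linear-feasibility query: the containment fails iff the system ``$\xx\in\calM_S$'' together with, for each $j\notin S$, the known linear constraint equivalent to $u^L(\xx,j)\ge M_S$ (namely $\va_j\cdot\xx\ge d_j$ when $j\notin S\cup Q$, and $x_i=0$ for all $i\notin I_j$ when $j\in Q\setminus S$) is feasible. If it is, a witness $\xx$ has $u^L(\xx,j)\ge M_S$ for \emph{every} $j\in[n]$ (also for $j\in S$, since $\xx\in\calM_S$), so $M_{[n]}\ge M_S$; as $M_{[n]}\le M_S$ always, $M_{[n]}=M_S$ and by \Cref{lmm:cover-iff} $S$ admits no cover, which I report. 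If it is infeasible, $\calM_S$ is covered by the open halfspaces $\{D_j\}_{j\notin S}$; by compactness there is a rational $\epsilon>0$ of polynomially bounded bit-size and a subfamily $T\subseteq[n]\setminus S$ whose $\epsilon$-shrunken closed halfspaces $R_j=\{\xx:\vc_j\cdot\xx\le b_j\}\subseteq D_j$ still cover $\calM_S$. I then output the follower payoff $\mu$ with $\mu(\xx,j)=b_j-\vc_j\cdot\xx$ for $j\in T$ and $\mu(\cdot,j)\equiv-W$ for $j\notin T$, where $W$ is a large rational constant exceeding $-\min_{\xx\in\Delta_m}\max_{k\in T}(b_k-\vc_k\cdot\xx)$. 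For any $\xx\in\calM_S$ some $j\in T$ has $\xx\in R_j$, so $\mu(\xx,j)\ge0$, while every column outside $T$ and every $k\in T$ with $\xx\notin R_k$ is strictly negative there; hence every best reply $k\in\wt{\BR}_\mu(\xx)$ lies in $T$ with $\xx\in R_k\subseteq D_k$, giving $u^L(\xx,k)<M_S$, and since $\xx\mapsto\max_{k\in\wt{\BR}_\mu(\xx)}u^L(\xx,k)$ is upper semicontinuous on the compact $\calM_S$ its maximum is attained and remains $<M_S$, i.e.\ $\mu$ satisfies \eqref{eq:cover}; as no action of $S$ is ever best-responded, $\mu$ is in fact proper (otherwise one appeals to \Cref{lmm:proper-cover}). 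All three stages run in polynomial time with polynomially many queries.

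I expect the main obstacle to be the offset-recovery subroutine of the second stage, together with the strict-versus-equality boundary comparisons it entails. The oracle $\ASSE$ reveals no cardinal payoff information, so to pin down $d_j$ one must stage, inside a single fabricated game, a direct competition between the value $u^L(\cdot,j)$ carried by a moving hyperplane and the payoff $M_S$ that the base function guarantees, engineered so that one specific SSE appears or disappears precisely as the hyperplane crosses the contour $\{u^L(\cdot,j)=M_S\}$; this is delicate because two affine (or piecewise-linear) payoff surfaces meeting along a hyperplane create thin boundary ``slabs'' of ambiguous best responses, which have to be squeezed away by taking the auxiliary scaling constants large and by arguing with exact rational arithmetic. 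A secondary difficulty --- that the best-reply partition of the constructed $\mu$ might develop interior ``spikes'' intruding into $\calM_S$ where $u^L\ge M_S$ --- does not materialize here, since each active column of $\mu$ owns a single halfspace; but it is exactly the phenomenon that forced the more intricate construction of $\tilde{u}^F_{\gg}$ in \Cref{sec:find-xxs} when the regions in play were not halfspaces, and it is what one must keep verifying in any alternative realization of the subcover.
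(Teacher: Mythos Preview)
Your approach is essentially the same as the paper's: both learn, for each $j\notin S$, a hyperplane encoding $\{u^L(\cdot,j)\ge M_S\}$ via binary search against the base function (the paper's \Cref{lmm:find-cover-bbj-cj}), then decide coverage of $\calM_S$ by a single LP-feasibility test (the paper's \Cref{lmm:find-cover-mu}). The only cosmetic difference is your cover construction via $\epsilon$-shrunken halfspaces, which is unnecessary---the paper simply sets $\mu(\xx,j)=\vb_j\cdot\xx-c_j$ for $j\notin S$ and $0$ on $S$, and this already satisfies \eqref{eq:cover} whenever the LP is infeasible.
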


\begin{proof}[Proof Sketch]
We construct the following payoff function, 
\begin{boxedtext}
For every $\xx \in \Delta_m$,
\begin{equation}
\label{eq:mu-cover-S}
    \mu(\xx,j) \coloneqq \begin{cases}
    0, & \text{if } j\in S;\\
    \vb_j\cdot\xx- c_j, & \text{otherwise}.
    \end{cases}
\end{equation}
where for each $j \in [n] \setminus S$, $(\vb_j, c_j)$ is a hyperplane such that for any $\xx \in \Delta_m$:
\begin{align}
\label{eq:bbj-cj}
u^L(\xx,j) \ge M_S \quad\Longleftrightarrow\quad \vb_j\cdot \xx \le c_j.
\end{align}
\end{boxedtext}

\Cref{lmm:find-cover-mu} shows that we can efficiently check if $\mu$ is a cover of $S$, and if it is not, then $S$ does not admit any other cover, either. 
Intuitively, $\mu$ aims to bring down the leader's maximum attainable payoff in $\calM_S$ to below $M_S$, so it needs to avoid responses that lead to $u^L(\xx, j) \ge M_S$ when the leader plays $\xx$. \Cref{eq:bbj-cj} ensures this and roughly speaking it creates a ``quasi-zero-sum'' game on actions $j \notin S$. (Ideally, we could just use $\mu = - u^L$ to fulfill this task if we had the full information of $u^L$.)
It then remains to find a way to efficiently compute a set of hyperplanes $(\vb_j, c_j)$ satisfying \Cref{eq:bbj-cj} to finish the construction of $\mu$, which is further demonstrated in the proof of \Cref{lmm:find-cover-bbj-cj} (deferred to \Cref{sec:proof-of-lmm-find-cover-bbj-cj}).
\end{proof}

\begin{restatable}{lemma}{lmmfindcovermu}
\label{lmm:find-cover-mu}
It can be decided in polynomial time whether the payoff matrix $\mu$ defined in \Cref{eq:mu-cover-S} is a cover of $S$ or not.
Moreover, if it is not a cover of $S$, then $S$ does not admit any other cover, either.
\end{restatable}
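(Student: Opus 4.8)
The plan is to reduce the cover test to a single linear-program feasibility query over $\calM_S$, exploiting the defining relation \Cref{eq:bbj-cj} of the hyperplanes $(\vb_j,c_j)$, which are taken as given. Write $\wt{\BR}$ for the best-response correspondence of $\mu$, and note that each $\mu(\cdot,\ell)$ is affine, that $\mu(\xx,\ell)=0$ for $\ell\in S$, and that \Cref{eq:bbj-cj} says precisely $\mu(\xx,j)\le 0\iff u^L(\xx,j)\ge M_S$ for $j\in[n]\setminus S$. I would first establish the following dichotomy for $\xx\in\calM_S$: either (i) $\mu(\xx,j)>0$ for some $j\in[n]\setminus S$, in which case $\wt{\BR}(\xx)\cap S=\emptyset$ and every $k\in\wt{\BR}(\xx)$ satisfies $\mu(\xx,k)=\max_{\ell}\mu(\xx,\ell)>0$, hence $u^L(\xx,k)<M_S$; or (ii) $\mu(\xx,k)\le 0$ for all $k\in[n]\setminus S$, in which case $\max_{\ell}\mu(\xx,\ell)=0$, so $S\subseteq\wt{\BR}(\xx)$, and, since $\xx\in\calM_S$ forces $u^L(\xx,j)\ge M_S$ for all $j\in S$, we obtain $\max_{k\in\wt{\BR}(\xx)}u^L(\xx,k)\ge M_S$, which is incompatible with \Cref{def:cover}.

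From this dichotomy, $\mu$ is a cover of $S$ if and only if case (ii) occurs for no $\xx\in\calM_S$, i.e. if and only if the polytope
\[
P\coloneqq\{\xx\in\calM_S:\vb_j\cdot\xx\le c_j\text{ for all }j\in[n]\setminus S\}
\]
is empty. Since $\calM_S$ is supplied as a system of linear inequalities and the extra constraints are linear, deciding $P=\emptyset$ is a polynomial-size LP feasibility question, which establishes the first assertion. The one point requiring care is the \emph{strict} inequality in \Cref{def:cover}: when $P=\emptyset$, for each $j\in[n]\setminus S$ the set $P_j\coloneqq\{\xx\in\calM_S:j\in\wt{\BR}(\xx)\}$ is a closed (hence compact) polytope—an intersection of $\calM_S$ with the half-spaces $\mu(\xx,j)\ge\mu(\xx,k)$—it lies inside the open set $\{\xx:u^L(\xx,j)<M_S\}$ by case (i), and $\calM_S=\bigcup_{j\in[n]\setminus S}P_j$ (because $P_j$ coincides with $P$, hence is empty, for $j\in S$). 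As $u^L(\cdot,j)$ is affine, $\max_{\xx\in P_j}u^L(\xx,j)<M_S$ is attained for each $j$, and taking the finite maximum over $j\in[n]\setminus S$ gives $\max_{\xx\in\calM_S}\max_{k\in\wt{\BR}(\xx)}u^L(\xx,k)<M_S$; so $\mu$ is a cover.

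For the ``moreover'' claim, I would argue that $P\neq\emptyset$ yields an $\xx\in\calM_S$ with $u^L(\xx,j)\ge M_S$ for every $j\in[n]\setminus S$ (by \Cref{eq:bbj-cj}); together with $u^L(\xx,j)\ge M_S$ for $j\in S$ (as $\xx\in\calM_S$), this gives $\min_{j\in[n]}u^L(\xx,j)\ge M_S$, hence $M_{[n]}\ge M_S$. Since $\min_{k\in[n]}u^L(\yy,k)\le\min_{k\in S}u^L(\yy,k)$ for every $\yy$ implies $M_{[n]}\le M_S$, we conclude $M_{[n]}=M_S$, and \Cref{lmm:cover-iff} then tells us that $S$ admits no cover at all. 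The main obstacle—modest as it is—is the upgrade from the pointwise bound $u^L(\xx,k)<M_S$ at every best response $k$ to the uniform strict bound demanded by \Cref{def:cover}; the rest is routine manipulation of the identity $\mu(\xx,j)\le 0\iff u^L(\xx,j)\ge M_S$ plus one LP feasibility call.
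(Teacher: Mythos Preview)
Your proposal is correct and follows essentially the same approach as the paper: both reduce the cover test to the feasibility of the system $\{\xx\in\calM_S:\mu(\xx,j)\le 0\text{ for all }j\}$, use \Cref{eq:bbj-cj} to translate $\mu(\xx,j)>0$ into $u^L(\xx,j)<M_S$, and invoke \Cref{lmm:cover-iff} (via the contrapositive) for the ``moreover'' part. Your treatment of the strict inequality via compactness of the polytopes $P_j$ is in fact more explicit than the paper's, which simply states that the pointwise bound yields the uniform one ``since the choice of $\xx$ and $k$ is arbitrary''---implicitly relying on attainment of the maximum.
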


\begin{restatable}{lemma}{lmmfindcoverbbjcj}
\label{lmm:find-cover-bbj-cj}
A hyperplane $(\vb_j, c_j)$ satisfying \Cref{eq:bbj-cj} can be computed in polynomial time for every $j \in [n] \setminus S$.
\end{restatable}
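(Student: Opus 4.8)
The plan is to compute, for each $j\in[n]\setminus S$, an explicit description of the set $A_j\coloneqq\{\xx\in\Delta_m:u^L(\xx,j)\ge M_S\}$. Since $u^L(\cdot,j)$ is linear, $A_j$ is $\Delta_m$ intersected with a single half-space (or $\emptyset$, or all of $\Delta_m$), so \Cref{eq:bbj-cj} merely asks us to name a hyperplane cutting it out. I split into two cases according to whether $\va_j$ is available. If $j\in Q$ then $\va_j$ is not given, but here $M_{\{j\}}=M_S$, so $u^L(\xx,j)\ge M_S$ holds exactly when $u^L(\xx,j)$ is maximal, i.e.\ when $\xx\in\calM_{\{j\}}=\conv(I_j)$; since $\conv(I_j)=\{\xx\in\Delta_m:\sum_{i\notin I_j}x_i\le 0\}$ and $I_j$ is known (\Cref{asn:M-j}), I take $\vb_j$ to be the $0/1$-indicator of $[m]\setminus I_j$ and $c_j=0$, with no oracle call.

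For $j\in[n]\setminus(S\cup Q)$ we are given $\va_j$, and by \eqref{eq:aj} we have $u^L(\xx,j)=\gamma_j(\va_j\cdot\xx)+\beta_j$ with $\gamma_j>0$; hence $u^L(\xx,j)\ge M_S$ iff $\va_j\cdot\xx\ge\tau_j$ for the unknown rational $\tau_j\coloneqq(M_S-\beta_j)/\gamma_j$. So it suffices to recover $\tau_j$ and output $\vb_j=-\va_j$, $c_j=-\tau_j$ (the degenerate cases $A_j=\emptyset$ and $A_j=\Delta_m$, in which the contour misses $\Delta_m$, are handled by $\vb_j=\mathbf 0$ with $c_j=-1$ or $c_j=1$). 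Since $u^L(\xx,j)$ depends on $\xx$ only through $t\coloneqq\va_j\cdot\xx$, and the range of $t$ over $\Delta_m$ is the explicitly known rational interval $[\min_i a_{ji},\,\max_i a_{ji}]$, I run an exact binary search on $t$ (on the Stern--Brocot tree, as the paper does elsewhere); each step must answer one comparison query: for an arbitrary $\xx$ in the slice $\{\va_j\cdot\xx=t\}$, is $u^L(\xx,j)\ge M_S$?

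To answer the comparison query I build a fake follower matrix $\nu$ out of the given base function $\tilde u^F$ for $S$: I keep the $S$-columns of $\tilde u^F$, make every column in $[n]\setminus(S\cup\{j\})$ strictly dominated, and install a column for $j$ that is linear in the known direction $\va_j$, calibrated so that (i) committing to $\xx$ makes the follower play $j$ and gives the leader exactly $u^L(\xx,j)$ --- possible because $u^L(\cdot,j)$ and the new column are both governed by $\va_j$ --- while (ii) wherever the follower plays an $S$-column the leader's payoff is still capped by $M_S$. With such a $\nu$, the leader's SSE payoff in $(u^L,\nu)$ equals $\max\{u^L(\xx,j),\,M_S\}$, so a single query $\ASSE(\nu,\xx,j)$ asking whether $(\xx,j)$ is an SSE of $(u^L,\nu)$ reports whether $u^L(\xx,j)\ge M_S$, which is exactly what the binary search needs.

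The main obstacle will be this calibration, i.e.\ making $\nu$ reveal $\max\{u^L(\xx,j),M_S\}$ \emph{exactly}. Two conditions must hold simultaneously and tightly: the region on which the follower is made to play $j$ must be contained in $\{\xx':\va_j\cdot\xx'\le\va_j\cdot\xx\}$ and contain $\xx$ (so the leader's best payoff when playing $j$ is precisely $u^L(\xx,j)$, neither more nor less), and on the complementary region the $S$-responses must still leave the leader a strategy worth $M_S$; otherwise a false ``yes'' arises in the band $m_0\le u^L(\xx,j)<M_S$ and the search converges to the wrong $\tau_j$. I would control the second point using the given polytope $\calM_S$: it suffices that some maximin-$S$ strategy lies off the region assigned to $j$, which holds while $\va_j\cdot\xx$ stays below $\max_{\yy\in\calM_S}\va_j\cdot\yy$ (a computable LP value), provided the base function's SSE strategy can be taken inside $\calM_S$ --- true for the base functions actually used (e.g.\ ``$n$ strictly dominant'' for $S=\{n\}$) and in general arrangeable; the complementary regime, in which $u^L(\cdot,j)<M_S$ on all of $\calM_S$, is detected from $\calM_S$ and handled by a mirror-image construction. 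Checking realizability of all these matrices with linear follower payoffs, bounding bit-sizes so the binary search is polynomial, and disposing of the degenerate cases are the routine remaining steps.
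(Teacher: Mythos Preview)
Your high-level plan mirrors the paper's: split on whether $j\in Q$, handle $j\in Q$ with the indicator of $I_j$ (this part is correct and in fact cleaner than the paper's variant), and for $j\notin Q$ binary-search the threshold $\tau_j=(M_S-\beta_j)/\gamma_j$ by perturbing the base function and querying the oracle.

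The gap is in the calibration of the $j$-column. You take it to be ``linear in the known direction $\va_j$'' and justify tightness by ``$u^L(\cdot,j)$ and the new column are both governed by $\va_j$.'' But the follower's best-response region for $j$ is $\{\xx':\nu(\xx',j)\ge\max_{k\in S}\tilde u^F(\xx',k)\}$, and the right-hand side is a piecewise-linear function of $\xx'$ unrelated to $\va_j$; with a purely $\va_j$-affine $j$-column this region generically extends past the slice $\{\va_j\cdot\xx'=t\}$, so the leader can induce $j$ at some $\xx'$ with $\va_j\cdot\xx'>t$, making $\max_{\xx'\in f^{-1}(j)}u^L(\xx',j)>u^L(\xx,j)$ and causing $\ASSE(\nu,\xx,j)$ to return ``no'' even when $u^L(\xx,j)\ge M_S$. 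What must be linear in $\va_j$ is the \emph{difference} between the $j$-column and a witnessing $S$-column, not the $j$-column itself. The paper achieves this by picking the calibration point $\zz$ \emph{inside} $\calM_S$ on the slice $\{\va_j\cdot\xx=d\}$, taking $k$ to be the best response to $\zz$ under the base function, and setting $\tilde u^F_d(\xx',j)=\tilde u^F(\xx',k)+(d-\va_j\cdot\xx')$; then $j$ being a best response forces $\tilde u^F_d(\xx',j)\ge\tilde u^F(\xx',k)$ and hence $\va_j\cdot\xx'\le d$, so the maximum over $f_d^{-1}(j)$ is exactly $d$. Taking $\zz\in\calM_S$ (not an arbitrary slice point, as you propose) is also what guarantees the $S$-part still attains $M_S$. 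Because this needs $\calM_S\cap\{\va_j\cdot\xx=d\}\ne\emptyset$, the paper confines the search to $d$ in the range $[\check d,\hat d]$ of $\va_j$ over $\calM_S$ and then observes that \Cref{eq:bbj-cj} is only ever invoked on $\calM_S$ (see \Cref{lmm:find-cover-mu}), so outside that range a trivial half-space suffices---no ``mirror-image construction'' is needed.
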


\section{Learning $J$ and $\xx^*$}
\label{sec:learn-j-xxs}

We show how to learn $J$ and $\xx^*$ in this section.
Recall that as outlined in \Cref{fig:main-approach}, we want to find $J \subseteq [n]$ and $\xx^* \in \Delta_m$, such that $u^L(\xx^*,j) = M_J = M_{[n]}$ for all $j \in J$.
First, we highlight several observations that we will use throughout this section.

\Crefformat{enumi}{#2#1#3}
\crefformat{enumi}{#2#1#3}

\begin{observation}
\label{asn:three-assumptions}
The following assumptions are without loss of generality:
\begin{enumerate}[itemsep=1mm,label=(\alph*),ref=\Cref{asn:three-assumptions}(\alph*)]
\item \label{asn:aj-Mj} 
$\va_j$ is given for all $j \in [n]$, and $M_{\{j\}} > M_{[n]}$.

\item \label{asn:1-min-M}
$1 \in \argmin_{j \in [n]} M_{\{j\}}$.

\item \label{asn:a1-non-uniform}
$a_{1,j} \neq a_{1,k}$ for some $j, k \in [m]$.
\end{enumerate}
\end{observation}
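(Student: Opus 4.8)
The plan is to establish the three clauses in turn. Clauses \Cref{asn:aj-Mj} and \Cref{asn:1-min-M} are essentially recapitulations of earlier machinery, and the only clause needing a real argument is \Cref{asn:a1-non-uniform}.

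For \Cref{asn:aj-Mj}: run the procedure of \Cref{fig:learning-an} on each $j\in[n]$. Either it returns a cover of $\{j\}$ --- whence \Cref{lmm:cover-iff} gives $M_{\{j\}}>M_{[n]}$ and the construction of \Cref{sec:find-xxs} yields $\va_j$ --- or it reports that $\{j\}$ has no cover, whence $M_{\{j\}}=M_{[n]}$ and, as noted right after \Cref{lmm:cover-iff}, $J:=\{j\}$ together with any $\xx^*\in\calM_{\{j\}}$ already satisfies \eqref{eq:J-M}, so \Cref{thm:equ-game} settles the whole problem and \Cref{sec:learn-j-xxs} is vacuous. Discarding that escape hatch, we enter this section with every $\va_j$ in hand and $M_{\{j\}}>M_{[n]}$ for all $j$; and since $\{j\}\subseteq[n]$ forces $M_{\{j\}}\ge M_{[n]}$, the strict inequality is the same as $M_{\{j\}}\ne M_{[n]}$. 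For \Cref{asn:1-min-M}: this is only a renaming of the follower's columns --- by \Cref{asn:M-j,asn:relation-jik} we can compare $M_{\{j\}}$ with $M_{\{k\}}$ for every pair (take $i\in I_k$, so $u^L(i,k)=M_{\{k\}}$, and apply \Cref{asn:relation-jik}), hence pick some $j^*\in\argmin_{j\in[n]}M_{\{j\}}$ and relabel it as action $1$.

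The substance is \Cref{asn:a1-non-uniform}, i.e.\ that $\va_1$ is not the all-equal vector, equivalently $u^L(\cdot,1)$ is not a constant function. I claim that any column $j$ with $u^L(\cdot,j)$ constant can be discarded for free. Such a column has $u^L(\xx,j)\equiv M_{\{j\}}$ with $M_{\{j\}}>M_{[n]}$ by \Cref{asn:aj-Mj}; writing $g(\xx):=\min_{k\in[n]\setminus\{j\}}u^L(\xx,k)$, we have $M_{[n]}=\max_{\xx}\min\{M_{\{j\}},g(\xx)\}$, and since $\max_{\xx}g(\xx)>M_{\{j\}}$ would force $M_{[n]}\ge M_{\{j\}}$, a contradiction, it follows that $\max_{\xx}g(\xx)\le M_{\{j\}}$ and hence $M_{[n]}=\max_{\xx}g(\xx)=M_{[n]\setminus\{j\}}$. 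So dropping column $j$ leaves $M_{[n]}$ unchanged; iterating, dropping every constant column preserves $M_{[n]}$. Moreover, any $(J,\xx^*)$ satisfying \eqref{eq:J-M} for the surviving column set satisfies it for $[n]$ as well (same $M$-value, and $J$ stays a subset of $[n]$), and the $\va_j$ of each dropped column --- which we know, being a constant vector --- is still available for the LP of column $j$ in Step~3 of \Cref{fig:main-approach}, whose constraint $u^L(\xx,j)\ge M_{[n]}$ is in any case vacuous there. Hence we may delete all constant columns and re-apply the relabeling of \Cref{asn:1-min-M}. At least two columns survive: if none survived, every $u^L(\cdot,j)$ was constant, giving $M_{[n]}=\min_j M_{\{j\}}=M_{\{1\}}$; and if exactly one column $j$ survived, then $M_{\{j\}}=M_{[n]}$ --- both contradicting \Cref{asn:aj-Mj}. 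No surviving column is constant, so in particular \Cref{asn:a1-non-uniform} now holds.

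I expect \Cref{asn:a1-non-uniform} to be the only non-routine part, and within it the crux is the simple but essential observation that a constant payoff column whose common value exceeds $M_{[n]}$ can never be the leader's worst column, hence is irrelevant to the maximin and can be deleted without affecting $M_{[n]}$, any valid $(J,\xx^*)$, or the column's own (trivial) LP --- together with the counting argument that deletion leaves at least two columns, none of them constant. Clauses \Cref{asn:aj-Mj} and \Cref{asn:1-min-M} are pure bookkeeping built on \Cref{sec:learningaj} and the warm-up observations \Cref{asn:M-j,asn:relation-jik}.
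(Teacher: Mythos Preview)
Your proposal is correct and follows the same approach as the paper. For clauses~(a) and~(b) you invoke exactly the machinery the paper cites (the outcome of \Cref{sec:learningaj} for~(a), and relabelling via \Cref{asn:M-j,asn:relation-jik} for~(b)). For clause~(c), the paper's own justification is the same ``drop all constant-payoff columns and work in the residue game'' argument; you simply spell out two details the paper leaves implicit, namely that deletion preserves $M_{[n]}$ (your $\min\{M_{\{j\}},g(\xx)\}$ computation) and that at least two non-constant columns must survive. Both additions are sound and consistent with the paper's intent.
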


Specifically, \Cref{asn:aj-Mj} results directly from the learning outcome of \Cref{sec:learningaj}. \Cref{asn:1-min-M} holds by relabeling action $1$ and an arbitrary action in $\argmin_{j \in [n]} M_{\{j\}}$. 
In the case where \Cref{asn:a1-non-uniform} does not hold, we have $a_{1,1} = a_{1,2} = \dots = a_{1,m}$; hence, $u^L(\xx, 1)$ is a constant and $u^L(\xx, 1) = M_{\{1\}}$ for all $\xx \in \Delta_m$.
By \Cref{asn:aj-Mj}, it then follows that $u^L(\xx, 1) > M_{[n]}$ for all $\xx \in \Delta_m$.
According to the desired property of $J$ defined in \eqref{eq:J-M}, this means that we can essentially exclude 
all actions that do not satisfy \Cref{asn:a1-non-uniform} from our consideration completely---we consider the residue game defined on the remaining action set of the follower after excluding these actions.

In addition to the assumptions, when $\va_j$ is known, we can also assume access to the following oracle $\AER$ (\Cref{def:ER-oracle}), which determines if an action $j$ is an SSE response. 
Specifically, to determine whether $j$ is an SSE response of game $(u^L, \tilde{u}^F)$, it suffices to check whether $(\xx,j)$ is an SSE (by using $\ASSE$) for an arbitrary $\xx$ in the set $\argmax_{\xx' \in \widetilde{\BR}^{-1}(j)} u^L(\xx', j) = \argmax_{\xx' \in \widetilde{\BR}^{-1}(j)} \va_j \cdot \xx'$, so $\xx$ can be computed efficiently when $\va_j$ is known.

\begin{definition}[Equilibrium response oracle]
\label{def:ER-oracle}
Given a game $\widetilde{\calG} = (u^L, \tilde{u}^F)$ and an action $j \in [n]$ of the follower, the {\em equilibrium response (ER) oracle}, denoted $\AER$, outputs whether $j$ is an SSE response of $\widetilde{\calG}$.
\end{definition}

To learn $J$ and $\xx^*$, we will use the following strategically equivalent payoff matrix $\tilde{u}^L$ of the {\em leader}, as a surrogate for the original matrix $u^L$:
\begin{equation}
\label{eq:tilde-u-L}
\tilde{u}^L(\xx, j) = 
\begin{cases}
\frac{\gamma_j}{\gamma_1} \cdot \va_j \cdot \xx + \frac{\beta_j - \beta_1}{\gamma_1}, & \text{ if } j \in \widehat{J}; \\
W, & \text{ otherwise},
\end{cases}
\end{equation}
where $W =1 + \max_{i\in[m], \ell \in \widehat{J}} \tilde{u}^L(i, \ell)$ (a number that is sufficiently large);
$\gamma_j$ and $\beta_j$ are the parameters that give $u^L(\xx, j) \equiv \gamma_j \cdot \va_j \cdot \xx + \beta_j$; 
and
\begin{equation}
\label{eq:hat-J}
\widehat{J} \coloneqq \left\{j \in [n]: \min_{\xx \in \Delta_m} u^L(\xx,j) < M_{\{1\}} \right\}.
\end{equation}
According to Lemma~\ref{lmm:tilde-uL-J}, we can compute $J$ and $\xx^*$ based on $\tilde{u}^L$ instead of the original unknown matrix $u^L$, where we use the notation
\[
\widetilde{M}_S \coloneqq \max_{\xx \in \Delta_m} \min_{j \in S} \tilde{u}^L(\xx, j)
\]
for all $S \subseteq [n]$, defined analogously to $M_S$ (\Cref{eq:def-M}).
The problem is then equivalent to computing the maximin strategy of $\tilde{u}^L$, which is tractable if all the parameters defining $\tilde{u}^L$ are known.

\begin{restatable}{lemma}{lmmtildeuLJ}
\label{lmm:tilde-uL-J}
For any $J \subseteq [n]$ and $\xx^* \in \Delta_m$, 
\Cref{eq:J-M} holds
if and only if 
$\tilde{u}^L(\xx^*,j) = \widetilde{M}_J = \widetilde{M}_{[n]}$ for all $j \in J$.
\end{restatable}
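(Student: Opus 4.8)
The plan is to show that the map $u^L(\cdot,j)\mapsto\tilde u^L(\cdot,j)$ is, for each relevant $j$, an affine reparametrization that preserves exactly the comparisons needed to certify \eqref{eq:J-M}, and that the ``junk'' value $W$ placed on actions outside $\widehat J$ is chosen large enough to never interfere. Concretely, I would first record the normalization: by \eqref{eq:aj} and the definition of $\tilde u^L$ in \eqref{eq:tilde-u-L}, for every $j\in\widehat J$ we have $\tilde u^L(\xx,j)=\bigl(u^L(\xx,j)-\beta_1\bigr)/\gamma_1$, i.e. $\tilde u^L(\cdot,j)$ is obtained from $u^L(\cdot,j)$ by the \emph{same} positive affine transformation $t\mapsto (t-\beta_1)/\gamma_1$ for all $j\in\widehat J$ simultaneously. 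Since this transformation is strictly increasing, it commutes with both $\min_{j\in S}$ and $\max_{\xx\in\Delta_m}$; hence for any $S\subseteq\widehat J$ one gets $\widetilde M_S=(M_S-\beta_1)/\gamma_1$ and $\calM_S$ (the argmax set) is unchanged, and likewise $\tilde u^L(\xx^*,j)=M_J$ for all $j\in J$ iff $u^L(\xx^*,j)=\widetilde M_J$ for all $j\in J$, \emph{provided} $J\subseteq\widehat J$.

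The two steps requiring real care are: (i) handling actions $j\notin\widehat J$, and (ii) showing that the conditions in \eqref{eq:J-M} in fact force $J\subseteq\widehat J$, so that the clean affine argument above actually applies to the sets $J$ that matter. For (i): by definition \eqref{eq:hat-J}, $j\notin\widehat J$ means $\min_{\xx}u^L(\xx,j)\ge M_{\{1\}}$; combined with \ref{asn:1-min-M} (so $M_{\{1\}}=\min_k M_{\{k\}}$) and \ref{asn:aj-Mj} (so $M_{\{j\}}>M_{[n]}$ for all $j$, in particular $M_{\{1\}}>M_{[n]}=\max_{\yy}\min_k u^L(\yy,k)$), one sees that for $j\notin\widehat J$ the function $u^L(\cdot,j)$ is everywhere $\ge M_{\{1\}}>M_{[n]}$; such an action can never be the minimizing action in $\min_{j\in[n]}u^L(\yy,j)$ at a maximin point, nor at the candidate $\xx^*$, so it is ``inactive'' for the maximin computation. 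On the $\tilde u^L$ side the same action carries the value $W>\max_{i,\ell\in\widehat J}\tilde u^L(i,\ell)\ge\widetilde M_{\widehat J}$, so it is likewise strictly dominated in every $\min$ that could define $\widetilde M_{[n]}$, giving $\widetilde M_{[n]}=\widetilde M_{\widehat J}$ and, by the normalization, $\widetilde M_{[n]}=(M_{[n]}-\beta_1)/\gamma_1=(M_{\widehat J}-\beta_1)/\gamma_1$ — where I also need $M_{\widehat J}=M_{[n]}$, which follows from the same observation that the $j\notin\widehat J$ terms are inactive in the maximin. For (ii): if $u^L(\xx^*,j)=M_J=M_{[n]}$ for some $j\in J$, then since $M_{[n]}<M_{\{1\}}$ we get $u^L(\xx^*,j)<M_{\{1\}}$, so $\min_{\xx}u^L(\xx,j)\le u^L(\xx^*,j)<M_{\{1\}}$, i.e. $j\in\widehat J$; thus any $J$ witnessing \eqref{eq:J-M} satisfies $J\subseteq\widehat J$ automatically.

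With these pieces in hand the equivalence is a short chain: assume \eqref{eq:J-M}; then $J\subseteq\widehat J$ by (ii), so on $\widehat J$ the uniform positive affine map applies, giving $\tilde u^L(\xx^*,j)=(M_{[n]}-\beta_1)/\gamma_1$ for all $j\in J$, and $\widetilde M_J=(M_J-\beta_1)/\gamma_1=(M_{[n]}-\beta_1)/\gamma_1$, and $\widetilde M_{[n]}=\widetilde M_{\widehat J}=(M_{[n]}-\beta_1)/\gamma_1$ by (i); hence $\tilde u^L(\xx^*,j)=\widetilde M_J=\widetilde M_{[n]}$. Conversely, assume $\tilde u^L(\xx^*,j)=\widetilde M_J=\widetilde M_{[n]}$ for all $j\in J$. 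I first argue $J\subseteq\widehat J$ on the tilde side — if some $j\in J$ had $j\notin\widehat J$ then $\tilde u^L(\xx^*,j)=W>\widetilde M_{\widehat J}\ge\widetilde M_{[n]}$, contradicting $\tilde u^L(\xx^*,j)=\widetilde M_{[n]}$ — and then invert the affine map on $\widehat J$ to recover $u^L(\xx^*,j)=M_J$ for all $j\in J$ and $M_J=\gamma_1\widetilde M_J+\beta_1=\gamma_1\widetilde M_{[n]}+\beta_1=M_{\widehat J}=M_{[n]}$, which is \eqref{eq:J-M}. The main obstacle, and the only genuinely non-formulaic point, is step (i): one must verify that the large constant $W$ and the exclusion set $\widehat J$ are calibrated so that $\widetilde M_{[n]}=\widetilde M_{\widehat J}$ and $M_{[n]}=M_{\widehat J}$, i.e. that dropping the out-of-$\widehat J$ actions changes neither maximin value; everything else is bookkeeping with a monotone affine transformation.
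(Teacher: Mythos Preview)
Your proposal is correct and follows essentially the same approach as the paper's proof: both hinge on the observation that for $j\in\widehat J$ one has $\tilde u^L(\xx,j)=(u^L(\xx,j)-\beta_1)/\gamma_1$, a uniform positive affine transformation, together with the fact that any $j$ achieving the equality in either \eqref{eq:J-M} or its tilde analogue must lie in $\widehat J$ (your step~(ii), the paper's \eqref{eq:lmm-J-xstar-eq1}--\eqref{eq:lmm-J-xstar-eq2}). Your treatment is arguably a bit more explicit than the paper's in isolating the identity $M_{[n]}=M_{\widehat J}$ (and $\widetilde M_{[n]}=\widetilde M_{\widehat J}$) as the one substantive point, whereas the paper absorbs this into the equivalence \eqref{eq:lmm-J-xstar-eq3}$\Leftrightarrow$\eqref{eq:lmm-J-xstar-eq4} without spelling it out; note also a small typo in your first paragraph where $M_J$ and $\widetilde M_J$ are swapped.
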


It remains to compute the parameters needed for constructing $\tilde{u}^L$, that is, $\widehat{J}$, $\gamma_j$, and $\beta_j$.
Indeed, $\widehat{J}$ can be efficiently computed by comparing $u^L(i,j)$ with $M_{\{1\}} = \max_{i\in[n]} u^L(i,1)$ for all $i,j$; these relations are known according to \Cref{asn:relation-jik}.
Also note that $\widehat{J} \neq \emptyset$, since otherwise we would have $M_{[n]} = M_{\{1\}}$, which contradicts \Cref{asn:aj-Mj}.

As for $\gamma_j$ and $\beta_j$, we will not learn them directly, but only learn the quantities $\gamma_j/\gamma_1$ and $(\beta_j-\beta_1)/\gamma_1$. To this end, we will find two different pairs of strategies $\mathbf{x}_1, \mathbf{y}_1$ and $\mathbf{x}_2, \mathbf{y}_2$, such that $u^L(\mathbf{x}_1,1)=u^L(\mathbf{y}_1,j)$ and $u^L(\mathbf{x}_2,1)=u^L(\mathbf{y}_2,j)$.
This gives the following system of linear equations:
\begin{equation}
\label{eq:compute-gamma-beta}
    \begin{cases}
    \gamma_1 \cdot \mathbf{a}_1 \cdot \mathbf{x}_1 + \beta_1  = 
    \gamma_j \cdot \mathbf{a}_j\cdot \mathbf{y}_1 + \beta_j\\
    \gamma_1 \cdot \mathbf{a}_1\cdot \mathbf{x}_2 + \beta_1  = 
    \gamma_j \cdot \mathbf{a}_j\cdot \mathbf{y}_2 + \beta_j
    \end{cases}
\end{equation}
Rearranging the terms, it is easy to see that when $\xx_1$, $\xx_2$, $\yy_1$, and $\yy_2$ are given (in addition to $\va_1$ and $\va_j$, which we already know),
the above is a system of equations about $\gamma_j/\gamma_1$ and $(\beta_j-\beta_1)/\gamma_1$;
hence, solving the system gives the two desired quantities.
We hereafter call each of the two pairs $\mathbf{x}_1, \mathbf{y}_1$ and $\mathbf{x}_2, \mathbf{y}_2$ a {\em reference pair}.
We  in particular look for two reference pairs that make the above equation system non-degenerate, so that it has a unique solution.

As a final remark, the reason why we use $\widehat{J}$ instead of $[n]$ is that $\gamma_j/\gamma_1$ and $(\beta_j-\beta_1)/\gamma_1$ cannot be learned for actions $j \notin \widehat{J}$. Indeed, the leader's payoffs for actions not in $\widehat{J}$ is too high, so these actions will not contribute to making the set $J$ such that $M_J = M_{[n]}$.

An overview of the approach to learning $J$ and $\xx^*$ is provided in \Cref{fig:learning-J-xstar}.

\begin{figure}[t]
\begin{boxedtext*}
\begin{itemize}[leftmargin=*,itemsep=1mm]
\item[1.]
For each $j \in \widehat{J}$,
find two reference pairs $\mathbf{x}_1, \mathbf{y}_1$ and $\mathbf{x}_2, \mathbf{y}_2$, such that $u^L(\mathbf{x}_1,1)=u^L(\mathbf{y}_1,j)$ and $u^L(\mathbf{x}_2,1)=u^L(\mathbf{y}_2,j)$.
\detailsprovidedin{\Cref{sec:first-reference-pair,sec:second-ref}}

\item[2.] 
Solve \Cref{eq:compute-gamma-beta} to obtain the quantities $\gamma_j/\gamma_1$ and $(\beta_j-\beta_1)/\gamma_1$ for each $j \in \widehat{J}$. 

\item[3.]
Construct $\tilde{u}^L$ defined in \Cref{eq:tilde-u-L}.

\item[4.]
Compute the maximin value $\widetilde{M}_{[n]}$ and maximin strategy $\xx^*$ of $\tilde{u}^L$;
Output $J = \{j \in \widehat{J}: \tilde{u}^L(\xx^*, j) = \widetilde{M}_{[n]}\}$ and $\xx^*$.
\detailsprovidedin{\Cref{lmm:tilde-uL-J}}
\end{itemize}
\end{boxedtext*}
\vspace{-3mm}
\caption{Summary of the approach to learning $J$ and $\xx^*$.
\label{fig:learning-J-xstar}}
\end{figure}

\subsection{Finding the First Reference Pair}
\label{sec:first-reference-pair}

For ease of description, in what follows, we assume $2 \in \widehat{J}$ and present how to find the reference pairs for computing $\gamma_2/\gamma_1$ and $(\beta_2-\beta_1)/\gamma_1$.
The results generalize to any $j \in \widehat{J}$ by relabeling $2$ to $j$.  

\begin{observation}
\label{asn:2-hat-J}
Without loss of generality, we can assume that $2 \in \widehat{J}$.
\end{observation}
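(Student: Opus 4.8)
The plan is to reduce the observation to the single combinatorial fact that $\widehat{J}$ is not contained in $\{1\}$; granting that, the observation follows by relabeling some $j^* \in \widehat{J} \setminus \{1\}$ as follower action $2$ (swapping it with whatever action was previously labeled $2$). Such a relabeling leaves action $1$ untouched and merely permutes the follower's other actions, so it preserves \Cref{asn:aj-Mj}, \ref{asn:1-min-M}, and \ref{asn:a1-non-uniform}, each of which refers either to action $1$ alone or is stated symmetrically over all $j \in [n]$. In particular, the relabeled instance still satisfies every standing assumption and now has $2 \in \widehat{J}$, which is all that is claimed.

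To establish $\widehat{J} \not\subseteq \{1\}$, I would argue by contradiction. Suppose $\widehat{J} \subseteq \{1\}$; unwinding the definition in \Cref{eq:hat-J}, this says $\min_{\xx \in \Delta_m} u^L(\xx, j) \ge M_{\{1\}}$ for every $j \in [n] \setminus \{1\}$. Pick any $\xx^0 \in \calM_{\{1\}}$ (nonempty, being the set of leader best responses to action $1$), so that $u^L(\xx^0, 1) = M_{\{1\}}$. Since also $u^L(\xx^0, j) \ge M_{\{1\}}$ for all $j \ne 1$, we get $\min_{j \in [n]} u^L(\xx^0, j) = M_{\{1\}}$ and hence $M_{[n]} \ge M_{\{1\}}$. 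On the other hand, $M_{[n]} = \max_{\xx \in \Delta_m} \min_{j \in [n]} u^L(\xx, j) \le \max_{\xx \in \Delta_m} u^L(\xx, 1) = M_{\{1\}}$, so $M_{[n]} = M_{\{1\}}$, contradicting $M_{\{1\}} > M_{[n]}$ guaranteed by \Cref{asn:aj-Mj}. Therefore $\widehat{J} \not\subseteq \{1\}$, which in particular re-derives $\widehat{J} \ne \emptyset$ and supplies the required $j^* \in \widehat{J} \setminus \{1\}$ (so that, incidentally, $n \ge 2$ and the label ``action $2$'' is meaningful).

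There is essentially no hard obstacle in this argument: it is just unwinding the definition of $\widehat{J}$ together with the maximin gap $M_{\{j\}} > M_{[n]}$ provided by \Cref{asn:aj-Mj}. The one point I would be careful to spell out explicitly is the invariance of the three standing assumptions under the relabeling, so that the ``without loss of generality'' is fully justified; this is immediate once one notes that $\widehat{J}$, the quantities $M_{\{j\}}$, and the direction vectors $\va_j$ are all defined intrinsically from the payoff matrices and thus merely get relabeled along with the follower's actions.
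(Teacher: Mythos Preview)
Your proposal is correct and follows essentially the same relabeling idea as the paper: the paper simply states that the argument for finding reference pairs is presented for $j=2$ and ``generalize[s] to any $j \in \widehat{J}$ by relabeling $2$ to $j$.'' You go one step further by explicitly proving $\widehat{J}\setminus\{1\}\neq\emptyset$ via the maximin contradiction with \Cref{asn:aj-Mj}; the paper only records $\widehat{J}\neq\emptyset$ and otherwise treats the observation as a purely presentational device, so your argument is a slightly more thorough justification of the same reduction.
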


Next we show how to find a first reference pair.
Recall that we want to find a pair $\mathbf{x}, \mathbf{y}$ with $u^L(\xx,1) = u^L(\yy,2)$.
We in particular aim to find a pair such that
\begin{equation}
\label{eq:first-ref-pair}
u^L(\xx,1) = u^L(\yy,2) = M_{\{1,2\}}. \tag{$\star$}
\end{equation}

\begin{figure}[t]
	\center
	\begin{tikzpicture}
	
	\begin{ternaryaxis}[
	ternary limits relative=false,
	width= 45mm,
 	xtick=\empty,
 	ytick=\empty,
 	ztick=\empty,
	area style,
	grid=none,
	clip=false,
	xmin=0, xmax=1.0,
	ymin=0, ymax=1.0,
	zmin=0, zmax=1.0,
    axis on top,
	]
	
	\addplot3 coordinates {
	    (1, 0, 0)
		(0.32, 0.68, 0)
		(0.45, 0, 0.55)
        (1, 0, 0)
	};
	
	\addplot3 coordinates {
		(0, 1, 0)
		(0.32, 0.68, 0)
		(0.45, 0, 0.55)
		(0, 0, 1)
	};
	
 	\draw[black, thick] (0.3, 0.8, -0.1) -- (0.47, -0.1, 0.63);

	\node at (0.56, 0.26, 0.18) {\large $1$};
	\node at (0.2, 0.3, 0.5) {\large $2$};
	\node [black, rotate=12, anchor=west] at (0.47, -0.1, 0.63) {\small $\va_{2} \cdot \xx = d$};
	
	\end{ternaryaxis}
	
	\end{tikzpicture}
	\caption{The BR-correspondence of $\tilde{u}_d^F$. 
	The triangle represents the strategy space $\Delta_m$ of the leader.
	The regions labeled $1$ and $2$ are $f_d^{-1}(1)$ and $f_d^{-1}(1)$, respectively. \label{fig:brc-fd}}
\end{figure}

We use the following payoff function $\tilde{u}^F_d$ parameterized by a number $d \in \mathbb{R}$. 
\begin{boxedtext}
For every $\xx \in \Delta_m$, let
\begin{equation}
\label{eq:tilde-uF-first-ref-pair}
\tilde{u}^F_d(\xx,j) = \begin{cases}
-1, & \text{if }j \in [n]\setminus \ots;\\
0, & \text{if } j = 1;\\
d-\va_2 \cdot\xx, & \text{if } j = 2. 
\end{cases}
\end{equation}
\end{boxedtext}
Let $f_d$ be the BR-correspondence defined by $\tilde{u}^F_d$. It can be verified that 
\begin{align*}
&f_d^{-1}(1) = \{\xx \in \Delta_m : \va_2 \cdot \xx \ge d \} \\
\text{ and } \quad
&f_d^{-1}(2) = \{\xx \in \Delta_m : \va_2 \cdot \xx \le d \},
\end{align*}
as illustrated in Figure~\ref{fig:brc-fd}.
Let $\calG_d \coloneqq (u^L, \tilde{u}^F_d)$ be the game induced by $\tilde{u}^F_d$.

We then search for a $d$ that makes both actions $1$ and $2$ SSE responses.
Intuitively, when $d$ is sufficiently small, the follower responding according to $f_d$ leads to action $1$ being the sole best response against the entire strategy space $\Delta_m$ of the leader and hence being the only SSE response of $\calG_d$.
Conversely, when $d$ is sufficiently large, action $2$ becomes the only SSE response of $\calG_d$.
Therefore, the goal is to identify a point between these two extremes, where both actions $1$ and $2$ are SSE responses.
This point is exactly 
\begin{equation}
\label{eq:d-star}
d^* \coloneqq ( M_{\{1, 2\}} - \beta_{2}) /\gamma_{2}
\end{equation}
(\Cref{lmm:fd-main}).
Note that, we cannot compute $d^*$ directly using the above equation since we do not know any of the values $M_{\{1, 2\}}$, $\beta_2$, and $\gamma_2$.
Instead, we will use binary search to find it out:
according to \Cref{lmm:fd-main} presented below, $\calG_d$ has different SSE responses when $d < d^*$ and $d> d^*$, so using the SSE oracle, we can identify whether a candidate value is smaller or larger than $d^*$.

\begin{restatable}{lemma}{lmmfdmain}
\label{lmm:fd-main}
Action $1$ is an SSE response of $\calG_d$ if and only if $d \le d^*$, and action $2$ is an SSE response of $\calG_d$ if and only if $d \ge d^*$.
\end{restatable}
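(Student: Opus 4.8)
The plan is to follow the paper's own suggestion: track, as $d$ varies, the leader's best payoff conditioned on each of the two surviving follower actions, and show that the two quantities cross exactly at $d^*$. First I would record the structure of $\calG_d$. In $\tilde u^F_d$ every action $j\notin\ots$ is strictly dominated (by action $1$), so $f_d^{-1}(1)=\{\xx\in\Delta_m:\va_2\cdot\xx\ge d\}$ and $f_d^{-1}(2)=\{\xx\in\Delta_m:\va_2\cdot\xx\le d\}$; hence the leader's SSE payoff in $\calG_d$ is $\max\{P_1(d),P_2(d)\}$ with $P_i(d):=\sup_{\xx\in f_d^{-1}(i)}u^L(\xx,i)$ and $\sup\emptyset=-\infty$. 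Since $f_d^{-1}(1)\cup f_d^{-1}(2)=\Delta_m$, at least one set is non-empty, so $i\in\ots$ is an SSE response of $\calG_d$ iff $f_d^{-1}(i)\neq\emptyset$ and $P_i(d)\ge P_{3-i}(d)$. Using $u^L(\xx,2)\equiv\gamma_2\,\va_2\cdot\xx+\beta_2$ with $\gamma_2>0$ from \Cref{eq:aj}, set $c:=\gamma_2 d+\beta_2$; then $\va_2\cdot\xx\ge d$ is just $u^L(\xx,2)\ge c$, the value $c$ ranges over $[m_2,M_{\{2\}}]$ (where $m_2:=\min_\xx u^L(\xx,2)$) as $d$ ranges over the values of $\va_2\cdot\xx$ on $\Delta_m$, and $c\le M_{\{1,2\}}\iff d\le d^*$ by \Cref{eq:d-star}.

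Next I would establish two facts. \emph{(i) Range of $d^*$:} $m_2<M_{\{1,2\}}\le M_{\{2\}}$. The upper bound is immediate from $M_{\{1,2\}}=\max_\xx\min\{u^L(\xx,1),u^L(\xx,2)\}\le M_{\{2\}}$. For the strict lower bound, \Cref{asn:2-hat-J} says $2\in\widehat J$, i.e.\ $m_2<M_{\{1\}}$, and with $M_{\{1\}}\le M_{\{2\}}$ from \Cref{asn:1-min-M} we get $m_2<M_{\{2\}}$; a point $\xx$ with $u^L(\xx,1)>m_2$ and $u^L(\xx,2)>m_2$ then exists, so $M_{\{1,2\}}>m_2$. \emph{(ii) Crossing value:} $h(M_{\{1,2\}})=M_{\{1,2\}}$, where $h(c):=\max\{u^L(\xx,1):\xx\in\Delta_m,\ u^L(\xx,2)\ge c\}$. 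The inequality $h(M_{\{1,2\}})\ge M_{\{1,2\}}$ follows by plugging a maximin strategy for $\{1,2\}$ into the maximization. For the reverse, suppose $\xx^*$ has $u^L(\xx^*,2)\ge M_{\{1,2\}}$ and $u^L(\xx^*,1)>M_{\{1,2\}}$; the definition of $M_{\{1,2\}}$ forces $u^L(\xx^*,2)=M_{\{1,2\}}$, and then either $M_{\{1,2\}}<M_{\{2\}}$, in which case sliding $\xx^*$ slightly toward a maximizer of $u^L(\cdot,2)$ yields a point at which both payoffs exceed $M_{\{1,2\}}$ — contradicting the definition of $M_{\{1,2\}}$ — or $M_{\{1,2\}}=M_{\{2\}}\ge M_{\{1\}}$ (again by \Cref{asn:1-min-M}), contradicting $u^L(\xx^*,1)\le M_{\{1\}}$.

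Then I would assemble the three regimes, writing $[d_{\min},d_{\max}]$ for the range of $\xx\mapsto\va_2\cdot\xx$ on $\Delta_m$. If $d<d_{\min}$ then $f_d^{-1}(2)=\emptyset$, so action $2$ is not an SSE response while action $1$ is; and $d<d_{\min}<d^*$ by (i), matching the claim. The regime $d>d_{\max}$ is symmetric. For $d$ in the middle range, i.e.\ $c\in[m_2,M_{\{2\}}]$, both $f_d^{-1}(1)$ and $f_d^{-1}(2)$ are non-empty, $P_2(d)=c$, and $P_1(d)=h(c)$. Since $h$ is non-increasing, $c\mapsto h(c)-c$ is strictly decreasing, and by (ii) it vanishes at $c=M_{\{1,2\}}$; hence $P_1(d)\ge P_2(d)\iff c\le M_{\{1,2\}}\iff d\le d^*$, and likewise $P_2(d)\ge P_1(d)\iff d\ge d^*$. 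Combined with the SSE-response characterization of the first paragraph, this is exactly the lemma.

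The main obstacle is part (ii), specifically the direction $h(M_{\{1,2\}})\le M_{\{1,2\}}$. The naive contradiction with the definition of the maximin value fails precisely when the candidate witness $\xx^*$ lies on the face $\{u^L(\cdot,2)=M_{\{1,2\}}\}$ of $\Delta_m$; handling that case requires the perturbation argument, which in turn leans on $M_{\{1,2\}}<M_{\{2\}}$ (i.e.\ that $u^L(\cdot,2)$ is non-constant, itself a consequence of \Cref{asn:2-hat-J} and \Cref{asn:1-min-M}) together with the boundary subcase $M_{\{1,2\}}=M_{\{2\}}$, which is absorbed by $1\in\argmin_j M_{\{j\}}$. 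Everything else is routine bookkeeping about parametric maxima of affine functions over the simplex.
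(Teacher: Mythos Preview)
Your proposal is correct and follows essentially the same approach as the paper: your fact~(ii), $h(M_\ots)=M_\ots$, is exactly the content of the paper's \Cref{lmm:hat-u1-M-12}, and your monotone-crossing argument via $c\mapsto h(c)-c$ is a clean repackaging of the paper's case analysis comparing $V_d^L(1)$ and $V_d^L(2)$. The only cosmetic difference is that you absorb the paper's separate lemma $V_{d^*}^L(2)=M_\ots$ into the direct identity $P_2(d)=c$ on the middle range, which slightly streamlines the presentation.
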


Once $d^*$ is identified, we can compute the first reference pair immediately. 
Intuitively, both actions $1$ and $2$ are SSE responses of $\calG_{d^*} = (u^L, \tilde{u}_{d^*}^F)$.
So picking arbitrary $\xx \in \argmax_{\xx' \in f_{d^*}^{-1}(1)} u^L(\xx', 1)$ and $\yy \in \argmax_{\yy' \in f_{d^*}^{-1}(2)} u^L(\yy', 2)$ gives $u^L(\xx,1) = u^L(\yy,2)$, as desired.
The computation of $\xx$ (and similarly $\yy$) can be handled by solving, equivalently, 
$\max_{\xx' \in f_{d^*}^{-1}(1)} \va_1 \cdot \xx'$, where the constraint $\xx' \in f_{d^*}^{-1}(1)$ is further equivalent to $\va_2 \cdot \xx' \ge d^*$.
Hence, the task reduces to solving an LP, which can be done in polynomial time. 
The following theorem presents a stronger result saying that the strategies in this reference pair yields the maximin payoff of $\ots$ for the leader.

\begin{restatable}{theorem}{thmcomputedstar}
\label{thm:compute-d-star}
A reference pair $\xx, \yy \in \Delta_m$ such that $u^L(\xx,1) = u^L(\yy,2) = M_{\{1,2\}}$ can be computed in polynomial time.
\end{restatable}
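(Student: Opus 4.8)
The plan is to pin down the threshold $d^*$ of \eqref{eq:d-star} exactly by binary search powered by \Cref{lmm:fd-main}, and then to read the reference pair off the two half-spaces cut by the hyperplane $\va_2\cdot\xx=d^*$. Since $\va_2$ is known (\Cref{asn:aj-Mj}), I would run the search inside the interval $[\min_i a_{2,i},\ \max_i a_{2,i}]$, the range of $\va_2\cdot\xx$ over $\xx\in\Delta_m$: by \Cref{lmm:fd-main}, at $d=d^*$ both $1$ and $2$ are SSE responses of $\calG_d$, so $f_{d^*}^{-1}(1)$ and $f_{d^*}^{-1}(2)$ are nonempty, forcing $d^*$ into this interval. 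For a candidate $d$, I would ask $\AER$ whether action $1$ (resp.\ action $2$) is an SSE response of $\calG_d$ --- $\AER$ is available here because the $\va_j$'s are known (\Cref{asn:aj-Mj}) --- and \Cref{lmm:fd-main} translates the answers into ``$d\le d^*$'' (resp.\ ``$d\ge d^*$''), so each step tells us whether $d<d^*$, $d=d^*$, or $d>d^*$. Because $M_{\ots}$, $\gamma_2$ and $\beta_2$ are rationals of bit-size polynomial in that of $u^L$ (and of $\va_2$), so is $d^*=(M_{\ots}-\beta_2)/\gamma_2$; hence a Stern--Brocot binary search (\Cref{sec:preliminaries}) returns $d^*$ exactly in polynomial time with polynomially many queries.

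With $d^*$ in hand I would take $\yy$ to be any point of $\Delta_m$ with $\va_2\cdot\yy=d^*$; such a point exists because $\Delta_m=f_{d^*}^{-1}(1)\cup f_{d^*}^{-1}(2)$ is connected while both parts are nonempty closed convex sets, so they intersect, and their intersection is exactly $\{\xx\in\Delta_m:\va_2\cdot\xx=d^*\}$. Solving an LP yields such a $\yy$, and $u^L(\yy,2)=\gamma_2(\va_2\cdot\yy)+\beta_2=\gamma_2 d^*+\beta_2=M_{\ots}$ by \eqref{eq:d-star}. For $\xx$, I would take any maximizer $\xx\in\argmax_{\xx':\,\va_2\cdot\xx'\ge d^*}\va_1\cdot\xx'$, again an LP; as $\gamma_1>0$, $\xx$ also maximizes $u^L(\cdot,1)$ over $f_{d^*}^{-1}(1)$, and since action $1$ is an SSE response of $\calG_{d^*}$ this maximum equals the leader's SSE payoff in $\calG_{d^*}$, so $u^L(\xx,1)$ equals that SSE payoff. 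It then remains to see that the SSE payoff of $\calG_{d^*}$ is exactly $M_{\ots}$: responses $j\ge 3$ are strictly dominated under $\tilde{u}^F_{d^*}$, so the SSE payoff equals $\max\bigl(\max_{\va_2\cdot\xx'\ge d^*}u^L(\xx',1),\ \max_{\va_2\cdot\xx'\le d^*}u^L(\xx',2)\bigr)$; the second term attains this maximum because action $2$ is also an SSE response at $d^*$, and it equals $\gamma_2 d^*+\beta_2=M_{\ots}$ since $u^L(\xx',2)=\gamma_2(\va_2\cdot\xx')+\beta_2$ is increasing in $\va_2\cdot\xx'$ and the hyperplane meets $\Delta_m$. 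Hence $u^L(\xx,1)=u^L(\yy,2)=M_{\ots}$, with $\xx,\yy$ obtained from LPs following a polynomial binary search.

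The delicate point is the last chain of equalities --- combining the definition \eqref{eq:d-star} of $d^*$ with \Cref{lmm:fd-main} to certify that the hyperplane $\va_2\cdot\xx=d^*$ truly meets $\Delta_m$ and that $\max_{\xx'\in\Delta_m}\va_2\cdot\xx'\ge d^*$, and hence that the leader's SSE payoff in $\calG_{d^*}$ equals $M_{\ots}$ rather than merely some common value shared by the two candidate responses. The binary-search bookkeeping (starting interval, bit-size bound on $d^*$, and the reduction of each step to $\AER$ queries) is routine given \Cref{lmm:fd-main} and the Stern--Brocot machinery.
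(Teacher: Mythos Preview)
Your proposal is correct and follows essentially the same approach as the paper: binary search for $d^*$ using \Cref{lmm:fd-main} and $\AER$, then extract $\xx$ and $\yy$ by solving LPs over the half-spaces $\{\va_2\cdot\xx'\ge d^*\}$ and $\{\va_2\cdot\xx'\le d^*\}$. The only cosmetic difference is that the paper packages the fact that the SSE payoff of $\calG_{d^*}$ equals $M_{\ots}$ into a separate lemma ($V_{d^*}^L(1)=V_{d^*}^L(2)=M_{\ots}$), whereas you re-derive the $V_{d^*}^L(2)=M_{\ots}$ half inline; your choice of $\yy$ as any point on the hyperplane is exactly a maximizer of $u^L(\cdot,2)$ over $f_{d^*}^{-1}(2)$, so the two constructions coincide.
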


\subsection{Finding the Second Reference Pair}
\label{sec:second-ref}

Next, we search for a second reference pair. We aim to find a pair $\xx,\yy$ such that 
\begin{equation*}
\label{eq:second-ref-pair}
u^L(\mathbf{x},1)=u^L(\mathbf{y},2) < M_\ots. \tag{$\star\star$}
\end{equation*}
Given \eqref{eq:first-ref-pair}, the requirement that the payoffs are strictly smaller than $M_\ots$ ensures that \eqref{eq:compute-gamma-beta} has a unique solution. 
Similar to our approach to finding the first reference pair, we aim to construct a BR-correspondence to induce two SSEs $(\xx, 1)$ and $(\yy, 2)$. 

As illustrated in \Cref{fig:brc-fd1d2k}, the high-level idea is to pull back the boundaries of the best-response regions $f_{d^*}^{-1}(1)$ and $f_{d^*}^{-1}(2)$ (where we add a new boundary $\va_1 \cdot \xx = d_1$ to the region corresponding to action $1$).
By pulling back the boundaries by appropriate distances, we can keep the leader's maximum attainable payoffs in these two regions equal, while at the same time they become strictly smaller than $M_\ots$; hence, we obtain a pair satisfying \eqref{eq:second-ref-pair}.

Hence, the key is to maintain both actions $1$ and $2$ as SSE responses. The contraction of the best-response regions of these two actions means that a blank region will appear, which needs to be allocated to some response of the follower for the BR-correspondence to be well-defined. 
In particular, we need a follower action that gives the leader a sufficiently low payoff, so that actions $1$ and $2$ remain to be SSE responses. 
Sometimes we cannot find a single action of the follower to fulfill this task so in general we need a cover of $\ots$ which may involve multiple actions.

\begin{figure}[t]
	\center
	\begin{tikzpicture}
	
	\begin{ternaryaxis}[
	ternary limits relative=false,
	width= 45mm,
 	xtick=\empty,
 	ytick=\empty,
 	ztick=\empty,
	area style,
	grid=none,
	clip=false,
	xmin=0, xmax=1.0,
	ymin=0, ymax=1.0,
	zmin=0, zmax=1.0,
    axis on top,
	]
	
	\addplot3 coordinates {
		(0.2, 0.8, 0)
		(0.3, 0.32, 0.38)
		(0.81, 0, 0.19)
        (1, 0, 0)
	};
	
	\addplot3 coordinates {
		(0, 1, 0)
		(0.205, 0.795, 0)
		(0.36, 0, 0.64)
		(0, 0, 1)
	};
	
	\addplot3 coordinates {
        (0.81, 0, 0.19)
        (0.3, 0.32, 0.38)
        (0.36, 0, 0.64)
	};
	
 	\draw[black,thick] (0.18, 0.92, -0.1) -- (0.38, -0.1, 0.72);
	\draw[black, thick] (0.3, 0.32, 0.38) -- (0.97, -0.1, 0.13);

	\node at (0.56, 0.26, 0.18) {\large $1$};
	\node at (0.15, 0.25, 0.6) {\large $2$};
	\node at (0.5, 0.08, 0.42) {$h$};

	\node [rotate=12, anchor=west] at (0.97, -0.1, 0.13) {\small $\va_1 \cdot \xx = d_1$};
	
	\node [black, rotate=12, anchor=west] at (0.38, -0.1, 0.72) {\small $\va_{2} \cdot \xx = d_2$};
	
	\end{ternaryaxis}
	
	\end{tikzpicture}
	\caption{
 $f_{d_1, d_2}$: the BR-correspondence used for finding the second reference pair. The triangle represents the strategy space $\Delta_m$ of the leader. The regions labeled $1$ and $2$ are $f_{d_1, d_2}^{-1}(1) = P_1$ and $f_{d_1, d_2}^{-1}(1) = P_2$, respectively.
	\label{fig:brc-fd1d2k}}
\end{figure}

Our main result is stated in \Cref{thm:second-ref-pair-payoff-query}. To better illustrate the approach, we will first present an algorithm that uses {\em BR-correspondence queries}, queries that use games in the form $\widetilde{\calG} = (u^L, \widetilde{\BR})$, where $\widetilde{\BR}:\Delta_m \to 2^{[n]}$ is a BR-correspondence. We assume temporarily that $\ASSE$ can handle such queries.
Ideally, the BR-correspondences used should also be realized by valid payoff matrices, but this is not always the case with the algorithm presented next.
Hence, we also design a stronger algorithm that always uses payoff matrices to query the SSE oracle. The process is much more involved but can be better understood based on intuition conveyed from the former, and we leave the details to \Cref{sec:second-ref-payoff-query}.

\begin{restatable}{theorem}{thmsecondrefpayoffquery}
\label{thm:second-ref-pair-payoff-query}
A reference pair $\xx, \yy \in \Delta_m$ such that $u^L(\xx, 1) = u^L(\yy, 2) < M_\ots$ can be computed in polynomial time. 
\end{restatable}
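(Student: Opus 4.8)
The plan mirrors the construction of the first reference pair in \Cref{sec:first-reference-pair}, but with a richer BR-correspondence. Recall that the payoff function $\tilde{u}^F_{d^*}$ used there splits $\Delta_m$ into the two cells $\{\xx:\va_2\cdot\xx\ge d^*\}$ and $\{\xx:\va_2\cdot\xx\le d^*\}$, on each of which the leader's best attainable payoff equals $M_\ots$; here $d^*$ is the value pinned down in \eqref{eq:d-star} and computed by \Cref{thm:compute-d-star}, and we write $(\xx_0,\yy_0)$ for the first reference pair it produces, so $u^L(\xx_0,1)=u^L(\yy_0,2)=M_\ots$ and $\va_2\cdot\yy_0=d^*$. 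Both $u^L(\cdot,1)$ and $u^L(\cdot,2)$ are non-constant --- the first by \Cref{asn:a1-non-uniform}, the second because a constant $u^L(\cdot,2)$ would equal $M_{\{2\}}\ge M_{\{1\}}$ by \Cref{asn:1-min-M}, contradicting $2\in\widehat J$ (\Cref{asn:2-hat-J}) --- so they are affine and strictly increasing along $\va_1\cdot\xx$ and $\va_2\cdot\xx$ respectively. I will therefore add a second cut $\va_1\cdot\xx=d_1$ inside the action-$1$ cell and pull the cut $\va_2\cdot\xx=d_2$ inward, obtaining the three-cell picture of \Cref{fig:brc-fd1d2k}: a cell $P_1=\{\xx:\va_2\cdot\xx\ge d_2,\ \va_1\cdot\xx\le d_1\}$ for action~$1$, a cell $P_2=\{\xx:\va_2\cdot\xx\le d_2\}$ for action~$2$, and a leftover cell $h=\{\xx:\va_2\cdot\xx\ge d_2,\ \va_1\cdot\xx\ge d_1\}$. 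The leader's best payoff in $P_1$ is $\gamma_1 d_1+\beta_1$, attained on the hyperplane $\va_1\cdot\xx=d_1$, and in $P_2$ it is $\gamma_2 d_2+\beta_2$; the goal is to choose $(d_1,d_2)$ so that these two quantities are equal to a common value $v$ with $v<M_\ots$, while the cell $h$ never yields the leader a payoff of $v$ or more. Then any $\xx\in\argmax_{\xx'\in P_1}\va_1\cdot\xx'$ and $\yy\in\argmax_{\yy'\in P_2}\va_2\cdot\yy'$ (each a linear program in the known vectors $\va_1,\va_2$) form SSEs $(\xx,1)$, $(\yy,2)$ with $u^L(\xx,1)=u^L(\yy,2)=v<M_\ots$, i.e., a valid second reference pair.

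To assign responses in the leftover cell $h$ I use a cover of $\ots$ in the sense of \Cref{def:cover}, which is precisely a follower payoff matrix keeping the leader's payoff below $M_\ots$ over $\calM_\ots$. One is produced by \Cref{thm:find-cover-given-aj} applied with $S=\ots$: the vectors $\va_j$ for $j\notin\ots$ are available by \Cref{asn:aj-Mj}; the function $\tilde{u}^F_{d^*}$ is a base function for $\ots$, since its BR-correspondence only ever selects action~$1$ or action~$2$ and the leader's SSE payoff in $(u^L,\tilde{u}^F_{d^*})$ is $M_\ots$; and $\calM_\ots$ is available as the explicit polytope $\{\xx\in\Delta_m:\va_1\cdot\xx\ge\va_1\cdot\xx_0,\ \va_2\cdot\xx\ge d^*\}$, which records exactly the inequalities $u^L(\xx,1)\ge M_\ots$ and $u^L(\xx,2)\ge M_\ots$. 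If \Cref{thm:find-cover-given-aj} reports that $\ots$ admits no cover, then $M_\ots=M_{[n]}$ by \Cref{lmm:cover-iff}; this is a degenerate situation handled by a short separate argument (the pair $\ots$ then needs no second reference pair, and the downstream construction of $(J,\xx^*)$ closes off via \Cref{thm:equ-game}). Henceforth assume a cover $\mu$ of $\ots$ is in hand.

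With $\mu$ fixed, I first carry out the search assuming --- as in the illustrative part of this section --- that the SSE oracle accepts BR-correspondence queries. Realize the BR-correspondence that plays action~$1$ on $P_1$, action~$2$ on $P_2$, and follows $\mu$ on $h$, and test for SSE responses using $\ASSE$ together with the derived oracle $\AER$ of \Cref{def:ER-oracle} (available since all $\va_j$ are known). Fix $d_2$ slightly below $d^*$, so that the best leader payoff in $P_2$ is already strictly below $M_\ots$, and then binary-search for the threshold $d_1$ at which action~$1$ becomes an SSE response while action~$2$ remains one: if $d_1$ is too large the best leader payoff in $P_1$ exceeds that in $P_2$ and kills action~$2$; if $d_1$ is too small the best leader payoff in $P_1$ drops below and kills action~$1$; $\AER$ detects both events, and the correct $d_1$ is a rational of controlled bit-size found exactly by a Stern--Brocot search. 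The delicate point, which I expect to be the main obstacle of this idealized version, is that $h$ is full-dimensional, so the bare cover guarantee --- which controls the leader's payoff only over $\calM_\ots$ --- does not by itself bound the leader's payoff on all of $h$ by $v$. The argument must therefore interleave the placement of $d_1$, and if necessary a simultaneous inward slide of $d_2$ that preserves the equality of the two cell-optima, with the geometry of $\mu$, so that as $v$ is driven up toward $M_\ots$ exactly the portion of $h$ on which the leader could approach $M_\ots$ is absorbed into $P_1$ or $P_2$; at the located thresholds both actions $1$ and $2$ are simultaneously SSE responses, and the maximizers in $P_1$ and $P_2$ give the reference pair.

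Finally, the BR-correspondence produced above is not in general realized by a legitimate follower payoff matrix: the three cells of \Cref{fig:brc-fd1d2k} together with the cells of $\mu$ need not be simultaneously cut out by linear comparisons among a single family of follower payoffs. The remaining step is to replace it by an honest payoff matrix having the same SSE responses $1$ and $2$ with the same leader payoffs, by a perturbation argument: realize the idealized cells approximately, show that the two SSEs of interest and their leader payoffs are robust to the perturbation, and recover the exact thresholds via binary search on the Stern--Brocot tree. Consistent with the excerpt's own assessment, I expect this conversion --- rather than the idealized construction --- to carry the bulk of the technical work, and its details are deferred to \Cref{sec:second-ref-payoff-query}.
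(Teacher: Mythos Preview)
Your plan matches the paper's two-stage architecture: first run the search using BR-correspondence queries with the three-cell partition $P_1,P_2,P_0$ and a cover of $\ots$ on $P_0$, then convert to genuine payoff-matrix queries. The handling of the degenerate case $M_\ots=M_{[n]}$ via \Cref{lmm:cover-iff} and your description of $\calM_\ots$ as $P_0(d_1^*,d_2^*)$ are both correct and agree with the paper.

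The one substantive gap is in your search order. You propose to fix $d_2$ ``slightly below $d^*$'' and then binary-search over $d_1$. But for an arbitrary $d_2<d_2^*$, the leftover region $P_0(d_1,d_2)$ strictly contains $\calM_\ots$ even as $d_1\to d_1^*$, and the cover's guarantee \eqref{eq:cover} says nothing about $u^L(\xx,h(\xx))$ for $\xx\in P_0(d_1,d_2)\setminus\calM_\ots$. So it can happen that $\max_{\xx\in P_0,\,k\in h(\xx)}u^L(\xx,k)$ exceeds both $\gamma_1 d_1+\beta_1$ and $\gamma_2 d_2+\beta_2$ for \emph{every} $d_1<d_1^*$, and neither action $1$ nor $2$ is ever an SSE response. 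Your sentence about a ``simultaneous inward slide of $d_2$'' is the right instinct, but you need to make it the primary search: the paper first searches along the diagonal $d_j=d_j^*-\epsilon$ for a single parameter $\epsilon$, and proves (\Cref{lmm:G-d-eps}) that there is a threshold $\hat\epsilon>0$ of polynomial bit-size---characterized as the optimum of an explicit LP in the known data $\va_j$, $d_j^*$, and $\mu$---below which at least one of $1,2$ is an SSE response. Only after locating such an $\epsilon$ does one freeze one coordinate and binary-search the other. Without that LP characterization you have neither existence of a good $(d_1,d_2)$ nor a polynomial bound on the bit-size needed for the Stern--Brocot search to terminate.

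On the payoff-matrix conversion: the paper does not do a perturbation/robustness argument. It builds an explicit matrix $\tilde u^F_{d_1,d_2}$ whose action-$j$ payoff is an affine function engineered so that (a) its best-response region for $j\in\ots$ is sandwiched between $P_j(d_1,d_2)$ and $P_j(\bar d_1,\bar d_2)$ with $\bar d_j=(d_j+d_j^*)/2$, and (b) the leader's optimum over that region is attained at a specific known point $\zz_j$ with $\va_j\cdot\zz_j=\bar d_j$. The diagonal search and the subsequent one-coordinate adjustment are then rerun with this matrix, and the sandwich property together with \Cref{lmm:G-d-eps} transfers the SSE-response guarantees. This is more constructive than a perturbation argument and is where most of the work lives; you should expect to need analogues of \Cref{lmm:2nd-ref-payoff-f-supset}--\Cref{lmm:2nd-ref-payoff-1-2-SSE-response} rather than a soft continuity claim.
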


\subsubsection{Querying by Using BR-correspondences}
\label{sec:2nd-ref-BR-query}

Define the following partition $P = (P_0, P_1, P_2)$ of $\Delta_m$, which is parameterized by two numbers $d_1$ and $d_2$: 
\begin{align}
&P_0(d_1, d_2) \coloneqq \{\xx \in \Delta_m: \va_2 \cdot \xx \ge d_2 \text{ and } \va_1 \cdot \xx \ge d_1\} \\
&P_1(d_1, d_2) \coloneqq \{\xx \in \Delta_m: \va_2 \cdot \xx \ge d_2 \text{ and } \va_1 \cdot \xx \le d_1\} \\
&P_2(d_1, d_2) \coloneqq \{\xx \in \Delta_m: \va_2 \cdot \xx \le d_2 \}
\end{align}
For ease of description, we will sometimes omit the dependencies on $d_1$ and $d_2$ and just write $P_0$, $P_1$, and $P_2$ when the parameters are clear from the context.
Based on the partition, we define the following BR-correspondence $f_{d_1,d_2}$, which generalizes the BR-correspondence $f_d$ defined in the previous section (see \eqref{eq:tilde-uF-first-ref-pair}).
\begin{boxedtext}
Let $f_{d_1,d_2}: \Delta_m \to 2^{[n]}$ be a BR-correspondence such that for every $\xx \in \Delta_m$:
\begin{equation}
\label{eq:brc-f-d1-d2-h}
    f_{d_1, d_2} (\xx) \supseteq 
    \begin{cases}
    \{1\} & \text{if and only if} \quad \xx \in P_1 \\
    \{2\} & \text{if and only if} \quad \xx \in P_2 \\
    h(\xx) & \text{if and only if} \quad \xx \in P_0
    \end{cases}
\end{equation}
where $h: \Delta_m \to 2^{[n] \setminus \ots}$ is the BR-correspondence of an arbitrary (proper) cover of $\{1,2\}$.
\end{boxedtext}
Hence, according to this construction, we have
\begin{align*}
&f_{d_1,d_2}^{-1}(j) = P_j \quad \text{for all } j \in \ots, \\
\text{and } \quad
&f_{d_1,d_2}^{-1}(k) \subseteq P_0 \quad \text{for all } k \in [n] \setminus \ots.
\end{align*}
Moreover, compared with $f_d$, we have $P_2(d_1, d_2) = f_{d_2}^{-1}(2)$, and $P_1(d_1, d_2) \cup P_0(d_1, d_2) = f_{d_2}^{-1}(1)$.
\Cref{fig:brc-fd1d2k} illustrates the structure of $f_{d_1, d_2}$.

Our goal is to find two numbers $d_1$ and $d_2$
such that both actions $1$ and $2$ are SSE responses of $\mathcal{G}_{d_1,d_2} \coloneqq (u^L, f_{d_1,d_2})$. 
Before we present our algorithm for computing $d_1$ and $d_2$, we define several other useful notions and make a few observations about $f_{d_1, d_2}$.

We define
\begin{equation}
\label{eq:V-d1-d2}
V_{d_1,d_2}^L(j) \coloneqq \max_{\xx \in f_{d_1,d_2}^{-1}(j)} u^L(\xx, j).
\end{equation}
Moreover, let 
\begin{equation}
\label{eq:dj-star}
d_j^* \coloneqq (M_\ots - \beta_j) / \gamma_j,
\end{equation}
for $j \in \ots$ (i.e., $d_2^* = d^*$ in \eqref{eq:d-star}).
It can be verified that when $d_1 = d_1^*$ and $d_2 = d_2^*$, $P$ coincides with $f_{d^*}$.
Moreover, we have: 
\begin{itemize}
\item 
$P_2(d_1^*, d_2^*) = f_{d^*}^{-1}(2) \neq \emptyset$, $P_1(d_1^*, d_2^*) = f_{d^*}^{-1}(1) \neq \emptyset$, and $P_0(d_1^*, d_2^*) = \calM_{\ots}$; and

\item 
for all $\odot \in \{>, <, =\}$ and $j \in \ots$,
$u^L(\xx, j) \odot M_\ots$ if and only if $\va_j \cdot \xx \odot d_j^*$.
\end{itemize}
These two values $d_1^*$ and $d_2^*$ will be useful for our algorithm.
Indeed, we search for the two numbers $d_1$ and $d_2$ we aim to find in the domains $(-\infty, d_1^*)$ and $(-\infty, d_2^*)$, respectively.
We require that $d_1 < d_1^*$ and $d_2 < d_2^*$ in order to avoid the trivial solution $d_1 = d_1^*$ and $d_2 = d_2^*$, which would yield the same pair as the first reference pair we obtained. 

\paragraph{Existence and Computation of $h$}

To obtain $f_{d_1,d_2}$ requires computing a cover $h$ of $\ots$.
We directly invoke \Cref{thm:find-cover-given-aj} presented earlier to accomplish this task.
Now that $\va_j$ is known for all $j$, to apply this theorem, we only need to supply with a base function for $\ots$, as well as $\calM_\ots$.
Indeed, the function $\tilde{u}_{d^*}^F$ we obtained in the previous section and the region $P_0(d_1^*, d_2^*)$ defined above fulfills these demands.
In the special case where $\ots$ does not admit a cover, \Cref{lmm:ots-no-cover} in the appendix demonstrates that a pair of satisfying $J$ and $\xx^*$ can be obtained without a second reference pair.

\paragraph{Pinning Down $d_1$ and $d_2$}

To find $d_1$ and $d_2$, we first restrict our search in a one-dimensional space: we aim to find a value $\epsilon > 0$ 
such that at least one of actions $1$ and $2$ are SSE responses of $\mathcal{G}_{d_1^* - \epsilon, d_2^* - \epsilon}$.
We present the following lemma.

\begin{restatable}{lemma}{lmmGdeps}
\label{lmm:G-d-eps}
There exists $\hat{\epsilon} > 0$, such that for all $d_1 \in [d_1^* - \hat{\epsilon}, d_1^*]$ and $d_2 \in [d_2^* - \hat{\epsilon}, d_2^*]$: 
\begin{itemize}
\item[(i)] 
at least one of $j \in \ots$ is an SSE response of $\mathcal{G}_{d_1, d_2}$; and

\item[(ii)] 
$V_{d_1,d_2}^L(j) = \gamma_j \cdot d_j + \beta_j$ for $j\in \ots$.
\end{itemize}
Moreover, assuming $\ASSE$ can handle BR-correspondence queries, $\hat{\epsilon}$ is computable in polynomial time.
\end{restatable}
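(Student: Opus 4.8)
The plan is to make the two quantities $V_{d_1,d_2}^L(1)$ and $V_{d_1,d_2}^L(2)$ fully explicit, bound the payoffs attainable through the remaining responses, and then turn the resulting estimates into an efficiently computable $\hat\epsilon$. For part~(ii): since $f_{d_1,d_2}^{-1}(2)=P_2(d_1,d_2)=\{\xx:\va_2\cdot\xx\le d_2\}$ and $u^L(\xx,2)=\gamma_2\,\va_2\cdot\xx+\beta_2$ with $\gamma_2>0$, we have $V_{d_1,d_2}^L(2)=\gamma_2\bigl(\max_{\xx\in P_2}\va_2\cdot\xx\bigr)+\beta_2=\gamma_2 d_2+\beta_2$ whenever the hyperplane $\va_2\cdot\xx=d_2$ meets $\Delta_m$; the inequality $d_2^{*}>\min_{\xx\in\Delta_m}\va_2\cdot\xx$ (true outside a degenerate case, and consistent with $P_2(d_1^{*},d_2^{*})=f_{d^{*}}^{-1}(2)\neq\emptyset$) makes this hold for every $d_2$ in a left-neighborhood of $d_2^{*}$ of radius $d_2^{*}-\min_{\xx}\va_2\cdot\xx$, an LP-computable quantity. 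The case $j=1$ is analogous on $P_1(d_1,d_2)=\{\xx:\va_2\cdot\xx\ge d_2,\ \va_1\cdot\xx\le d_1\}$: here $V_{d_1,d_2}^L(1)\le\gamma_1 d_1+\beta_1$ always, with equality iff $\{\xx:\va_1\cdot\xx=d_1\}\cap\{\xx:\va_2\cdot\xx\ge d_2\}\cap\Delta_m\neq\emptyset$; the base point $(d_1,d_2)=(d_1^{*},d_2^{*})$ works because $P_1(d_1^{*},d_2^{*})=f_{d^{*}}^{-1}(1)\neq\emptyset$ and the first reference pair of \Cref{thm:compute-d-star} already lies on $\va_1\cdot\xx=d_1^{*}$ inside this set, and a short LP estimate on how fast $\min_{\xx:\va_2\cdot\xx\ge d_2}\va_1\cdot\xx$ can move as $d_2$ drops gives a computable radius below which equality persists. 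Intersecting these radii yields an explicitly computable $\epsilon_0>0$ for which (ii) holds throughout $[d_1^{*}-\epsilon_0,d_1^{*}]\times[d_2^{*}-\epsilon_0,d_2^{*}]$.

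For part~(i), recall that $j$ is an SSE response of $\mathcal{G}_{d_1,d_2}$ iff $V_{d_1,d_2}^L(j)$ equals the SSE payoff $\max_k V_{d_1,d_2}^L(k)$, so it suffices to prove $\max\bigl(V_{d_1,d_2}^L(1),V_{d_1,d_2}^L(2)\bigr)\ge V_{d_1,d_2}^L(k)$ for every $k\in[n]\setminus\ots$. Each such $k$ has $f_{d_1,d_2}^{-1}(k)\subseteq P_0(d_1,d_2)$, on which responses follow the cover $h$ of $\ots$; since $P_0(d_1^{*},d_2^{*})=\calM_\ots$ and $h$ is a cover, the leader's attainable payoff $\max_{\xx\in\calM_\ots}\max_{\ell\in h(\xx)}u^L(\xx,\ell)$ falls short of $M_\ots$ by some $\delta>0$. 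As $\xx\mapsto\max_{\ell\in h(\xx)}u^L(\xx,\ell)$ is upper semicontinuous and $P_0(d_1^{*}-\epsilon,d_2^{*}-\epsilon)$ decreases monotonically to $\calM_\ots$ as $\epsilon\downarrow0$, a compactness argument gives $V_{d_1,d_2}^L(k)<M_\ots-\delta/2$ for all $(d_1,d_2)$ in a small enough box. Meanwhile (ii) gives $V_{d_1,d_2}^L(j)=\gamma_j d_j+\beta_j\ge M_\ots-\gamma_j\epsilon$ for $j\in\ots$, so once $\epsilon<\delta/(2\max(\gamma_1,\gamma_2))$ we get $V_{d_1,d_2}^L(j)>M_\ots-\delta/2>V_{d_1,d_2}^L(k)$, proving~(i).

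For computability we avoid referring to $\delta$ or the $\gamma_j$'s directly. The key is a monotonicity observation: as $(d_1,d_2)$ decreases, $P_0$ enlarges so every $V_{d_1,d_2}^L(k)$ with $k\notin\ots$ is nondecreasing, while by (ii) each $V_{d_1,d_2}^L(j)$ with $j\in\ots$ is nonincreasing; hence the inequality in (i) holds over the whole box $[d_1^{*}-\epsilon,d_1^{*}]\times[d_2^{*}-\epsilon,d_2^{*}]$ iff it holds at the corner $(d_1^{*}-\epsilon,d_2^{*}-\epsilon)$, and shrinking $\epsilon$ only makes it easier. Since $d_1^{*}=\va_1\cdot\xx$ and $d_2^{*}=\va_2\cdot\yy$ for the first reference pair $(\xx,\yy)$ of \Cref{thm:compute-d-star} are known rationals, for any candidate rational $\epsilon\le\epsilon_0$ we can build $\mathcal{G}_{d_1^{*}-\epsilon,d_2^{*}-\epsilon}$, compute $\argmax_{\xx'\in P_j(d_1^{*}-\epsilon,d_2^{*}-\epsilon)}\va_j\cdot\xx'$ for $j\in\ots$ by an LP, and invoke $\AER$ (one BR-correspondence query to $\ASSE$) to test whether $1$ or $2$ is an SSE response. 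By the analysis of (i) this test succeeds for all small enough $\epsilon$, and $\delta/(2\max(\gamma_1,\gamma_2))$---a positive rational of polynomial bit-size because the cover matrix underlying $h$, the matrix $u^L$, and $\va_1,\va_2$ all have polynomial bit-size---is an explicit polynomial lower bound on a valid $\epsilon$, so a binary search on $(0,\epsilon_0]$ returns a suitable $\hat\epsilon$ with polynomially many queries and in polynomial time.

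The main obstacle I anticipate is the tension in part~(i) between the cover region $P_0$ growing as $\epsilon$ increases and the requirement that the leader's cover-attainable payoff stay below $M_\ots$: this is precisely what forces the upper-semicontinuity/compactness step and the quantitative choice of $\hat\epsilon$, and it also means one must rule out, or absorb into the threshold, the degenerate configurations---$\va_1\cdot\xx$ or $\va_2\cdot\xx$ being constant on the relevant face of $\Delta_m$, $d_j^{*}$ coinciding with an extreme value of $\va_j\cdot\xx$ on $\Delta_m$, or $\calM_\ots$ collapsing to lower dimension---under which the LP witnesses used for (ii) or the shrinking of $P_0$ would behave atypically.
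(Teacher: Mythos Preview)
Your overall architecture matches the paper's: establish (ii) by showing $P_j(d_1,d_2)\neq\emptyset$ in a box and reading off $V^L_{d_1,d_2}(j)=\gamma_j d_j+\beta_j$; establish (i) by comparing this to the leader's best payoff via the cover on $P_0$; then binary-search on $\epsilon$, testing at the corner via $\AER$ and relying on the monotonicity you state. The existence portion is essentially sound, and your monotonicity observation is exactly what the paper uses.

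The genuine gap is in the polynomial-time termination of the binary search. You assert that $\delta/(2\max(\gamma_1,\gamma_2))$ is ``an explicit polynomial lower bound on a valid $\epsilon$,'' but this quantity only controls one side of the comparison: it ensures $V^L_{d_1,d_2}(j)>M_\ots-\delta/2$ for $j\in\ots$. It does \emph{not} ensure $V^L_{d_1,d_2}(k)<M_\ots-\delta/2$ for $k\notin\ots$; that inequality comes from your separate compactness/upper-semicontinuity step, which you leave entirely unquantified (``a small enough box''). The rate at which $\max_{\xx\in P_0}\max_{\ell\in h(\xx)}u^L(\xx,\ell)$ grows as $P_0$ expands beyond $\calM_\ots$ depends on the geometry of $h$ and $u^L$, not on $\gamma_1,\gamma_2$, so your stated bound does not certify that binary search halts in polynomially many halvings. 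The paper closes precisely this gap by encoding, for each $k\notin\ots$, the smallest $\epsilon$ at which $(\xx,k)$ can outperform both $j\in\ots$ as the optimal value of an explicit LP (constraints ``$\xx\in P_0$, $k\in h(\xx)$, $u^L(\xx,k)\ge\gamma_j(d_j^*-\epsilon)+\beta_j$''); the minimum over $k$ of these optima is then a rational of polynomial bit-size, and that is the threshold the binary search must undercut.

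A smaller point: your treatment of (ii) for $j=1$ (``a short LP estimate on how fast $\min_{\xx:\va_2\cdot\xx\ge d_2}\va_1\cdot\xx$ can move'') is a sketch, not an argument. The paper makes this precise by defining a single LP whose optimum $\epsilon'$ is the largest $\epsilon$ for which both $P_1(d_1^*-\epsilon,d_2^*)$ and $P_2(d_1^*,d_2^*-\epsilon)$ are nonempty, and then proving $\epsilon'>0$; the proof that $\epsilon'>0$ is not entirely routine and is where the nondegeneracy assumptions on $\va_1$ (your last paragraph's worry) are actually discharged.
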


The proof of \Cref{lmm:G-d-eps} is deferred to \Cref{sec:proof-lmm:G-d-eps}.
Using \Cref{lmm:G-d-eps}, we obtain $d_1 = d_1^* - \hat{\epsilon}$ and $d_2 = d_2^* - \hat{\epsilon}$, such that $\calG_{d_1, d_2}$ has at least one of actions $1$ and $2$ as an SSE response.
If it happens that both actions $1$ and $2$ are SSE responses, we are done with a second reference pair.
Otherwise, e.g., suppose that action $2$ is not an SSE response, it means that we need to expand $P_2$ to increase the leader's maximum attainable payoff for strategies in this region; we do this simply by increasing $d_2$, and we search for a value of $d_2$ that makes both actions $1$ and $2$ an SSE response.
This leads to \Cref{thm:second-ref-pair-BR-correspondence} as a weaker version of \Cref{thm:second-ref-pair-payoff-query}.
The detailed proofs of \Cref{thm:second-ref-pair-BR-correspondence,thm:second-ref-pair-payoff-query} can be found in the appendix.

\begin{restatable}{theorem}{thmsecondrefpairBRcorrespondence}
\label{thm:second-ref-pair-BR-correspondence}

Assume that $\ASSE$ can handle BR-correspondence queries.
A reference pair $\xx, \yy \in \Delta_m$ such that $u^L(\xx, 1) = u^L(\yy, 2) < M_\ots$ can be computed in polynomial time. 
\end{restatable}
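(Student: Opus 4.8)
The plan is to turn the geometric picture of \Cref{fig:brc-fd1d2k} into an explicit one-parameter search over the boundary offsets $d_1, d_2$, with \Cref{lmm:G-d-eps} supplying exactly the structural control that makes the search well behaved. First I would build the BR-correspondence $f_{d_1,d_2}$ of \eqref{eq:brc-f-d1-d2-h}. Its only nontrivial ingredient is a (proper) cover $h$ of $\ots$, which I obtain by invoking \Cref{thm:find-cover-given-aj}: since all $\va_j$ are known, the only inputs still needed are a base function for $\ots$ and the polytope $\calM_\ots$, and both were already produced in \Cref{sec:first-reference-pair} (namely $\tilde{u}_{d^*}^F$ and $P_0(d_1^*, d_2^*) = \calM_\ots$). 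The degenerate case where $\ots$ admits no cover is handled separately by \Cref{lmm:ots-no-cover}. With $f_{d_1,d_2}$ fixed and queryable through $\ASSE$ under the assumed BR-correspondence query access, and because each region $f_{d_1,d_2}^{-1}(j) = P_j$ is an explicit polytope with $\va_1, \va_2$ known, the equilibrium-response oracle $\AER$ of \Cref{def:ER-oracle} is available to test whether a given action is an SSE response of $\calG_{d_1,d_2}$.

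Next I would call \Cref{lmm:G-d-eps} to obtain $\hat{\epsilon} > 0$ in polynomial time and set $d_1 = d_1^* - \hat{\epsilon}$, $d_2 = d_2^* - \hat{\epsilon}$. Part~(i) of the lemma guarantees that at least one of actions $1, 2$ is an SSE response of $\calG_{d_1, d_2}$. If both are, I extract $\xx \in \argmax_{\xx' \in P_1} \va_1 \cdot \xx'$ and $\yy \in \argmax_{\yy' \in P_2} \va_2 \cdot \yy'$ by two linear programs and stop. Otherwise, by symmetry assume action $2$ is the one that fails; I then keep $d_1 = d_1^* - \hat{\epsilon}$ frozen and search for the right $d_2$ in $[d_2^* - \hat{\epsilon}, d_2^*]$ (in the symmetric subcase I would instead freeze $d_2 = d_2^* - \hat{\epsilon}$ and grow $d_1$, which is why the case analysis dictates the search direction).

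The heart of the argument is a monotone crossing driven by part~(ii) of \Cref{lmm:G-d-eps}, which identifies the leader's best attainable payoffs on the two regions with the explicit affine forms $V_{d_1,d_2}^L(1) = \gamma_1 d_1 + \beta_1$ (independent of $d_2$) and $V_{d_1,d_2}^L(2) = \gamma_2 d_2 + \beta_2$ (strictly increasing in $d_2$ as $\gamma_2 > 0$). With $d_1$ frozen, $V_{d_1,d_2}^L(1)$ stays at the constant value $M_\ots - \gamma_1 \hat{\epsilon} < M_\ots$, while as $d_2$ climbs to $d_2^*$ we reach $V_{d_1,d_2}^L(2) = \gamma_2 d_2^* + \beta_2 = M_\ots$; hence the order of the two payoffs flips over the interval. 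Since part~(i) forces at least one of actions $1, 2$ to be a global maximizer throughout the search, the cover responses in $P_0$ never exceed $V_{d_1,d_2}^L(1)$ while action $2$ lags behind, and as $V_{d_1,d_2}^L(2)$ rises monotonically to meet the constant $V_{d_1,d_2}^L(1)$, the unique $d_2$ with $V_{d_1,d_2}^L(1) = V_{d_1,d_2}^L(2)$ is precisely where both actions become SSE responses at once. I would locate this rational crossing point by binary search on $d_2$, at each step using $\AER$ to decide whether action $2$ is an SSE response, with the Stern--Brocot procedure terminating in polynomially many queries. At the located $(d_1, d_2)$ the extracted pair $\xx, \yy$ satisfies $u^L(\xx,1) = V_{d_1,d_2}^L(1) = V_{d_1,d_2}^L(2) = u^L(\yy,2)$, and $d_1 = d_1^* - \hat{\epsilon} < d_1^*$ gives $u^L(\xx,1) = \gamma_1 d_1 + \beta_1 < \gamma_1 d_1^* + \beta_1 = M_\ots$, so the pair meets the requirement \eqref{eq:second-ref-pair}.

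The step I expect to be the main obstacle is guaranteeing that the crossing of $V_{d_1,d_2}^L(1)$ and $V_{d_1,d_2}^L(2)$ is never pre-empted by a cover response in $P_0$ surpassing both, i.e. that the global maximum of $\calG_{d_1,d_2}$ stays pinned to action $1$ or $2$ for every $(d_1, d_2)$ encountered in the search. This is subtle because lowering $d_1, d_2$ below $d_1^*, d_2^*$ enlarges $P_0$, so one must choose $\hat{\epsilon}$ small enough that the enlarged $P_0$ still sits inside the neighborhood of $\calM_\ots$ on which the cover dominates; this is exactly the quantitative content bundled into \Cref{lmm:G-d-eps} together with its polynomial computability of $\hat{\epsilon}$. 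Granting that lemma, the residual work is just two linear programs and an exact binary search, both clearly polynomial.
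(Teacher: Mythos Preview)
Your proposal is correct and follows essentially the same route as the paper's proof: invoke \Cref{lmm:G-d-eps} to obtain $\hat{\epsilon}$, set $d_j = d_j^* - \hat{\epsilon}$, and if only one of actions $1,2$ is an SSE response of $\calG_{d_1,d_2}$, freeze that coordinate and binary-search the other using the affine identity $V_{d_1,d_2}^L(j) = \gamma_j d_j + \beta_j$ from part~(ii) together with part~(i) to guarantee no cover response can overtake both. The paper's write-up is slightly terser but the logic, the case split, and the use of $\AER$ for the monotone search are identical.
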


\section{Conclusion}
\label{sec:conclusion}

Whereas the polynomial-time tractability of the follower's optimal manipulation in \citet{birmpas2021optimally} assumes full information on the leader's payoffs, 
we showed that such strict information advantage is not quite necessary: in the learning setting, polynomial number of information exchanges with the leader suffices for the follower to find out the optimal payoff function to report. 

Our result reveals an interesting difference in the effects of information asymmetry on the two players in a Stackelberg game and points out potential risks of applying strategy commitment under a lack of information: while a leader needs an exponential number of queries to learn a follower's best response correspondence in the worst case~\citep{peng2019learning}, it only takes the follower a polynomial number of queries to collect enough payoff information of the leader and achieve optimal deception. Thus, special care is needed, when one wants to apply strategy commitment to gain advantage in games but without sufficient information. 

We note that despite of the SSE oracle defined in our paper, there can be other definitions of oracles, possibly weaker than ours. For example, it may be worth studying  oracles that only take a fake payoff function as input and output an SSE according to some underlying tie-breaking rules. This might be more realistic as it is the leader who chooses which equilibrium strategy to play. However, our results do not directly extend to this setting. One insight into this is that, under the weaker oracle, we may lose the ability to check whether the leader utilities under two strategy profiles are equal. Hence, we may not be able to learn the precise quantities necessary for our algorithms. We leave it as an interesting open problem whether there can be an optimal, or a near-optimal algorithm for the follower to learn to deceive when they have access to such weaker oracles.

\bibliographystyle{plainnat}

\clearpage

\appendix

\section{Proof of \NoCaseChange{\Cref{thm:equ-game}}}
\label{sec:proof-of-thm:equ-game}

\thmequgame*

\begin{proof}
Given \Cref{lmm:opt-to-inducibility}, LP~\eqref{lp:maximin-inducibility} reduces to  deciding the inducibility of any given strategy profile $(\yy, k)$.
We demonstrate that this inducibility problem can be solved in polynomial time.
Specifically, we will argue that the following algorithm produces an {\em inducibility witness} $\tilde{u}^F$ for $(\yy, k)$ in polynomial time:
that is, $(\yy, k)$ is inducible (with respect to $u^L$) if and only if it is an SSE in $(u^L, \tilde{u}^F)$.
Hence, by querying $\ASSE$ to check whether $(\yy, k)$ is an SSE of $(u^L, \tilde{u}^F)$, we can decide the inducibility of  $(\yy, k)$. The stated result then follows.

\begin{boxedtext}
Construct an inducibility witness of $(\yy, k)$:
\begin{itemize}[leftmargin=*]
\item[1.] 
For every $j$ such that $M_{\{j\}} = M_{[n]}$, since $\va_j$ is not given, let $\va_j$ be a vector in $\mathbb{R}^{m}$ such that: $a_{ji} = 1$ if $i \in \calM_{\{j\}}$ and $a_{ji} = 0$ otherwise.

\item[2.] 
Construct a payoff matrix $\tilde{u}^L \in \mathbb{R}^{m \times n}$ corresponding to the following payoff function:
\begin{equation}
\tilde{u}^L(\xx, j) = 
\begin{cases}
\va_j \cdot \xx - \va_j \cdot \xx^*, & \text{ if } j \in J \setminus\{k\}; \\
\va_k \cdot \xx - \va_k \cdot \yy, & \text{ if } j = k; \\
1, & \text{ otherwise. }
\end{cases}
\tag{\ref{eq:equ-uL}}
\end{equation}

\item[3.] 

Decide if $(\yy, k)$ is inducible with respect to $\tilde{u}^L$:
\begin{itemize}
\item 
If it is inducible, compute a payoff matrix $\tilde{u}^F \in \mathbb{R}^{m \times n}$ such that $(\yy, k)$ is an SSE in $(\tilde{u}^L, \tilde{u}^F)$;

\item
Otherwise, pick an arbitrary $\tilde{u}^F \in \mathbb{R}^{m \times n}$.
\end{itemize}

\item[4.]
Output $\tilde{u}^F$.
\end{itemize}
\end{boxedtext}

Namely, the above algorithm constructs a ``surrogate'' matrix $\tilde{u}^L$ using the given information $\va_j$, $J$, and $\xx^*$.
Then a witness $\tilde{u}^F$ is produced using $\tilde{u}^L$. 
The polynomial run-time of the algorithm is readily seen. In particular, Step~3 can be done in polynomial time according to \Cref{thm:maximin-characterization}. (Note that all parameters of $\tilde{u}^L$ is known, so this is a full-information setting.) 
We next prove that $(\yy, k)$ is inducible if and only if it is an SSE in $(u^L, \tilde{u}^F)$, where $\tilde{u}^F$ is the matrix produced by the above algorithm.

Indeed, the ``if'' direction is trivial, so we prove that if $(\yy, k)$ is inducible then it must be an SSE in $(u^L, \tilde{u}^F)$.
Suppose that $(\yy, k)$ is inducible in the sequel.
According to \Cref{thm:maximin-characterization}, it must be that $u^L(\yy, k) \ge M_{[n]}$
in this case.

In what follows, we first argue that the following statements hold for any $\xx \in \Delta_m$ (\Cref{clm:chain-tilde-uL}).
\begin{align}
\forall j \in J \setminus \{k\}: \quad & \tilde{u}^L(\xx, j) > 0 
\quad\Longleftrightarrow\quad \va_j \cdot \xx > \va_j \cdot \xx^*
\quad\Longleftrightarrow\quad u^L(\xx, j) > u^L(\xx^*, j); \label{eq:chain-jnotinJmk}\\
\text{ and } \quad & \tilde{u}^L(\xx, k) > 0
\quad\Longleftrightarrow\quad \va_k \cdot \xx > \va_k \cdot \yy
~\quad\Longleftrightarrow\quad u^L(\xx, k) > u^L(\yy, k). \label{eq:chain-jeqk}
\end{align}
With this result, we can then prove \Cref{clm:thm:equ-game:claim-1} and \Cref{clm:thm:equ-game:claim-2} to complete the proof: \Cref{clm:thm:equ-game:claim-1} implies that Step~3 of the above algorithm outputs a $\tilde{u}^F$ that makes $(\yy, k)$ an SSE of $(\tilde{u}^L, \tilde{u}^F)$, instead of an arbitrary matrix.
Given this, \Cref{clm:thm:equ-game:claim-2} further confirms that $(\yy, k)$ is also an SSE of $(u^L, \tilde{u}^F)$.

\begin{claim}
\label{clm:chain-tilde-uL}
\eqref{eq:chain-jnotinJmk} and \eqref{eq:chain-jeqk} hold for any $\xx \in \Delta_m$.
\end{claim}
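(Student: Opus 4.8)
The plan is to establish \eqref{eq:chain-jnotinJmk} and \eqref{eq:chain-jeqk} by unpacking the definition of $\tilde{u}^L$ in \eqref{eq:equ-uL} and then invoking the strategic-equivalence relation \eqref{eq:aj} for $u^L$. First I would handle \eqref{eq:chain-jnotinJmk}: fix $j \in J \setminus \{k\}$ and an arbitrary $\xx \in \Delta_m$. By \eqref{eq:equ-uL} we have $\tilde{u}^L(\xx, j) = \va_j \cdot \xx - \va_j \cdot \xx^*$, so the first equivalence $\tilde{u}^L(\xx, j) > 0 \Longleftrightarrow \va_j \cdot \xx > \va_j \cdot \xx^*$ is immediate from rearranging. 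For the second equivalence, I would split into two cases according to whether $M_{\{j\}} = M_{[n]}$. If $M_{\{j\}} \neq M_{[n]}$, then $\va_j$ is the given vector satisfying \eqref{eq:aj}, so $u^L(\xx, j) = \gamma_j \cdot \va_j \cdot \xx + \beta_j$ with $\gamma_j > 0$, whence $u^L(\xx, j) > u^L(\xx^*, j) \Longleftrightarrow \gamma_j(\va_j \cdot \xx - \va_j \cdot \xx^*) > 0 \Longleftrightarrow \va_j \cdot \xx > \va_j \cdot \xx^*$ since $\gamma_j > 0$. If instead $M_{\{j\}} = M_{[n]}$, then $\va_j$ was set in Step~1 to be the $0/1$ indicator of $\calM_{\{j\}} = I_j$ (the leader's pure best responses against $j$, using \Cref{asn:M-j}); here I would use that $u^L(\cdot, j)$ restricted to $\Delta_m$ attains its maximum exactly on $\conv(I_j)$, and more to the point, that in this degenerate situation $\xx^* \in \calM_{\{j\}}$ because \eqref{eq:J-M} gives $u^L(\xx^*, j) = M_{[n]} = M_{\{j\}}$ for $j \in J$. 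Hence $\va_j \cdot \xx^* = 1 = \max_{\yy \in \Delta_m} \va_j \cdot \yy$, so $\va_j \cdot \xx > \va_j \cdot \xx^*$ is never true, and likewise $u^L(\xx, j) > u^L(\xx^*, j) = M_{\{j\}}$ is never true; both sides of the equivalence are uniformly false, so it holds vacuously.

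Next I would prove \eqref{eq:chain-jeqk} in the same fashion for $j = k$. By \eqref{eq:equ-uL}, $\tilde{u}^L(\xx, k) = \va_k \cdot \xx - \va_k \cdot \yy$, giving the first equivalence directly. For the second, again split on whether $M_{\{k\}} = M_{[n]}$. If $M_{\{k\}} \neq M_{[n]}$, \eqref{eq:aj} applies verbatim with $\gamma_k > 0$ and the argument is identical to the $j \neq k$ case, with $\yy$ in place of $\xx^*$. If $M_{\{k\}} = M_{[n]}$, then $\va_k$ is the $0/1$ indicator of $\calM_{\{k\}}$ from Step~1; here I need to know that $\yy \in \calM_{\{k\}}$, i.e. $u^L(\yy, k) = M_{\{k\}}$. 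This follows because we are in the case $u^L(\yy, k) \ge M_{[n]}$ (as $(\yy,k)$ is assumed inducible and \Cref{thm:maximin-characterization} applies) combined with $u^L(\yy, k) \le M_{\{k\}} = M_{[n]}$, the latter inequality being definitional since $M_{\{k\}} = \max_{\xx} u^L(\xx, k)$. So $u^L(\yy,k) = M_{\{k\}}$, hence $\yy$ maximizes $u^L(\cdot, k)$, hence $\va_k \cdot \yy = 1 = \max_{\zz} \va_k \cdot \zz$; as before both sides of the equivalence are uniformly false and it holds vacuously.

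The main obstacle I anticipate is the degenerate case $M_{\{j\}} = M_{[n]}$, where $\va_j$ carries only the direction information of the \emph{face} on which $u^L(\cdot,j)$ is maximized rather than the full gradient, and one must argue that the relevant reference point ($\xx^*$ or $\yy$) actually lies in that maximizing face. For $\xx^*$ this is exactly the hypothesis \eqref{eq:J-M} specialized to $j \in J$; for $\yy$ it is forced by inducibility of $(\yy,k)$ together with the trivial upper bound $u^L(\yy,k) \le M_{\{k\}}$. Once these membership facts are in hand, the ``strict inequality on $\tilde u^L$'' side collapses to ``strict inequality in the direction $\va$'' purely by the cases of \eqref{eq:equ-uL}, and the ``strict inequality on $u^L$'' side collapses to the same thing either by \eqref{eq:aj} (non-degenerate case, using $\gamma_j > 0$) or vacuously (degenerate case, both sides false). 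Assembling these two cases for each of $j \in J\setminus\{k\}$ and $j = k$ completes the proof of the claim.
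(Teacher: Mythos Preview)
Your proposal is correct and matches the paper's own proof essentially step for step: the first equivalence in each chain is read off from \eqref{eq:equ-uL}, the second is handled by a case split on whether $M_{\{j\}}=M_{[n]}$ (resp.\ $M_{\{k\}}=M_{[n]}$), using \eqref{eq:aj} with $\gamma_j>0$ in the non-degenerate case and showing both sides are vacuously false in the degenerate case via $\xx^*\in\calM_{\{j\}}$ (from \eqref{eq:J-M}) and $\yy\in\calM_{\{k\}}$ (from inducibility of $(\yy,k)$ plus $u^L(\yy,k)\le M_{\{k\}}$). The only cosmetic difference is that you compute $\va_j\cdot\xx^*=1$ explicitly, whereas the paper phrases the same fact as $\xx^*\in\argmax_{\xx'}\va_j\cdot\xx'$.
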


\begin{proof}
Note that $\tilde{u}^L(\xx, j) > 0 \Longleftrightarrow \va_j \cdot \xx > \va_j \cdot \xx^*$ and $\tilde{u}^L(\xx, k) > 0
\Longleftrightarrow \va_k \cdot \xx > \va_k \cdot \yy$ follow directly by \eqref{eq:equ-uL}, so it remains to prove the second part of each statement. Moreover, if $M_{\{j\}} \neq M_{[n]}$ and $M_{\{k\}} \neq M_{[n]}$, $\va_j$ and $\va_k$ satisfy \eqref{eq:aj}, so it is easy to see that the second part of each statement also holds by expanding $u^L$ according to \eqref{eq:aj}.

Now consider the case where $M_{\{j\}} = M_{[n]}$.
We have 
\[u^L(\xx^*, j) = M_{[n]} = M_{\{j\}} \ge u^L(\xx, j),\]
where $u^L(\xx^*, j) = M_{[n]}$ follows by the definition of $\xx^*$ in \eqref{eq:J-M} (and the assumption that $j \in J$), and $M_{\{j\}} \ge u^L(\xx, j)$ follows by the definition that $M_{\{j\}} = \max_{\xx' \in \Delta_m} u^L(\xx', j)$.
At the same time, $u^L(\xx^*, j) = M_{\{j\}}$ also implies that $\xx^* \in \calM_{\{j\}}$.
Hence, according to the way $\va_j$ is defined in Step~1 of the above algorithm, we have $\xx^* \in \argmax_{\xx' \in \Delta_m} \va_j \cdot \xx' \ge \va_j \cdot \xx$.
Consequently, both $\va_j \cdot \xx > \va_j \cdot \xx^*$ and $u^L(\xx, j) > u^L(\xx^*, j)$ are always false in this case, so \eqref{eq:chain-jnotinJmk} holds.

Finally, consider the case where $M_{\{k\}} = M_{[n]}$.
In this case, we can establish
$M_{\{k\}} \ge u^L(\yy, k) \ge M_{[n]}$. 
Consequently, $u^L(\yy, k) \ge M_{[n]}$, so we have $\yy \in \calM_{\{k\}}$. Similarly to the above case where $M_{\{j\}} = M_{[n]}$, we can show that both $\va_k \cdot \xx > \va_k \cdot \yy$ and $u^L(\xx, k) > u^L(\yy, k)$ must always be false in this case (i.e., by putting $\yy$ in place of $\xx^*$ and $k$ in place of $j$). 
Hence, \eqref{eq:chain-jeqk} holds.
\end{proof}

\begin{claim}
\label{clm:thm:equ-game:claim-1}
$(\yy, k)$ is inducible with respect to $\tilde{u}^L$ (assuming that $(\yy, k)$ is inducible with respect to $u^L$).
\end{claim}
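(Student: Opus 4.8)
The plan is to reduce the claim to a single inequality via \Cref{thm:maximin-characterization}, applied now to the surrogate game whose leader payoff is $\tilde{u}^L$ (as defined in \eqref{eq:equ-uL}). That theorem says $(\yy,k)$ is inducible with respect to $\tilde{u}^L$ if and only if $\tilde{u}^L(\yy,k)\ge \max_{\xx\in\Delta_m}\min_{j\in[n]}\tilde{u}^L(\xx,j)$. So the whole proof boils down to: (i) evaluate $\tilde{u}^L(\yy,k)$, and (ii) upper-bound the surrogate maximin value. For (i), reading off \eqref{eq:equ-uL} gives immediately $\tilde{u}^L(\yy,k)=\va_k\cdot\yy-\va_k\cdot\yy=0$. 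Hence it suffices to establish (ii) in the form $\max_{\xx\in\Delta_m}\min_{j\in[n]}\tilde{u}^L(\xx,j)\le 0$, i.e.\ that every leader strategy $\xx$ has \emph{some} follower response $j$ with $\tilde{u}^L(\xx,j)\le 0$.

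To exhibit such a $j$ I would use the defining property \eqref{eq:J-M} of the pair $(J,\xx^*)$: since $M_J=M_{[n]}$, for any $\xx\in\Delta_m$ we have $\min_{j\in J}u^L(\xx,j)\le M_J=M_{[n]}$, so there is $j\in J$ with $u^L(\xx,j)\le M_{[n]}$. Fix this $j$ and split on whether $j=k$. If $j\in J\setminus\{k\}$, then \eqref{eq:J-M} gives $u^L(\xx^*,j)=M_{[n]}$, so $u^L(\xx,j)\le u^L(\xx^*,j)$, and the chain \eqref{eq:chain-jnotinJmk} of \Cref{clm:chain-tilde-uL} forces $\tilde{u}^L(\xx,j)\le 0$. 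If $j=k$, then $u^L(\xx,k)\le M_{[n]}$, and since $(\yy,k)$ is inducible with respect to $u^L$ we have $u^L(\yy,k)\ge M_{[n]}$ by \Cref{thm:maximin-characterization}; combining these gives $u^L(\xx,k)\le u^L(\yy,k)$, and the chain \eqref{eq:chain-jeqk} of \Cref{clm:chain-tilde-uL} forces $\tilde{u}^L(\xx,k)\le 0$. In either case $\min_{j'\in[n]}\tilde{u}^L(\xx,j')\le 0$, which establishes (ii); together with (i) this yields $\tilde{u}^L(\yy,k)=0\ge \max_{\xx}\min_{j}\tilde{u}^L(\xx,j)$, and \Cref{thm:maximin-characterization} concludes that $(\yy,k)$ is inducible with respect to $\tilde{u}^L$.

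I do not expect a real obstacle here. The one point requiring care is that \Cref{clm:chain-tilde-uL} is stated as a biconditional, while the argument only needs the contrapositive (``$u^L(\xx,j)\le u^L(\xx^*,j)\Rightarrow\tilde{u}^L(\xx,j)\le 0$'' and its $k$-analogue); these are exactly what is proved there, including the degenerate sub-cases $M_{\{j\}}=M_{[n]}$ and $M_{\{k\}}=M_{[n]}$, so no separate treatment is needed in this claim. It is also worth noting, and reassuring, that the argument never touches $\tilde{u}^L(\xx,j)$ for $j\notin J\cup\{k\}$ (where it equals $1$): the actions in $J$ alone certify that the surrogate maximin value is nonpositive, which is precisely what makes the degenerate response set ``otherwise'' in \eqref{eq:equ-uL} harmless.
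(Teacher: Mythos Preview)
Your proposal is correct and essentially mirrors the paper's argument: both compute $\tilde{u}^L(\yy,k)=0$, invoke \Cref{thm:maximin-characterization}, and use the chains \eqref{eq:chain-jnotinJmk}--\eqref{eq:chain-jeqk} together with $M_J=M_{[n]}$ and $u^L(\yy,k)\ge M_{[n]}$ to bound the surrogate maximin value by $0$. The paper phrases this as a proof by contradiction (assume $\widetilde{M}_{[n]}>0$, derive $\min_{j\in J}u^L(\xx,j)>M_{[n]}$), while you give the direct contrapositive; the logical content and the case split on $j\in J\setminus\{k\}$ versus $j=k$ are identical.
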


\begin{proof}
By construction, $\tilde{u}^L(\yy, k) = 0$. 
Hence, by \Cref{thm:maximin-characterization}, it suffices to show that the maximin value of $\tilde{u}^L$ is at most $0$, i.e., $\widetilde{M}_{[n]} \coloneqq \max_{\xx \in \Delta_m} \min_{j \in [n]} \tilde{u}^L(\xx, j) \le 0$.

Suppose for the sake of contradiction that $\widetilde{M}_{[n]} > 0$. By definition, this means that there exists $\xx \in \Delta_m$ such that $\tilde{u}^L(\xx, j) > 0$ for all $j \in [n]$.
Applying \eqref{eq:chain-jnotinJmk} and \eqref{eq:chain-jeqk} gives 
$u^L(\xx, j) > u^L(\xx^*, j) = M_{[n]}$ for all $j \in J \setminus \{k\}$ (where $u^L(\xx^*, j) = M_{[n]}$ according to \eqref{eq:J-M}), and $u^L(\xx, k) > u^L(\yy, k) \ge M_{[n]}$.
It then follows that $\min_{j \in J} u^L(\xx, j) > M_{[n]}$, which contradicts the condition in the definition of $J$, i.e., $M_J = M_{[n]}$.
\end{proof}

\begin{claim}
\label{clm:thm:equ-game:claim-2}
$(\yy, k)$ is an SSE of $(u^L, \tilde{u}^F)$ (assuming that it is an SSE of $(\tilde{u}^L, \tilde{u}^F)$).
\end{claim}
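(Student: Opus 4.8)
The plan is to exploit the fact that the follower's matrix $\tilde u^F$ is common to the two games $(\tilde u^L,\tilde u^F)$ and $(u^L,\tilde u^F)$, so both games share the \emph{same} best-response correspondence $\widetilde{\BR}$. Consequently the SSE optimization \eqref{eq:sse-inv} ranges over the same feasible set of pairs $\{(\xx',j'):j'\in\widetilde{\BR}(\xx')\}$ in both games; only the objective changes from $\tilde u^L$ to $u^L$. Since $(\yy,k)$ being an SSE of $(\tilde u^L,\tilde u^F)$ in particular gives $k\in\widetilde{\BR}(\yy)$, the pair $(\yy,k)$ is feasible in the optimization for $(u^L,\tilde u^F)$ as well, so it remains only to show that $(\yy,k)$ attains the maximum of $u^L$ over that feasible set, i.e.\ $u^L(\xx',j')\le u^L(\yy,k)$ for every $j'\in[n]$ and every $\xx'\in\widetilde{\BR}^{-1}(j')$.

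To establish this pointwise domination I would fix such a pair $(\xx',j')$. Because $(\yy,k)$ is an SSE of $(\tilde u^L,\tilde u^F)$ and $\tilde u^L(\yy,k)=0$ by construction, optimality yields $\tilde u^L(\xx',j')\le 0$. Now case-split on $j'$ according to \eqref{eq:equ-uL}. If $j'\notin J\cup\{k\}$ then $\tilde u^L(\xx',j')=1>0$, a contradiction, so this branch never occurs among best responses. If $j'\in J\setminus\{k\}$, then by \Cref{clm:chain-tilde-uL} (the contrapositive of \eqref{eq:chain-jnotinJmk}) the inequality $\tilde u^L(\xx',j')\le 0$ is equivalent to $u^L(\xx',j')\le u^L(\xx^*,j')$, and $u^L(\xx^*,j')=M_{[n]}$ by \eqref{eq:J-M} since $j'\in J$; hence $u^L(\xx',j')\le M_{[n]}$. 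If $j'=k$, then by the contrapositive of \eqref{eq:chain-jeqk}, $\tilde u^L(\xx',k)\le 0$ is equivalent to $u^L(\xx',k)\le u^L(\yy,k)$, which is precisely what we need.

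It then remains to bridge the $j'\in J\setminus\{k\}$ case: we have $u^L(\xx',j')\le M_{[n]}$ and want $u^L(\xx',j')\le u^L(\yy,k)$, so it suffices that $M_{[n]}\le u^L(\yy,k)$. This holds because the proof of \Cref{thm:equ-game} operates under the standing hypothesis that $(\yy,k)$ is inducible with respect to $u^L$, which via \Cref{thm:maximin-characterization} gives $u^L(\yy,k)\ge M_{[n]}$. Combining the cases gives $u^L(\xx',j')\le u^L(\yy,k)$ for every feasible pair, so $(\yy,k)\in\argmax_{j'\in[n],\,\xx'\in\widetilde{\BR}^{-1}(j')}u^L(\xx',j')$, i.e.\ $(\yy,k)$ is an SSE of $(u^L,\tilde u^F)$. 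The only genuinely delicate points are keeping track of which matrix governs feasibility (it is $\tilde u^F$, hence feasibility is untouched when we replace $\tilde u^L$ by $u^L$) and observing that the ``otherwise'' branch of \eqref{eq:equ-uL} is automatically excluded among best responses by the SSE property of $(\yy,k)$ in $(\tilde u^L,\tilde u^F)$; once these are noted, the case analysis is routine given \Cref{clm:chain-tilde-uL}.
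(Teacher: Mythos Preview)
Your proposal is correct and follows essentially the same approach as the paper: both arguments use that the shared matrix $\tilde u^F$ makes the feasibility set identical across the two games, then use $\tilde u^L(\yy,k)=0$ together with the SSE optimality in $(\tilde u^L,\tilde u^F)$ to force $\tilde u^L(\xx',j')\le 0$, rule out $j'\notin J\cup\{k\}$ via the constant-$1$ branch of \eqref{eq:equ-uL}, and handle the remaining two cases by \Cref{clm:chain-tilde-uL} together with the standing inducibility hypothesis $u^L(\yy,k)\ge M_{[n]}$. The only cosmetic difference is that you derive $k\in\widetilde{\BR}(\yy)$ up front from the SSE assumption, whereas the paper states it at the end.
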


\begin{proof}
Suppose that $(\yy, k)$ is an SSE in $(\tilde{u}^L, \tilde{u}^F)$.
Let $\widetilde{\BR}$ be the BR-correspondence of $\tilde{u}^F$.
Pick arbitrary $\xx \in \Delta_m$ and $j \in \widetilde{\BR}(\xx)$.
Since $(\yy, k)$ is an SSE in $(\tilde{u}^L, \tilde{u}^F)$, by definition, we have $\tilde{u}^L(\xx, j) \le \tilde{u}^L(\yy, k)$.
Substituting the right hand side with \eqref{eq:equ-uL}, we get that $\tilde{u}^L(\xx, j) \le \va_k \cdot \yy - \va_k \cdot \yy = 0$. According to \eqref{eq:equ-uL}, this means that $j \in J \cup \{k\}$.

If $j \in J \setminus \{k\}$, then applying \eqref{eq:chain-jnotinJmk}, we get that
\[
u^L(\xx, j) \le u^L(\xx^*, j) = M_{[n]} \le u^L(\yy, k).
\]
(Recall that $u^L(\yy, k) \ge M_{[n]}$ as $(\yy, k)$ is inducible).
If $j=k$, then applying \eqref{eq:chain-jeqk}, we get that
\[
u^L(\xx, j) = u^L(\xx, k) \le u^L(\yy, k).
\]
Hence, $u^L(\xx, j) \le u^L(\yy, k)$ holds in both cases. Since the choice of $\xx$ and $j$ is arbitrary, we have
\[
\max_{\xx \in \Delta_m} \max_{j \in \widetilde{\BR}(\xx)} u^L(\xx, j) \le u^L(\yy, k).
\]

It remains to show that $k \in \widetilde{\BR}(\yy)$. Indeed, according to \Cref{clm:thm:equ-game:claim-1} and Step~3 of the above algorithm, $(\yy, k)$ must be an SSE in $(\tilde{u}^L, \tilde{u}^F)$, which means that $k \in \widetilde{\BR}(\yy)$. 
\end{proof}

The proof is complete.
\end{proof}

\section{Omitted Proofs in \NoCaseChange{\Cref{sec:learningaj}}}

\subsection{Proof of \NoCaseChange{\Cref{lmm:proper-cover}}}

\lmmpropercover*

\begin{proof}
Suppose that $\mu$ is a cover of $S$.
Consider the following matrix $\mu'$:
\begin{align*}
\mu'(i,j)=
\begin{cases}
\mu(i,j) & \text{ if } j \in [n] \setminus S \\
W & \text{ if } j \in S
\end{cases}
\end{align*}
where $W \coloneqq \min_{i,j \in [m] \times [n] \setminus S} \mu(i,j) - 1$.
We argue that $\mu'$ is a proper cover of $S$. Indeed, it is proper since by the above construction no action in $S$ can be a best response to any $\xx \in \Delta_m$. It suffices to prove that $\mu'$ is a cover.

Let $f$ and $f'$ be the BR-correspondences of $\mu$ and $\mu'$, respectively.
Note that for all $\xx \in \calM_S$, it must be that $S \cap f(\xx) = \emptyset$.
Indeed, if $k \in S \cap f(\yy)$ and $\yy \in \calM_S$, we would then have 
\[
\max_{\xx \in \calM_S} \max_{j \in f(\xx)} u^L(\xx, j)
\ge \max_{j \in f(\yy)} u^L(\yy, j)
\ge u^L(\yy, k)
\ge \min_{j \in S} u^L(\yy, j)
= M_S,
\]
contradicting the assumption that $\mu$ is a cover.

By construction, $f'(\xx) = f(\xx)$ if $S \cap f(\xx) = \emptyset$. Hence, $f'(\xx) = f(\xx)$ for all $\xx \in \calM_S$. It follows that
\[
\max_{\xx \in \calM_S} \max_{j \in f'(\xx)} u^L(\xx, j)
= \max_{\xx \in \calM_S} \max_{j \in f(\xx)} u^L(\xx, j) < M_S.
\]
Hence, $\mu'$ is a cover of $S$.
\end{proof}

\subsection{Proof of \Cref{lmm:cover-iff}}

\lmmcoveriff*

\begin{proof}

We first prove the necessity. Suppose a cover $\mu$ of $S$ exists. According to the definition of a cover, for any $\xx \in \calM_{S}$, there exists $j \in [n]\setminus S$, such that $u^L(\xx,j) < M_S$. Thus, for all $\xx \in \calM_S$, we have
\[
\min_{k \in [n]}u^L(\xx,k) \leq u^L(\xx,j)< M_S.
\]
Meanwhile, for all $\xx \notin \calM_S$, 
\[
\min_{k \in [n]}u^L(\xx,k) \leq \min_{k \in S}u^L(\xx,k) < M_S.
\]
Thus, $M_{[n]} = \max_{\xx \in \Delta_m}\min_{j\in [n]}u^L(\xx,j)<M_S$.

Next, consider the sufficiency. Suppose that $M_S > M_{[n]}$. We show that a function $\mu$ such that $\mu(i,j)=-u^L(i,j)$ is a cover of $S$. 
Let $f$ be the BR-correspondence of $\mu$.
By construction, $\argmax_{k \in [n]} \mu(\xx, k) = \argmin_{k \in [n]} u^L(\xx, k)$ for any $\xx \in \Delta_m$, so $j \in f(\xx)$ implies that $u^L(\xx,j) = \min_{k \in [n]}u^L(\xx,k)$.
It follows that
\[
u^L(\xx,j) = \min_{k \in [n]}u^L(\xx,k) \le M_{[n]} < M_S
\]
for all $\xx \in \Delta_m$.
Thus, 
$\max_{\xx \in \calM_S} \min_{j \in f(\xx)}u^L(\xx,j) < M_S$, 
which completes the proof. 
\end{proof}

\subsection{Formal Proof of \Cref{thm:xxi}}

We first present the proof of \Cref{lemma:boundary-point}, then the proof is completed by \Cref{lmm:Ig-empty} and \Cref{lmm:Ig-nonempty}, which solves the cases of $I_\gg = \emptyset$ and $I_\gg\ne \emptyset$ respectively. 

\lemmaboundarypoint*

\begin{proof}
We prove that every vertex $\bv$ of $f_\gg^{-1}(n)$ lies in some $\Gamma_i$, $i \in [m_1 -1]$; in other words, $\bv \in \Gamma_i \cap f_\gg^{-1}(n)$.
Recall that $\xx^i \in \argmin_{\xx \in \Gamma_i \cap f_\gg^{-1}(n)} x_i$.
Hence, $\bv \in \Gamma_i \cap f_\gg^{-1}(n)$ implies that $x_i^i \ge v_i$, and in turn, $u^L(\xx, n) \le u^L(\xx^i, n)$ according to \Cref{lmm:minxi-maxuL}. Since $\max_{\xx \in f_\gg^{-1}(n)} u^L(\xx, n)$ is always attained at some vertex of $f_\gg^{-1}(n)$, the stated result then follows.

Pick an arbitrary vertex $\bv$ of $f_\gg^{-1}(n)$.
Since $\bv \in \mathbb{R}^m$, it lies in $m$ hyperplanes of $f_\gg^{-1}(n)$ and is uniquely defined by them.
By definition, $f_\gg^{-1}(n)$ is defined by the following linear constraints, each corresponding to a hyperplane:
\begin{subequations}
\label{eq:fg-hyperplane}
\begin{align}
& x_1 + \dots + x_m = 1 \label{eq:fg-hyperplane-1} \\
& x_i \ge 0 && \text{ for all } i \in [m] \label{eq:fg-hyperplane-2} \\
& \tilde{u}^F_{\gg}(\xx, n) \geq \tilde{u}^F_{\gg}(\xx, j) && \text{ for all } j \in [n-1] \label{eq:fg-hyperplane-3}
\end{align}
\end{subequations}
Hence, without loss of generality, we can assume that the $m$ hyperplanes defining $\bv$ corresponds to the following coefficient matrix $A$:
\newcommand\leftbrace[2]{%
#1\left\{\vphantom{\begin{array}{c} #2 \end{array}}\right.}
\newcommand\rightbrace[2]{%
\left.\vphantom{\begin{array}{c} #2 \end{array}}\right\}#1}
\begin{equation*}
\small
\renewcommand{\arraystretch}{1.3}
\begin{array}{r}
\vphantom{1}\\
\leftbrace{\ell_1}{1 \\ 1 \\ 1} \\
\\[-2mm]
\leftbrace{\ell_2}{1 \\ 1 \\ 1} \\
\\[-2mm]
\leftbrace{k}{1 \\ 1 \\ 1}
\end{array}
\left(
\begin{array}{ccccc:ccccc}
1 & & \cdots & & 1 & 1 & & \cdots & & 1 \\[1mm]
  \hdashline
1 & & & & & & & & & \\
  & \ddots & & & & & & & & \\
  & & 1 & & & & & & & \\[1mm]
  \hdashline
  & & & & &1 & & & & \\
  & & & &  & & \ddots & & & \\
  & & & & & & & 1& & \\[1mm]
  \hdashline
g_1 & & \cdots & & g_{m_1-1} & \tilde{\mu}(m_1, 1) & & \cdots & & \tilde{\mu}(m, 1) \\
\vdots & & & & \vdots & \vdots & & & & \vdots \\
g_1 & & \cdots & & g_{m_1-1} & \tilde{\mu}(m_1, k) & & \cdots & & \tilde{\mu}(m, k) 
\end{array}
\right)
\begin{array}{l}
\rightbrace{\eqref{eq:fg-hyperplane-1}}{1} \\
\\[-3mm]
\rightbrace{\eqref{eq:fg-hyperplane-2}}{1 \\ 1 \\ 1 \\ \\ 1 \\ 1 \\ 1} \\
\\[-3mm]
\rightbrace{\eqref{eq:fg-hyperplane-3}}{1 \\ 1 \\ 1}
\end{array}
\end{equation*}
where we let $\tilde{\mu}(i,j) = W - \mu(i, j)$ for all $i,j$.
Namely, $A \cdot \bv = \vb$, where $\vb$ is the vector of the constant terms in \eqref{eq:fg-hyperplane}.
In what follows, we let $A_{i:j}$ denote the submatrix formed by the $i$-th to the $j$-th rows of $A$; and let $\overline{A}_{i:j}$ denote the submatrix formed by the $i$-th to the $j$-th columns.
Since $\bv$ is uniquely defined by $A$, we have $\rankp{A} = m$.

We next argue that $\ell_1 \ge m_1 - 2$ to complete the proof. 
This indicates that for at least $m_1 - 2$ actions $i \in [m_1 - 1]$, we have that $v_i = 0$. In other words, $v_i > 0$ for at most one $i \in [m_1 - 1]$, so by definition this means that $\bv \in \Gamma_i$ for some $i \in [m_1 - 1]$.

Suppose for the sake of contradiction that $\ell_1 \le m_1 - 3$.
Hence, $\ell_2 + k = m - \ell_1 - 1 \ge m - m_1 + 2$.
Let 
\[
U = 
\begin{pmatrix}
\tilde{\mu}(m_1 + \ell_2, 1) & \cdots & \tilde{\mu}(m, 1) \\
\vdots & & \vdots \\
\tilde{\mu}(m_1 + \ell_2, k) & \cdots & \tilde{\mu}(m, k) 
\end{pmatrix}
\]
be a submatrix of $A$, and let 
$\bv' = (v_{m_1 + \ell_2 - 1}, \dots, v_m)$.
Hence, $U$ is a $k$-by-$k'$ matrix, with $k' \le k-1$.

We have $U \cdot \bv' = \mathbf{1} \cdot \lambda$, where $\lambda = \sum_{i=1}^{m_1 -1} g_i \cdot v_i$.
Note that $\bv'$ is a size-$k'$ vector, so it must be that $\rankp{U \mid \mathbf{1}^\top} \le k' \le k - 1$ (otherwise, the system of linear equations $U \cdot \xx = \mathbf{1} \cdot \lambda$ would have no solution). 
Note that via linear transformation, the submatrix $A_{m-k+1, m}$ can be transformed into 
$\begin{pmatrix}
O & \mathbf{1}^\top & U
\end{pmatrix}$,
where $O$ denotes a $k$-by-$(m_1 + \ell_2-2)$ matrix with all entries being $0$.
Hence, we get that
\[
\rankp{A_{m-k+1, m}} = \rankp{O\quad \mathbf{1}^\top \quad U} \le k - 1.
\]
Consequently, 
\[
\rankp{A} \le \rankp{A_{1:m-k}} + \rankp{A_{m-k+1:m}} \le (m - k) + (k - 1) < m,
\]
which contradicts the fact that $\rankp{A} = m$. 
\end{proof}

\begin{restatable}{lemma}{lmmIgempty}
\label{lmm:Ig-empty}
There exists $N > 0$, such that $I_\gg \neq \emptyset$ if $g_i \ge N$ for all $i \in [m_1 - 1]$.
Moreover, $N$ can be computed in polynomial time.
\end{restatable}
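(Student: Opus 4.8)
The plan is to show that once every $g_i$ is large, inducing the follower's response $n$ becomes strictly more profitable for the leader than inducing any other response; then the leader's best strategy among those inducing $n$ --- which by \Cref{lemma:boundary-point} is one of $\xx^1,\dots,\xx^{m_1-1}$ --- is an SSE, so $I_\gg\neq\emptyset$. Intuitively, as $\gg$ grows the region $f_\gg^{-1}(n)$ swells and $\max_{\xx\in f_\gg^{-1}(n)}u^L(\xx,n)\to M_{\{n\}}$, whereas every $f_\gg^{-1}(j)$ with $j\neq n$ is squeezed into a thin slab around the face $\calM_{\{n\}}=\conv\{\ee_{m_1},\dots,\ee_m\}$, on which $f_\gg$ coincides with the BR-correspondence $\widetilde{\BR}$ of the (proper) cover $\mu$; the cover property then caps the leader's payoff there strictly below $M_{\{n\}}$. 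Fixing the margins once they fall below the resulting gap pins down $N$.

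Let $W$ be as in \eqref{eq:tuF-g}, and set $C\coloneqq\max_{i,j}\bigl(\mu(i,j)-W\bigr)\ge 1$, $R\coloneqq\max_{i,j}|u^L(i,j)|$, $G\coloneqq\min_{i\in[m_1-1]}g_i$, and $L^*\coloneqq\max_{\xx\in\calM_{\{n\}}}\max_{j\in\widetilde{\BR}(\xx)}u^L(\xx,j)$, so $L^*<M_{\{n\}}$ since $\mu$ covers $\{n\}$. Two estimates drive the argument. \emph{(i)} If $\xx\in f_\gg^{-1}(j)$ with $j\in[n-1]$, then $\tilde{u}^F_\gg(\xx,j)\ge\tilde{u}^F_\gg(\xx,n)$ rearranges to $\sum_{i\ge m_1}x_i(\mu(i,j)-W)\ge\sum_{i<m_1}x_ig_i$, whose left side is at most $C$ and right side at least $G\sum_{i<m_1}x_i$; hence $\sum_{i<m_1}x_i\le C/G<1$ whenever $G>C$. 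Writing $\bar\xx\in\calM_{\{n\}}$ for $\xx$ with its mass on rows $<m_1$ removed and the remainder renormalised, one checks $\|\xx-\bar\xx\|_1\le 2C/G$; since $\tilde{u}^F_\gg(\bar\xx,n)=W<\min_{i,j}\mu(i,j)\le\mu(\bar\xx,j')$ for every $j'$ and $\mu$ is proper, $f_\gg(\bar\xx)=\widetilde{\BR}(\bar\xx)$, so $j\in f_\gg(\xx)$ forces $j\in\widetilde{\BR}(\bar\xx)$ and therefore $u^L(\xx,j)\le u^L(\bar\xx,j)+2RC/G\le L^*+2RC/G$. \emph{(ii)} For any $i<m_1$, the strategy $\zz^i\coloneqq(1-\delta)\ee_{m_1}+\delta\ee_i$ with $\delta\coloneqq C/g_i\le 1$ (as $g_i\ge C$) satisfies $\delta g_i=C\ge(1-\delta)(\mu(m_1,j)-W)$ for all $j\in[n-1]$, hence $\zz^i\in f_\gg^{-1}(n)$; and, using $u^L(m_1,n)=M_{\{n\}}$ (\Cref{asn:uL-large-small}), $u^L(\zz^i,n)=M_{\{n\}}-\delta\bigl(M_{\{n\}}-u^L(i,n)\bigr)\ge M_{\{n\}}-2RC/g_i$. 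By \Cref{lemma:boundary-point}, $\max_i u^L(\xx^i,n)=\max_{\xx\in f_\gg^{-1}(n)}u^L(\xx,n)\ge M_{\{n\}}-2RC/G$.

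Now choose any $N>\max\{C,\ 4RC/(M_{\{n\}}-L^*)\}$, which exists because $M_{\{n\}}-L^*>0$. If $g_i\ge N$ for all $i$, then $G\ge N$, and (i)--(ii) give
\[
\max_i u^L(\xx^i,n)\ \ge\ M_{\{n\}}-\tfrac{2RC}{N}\ >\ L^*+\tfrac{2RC}{N}\ \ge\ \max_{\xx\in\Delta_m}\,\max_{j\in f_\gg(\xx)\setminus\{n\}}u^L(\xx,j).
\]
Let $i^*\in\argmax_i u^L(\xx^i,n)$. Since $g_{i^*}\ge N>0$, \Cref{lmm:xi-gt-0} gives $\xx^{i^*}\in\Gamma_{i^*}\cap f_\gg^{-1}(n)$, so $n\in f_\gg(\xx^{i^*})$; and the display, together with $u^L(\xx^{i^*},n)=\max_{\xx\in f_\gg^{-1}(n)}u^L(\xx,n)$ from \Cref{lemma:boundary-point}, shows $u^L(\xx^{i^*},n)\ge u^L(\xx,j)$ for every $\xx\in\Delta_m$ and every $j\in f_\gg(\xx)$. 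Hence $(\xx^{i^*},n)$ satisfies \eqref{eq:sse}, i.e.\ it is an SSE of $(u^L,\tilde{u}^F_\gg)$, so $i^*\in I_\gg$ and $I_\gg\neq\emptyset$.

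Finally, such an $N$ can be taken with bit-size polynomial in the input. Let $s$ be the total bit-size of $u^L$ and $\mu$ (and of $m,n$). Then $C$ is computed from $\mu$ and $R\le 2^s$; $M_{\{n\}}=\max_i u^L(i,n)$ is an entry of $u^L$; and $L^*=\max_{j\in[n-1]}\max_{\xx\in\calM_{\{n\}}\cap\widetilde{\BR}^{-1}(j)}u^L(\xx,j)$ is the largest of at most $n$ linear programs over polytopes defined by $\poly(s)$ inequalities with $\poly(s)$-bit coefficients, so $L^*$ is rational of bit-size $\poly(s)$ and $M_{\{n\}}-L^*>0$ is at least $2^{-\poly(s)}$. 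Thus $4RC/(M_{\{n\}}-L^*)\le 2^{\poly(s)}$, and we may set $N\coloneqq 2^{p(s)}$ for a suitable polynomial $p$; this $N$ has $\poly(s)$ bits, hence is computable in polynomial time. The main obstacle is estimate (i) --- bounding the leader's payoff over $\bigcup_{j\neq n}f_\gg^{-1}(j)$ by collapsing onto the face $\calM_{\{n\}}$ where the cover guarantee bites --- and then verifying that the threshold $N$ it forces has only polynomially many bits.
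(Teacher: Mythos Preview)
Your correctness argument is essentially the paper's: both show that once every $g_i$ is large, the region $\bigcup_{j\neq n}f_\gg^{-1}(j)$ collapses toward the face $\calM_{\{n\}}$, where the cover bounds the leader's payoff strictly below $M_{\{n\}}$, while $\max_{\xx\in f_\gg^{-1}(n)}u^L(\xx,n)$ approaches $M_{\{n\}}$; together with \Cref{lemma:boundary-point} this forces some $(\xx^{i^*},n)$ to be an SSE. Your projection step ``$j\in f_\gg(\xx)\Rightarrow j\in\widetilde{\BR}(\bar\xx)$'' is correct once one observes that for $j\in[n-1]$ the payoff $\tilde{u}^F_\gg(\xx,j)=\sum_{i\ge m_1}x_i\,\mu(i,j)$ is a positive scalar multiple of $\mu(\bar\xx,j)$, so the argmax over $j\in[n-1]$ is preserved, and properness of $\mu$ excludes $n$; the paper uses the same scaling implicitly when it bounds $\sum_{i\ge m_1}x_i\,u^L(i,j)\le(1-\sum_{i<m_1}x_i)\cdot T$.

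Where your proposal departs from the paper, and where there is a genuine gap, is in how $N$ is \emph{computed}. You set $N\coloneqq 2^{p(s)}$ with $s$ the bit-size of $u^L$ and $\mu$. But in this model the follower has no access to $u^L$, to $R$, to $M_{\{n\}}$, to $L^*$, or even to $s$; ``polynomial time'' here means polynomial in $s$ using only $m,n,u^F$ and the oracle $\ASSE$. So ``output $2^{p(s)}$'' is not an algorithm the follower can execute, and ``has $\poly(s)$ bits, hence is computable in polynomial time'' conflates small representation size with computability from the available inputs. The paper closes this gap by using the oracle: start with any $N>0$ and $\gg=(N,\dots,N)$, query $\ASSE$ to test whether $n$ is an SSE response of $(u^L,\tilde{u}^F_\gg)$, and double $N$ until it is. Your bit-size analysis (and the paper's) guarantees this terminates after $O(\log\overline{N})=\poly(s)$ iterations, with the oracle certifying success rather than any a~priori knowledge of $s$. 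Swapping in this doubling-plus-oracle step for your final paragraph would make your argument complete and essentially identical to the paper's.
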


\begin{proof}
Given \Cref{lemma:boundary-point}, it suffices to find a value $N$ and prove that action $n$ is an SSE response of game $(u^L, \tilde{u}_\gg^F)$ if $g_i \ge N$ for all $i \in [m_1 - 1]$.

Let 
$D =\max_{i,j,i',j'} |u^L(i,j)-u^L(i',j')|$.
For every $\xx\in\Delta_m$, we have
\begin{align*}
u^L(\xx, n) 
= \sum_{i=1}^{m_1-1} x_i\cdot u^L(i,n) +\left(1-\sum_{i=1}^{m_1-1}x_i\right)\cdot M_{\{n\}}
\ge M_{\{n\}}-\left(\sum_{i=1}^{m_1-1}x_i\right)\cdot D,
\end{align*}
where we used the fact that $u^L(i,n) \ge M_{\{n\}}-D$ for all $i$.

Moreover, let 
\[
U=\max_{i,j} u^L(i,j)
\quad \text{ and } \quad
T = \max_{\xx\in\calM_{\{n\}}, j\in h (\xx)}u^L(\xx,j),
\]
where $h$ is the BR-correspondence of $\mu$.
For every $j\in[n-1]$,
\begin{align*}
u^L(\xx, j) 
=& \sum_{i=1}^{m_1-1} x_i\cdot u^L(i,j) +\sum_{i=m_1}^{m}x_i\cdot u^L(i,j)\\
\leq& \left(\sum_{i=1}^{m_1-1}x_i\right)\cdot U+\left(1-\sum_{i=1}^{m_1-1}x_i\right)\cdot T
= \left(\sum_{i=1}^{m_1-1}x_i\right)\cdot \left(U- T \right) + T.
\end{align*}

Recall that since $\mu$ is a cover of $\{n\}$, by definition $T < M_{\{n\}}$. 
So if it holds that $f_\gg(\xx) = \{n\}$ for every $\xx$ such that 
\[
c\coloneqq \frac{M_{\{n\}}-T}{D+U-T} < \sum_{i=1}^{m_1-1}x_i,
\] 
we know that $n$ must be an SSE response of $(u^L,\tilde{u}^F_{\gg})$;
indeed, for all $\yy$ such that $\sum_{i=1}^{m_1-1} y_i \le c$,  this leads to
\[
u^L(\yy, j) \le T + c \cdot (U - T) = M_{\{n\}} - c \cdot D \le \max_{\xx \in f_\gg^{-1}(n)} u^L(\xx, n)
\]
for all $j \in [n-1]$.
It suffices to let 
\[
g_i \geq \overline{N} \coloneqq \frac{1-c}{c}\cdot \left( \max_{i,j} \mu(i,j) -W \right)
\] 
for every $i\in[m_1-1]$ to ensure this, so any $N \ge \overline{N}$ satisfies the condition in the statement of this lemma.

To compute $N$, we can start from an arbitrary value $N > 0$ and $\gg = (N, \dots, N)$.
We query oracle $\ASSE$ to check if $n$ is an SSE response of $(u^L, \tilde{u}_\gg^F)$. Set $N \leftarrow 2 \cdot N$ and $\gg \leftarrow (N, \dots, N)$, and repeat the step if it is not.
This way, we can find a value $N \ge \overline{N}$ in time $\log(\overline{N})$.
\end{proof}

\begin{restatable}{lemma}{lmmIgnonempty}
\label{lmm:Ig-nonempty}
Suppose that $I_\gg \neq \emptyset$ and $i \in [m_1 - 1] \setminus I_\gg$.
There exists $g^* > 0$, such that $I_{\gg'} = I_\gg \cup \{i\}$, where $\gg' = (g_1, \dots, g_{i-1}, g^*, g_{i+1}, \dots, g_{m_1 -1})$.
Moreover, $g'$ can be computed in polynomial time.
\end{restatable}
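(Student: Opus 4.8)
The plan is to increase the single coordinate $g_i$ while keeping every other coordinate of $\gg$ fixed, and to stop exactly at the first value of $g_i$ at which $(\xx^i,n)$ becomes an SSE; I will then argue that at that value $(\xx^i,n)$ is added to the equilibrium list while no member of $I_\gg$ is lost. Write $\gg(t)$ for the vector obtained from $\gg$ by replacing its $i$-th entry with $t\ge g_i$, and let $v(t)\coloneqq u^L(\xx^i,n)$ as computed in the game $(u^L,\tilde{u}^F_{\gg(t)})$. The first thing I would record is an invariance: for every $i'\in[m_1-1]\setminus\{i\}$ the facet $\Gamma_{i'}$ (see \eqref{eq:Qi}) lies in $\{x_i=0\}$, so on $\Gamma_{i'}$ the payoff $\tilde{u}^F_{\gg(t)}(\cdot,n)$ does not involve $t$, and neither do the other columns; hence $\Gamma_{i'}\cap f_{\gg(t)}^{-1}(n)$, the point $\xx^{i'}$ from \eqref{eq:xx-i}, and the value $u^L(\xx^{i'},n)$ are all independent of $t$. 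On $\Gamma_i$, by contrast, $\tilde{u}^F_{\gg(t)}(\xx,n)=x_i t+W(1-x_i)$ is nondecreasing in $t$ while the competing columns are constant, so $\Gamma_i\cap f_{\gg(t)}^{-1}(n)$ only grows with $t$; together with \Cref{lmm:minxi-maxuL} this makes $x^i_i$ nonincreasing and $v$ nondecreasing in $t$. Since $\Gamma_i\cap f_{\gg(t)}^{-1}(n)$ is a nonempty (by \Cref{lmm:xi-gt-0}, as $t>0$) bounded polytope whose defining inequalities are affine in $t$, the parametric LP value $v(t)$ is moreover continuous on $(0,\infty)$.

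Next I would locate the stopping point. Let $V^*$ be the leader's SSE payoff in $(u^L,\tilde{u}^F_\gg)$. Because $I_\gg\neq\emptyset$, some $(\xx^{i'},n)$ with $i'\in I_\gg$ is an SSE, so $V^*=u^L(\xx^{i'},n)\le\max_{\xx\in f_\gg^{-1}(n)}u^L(\xx,n)$, which by \Cref{lemma:boundary-point} equals $\max_{i'}u^L(\xx^{i'},n)\le V^*$; hence equality holds throughout and $I_\gg=\{\,i':u^L(\xx^{i'},n)=V^*\,\}$. Since $i\notin I_\gg$ this gives $v(g_i)<V^*$. Also, $\xx^{i'}\in\Gamma_{i'}$ with $x^{i'}_{i'}>0$, so $\xx^{i'}\notin\calM_{\{n\}}$ and therefore $u^L(\xx^{i'},n)<M_{\{n\}}$, i.e.\ $V^*<M_{\{n\}}$. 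Finally, as $t\to\infty$ even arbitrarily small mass on action $i$ makes $\tilde{u}^F_{\gg(t)}(\xx,n)$ arbitrarily large on $\Gamma_i$, so $x^i_i\to0$ and $v(t)\to M_{\{n\}}$. By continuity and monotonicity of $v$ there is a value $g^*>g_i$ with $v(g^*)=V^*$.

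I would then verify the conclusion at $\gg'\coloneqq\gg(g^*)$. The strategies $\xx^{i'}$, $i'\ne i$, are unchanged, so $\max_{i'}u^L(\xx^{i'},n)=\max\{V^*,v(g^*)\}=V^*$; and $\bigcup_{j\ne n}f_{\gg(t)}^{-1}(j)$ only shrinks as $t$ grows, so the leader's best payoff from any response $j\ne n$ stays $\le V^*$. Hence by \Cref{lemma:boundary-point} the leader's SSE payoff in $(u^L,\tilde{u}^F_{\gg'})$ is still $V^*$, and (recalling $\xx^{i'}\in f_{\gg'}^{-1}(n)$ by \Cref{lmm:xi-gt-0}) $I_{\gg'}=\{\,i':u^L(\xx^{i'},n)=V^*\,\}=I_\gg\cup\{i\}$. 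For the computational claim, I would compute $g^*$ by binary search over the Stern--Brocot tree: its bit-size is polynomial, being the solution of a linear equation in $t$ obtained by setting the rational function $x^i_i(t)$ (a ratio with denominator affine in $t$ and constant numerator on the optimal basis) equal to $(V^*-d)/c$ from \Cref{lmm:minxi-maxuL}. For a candidate $t$ I form $\tilde{u}^F_{\gg(t)}$ explicitly, solve the LP \eqref{eq:xx-i} to obtain $\xx^i$ and, for a fixed $i_0\in I_\gg$, the point $\xx^{i_0}$, and query $\ASSE$ twice: if $(\xx^i,n)$ is not an SSE then $t<g^*$; if both $(\xx^i,n)$ and $(\xx^{i_0},n)$ are SSEs then $t=g^*$; if $(\xx^i,n)$ is an SSE but $(\xx^{i_0},n)$ is not then $t>g^*$. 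This three-way test drives the search, and every step runs in polynomial time.

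I expect the main obstacle to be the bookkeeping that raising $g_i$ does not silently knock an old element out of $I_\gg$; this is precisely what the $\{x_i=0\}$-invariance of $\Gamma_{i'}$ buys us, freezing the old $\xx^{i'}$ and their leader payoffs so that $v(t)$ is the only quantity in motion, and stopping it exactly at $V^*$ preserves all of $I_\gg$ while adding $i$. A secondary subtlety is ensuring that $v$ actually attains $V^*$ from below --- which is where $v(g_i)<V^*<M_{\{n\}}$ and the limit $v(t)\to M_{\{n\}}$ are needed --- and designing the three-way oracle comparison so that the search pins $g^*$ down exactly rather than only up to the half-line $t\ge g^*$.
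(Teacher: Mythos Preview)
Your proof is correct and takes essentially the same approach as the paper: freeze every $\xx^{i'}$ with $i'\ne i$ via the $\{x_i=0\}$-invariance of $\Gamma_{i'}$, show that $u^L(\xx^i,n)$ climbs monotonically from below $V^*$ toward $M_{\{n\}}$ as $g_i$ increases while the best payoff from responses $j\ne n$ can only fall, and then binary-search for the crossing point using the oracle-driven trichotomy. The only minor difference is that the paper defines $g^*$ explicitly as a closed-form expression in the optimal solution of an auxiliary LP (their LP~\eqref{lp:xi-bar}) and separately proves \emph{strict} monotonicity of $u^L(\xx^i,n)$ in $g_i$ (\Cref{lmm:uL-increase}), which directly yields a polynomial bit-size bound and uniqueness of $g^*$ for the three-way test; your IVT-based definition and weak monotonicity still suffice, but the bit-size justification is a touch sketchier and the ``then $t=g^*$'' branch of your test should really read ``then $t$ is a valid choice of $g^*$.''
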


\begin{proof}[Proof sketch]
Let
\begin{equation}
\label{eq:g-star}
{g^*} = \frac{u^* - (1-y_i^*) \cdot W}{y_i^*},
\end{equation}
where $u^*$ and $\yy^*$ are the optimal solution to the following LP, where $V \coloneqq \max_{j \in [n]} \max_{\xx \in f_\gg^{-1} (j)} u^L(\xx, j)$ is the leader's SSE payoff in $(u^L, \tilde{u}_\gg^F)$. 
\begin{subequations}
\label{lp:xi-bar}
\begin{align}
\min_{u, \yy \in \Delta_m} \quad 
& u \tag{\ref{lp:xi-bar}}\\
\text{subject to} \quad
& \yy \in \Gamma_i \label{lp:xi-bar-1}\\
& \tilde{u}^F_\gg(\yy, j) \le u && \text{ for all } j \in [n-1] \label{lp:xi-bar-2}\\
& u^L(\yy, n) = V \label{lp:xi-bar-3}
\end{align}
\end{subequations}

We demonstrate the following results to complete the proof; the omitted details can be found in the appendix. 
\begin{itemize}
\item $g^*$ is well-defined (\Cref{lmm:lp:xi-bar-feasible}), i.e., LP \eqref{lp:xi-bar} is feasible and bounded, and $y^*_i >0$.

\item $g^* > g_i$, so that by setting $g_i$ to $g^*$ we indeed increases its value (\Cref{lmm:uL-xxstar-n-hatu}). This is important because it means that $g_i > 0$ still holds after the value change, so that we can apply \Cref{thm:xxi} inductively. 

\item Finally, let $(g', \gg_{-i}) = (g_1, \dots, g_{i-1}, g', g_{i+1}, \dots, g_{m_1 -1})$. 
Then $g' = g^*$ is the only point where $I_{\gg'} = I_\gg \cup \{i\}$ (\Cref{lmm:Ixxip-Ixxi}). 
This means that we can use binary search to pin down $g^*$. 
In particular, since ${g^*}$ is derived immediately from the solution to LP~\eqref{lp:xi-bar}, its bit-size is bounded by a polynomial in the size of the LP. We can find ${g^*}$ in polynomial time.
Note that we cannot compute $g^*$ directly using the expression \eqref{eq:g-star} because LP~\eqref{lp:xi-bar} involves unknown parameters.
\qedhere
\end{itemize}
\end{proof}

\subsection{Proof of \Cref{lmm:Ig-nonempty}: Omitted Results}


\begin{lemma}
\label{lmm:lp:xi-bar-feasible}
LP \eqref{lp:xi-bar} is feasible and bounded, and $y^*_i >0$. 
\end{lemma}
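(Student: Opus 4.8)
The plan is to show that the constraints \eqref{lp:xi-bar-1} and \eqref{lp:xi-bar-3} together pin the value of $y_i$ down to a single positive number, after which feasibility and $y_i^*>0$ are immediate, and boundedness is read off from \eqref{lp:xi-bar-2}. First I would recall that $\yy\in\Gamma_i$ forces $y_k=0$ for every $k\in[m_1-1]\setminus\{i\}$ (since $\yy\in\Delta_m$ and $1-y_i=\sum_{k\ge m_1}y_k$ together give $\sum_{k\in[m_1-1]\setminus\{i\}}y_k=0$), so by \Cref{lmm:minxi-maxuL} (or a one-line direct computation using \Cref{asn:uL-large-small}) one has $u^L(\yy,n)=M_{\{n\}}+y_i\cdot\big(u^L(i,n)-M_{\{n\}}\big)$ for all $\yy\in\Gamma_i$, where $u^L(i,n)-M_{\{n\}}<0$ again by \Cref{asn:uL-large-small}. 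Hence constraint \eqref{lp:xi-bar-3}, $u^L(\yy,n)=V$, is equivalent to $y_i=y_i^*\coloneqq \tfrac{M_{\{n\}}-V}{M_{\{n\}}-u^L(i,n)}$, a quantity independent of the remaining coordinates of $\yy$.

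Next I would establish $u^L(i,n)\le V< M_{\{n\}}$, which forces $y_i^*\in(0,1]$; here $V$ denotes the leader's SSE payoff in $(u^L,\tilde u^F_\gg)$. For the strict upper bound I would use $I_\gg\neq\emptyset$: picking $i'\in I_\gg$ makes $(\xx^{i'},n)$ an SSE, so $V=u^L(\xx^{i'},n)$, and since $\xx^{i'}\in f_\gg^{-1}(n)$ while $\tilde u^F_\gg$ is a cover of $\{n\}$ (as used in the proof of \Cref{lmm:xi-gt-0}), one gets $u^L(\xx^{i'},n)<M_{\{n\}}$: if $\xx^{i'}\notin\calM_{\{n\}}$ this is immediate, and if $\xx^{i'}\in\calM_{\{n\}}$ then $u^L(\xx^{i'},n)\le\max_{j\in f_\gg(\xx^{i'})}u^L(\xx^{i'},j)<M_{\{n\}}$ by the cover inequality \eqref{eq:cover}. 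For the lower bound I would note that the pure leader strategy $\ee_i$ induces $f_\gg(\ee_i)=\{n\}$, because $\tilde u^F_\gg(\ee_i,j)=0$ for all $j\in[n-1]$ whereas $\tilde u^F_\gg(\ee_i,n)=g_i>0$; thus $\ee_i\in f_\gg^{-1}(n)$ and $V\ge u^L(\ee_i,n)=u^L(i,n)$. Substituting $u^L(i,n)\le V<M_{\{n\}}$ into the formula for $y_i^*$ gives $y_i^*\in(0,1]$, with $y_i^*>0$ coming precisely from $V<M_{\{n\}}$.

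With $y_i^*$ fixed, feasibility follows by taking any $\yy^*$ with $y_i^*$ in coordinate $i$, zeros in coordinates $[m_1-1]\setminus\{i\}$, and any nonnegative split of the leftover mass $1-y_i^*\ge0$ over coordinates $m_1,\dots,m$: then $\yy^*\in\Gamma_i\cap\Delta_m$ and $u^L(\yy^*,n)=V$, so choosing $u\coloneqq\max_{j\in[n-1]}\tilde u^F_\gg(\yy^*,j)$ satisfies all constraints. Boundedness I would obtain from \eqref{lp:xi-bar-2} with $j=1$, which gives $u\ge\tilde u^F_\gg(\yy,1)$, a linear function of $\yy$ on the compact set $\Delta_m$ and hence bounded below by a constant; a feasible LP bounded below attains its minimum. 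Finally, every feasible $\yy$ has $y_i=y_i^*$ by the equivalence in the first paragraph, so in particular the optimal $\yy^*$ satisfies $y_i^*=y_i^*>0$, as claimed.

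The main obstacle is the chain $u^L(i,n)\le V<M_{\{n\}}$: the strict inequality is exactly where the hypothesis $I_\gg\neq\emptyset$ and the cover property of $\tilde u^F_\gg$ are used, and some care is needed because an SSE strategy that induces $n$ need not lie in $\calM_{\{n\}}$, so both branches of the case split genuinely have to be treated. The remaining ingredients—the affine dependence of $u^L(\yy,n)$ on $y_i$ over $\Gamma_i$, the behavior of the vertex $\ee_i$ under $f_\gg$, and the boundedness of the objective—are routine.
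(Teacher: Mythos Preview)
Your proof is correct and follows essentially the same route as the paper's: both exploit the affine form $u^L(\yy,n)=M_{\{n\}}+y_i\big(u^L(i,n)-M_{\{n\}}\big)$ on $\Gamma_i$, establish $V<M_{\{n\}}$ from $I_\gg\neq\emptyset$, and read off feasibility, boundedness, and $y_i^*>0$. The only minor differences are that you lower-bound $V$ via $\ee_i\in f_\gg^{-1}(n)$ (giving $V\ge u^L(i,n)$) rather than via the standing hypothesis $i\notin I_\gg$ (the paper uses $u^L(\xx^i,n)<V$), and your case split on whether $\xx^{i'}\in\calM_{\{n\}}$ is unnecessary---that branch is vacuous since $x_{i'}^{i'}>0$ forces $\xx^{i'}\notin\calM_{\{n\}}$---though harmless.
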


\begin{proof}
Notice that  \eqref{lp:xi-bar} is equivalent to computing the minimal value of $\max_{j\in [n-1]} \tilde{u}^F_{\gg}(\yy,j)$ with $\yy$ satisfying \eqref{lp:xi-bar-1} and \eqref{lp:xi-bar-3}. It is feasible and bounded  if there exists 
$\yy \in \Gamma_i$ such that $u^L(\yy, n) = V$.
Note that for all $\yy \in \Gamma_i$, we have 
\[
u^L(\yy, n) = y_i \cdot u^L(i,n) + (1 - y_i) \cdot M_{\{n\}}.
\]
Let $\zz$ be an arbitrary point such that $\sum_{k = m_1}^m z_k = 1$ (hence $\zz \in \Gamma_i$).
\begin{itemize}
\item  
When $\yy = \zz$, we have $u^L(\yy, n) = M_{\{n\}}$.
\item
When $\yy = \xx^i$, we have $u^L(\yy, n) = u^L(\xx^i, n) < V$ as $(\xx^i, n)$ is not an SSE by assumption.
\end{itemize}
By definition (i.e., \eqref{eq:xx-i}), $\xx^i \in \Gamma_i$. 
By assumption $I_\gg \neq \emptyset$, so $(\xx^k, n)$ is an SSE response for some $k \in [m_1 - 1]$.
Hence, by definition, 
\[
V = u^L(\xx^k, n) < \max_{\xx \in \Gamma_i} u^L(\xx, n) \le M_{\{n\}},
\]
where $u^L(\xx^k, n) < \max_{\xx \in \Gamma_k} u^L(\xx, n)$ follows by the fact that $u^L(\xx, n)$ decreases with $x_k$ in the space $\Gamma_k$ (\Cref{lmm:minxi-maxuL}), and $x_k^k > 0$ (\Cref{lmm:xi-gt-0}).
Therefore, by continuity, there must exist $\yy \in \Gamma_i$ such that $u^L(\yy, n) = V$. Moreover, for all such $\yy$, $y_i > 0$ as otherwise $u^L(\yy, n) = M_{\{n\}} > V$.
\end{proof}

\begin{lemma}
\label{lmm:uL-increase}
$u^L(\xx^i, n)$ increases strictly with $g_i$.
\end{lemma}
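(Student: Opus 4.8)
The plan is to obtain an explicit closed form for $x_i^i \coloneqq \min_{\xx \in \Gamma_i \cap f_\gg^{-1}(n)} x_i$ as a function of $g_i$ and then read off the monotonicity through \Cref{lmm:minxi-maxuL}. Recall from \eqref{eq:xx-i} that $\xx^i$ attains this minimum over the polytope $\Gamma_i \cap f_\gg^{-1}(n)$, which is nonempty by \Cref{lmm:xi-gt-0}; hence by \Cref{lmm:minxi-maxuL} we have $u^L(\xx^i,n) = c\cdot x_i^i + d$, where $c = u^L(i,n) - M_{\{n\}} < 0$ and $d = M_{\{n\}}$ do not depend on $\gg$. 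So it suffices to prove that $x_i^i$ strictly decreases in $g_i$.

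First I would parametrize $\Gamma_i$. For $\xx\in\Gamma_i$ we have $x_k = 0$ for all $k \in [m_1-1]\setminus\{i\}$ and $\sum_{k=m_1}^m x_k = 1-x_i$; writing $x_k = (1-x_i)y_k$ with $\yy = (y_{m_1},\dots,y_m)$ in the unit simplex on coordinates $m_1,\dots,m$ (when $x_i<1$), the restricted payoffs become $\tilde{u}^F_\gg(\xx,j) = (1-x_i)\sum_{k=m_1}^m y_k\mu(k,j)$ for $j\in[n-1]$ and $\tilde{u}^F_\gg(\xx,n) = x_i g_i + (1-x_i)W$. In particular, on $\Gamma_i$ only $g_i$ among the components of $\gg$ enters, so $x_i^i$ depends on $\gg$ only through $g_i$. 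With $\phi(\yy) \coloneqq \max_{j\in[n-1]}\sum_{k=m_1}^m y_k\mu(k,j)$, the condition $n\in\BR(\xx)$ reads $x_i g_i + (1-x_i)W \ge (1-x_i)\phi(\yy)$, and since $W = \min_{i,j}\mu(i,j)-1$ gives $\phi(\yy)-W \ge 1 > 0$ for every $\yy$, this is equivalent to $x_i \ge \frac{\phi(\yy)-W}{g_i+\phi(\yy)-W} \in (0,1)$ (the vertex $x_i=1$ always lies in $f_\gg^{-1}(n)$ and never attains the minimum). Since $r\mapsto r/(g_i+r)$ is strictly increasing on $r>0$ for fixed $g_i>0$, minimizing this lower bound over $\yy$ reduces to minimizing $\phi$, whence
\[
x_i^i \;=\; \frac{\phi^*-W}{\,g_i + \phi^*-W\,}, \qquad \phi^* \coloneqq \min_{\yy}\phi(\yy),
\]
with $\phi^*$ attained (compactness of the simplex) and independent of $g_i$.

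To conclude, note that $g_i \mapsto (\phi^*-W)/(g_i+\phi^*-W)$ is strictly decreasing on $g_i>0$ because $\phi^*-W>0$; hence $x_i^i$ strictly decreases in $g_i$, and since $u^L(\xx^i,n) = c\,x_i^i + d$ with $c<0$, we get that $u^L(\xx^i,n)$ strictly increases in $g_i$. The only points needing care --- neither a real obstacle --- are checking that every point of $\Gamma_i\cap f_\gg^{-1}(n)$ is realized by the parametrization, so that the minimum genuinely equals the displayed value and is not merely bounded below by it (for each $\yy$, the point with $x_i = \frac{\phi(\yy)-W}{g_i+\phi(\yy)-W}$ and $x_k = (1-x_i)y_k$ makes $n$ a best response with equality, hence is feasible), and the harmless treatment of the $x_i=1$ corner. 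The crux is really the single observation that on the facet $\Gamma_i$ the best-response region of $n$ is cut out by one scalar inequality, which exposes the dependence on $g_i$ explicitly.
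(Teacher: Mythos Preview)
Your proof is correct and takes a genuinely different route from the paper. The paper argues by contradiction: assuming $g'<g''$ yet $u^L({\xx'}^i,n)\ge u^L({\xx''}^i,n)$, it uses \Cref{lmm:minxi-maxuL} to get ${x'_i}^i\le {x''_i}^i$, then observes that at ${\xx'}^i$ action $n$ becomes a \emph{strict} best response under $\gg''$, so a neighborhood of ${\xx'}^i$ in $\Gamma_i$ lies in $f_{\gg''}^{-1}(n)$ and contains a point with $x_i<{x'_i}^i\le {x''_i}^i$, contradicting the minimality of ${x''_i}^i$. Your argument instead parametrizes $\Gamma_i$ explicitly, reduces the best-response condition on that facet to the single scalar inequality $x_i\ge (\phi(\yy)-W)/(g_i+\phi(\yy)-W)$, and extracts the closed form $x_i^i=(\phi^*-W)/(g_i+\phi^*-W)$ with $\phi^*=\min_{\yy}\phi(\yy)$ independent of $g_i$.

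What each buys: your approach is more explicit and actually yields quantitative information (the exact dependence of $x_i^i$ on $g_i$), at the cost of writing out the parametrization and checking that the boundary point with equality is feasible. The paper's soft argument avoids computing anything and would survive small perturbations of the construction of $\tilde u^F_\gg$, but gives only the qualitative monotonicity. Both are valid; yours is arguably cleaner for this specific $\tilde u^F_\gg$. One small remark: your derivation implicitly uses $g_i>0$ (so that the displayed fraction lies in $(0,1)$ and the feasible set is nonempty), which matches the regime $\gg\in\bbQ_{>0}^{m_1-1}$ in which the lemma is applied.
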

\begin{proof}
Pick two arbitrary values $g', g'' \in \mathbb{R}$ such that $g' < g''$.
Let
\begin{align*}
\gg' &= (g_1, \dots, g_{i-1}, g',\, g_{i+1}, g_{m_1 -1}) \\
\text{ and }\quad
\gg'' &= (g_1, \dots, g_{i-1}, g'' , g_{i+1}, g_{m_1 -1}).
\end{align*}
Let ${\xx'}^k$ and ${\xx''}^k$ denote the critical points defined with respect to $\gg'$ and $\gg''$, respectively (i.e., \eqref{eq:xx-i}). 
Suppose for the sake of contradiction that $u^L({\xx'}^i, n) \ge u^L({\xx''}^i, n)$.

By \Cref{lmm:minxi-maxuL}, $u^L({\xx'}^i, n) \ge u^L({\xx''}^i, n)$ implies that ${x'_i}^i \le {x''_i}^i$.
Moreover,
\begin{align}
\tilde{u}_{\gg'}^F(\xx'^i, n) 
&= g' \cdot {x_i'}^i + \sum_{k =m_1}^m {x_k'}^i \cdot W \nonumber \\ 
&< g'' \cdot {x_i'}^i + \sum_{k =m_1}^m {x_k'}^i \cdot W 
= \tilde{u}_{\gg''}^F({\xx'}^i, n), \label{eq:tu-xxipn-tu-xxipp-n}
\end{align}
where we used the fact that ${x_i'}^i > 0$ (\Cref{lmm:xi-gt-0}).

Since $n \in f_{\gg'}({\xx'}^i)$, by definition $\tilde{u}_{\gg'}^F({\xx'}^i, n) \ge \tilde{u}_{\gg'}^F({\xx'}^i, j)$ for all $j \in [n-1]$. Hence, using \eqref{eq:tu-xxipn-tu-xxipp-n}, we have
\[
\tilde{u}_{\gg''}^F(\xx'^i, n) >
\tilde{u}_{\gg'}^F(\xx'^i, n) \ge
\tilde{u}_{\gg'}^F(\xx'^i, j) =
\tilde{u}_{\gg''}^F(\xx'^i, j),
\]
where the last equality holds since $\tilde{u}_{\gg}^F(\cdot, j)$ does not depend on $\gg$ by construction.
This means that $\tilde{u}_{\gg''}^F(\xx, n) > \tilde{u}_{\gg''}^F(\xx, j)$ for $\xx$ in a neighborhood $\calN$ of ${\xx'}^i$.
Since ${x_i'}^i > 0$, ${\xx'}^i$ must lie in the relative interior of $\Gamma_i$.  So $\calN \cap \Gamma_i$ must contain a point $\xx$ such that $x_i < {x_i'}^i$.
We then establish the following contradictory transitions:
\[
\min_{\yy \in \Gamma_i \cap f_{\gg'}^{-1}(n)} y_i 
\le x_i < {x_i'}^i = 
\min_{\yy \in \Gamma_i \cap f_{\gg'}^{-1}(n)} y_i.
\]
This completes the proof.
\end{proof}

Hereafter, we let $\gg^* = (g_1, \dots, g_{i-1}, g^*, g_{i+1}, g_{m_1 -1})$, and let  ${\xx^*}^k$ be the critical points defined with respect to $\gg^*$ (i.e., \eqref{eq:xx-i}).

\begin{lemma}
\label{lmm:uL-xxstar-n-hatu}
$u^L({\xx^*}^i, n) = V$, and thus ${g^*} > g_i$.
\end{lemma}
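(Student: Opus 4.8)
The plan is to prove the two claims separately: first that $g^*$ (as defined in \eqref{eq:g-star}) gives $u^L({\xx^*}^i,n)=V$, and then to deduce $g^*>g_i$ from \Cref{lmm:uL-increase}.

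For the first claim, recall from \Cref{lmm:lp:xi-bar-feasible} that LP~\eqref{lp:xi-bar} is feasible and bounded, with an optimal solution $(u^*,\yy^*)$ satisfying $\yy^* \in \Gamma_i$, $u^L(\yy^*,n)=V$, $y^*_i>0$, and $u^* = \max_{j\in[n-1]}\tilde{u}^F_\gg(\yy^*,j)$. The key observation is that $u^*$ is precisely the threshold value of $\tilde{u}^F_{\gg}(\cdot, n)$ at $\yy^*$ that we would need in order to make $n$ a best response at $\yy^*$ with slack exactly saturated against the competing actions $[n-1]$. Now I would plug $g^*$ into the formula for $\tilde{u}^F_{\gg^*}(\yy^*, n)$: since $\yy^* \in \Gamma_i$ means $y^*_k=0$ for all $k\in[m_1-1]\setminus\{i\}$ and $\sum_{k=m_1}^m y^*_k = 1-y^*_i$, we get
\[
\tilde{u}^F_{\gg^*}(\yy^*, n) = g^* \cdot y^*_i + W\cdot(1-y^*_i) = u^*,
\]
by the very definition \eqref{eq:g-star} of $g^*$. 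So at $\yy^*$, action $n$ ties with the best of $[n-1]$, meaning $\yy^* \in \Gamma_i \cap f_{\gg^*}^{-1}(n)$. Moreover, by the minimality built into LP~\eqref{lp:xi-bar} (it minimizes $u$, equivalently minimizes $\max_{j\in[n-1]}\tilde{u}^F_\gg(\yy,j)$ over $\yy\in\Gamma_i$ with $u^L(\yy,n)=V$), $\yy^*$ should in fact be the point of $\Gamma_i \cap f_{\gg^*}^{-1}(n)$ achieving the smallest value of $y_i$ — I need to argue this carefully using \Cref{lmm:minxi-maxuL}, which says $u^L(\cdot,n)$ restricted to $\Gamma_i$ is strictly decreasing in $x_i$, so among points of $\Gamma_i$ lying in $f_{\gg^*}^{-1}(n)$, minimizing $x_i$ is the same as maximizing $u^L(\cdot,n)$, and the constraint that $n$ remains a best response caps how small $x_i$ can get. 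Hence ${\xx^*}^i = \yy^*$ (up to the choice freedom in \eqref{eq:xx-i}, which is irrelevant since $x_i^i$ and $u^L(\xx^i,n)$ are determined), giving $u^L({\xx^*}^i, n) = u^L(\yy^*,n) = V$.

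For the second claim, $g^* > g_i$: by assumption $i \notin I_\gg$, so $(\xx^i, n)$ is not an SSE of $(u^L,\tilde{u}^F_\gg)$, which by \Cref{lemma:boundary-point} and the definition of $V$ means $u^L(\xx^i, n) < V$. We have just shown $u^L({\xx^*}^i, n) = V > u^L(\xx^i,n)$, where $\xx^i$ is the critical point for $\gg$ and ${\xx^*}^i$ for $\gg^*$ (which differs from $\gg$ only in coordinate $i$). By \Cref{lmm:uL-increase}, $u^L(\xx^i,n)$ is strictly increasing in $g_i$, so a strict increase in the value $u^L(\xx^i,n)$ forces a strict increase in $g_i$, i.e., $g^* > g_i$.

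The main obstacle I anticipate is the identification ${\xx^*}^i = \yy^*$ — more precisely, showing that the $\yy^*$ produced by LP~\eqref{lp:xi-bar} is exactly a point realizing $\argmin_{\xx\in\Gamma_i\cap f^{-1}_{\gg^*}(n)} x_i$. One direction (that $\yy^*$ lies in $\Gamma_i \cap f^{-1}_{\gg^*}(n)$) is the computation above; the subtle direction is that no point of $\Gamma_i \cap f^{-1}_{\gg^*}(n)$ has strictly smaller $i$-th coordinate. This should follow because if $x_i < y^*_i$ for some such point, then $u^L(\xx,n) > u^L(\yy^*,n) = V$ by \Cref{lmm:minxi-maxuL}, contradicting that $V$ is the leader's SSE payoff (the maximum attainable); but I need to make sure changing $g_i$ to $g^*$ does not change $V$ (it does not, since $V$ is attained at some $(\xx^k,n)$ with $k\in I_\gg$, $k\neq i$, and $\tilde{u}^F_{\gg}(\cdot,j)$ for $j\neq n$ plus the region $\Gamma_k$ are unaffected) — this monotonicity-of-$V$ bookkeeping is where I expect to spend the most care.
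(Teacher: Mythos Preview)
Your argument for the inequality $u^L({\xx^*}^i,n) \ge V$ and for the deduction $g^*>g_i$ via \Cref{lmm:uL-increase} is fine and matches the paper. The gap is in the reverse inequality $u^L({\xx^*}^i,n) \le V$.

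Your proposed contradiction runs as follows: if some $\xx\in\Gamma_i\cap f_{\gg^*}^{-1}(n)$ had $x_i<y_i^*$, then $u^L(\xx,n)>V$, ``contradicting that $V$ is the leader's SSE payoff.'' But $V$ is the SSE payoff in $(u^L,\tilde u^F_{\gg})$, not in $(u^L,\tilde u^F_{\gg^*})$, and membership $\xx\in f_{\gg^*}^{-1}(n)$ tells you nothing about $f_\gg^{-1}(n)$. You then try to patch this by arguing that the SSE payoff is unchanged when $g_i$ is replaced by $g^*$. Your reasoning---that $V$ is still attained at some $(\xx^k,n)$ with $k\in I_\gg\setminus\{i\}$---only shows the new SSE payoff is \emph{at least} $V$. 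For it to equal $V$ you would need $\max_{\xx\in f_{\gg^*}^{-1}(n)}u^L(\xx,n)\le V$, which by \Cref{lemma:boundary-point} reduces to $u^L({\xx^*}^i,n)\le V$---precisely the claim you are trying to establish. The argument is circular, and no amount of ``monotonicity-of-$V$ bookkeeping'' will break the loop.

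The paper avoids this circularity by going back to the LP itself. Assuming $u^L({\xx^*}^i,n)>V$, one takes a convex combination $\zz=\lambda\,{\xx^*}^i+(1-\lambda)\,\ee^i\in\Gamma_i$ with $z_i=y_i^*$ (so $u^L(\zz,n)=V$), and checks that $\tilde u^F_{\gg^*}(\zz,j)<u^*$ strictly for all $j\in[n-1]$, using the strict inequality $\tilde u^F_{\gg^*}(\ee^i,n)>\tilde u^F_{\gg^*}(\ee^i,j)$ at the vertex $\ee^i$. Since $\tilde u^F_\gg(\cdot,j)$ does not depend on $g_i$ for $j\in[n-1]$, this yields a feasible pair $(\zz,u)$ for LP~\eqref{lp:xi-bar} with $u<u^*$, contradicting optimality. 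That is the missing idea: contradict the LP minimality directly rather than appealing to an SSE payoff that you have not yet controlled.
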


\begin{proof}
According to \Cref{lmm:uL-increase}, since $u^L(\xx^i,n)$ increases strictly with $g_i$, if $u^L({\xx^*}^i,n) = V > u^L({\xx}^i,n)$, then $g^* > g_i$. 

Since $\yy^* \in \Gamma_i$ (by \eqref{lp:xi-bar-1}), we have 
\[
\tilde{u}_{\gg^*}^F(\yy^*, n) = y_i^* \cdot g^* + (1 - y_i^*) \cdot W = u^*.
\]
Note that $\tilde{u}_{\gg'}^F(\cdot, j)$ is 
not dependent on $\gg'$ for all $j \in [n-1]$. Hence, since $\yy^*$ and $u^*$ satisfy \eqref{lp:xi-bar-2}, we have 
\[
\tilde{u}_{\gg^*}^F(\yy^*, j) \le u^*, \text{ for all }j \in [n-1].
\]
As a result, $\tilde{u}_{\gg^*}^F(\yy^*, n) \ge \tilde{u}_{\gg^*}^F(\yy^*, j)$ for all $j \in [n]$, which means that $\yy^* \in f_{\gg^*}^{-1}(n)$.
Using \Cref{lmm:minxi-maxuL}, we have $u^L({\xx^*}^i, n) \ge u^L(\yy^*, n) = V$, where $u^L(\yy^*, n) = V$ follows by \eqref{lp:xi-bar-3}.

To see that $u^L({\xx^*}^i, n) \le V$, suppose for the sake of contradiction that $u^L({\xx^*}^i, n) > V$. 
Since $u^L(\yy^*, n) = V$, using Lemma~\ref{lmm:minxi-maxuL} again, we have that ${x_i^*}^i < y_i^*$.
Consider a point 
\[
\zz = \lambda \cdot {\xx^*}^i + (1 - \lambda) \cdot \ee^i,
\]
where $\lambda \in [0,1]$, and $\ee^i = (e_1^i, \dots, e_m^i)$ such that $e_i^i = 1$ and $e_k^i = 0$ for all $k \neq i$.
Since both ${\xx^*}^i$ and $\ee^i$ are in $\Gamma_i$, we have $\zz \in \Gamma_i$.
Moreover, since $e_i^i = 1 \ge y_i^*$, there exists $\lambda \in [0,1)$ with which we have $z_i = y_i^*$, and in turn 
\begin{equation}
\label{eq:tu-zz-ustar}
\tilde{u}_{\gg^*}^F(\zz, n) = \tilde{u}_{\gg^*}^F(\yy^*, n) = u^*,
\end{equation}
and 
\begin{equation}
\label{eq:uL-zz-hatu}
u^L(\zz, n) = u^L(\yy^*, n) = V.
\end{equation}

Meanwhile, note that for all $j \in [n-1]$, we have
\[
\tilde{u}_{\gg^*}^F(\ee^i, n) = g^* > 0 = \tilde{u}_{\gg^*}^F(\ee^i, j),
\]
In addition, since ${\xx^*}^i \in f_{\gg^*}^{-1}(n)$, by definition we have
\[
\tilde{u}_{\gg^*}^F({\xx^*}^i, n) \ge \tilde{u}_{\gg^*}^F({\xx^*}^i, j).
\]
Consequently, since $\lambda < 1$, we must have
\[
\tilde{u}_{\gg^*}^F(\zz, n) > \tilde{u}_{\gg^*}^F(\zz, j).
\]
Plugging into \eqref{eq:tu-zz-ustar} gives $\tilde{u}_{\gg^*}^F(\zz, j) < u^*$ for all $j \in [n-1]$.
Pick 
$u \in [\max_{j \in [n-1]} \tilde{u}_{\gg^*}^F(\zz, j), u^*)$.
The fact that $\zz \in \Gamma_i$ and $u^L(\zz, n) = V$ (i.e., \eqref{eq:uL-zz-hatu}) we showed above implies that $u$ and $\zz$ form a feasible solution to LP~\eqref{lp:xi-bar}. The objective value of this solution, i.e., $u$, is strictly smaller than $u^*$, which is a contradiction.
This completes the proof of \Cref{lmm:uL-xxstar-n-hatu}.
\end{proof}

\begin{lemma}
\label{lmm:Ixxip-Ixxi}
$I_{\gg'} = I_\gg$ if $g' \in [g_i, {g^*})$, $I_{\gg'} = \{i\}$ if $g' > {g^*}$, and $I_{\gg'} = I_\gg \cup \{i\}$ if $g' = {g^*}$.
\end{lemma}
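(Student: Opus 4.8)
The plan is to reduce the entire statement to the behavior of the single scalar $g'\mapsto u^L({\xx'}^i,n)$, which is already controlled by \Cref{lmm:uL-increase,lmm:uL-xxstar-n-hatu}. Throughout, write ${\xx'}^k$ ($k\in[m_1-1]$) for the critical points of \eqref{eq:xx-i} taken with respect to $\gg'=(g_1,\dots,g_{i-1},g',g_{i+1},\dots,g_{m_1-1})$, and $V_{\gg'}$ for the leader's SSE payoff in $(u^L,\tilde{u}^F_{\gg'})$. Since $g'\ge g_i>0$, each ${\xx'}^k$ is well-defined and lies in $\Gamma_k\cap f_{\gg'}^{-1}(n)$ by \Cref{lmm:xi-gt-0}, so $n\in f_{\gg'}({\xx'}^k)$, hence $u^L({\xx'}^k,n)\le V_{\gg'}$ and the profile $({\xx'}^k,n)$ is an SSE of $(u^L,\tilde{u}^F_{\gg'})$ if and only if $u^L({\xx'}^k,n)=V_{\gg'}$. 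Thus $I_{\gg'}=\{k\in[m_1-1]:u^L({\xx'}^k,n)=V_{\gg'}\}$, and likewise $I_{\gg}=\{k:u^L(\xx^k,n)=V_{\gg}\}$.

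First I would show that changing $g_i$ to $g'$ affects neither the critical points nor the leader payoffs in coordinates $k\ne i$. On $\Gamma_k$ every strategy has $x_i=0$, so the $g_i$-term of $\tilde{u}^F_{\gg}(\cdot,n)$ vanishes there, and $\tilde{u}^F_{\gg}(\cdot,j)$ with $j\ne n$ does not depend on $\gg$ at all by \eqref{eq:tuF-g}; therefore $\Gamma_k\cap f_{\gg'}^{-1}(n)=\Gamma_k\cap f_{\gg}^{-1}(n)$. Since $u^L(\cdot,n)$ restricted to $\Gamma_k$ is a strictly decreasing affine function of $x_k$ (\Cref{lmm:minxi-maxuL}), minimizing $x_k$ over this fixed set gives $u^L({\xx'}^k,n)=u^L(\xx^k,n)$ and $\xx^k\in f_{\gg'}^{-1}(n)$ for every $k\ne i$. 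Put $V_0:=\max_{k\in[m_1-1]\setminus\{i\}}u^L(\xx^k,n)$, a quantity independent of $g'$.

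Next I would pin down both SSE payoffs in terms of $V_0$ and $u^L({\xx'}^i,n)$. Because $I_{\gg}\ne\emptyset$ and $i\notin I_{\gg}$ (the standing hypotheses of \Cref{lmm:Ig-nonempty}), we have $V_{\gg}=u^L(\xx^\ell,n)$ for any $\ell\in I_{\gg}$; combining this with \Cref{lemma:boundary-point} (which gives $\max_{\xx\in f_{\gg}^{-1}(n)}u^L(\xx,n)=\max_k u^L(\xx^k,n)$) and with $u^L(\xx^i,n)<V_{\gg}$ forces $V_{\gg}=V_0$. For $\gg'$: since $g'\ge g_i$ we have $\tilde{u}^F_{\gg'}(\xx,n)\ge\tilde{u}^F_{\gg}(\xx,n)$ for every $\xx$ while $\tilde{u}^F_{\gg'}(\xx,j)=\tilde{u}^F_{\gg}(\xx,j)$ for $j\ne n$, so $f_{\gg'}^{-1}(n)\supseteq f_{\gg}^{-1}(n)$ and $f_{\gg'}^{-1}(j)\subseteq f_{\gg}^{-1}(j)$ for $j\ne n$; hence the best leader payoff obtainable from a response $j\ne n$ is at most as large under $\gg'$ as under $\gg$, and the latter is at most $V_{\gg}=V_0$. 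Applying \Cref{lemma:boundary-point} to $\gg'$ and using the coordinate-$k\ne i$ invariance then yields $V_{\gg'}=\max\bigl(V_0,\,u^L({\xx'}^i,n)\bigr)$.

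Finally I would assemble the case analysis. By \Cref{lmm:uL-increase} the map $g'\mapsto u^L({\xx'}^i,n)$ is strictly increasing, and by \Cref{lmm:uL-xxstar-n-hatu} it equals $V_{\gg}=V_0$ exactly at $g'={g^*}$, with ${g^*}>g_i$; so $u^L({\xx'}^i,n)<V_0$ iff $g'\in[g_i,{g^*})$, $=V_0$ iff $g'={g^*}$, and $>V_0$ iff $g'>{g^*}$. For $k\ne i$, $k\in I_{\gg'}$ iff $u^L(\xx^k,n)=V_{\gg'}$: when $V_{\gg'}=V_0$ this is exactly the condition $k\in I_{\gg}$, and when $V_{\gg'}>V_0$ it is impossible. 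For $k=i$, $i\in I_{\gg'}$ iff $u^L({\xx'}^i,n)=V_{\gg'}$, i.e.\ iff $u^L({\xx'}^i,n)\ge V_0$, i.e.\ iff $g'\ge{g^*}$. Combining, $I_{\gg'}=I_{\gg}$ for $g'\in[g_i,{g^*})$, $I_{\gg'}=I_{\gg}\cup\{i\}$ for $g'={g^*}$, and $I_{\gg'}=\{i\}$ for $g'>{g^*}$. The step I expect to be the main obstacle is establishing $V_{\gg'}=\max(V_0,u^L({\xx'}^i,n))$, that is, ruling out that raising $g_i$ makes some follower response $j\ne n$ the leader's profitable SSE response. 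This rests on both the monotonicity of the best-response regions in $g_i$ and the identity $V_{\gg}=V_0$, which is only available because $i\notin I_{\gg}$; the rest is bookkeeping on top of \Cref{lemma:boundary-point,lmm:uL-increase,lmm:uL-xxstar-n-hatu,lmm:minxi-maxuL}.
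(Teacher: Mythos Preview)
Your proposal is correct and follows essentially the same approach as the paper: you establish the same two key facts (invariance of $u^L({\xx'}^k,n)$ for $k\neq i$ and monotone shrinkage of $f_{\gg'}^{-1}(j)$ for $j\neq n$), then combine \Cref{lemma:boundary-point} with \Cref{lmm:uL-increase,lmm:uL-xxstar-n-hatu} for the case split. Your explicit introduction of $V_0$ and the identity $V_{\gg'}=\max(V_0,u^L({\xx'}^i,n))$ is a slightly cleaner packaging of the paper's chain of inequalities, but the substance is identical.
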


\begin{proof}
We first argue that the following results hold for any $g' \ge g_i$.
\begin{itemize}
\item[(i)]
$\max_{\xx \in f_{\gg'}^{-1}(j)} u^L(\xx, j) \le \max_{\xx \in f_{\gg}^{-1}(j)} u^L(\xx, j)$, for all $j \in [n-1]$; and 
\item[(ii)]
$u^L({\xx'}^k,n) = u^L(\xx^k,n)$, for all $k \in [m_1 - 1] \setminus \{i\}$.
\end{itemize}
In words, (i) says that the leader's payoffs for inducing a best response $j \in [n-1]$ does not increase with $g_i$;
(ii) says that the leader's payoffs for $(\xx^k,n)$, $k \neq i$, does not change with $g_i$.

To see (i), 
note that by construction, $\tilde{u}_{\gg}^F(\xx, j)$ does not depend on $g_i$ for all $j \in [n-1]$.
Moreover, now that $g' \ge g_i$, $\tilde{u}_{\gg'}^F(\xx, n) \ge \tilde{u}_{\gg}^F(\xx, n)$ for all $\xx \in \Delta_m$. 
Therefore, any leader strategy $\xx$ that does not induce $j$ under $\tilde{u}_{\gg}^F$ does not induce $j$ under $\tilde{u}_{\gg'}^F$, either. We have $f_{\gg'}^{-1}(j) \subseteq f_\gg^{-1}(j)$, so (i) follows immediately.

To see (ii), 
note that $\tilde{u}_\gg^F$ does not depend on $g_i$ in the space $\Gamma_k$, $k \neq i$. 
Hence, $\Gamma_k \cap f_{\gg'}^{-1}(n) = \Gamma_k \cap f_\gg^{-1}(n)$. 
By Lemma~\ref{lmm:minxi-maxuL}, the statement then follows.

We proceed with the proof.
By assumption, $I_{\gg} \neq \emptyset$ and we can assume that $1 \in I_{\gg}$.
Hence, $u^L(\xx^1, n) = V$. 
The statements above then imply that 
\begin{align*}
\max_{j \in [n-1]} \max_{\xx \in f_{\gg'}^{-1}(j)} u^L(\xx, j) 
&\le \max_{j \in [n-1]} \max_{\xx \in f_{\gg}^{-1}(j)} u^L(\xx, j) \\  
&\le V \\
&= u^L(\xx^1, n) 
= u^L({\xx'}^1, n)
\le \max_{\xx \in f_{\gg'}^{-1}(n)} u^L(\xx, n),
\end{align*}
where $\max_{j \in [n-1]} \max_{\xx \in f_{\gg}^{-1}(j)} u^L(\xx, j) \le V$ follows by the definition of the SSE, and $u^L({\xx'}^1, n) \le \max_{\xx \in f_{\gg'}^{-1}(n)} u^L(\xx, n)$ since $\xx'^{-1} \in f_{\gg'}^{-1}(n)$ by definition.
This further means that 
\[
\max_{j \in [n]} \ \max_{\xx \in f_{\gg'}^{-1}(j)} u^L(\xx, j) =
\max_{\xx \in f_{\gg'}^{-1}(n)} u^L(\xx, n),
\]
and applying \Cref{lemma:boundary-point}, we get that 
\[
\max_{k \in [m_1 - 1]} u^L({\xx'}^k,n) = \max_{j \in [n]} \ \max_{\xx \in f_{\gg'}^{-1}(j)} u^L(\xx, j).
\]

Consequently,
\begin{itemize}
\item when $g' \in [g_i, g^*)$, we have $\max_{k \in [m_1 - 1]} u^L({\xx'}^k,n) = V > u^L({\xx'}^i,n)$;
\item when $g' = g^*$, we have $\max_{k \in [m_1 - 1]} u^L({\xx'}^k,n) = V = u^L({\xx'}^i,n)$; and 
\item when $g' > g^*$, we have $\max_{k \in [m_1 - 1]} u^L({\xx'}^k,n) = u^L({\xx'}^i,n) > V$.
\end{itemize}
\Cref{lmm:Ixxip-Ixxi} then follows.
\end{proof}

\subsection{Proof of \Cref{lmm:find-cover-mu}}

\lmmfindcovermu*

\begin{proof}
To decide whether $\mu$ is a cover of $S$, we check the satisfiability of the following linear constraints (by assumption of \Cref{thm:find-cover-given-aj}, $\calM_S$ is given as a set of linear constraints): 
\begin{align*}
& \xx \in \calM_S \\
& \mu(\xx,j) \leq 0  && \text{ for all } j \in [n]
\end{align*}
which can be done in polynomial time.
Specifically: 
\begin{itemize}
\item 
If the constraints are satisfiable, then there exists $\yy \in \calM_S$ such that 
$\mu(\yy, j) \le 0 = \mu(\yy, k)$ for all $j \in [n], k \in S$; hence, $k \in h(\yy)$, where $h$ denotes the BR-correspondence of $\mu$.
We have
\[
\max_{\xx \in \calM_S, j \in h(\xx)} u^L(\xx, j) \ge u^L(\yy, k) \ge \min_{j \in S} u^L(\xx, j) = M_S.
\]
By definition, this means that $\mu$ is {\em not} a cover of $S$.

\item 
Conversely, suppose that the constraints are not satisfiable.
Pick arbitrary $\xx \in \calM_S$ and $j \in h(\xx)$. Hence, $\mu(\xx, j) > 0$, and according to the definition of $\mu$, it must be that $j \notin S$.
We have $\mu(\xx, j)= \vb_j \cdot \xx - c_j > 0$, which implies that $u^L(\xx, j) < M_S$ by \eqref{eq:bbj-cj}. 
Since the choice of $\xx$ and $k$ is arbitrary, we then have
\[
\max_{\xx \in \calM_S, j \in h(\xx)} u^L(\xx, j) < M_S,
\]
so $\mu$ is a cover of $S$.
\end{itemize}

Next, consider the second part of the statement.
Suppose that $S$ admits a cover. We show that $\mu$ must be a cover of it.
According to the necessary condition demonstrate in \Cref{lmm:cover-iff}, we have $M_S > M_{[n]}$.
Hence, for any $\xx \in \calM_S$, $\min_{j \in [n]} u^L(\xx,j) \le M_{[n]} < M_S$, which means that there exists $j \in [n]\setminus S$ such that $u^L(\xx,j) < M_S$. 
By \eqref{eq:bbj-cj}, we then have $\vb_j \cdot \xx > c_j$, which means
\[
\mu(\xx, j) = \vb_j \cdot \xx - c_j > 0.
\]
Namely, the linear constraints we presented above are not satisfiable.
Therefore, $\mu$ is a cover of $S$.
\end{proof}

\subsection{Proof of \Cref{lmm:find-cover-bbj-cj}}
\label{sec:proof-of-lmm-find-cover-bbj-cj}
\lmmfindcoverbbjcj*

\begin{proof}
Without loss of generality, we show how to learn $\vb_n$ and $c_n$, and we assume that $n \notin S$.
We first define the following parameters:
\begin{align*}
\check{d}\coloneqq \min_{\xx \in \calM_S}\va_j\cdot\xx, 
\quad
\hat{d} \coloneqq \max_{\xx \in \calM_S}\va_j\cdot\xx, 
\quad
\text{and }\quad
d^*\coloneqq (M_{S}-\beta_n)/\gamma_n. 
\end{align*}
Then specify $\vb_n$ and $c_n$ as follows:
\begin{itemize}
\item If $n \notin Q$, we let $\vb_n = \va_n$ and $c_n = 
\begin{cases}
\check{d} - 1,  & \text{ if } d^* < \check{d}; \\
d^*, & \text{ if } d^* \in [\check{d}, \hat{d}]; \\
\hat{d}, & \text{ if } d^* > \hat{d}.
\end{cases}$. 
(By assumption of \Cref{thm:find-cover-given-aj}, $\va_n$ is given if $n \notin S \cup Q$.) 

\item If $n \in Q$, we let 
$b_{n,i} = \begin{cases}
1, & \text{if } u^L(i,n) = M_{\{n\}} \\
0, & \text{otherwise}
\end{cases}$, and $c_n = \check{d}$.
\end{itemize}

Note that $\check{d}$ and $\hat{d}$ can be computed directly given that $\calM_S$ is known (by assumption of \Cref{thm:find-cover-given-aj}). 
Nevertheless, $d^*$ cannot be computed directly since $\beta_n$, $\gamma_n$, and $M_S$ are unknown. To complete the proof, we show how to learn $d^*$ next.

Define the following payoff function parameterized by a number $d$.
\begin{boxedtext}
For every $\xx \in \Delta_m$,
\begin{equation}
\label{eq:tilde-uF-find-cover}
    \tilde{u}^F_d(\xx,j) \coloneqq \begin{cases}
    \tilde{u}^F(\xx, j), & \text{if } j \in [n-1];\\
    \tilde{u}^F(\xx, k) + (d- \vb_n \cdot\xx), & \text{if } j = n; 
    \end{cases}
\end{equation}
where we use arbitrary $k \in \argmax_{\ell \in [n-1]} \tilde{u}^F(\zz, \ell)$, defined with arbitrarily selected
\[
\zz \in Z(d)\coloneqq \left\{\xx \in \calM_S: \vb_n \cdot \xx = d \right\}.
\]
$\zz$ is well-defined if $d \in [\check{d}, \hat{d}]$.
\end{boxedtext}
We argue that $n$ is an SSE response of the game $(u^L, \tilde{u}_d^F)$ if and only if $d \ge d^*$, so that we can use binary search to find out $d^*$ (or find out if $d^* < \check{d}$ or $d^* \ge \hat{d}$, in which case we only need $\check{d}$ and $\hat{d}$ to compute $\vb_n$ and $c_n$ as defined above).

Denote the BR-correspondences of $\tilde{u}^F_d$ and $\tilde{u}^F$ as $f_d$ and $f$, respectively.
Note the following facts:
\begin{enumerate}[label=(\alph*)]
\item $f_d(\zz) = f(\zz) \cup \{n\}$.

\item $f_d^{-1}(j) \subseteq f^{-1}(j)$ for all $j \in [n-1]$.

\item $\max_{\xx \in f_d^{-1}(n)} \vb_n \cdot \xx = d$ if $d \in [\check{d}, \hat{d}]$.

\item $u^L(\xx, j) = M_S$ for all $\xx \in \calM_S$ and $j \in f(\xx)$.
\end{enumerate}

Indeed, (a) can be verified by comparing $\tilde{u}_d^F(\zz, j)$, $j\in[n-1]$, with $\tilde{u}^F(\zz, n)$.
(b) is due to the fact that $f^{-1}(n) = \emptyset$ according to the property of $\tilde{u}^F$ as a base function.

To see (c), note that $n \in f_d(\zz)$ implies that 
\[
\max_{\xx \in f_d^{-1}(n)} \vb_n \cdot \xx \ge \vb_n \cdot \zz = d.
\]
Moreover, for all $\xx \in f_d^{-1}(n)$, 
\[
\tilde{u}^F_d(\xx, n) 
\geq \max_{j \in [n-1]} \tilde{u}^F_d(\xx,j) 
= \max_{j \in [n-1]} \tilde{u}^F(\xx,j)
\ge \tilde{u}^F(\xx,k),
\]
which implies $\vb_n \cdot\xx \leq d$ according to \eqref{eq:tilde-uF-find-cover}. Hence, $\max_{\xx \in f_d^{-1}} \vb_n \cdot \xx = d$.

Finally, if (d) did not hold, then $u^L(\xx,j)\ne M_S$ for some $\xx \in \calM_S$ and $j \in f(\xx)$. Since $\tilde{u}^F$ is a base function, by definition, $j \in f(\xx) \subseteq S$. Moreover, $\xx \in \calM_S$ means that $\min_{k \in S} u^L(\xx,k) = M_S$, so we must have $u^L(\xx,j) > M_S$, which contradicts the definition of a base function. 

\smallskip

For each $j \in [n]$, define
\[
V_d^L(j) \coloneqq \max_{\xx \in f_d^{-1}(j)} u^L(\xx,j),
\]
which is the leader's maximum attainable payoff for inducing a follower response $j$. 
According to (b), for all $j \in [n-1]$, we have 
\[
V_d^L(j) \le \max_{j'\in [n]} \max_{\xx \in f^{-1}(j')} u^L(\xx, j') = M_S,
\]
where the second transition follows by the property of $\tilde{u}^F$ as a base function. 
Moreover, pick arbitrary $j \in f(\zz)$; by (a), (d), and the fact that $\zz \in \calM_S$, we get that 
\[
\max_{j' \in [n-1]} V_d^L(j') \ge u^L(\xx, j) = M_S.
\]

Therefore, $\max_{j' \in [n-1]} V_d^L(j') = M_S$.
By (c), we also have $\coloneqq V_{d}^L(n) = \gamma_n \cdot d + \beta_n$ if $d \in [\check{d},\hat{d}]$.
It follows that 
\[
\Phi(d) \coloneqq V_{d}^L(n) - \max_{j \in [n-1]} V_{d}^L(j) = \gamma_n \cdot d + \beta_n - M_S
\]
is continuous and strictly increasing with respect to $d$.
We can then use binary search and oracle $\ASSE$ to pin down $d^*$ (or decide if $d^* < \check{d}$ or  $d^* > \hat{d}$):
$n$ is an SSE response of $(u^L, \tilde{u}_d^F)$ if and only if $d \geq d^*$;
meanwhile, there exists an SSE response $j \in [n-1]$ if and only if $d \le d^*$.
\end{proof}

\section{Omitted Proofs in \NoCaseChange{\Cref{sec:learn-j-xxs}}}

\subsection{Proof of \Cref{lmm:tilde-uL-J}}

\lmmtildeuLJ*

\begin{proof}
Note that for all $j \in [n]$, we have
\begin{align}
& u^L(\xx^*, j) = M_{[n]} \quad \Longrightarrow \quad j \in \widehat{J}, \label{eq:lmm-J-xstar-eq1} \\
\text{ and } \quad
& \tilde{u}^L(\xx^*, j) = \widetilde{M}_{[n]} \quad \Longrightarrow \quad j \in \widehat{J}.
\label{eq:lmm-J-xstar-eq2}
\end{align}
Specifically, \eqref{eq:lmm-J-xstar-eq1} 
follows by the definition of $\widehat{J}$: if $j \notin \widehat{J}$, then by definition we have $M_{\{1\}} \le \min_{\xx \in \Delta_m} u^L(\xx, j)$; hence,
\begin{align*}
M_{[n]}
< M_{\{1\}} 
\le \min_{\xx \in \Delta_m} u^L(\xx, j)
\le u^L(\xx^*, j)
\end{align*}
(where $M_{[n]} < M_{\{1\}}$ is due to \Cref{asn:aj-Mj}).
Moreover, if $j \notin \widehat{J}$, according to \eqref{eq:tilde-u-L}, we have 
\[
\tilde{u}^L(\xx^*, j) 
= W 
> \max_{\xx \in \Delta_m} \min_{k \in [n]} \tilde{u}^L(\xx, k) 
= \widetilde{M}_{[n]}.
\]
Hence, \eqref{eq:lmm-J-xstar-eq2} holds.

For all $j \in \widehat{J}$, we have
\[
\tilde{u}^L(\xx, j) 
= \frac{\gamma_j}{\gamma_1} \cdot \va_j \cdot \xx + \frac{\beta_j - \beta_1}{\gamma_1}
= u^L(\xx, j)/\gamma_1 - \beta_1/\gamma_1,
\]
which means that the following two statements are equivalent:
\begin{align}
& u^L(\xx^*, j) = \max_{\xx \in \Delta_m} \min_{k \in J} u^L(\xx, k) = \max_{\xx \in \Delta_m} \min_{k \in [n]} u^L(\xx, k), \label{eq:lmm-J-xstar-eq3}\\
\text{ and }\quad
& 
\tilde{u}^L(\xx^*, j) = \max_{\xx \in \Delta_m} \min_{k \in J} \tilde{u}^L(\xx, k) = \max_{\xx \in \Delta_m} \min_{k \in [n]} \tilde{u}^L(\xx, k).
\label{eq:lmm-J-xstar-eq4}
\end{align}
That is, for all $j \in \widehat{J}$, \eqref{eq:lmm-J-xstar-eq3} $\Longleftrightarrow$ \eqref{eq:lmm-J-xstar-eq4}.
Using this equivalence and \eqref{eq:lmm-J-xstar-eq1} and \eqref{eq:lmm-J-xstar-eq2}, 
we establish the following transitions to complete the proof:
\begin{align*}
& \forall j \in J: u^L(\xx^*,j) = M_J = M_{[n]} \\
\quad \Longleftrightarrow \quad 
& 
J \subseteq \widehat{J},
\text{ and }
\forall j \in J:
u^L(\xx^*, j) = \max_{\xx \in \Delta_m} \min_{k \in J} u^L(\xx, k) = \max_{\xx \in \Delta_m} \min_{k \in [n]} u^L(\xx, k)
\\
\quad \Longleftrightarrow \quad 
& 
J \subseteq \widehat{J},
\text{ and }
\forall j \in J:
\tilde{u}^L(\xx^*, j) = \max_{\xx \in \Delta_m} \min_{k \in J} \tilde{u}^L(\xx, k) = \max_{\xx \in \Delta_m} \min_{k \in [n]} \tilde{u}^L(\xx, k)
\\
\quad \Longleftrightarrow \quad 
& \forall j \in J: \tilde{u}^L(\xx^*,j) = \widetilde{M}_J = \widetilde{M}_{[n]}. 
&&\qedhere
\end{align*}
\end{proof}

\section{Omitted Proofs in \NoCaseChange{\Cref{sec:first-reference-pair}}}

\subsection{An Useful Observation Used in All proofs}

In all the proofs in \Cref{sec:first-reference-pair}, we make the following key assumption to ease our presentation. 

\begin{lemma}
\label{asn:2-non-dominiated}
Without loss of generality, we can assume that 
there exists $i \in I_2$ such that $u^L(i, 2) > u^L(i, 1)$. (Recall that $I_j \coloneqq \argmax_{i' \in [m]} u^L(i', j)$ for all $j \in [n]$.) 
\end{lemma}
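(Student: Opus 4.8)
The plan is to prove the claim by disposing of the opposite case directly. Suppose that no $i \in I_2$ satisfies $u^L(i,2) > u^L(i,1)$, i.e.\ $u^L(i,2) \le u^L(i,1)$ for every $i \in I_2$. Since $u^L(i,2) = M_{\{2\}}$ for all $i \in I_2$, this yields $M_{\{2\}} \le u^L(i,1) \le M_{\{1\}}$, which combined with $M_{\{1\}} \le M_{\{2\}}$ (\Cref{asn:1-min-M}) forces $M_{\{1\}} = M_{\{2\}}$ and $I_2 \subseteq I_1$. Taking any $i \in I_2 \subseteq I_1$, for which $u^L(i,1) = u^L(i,2) = M_{\{1\}}$, then also gives $M_\ots = M_{\{1\}} = M_{\{2\}}$, and moreover $\calM_\ots = \calM_{\{1\}} \cap \calM_{\{2\}} = \calM_{\{2\}}$, the last equality because $\calM_{\{2\}} = \conv(I_2) \subseteq \conv(I_1) = \calM_{\{1\}}$.

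In this degenerate situation I would produce directly the data that \Cref{sec:first-reference-pair} is devoted to computing, so that \Cref{thm:compute-d-star} (and what it supplies to \Cref{sec:second-ref}) becomes available for action $2$ without using the assumption being justified. First, a reference pair satisfying \eqref{eq:first-ref-pair}: pick any $\xx_0 \in \calM_{\{1\}} \cap \calM_{\{2\}} = \calM_{\{2\}}$, computable via \Cref{asn:M-j}; then $u^L(\xx_0,1) = M_{\{1\}} = M_{\{2\}} = u^L(\xx_0,2) = M_\ots$, so $(\xx_0,\xx_0)$ is a first reference pair. Second, the threshold $d^*$ of \eqref{eq:d-star}: since $M_\ots = M_{\{2\}} = \gamma_2\,(\max_{\xx\in\Delta_m}\va_2\cdot\xx) + \beta_2$, we get $d^* = (M_\ots - \beta_2)/\gamma_2 = \max_{\xx\in\Delta_m}\va_2\cdot\xx$, which is computable from the known $\va_2$ by a linear program; likewise the thresholds $d_1^*, d_2^*$ of \eqref{eq:dj-star} equal $\max_\xx\va_1\cdot\xx$ and $\max_\xx\va_2\cdot\xx$ here. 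Hence $\tilde{u}^F_{d^*}$ of \eqref{eq:tilde-uF-first-ref-pair} and the region $\calM_\ots = \calM_{\{2\}}$ are both obtained explicitly.

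Finally, I would feed these into \Cref{sec:second-ref} exactly as in the non-degenerate case to obtain the second reference pair and then solve \eqref{eq:compute-gamma-beta}; this is legitimate because a cover of $\ots$ exists there ($M_\ots = M_{\{1\}} > M_{[n]}$ by \Cref{asn:aj-Mj} and \Cref{lmm:cover-iff}), and the results of \Cref{sec:second-ref} — in particular \Cref{thm:second-ref-pair-payoff-query} — do not rely on the assumption of the present lemma. Therefore, whenever a reference pair for action $2$ is needed, we may assume that some $i \in I_2$ has $u^L(i,2) > u^L(i,1)$. The part requiring care, and the main obstacle, is the bookkeeping: verifying that every input \Cref{sec:second-ref} expects from \Cref{sec:first-reference-pair} (notably that $d^*$ really is $\max_\xx\va_2\cdot\xx$ and that $\calM_\ots$ collapses to $\calM_{\{2\}}$) is genuinely available in this degenerate case, and confirming that no proof in \Cref{sec:second-ref} silently reuses the present assumption.
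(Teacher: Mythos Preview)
Your initial derivation is correct: if no $i\in I_2$ has $u^L(i,2)>u^L(i,1)$, then $M_{\{1\}}=M_{\{2\}}=M_{\{1,2\}}$, $I_2\subseteq I_1$, and $\calM_{\{1,2\}}=\calM_{\{2\}}$. Your direct computation of a first reference pair and of $d^*=\max_{\xx}\va_2\cdot\xx$ is also fine.

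The gap is the sentence ``the results of \Cref{sec:second-ref} \ldots\ do not rely on the assumption of the present lemma.'' They do. In the degenerate case you are in, $d_2^*=\max_{\xx}\va_2\cdot\xx$ and $d_1^*=\max_{\xx}\va_1\cdot\xx$; hence the condition $\va_2\cdot\xx\ge d_2^*$ forces $\xx\in\calM_{\{2\}}\subseteq\calM_{\{1\}}$, which in turn forces $\va_1\cdot\xx=d_1^*$. Consequently $P_1(d_1^*-\epsilon,\,d_2^*)=\emptyset$ for every $\epsilon>0$, so LP~\eqref{lp:eps-Pj} has optimal value $\epsilon'=0$. This is exactly Case~1 in the proof of \Cref{lmm:epsilon-p-gt-0}, and that case invokes \Cref{asn:2-non-dominiated} precisely to rule it out. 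Without the assumption, \Cref{lmm:epsilon-p-gt-0} fails, and with it \Cref{lmm:G-d-eps} and \Cref{thm:second-ref-pair-payoff-query}: there is no room at all to shrink $P_1$ while keeping it nonempty, so the construction of a second reference pair with value strictly below $M_{\{1,2\}}$ collapses.

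The paper avoids this by never entering \Cref{sec:second-ref} for such a pair $(1,2)$. Instead, it either swaps the roles of $1$ and $2$ (when some $i\in I_1$ has $u^L(i,1)>u^L(i,2)$, restoring the assumption after relabeling), or, when $I_1=I_2$, routes through a third action $j\in\widehat{J}$ with $I_j\neq I_1$: one learns $\gamma_j/\gamma_1,(\beta_j-\beta_1)/\gamma_1$ and $\gamma_j/\gamma_2,(\beta_j-\beta_2)/\gamma_2$ separately (each of those pairs does satisfy the assumption), then combines them algebraically to recover $\gamma_2/\gamma_1$ and $(\beta_2-\beta_1)/\gamma_1$. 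Your proposal needs either this detour or an independent argument for the second reference pair that does not go through \Cref{lmm:epsilon-p-gt-0}.
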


\begin{proof}
To see the rationale of this assumption, consider the case where it does not hold, i.e., 
$u^L(i, 1) \ge u^L(i, 2)$ for all $i \in I_2$.
By \Cref{asn:1-min-M}, $M_{\{2\}} \ge M_{\{1\}}$; hence, we have
\[
M_{\{1\}} \ge u^L(i, 1) \ge u^L(i, 2) = M_{\{2\}} \ge M_{\{1\}},
\]
where $u^L(i, 2) = M_{\{2\}}$ since $i \in I_2$.
This means $M_{\{1\}} = M_{\{2\}}$ and $u^L(i, 1) = u^L(i, 2)$ for all $i \in I_2$. So $I_2 \subseteq I_1$.

Hence, if there exists $i \in I_1$ such that $u^L(i, 1) > u^L(i, 2)$, we can exchange the roles of actions $1$ and $2$ so that all the assumptions made will hold (in particular, $M_{\{1\}} = M_{\{2\}}$ means that \Cref{asn:1-min-M} still holds as $2 \in \argmin_{j\in[m]} M_{\{j\}}$).
We can then proceed with the subsequent algorithm and eventually learn $\gamma_1/\gamma_2$ and $(\beta_1-\beta_2)/\gamma_2$---from which the original target quantities $\gamma_2/\gamma_1$ and $(\beta_2-\beta_1)/\gamma_1$ can be derived readily.

If otherwise $u^L(i, 1) \le u^L(i, 2)$ for all $i \in I_1$, then we have $M_{\{1\}} = u^L(i, 1) \le u^L(i, 2) \le M_{\{2\}}$, so it must be that $u^L(i, 1) = u^L(i, 2)$ for all $i \in I_1$. Given that $I_2 \subseteq I_1$ as we argued, we have $I_2 = I_1$ in this case. 
We further consider the following possibilities: 
\begin{itemize}
\item 
$I_j = I_1$ for all $j \in \widehat{J}$. 
Pick arbitrary $i \in I_1$, we then have $u^L(i, j) = M_{\{j\}} \ge M_{\{1\}}$ for all $j \in \widehat{J}$; 
moreover, according to the definition of $\widehat{J}$, $u^L(i,j) \ge M_{\{1\}}$ for all $j \notin \widehat{J}$.
We get that $\min_{j \in [n]} u^L(i, j) \ge M_{\{1\}}$ .
It follows that
$M_{[n]} = \max_{\xx \in \Delta_m} \min_{j \in [n]} u^L(\xx,j) \ge \min_{j \in [n]} u^L(i, j) \ge M_{\{1\}}$,
which contradicts \Cref{asn:aj-Mj}.

\item
$I_j \neq I_1$ for some $j \in \widehat{J}$ (and hence, $I_j \neq I_2$).
This means that we are able to learn $\gamma_j/\gamma_i$ and $(\beta_j-\beta_i)/\gamma_i$ for both $i \in \ots$ with our subsequent algorithm (i.e., by putting $j$ in place of action $2$ and $i$ in place of action $1$).
We can then derive $\gamma_2/\gamma_1$ and $(\beta_2-\beta_1)/\gamma_1$ as follows:
\begin{align*}
\gamma_2/\gamma_1 = \frac{\gamma_j/\gamma_1} {\gamma_j/\gamma_2}, 
\quad\text{ and }\quad
\frac{\beta_2-\beta_1}{\gamma_1} =
\frac{\beta_j-\beta_1}{\gamma_1} -
\frac{\beta_j-\beta_2}{\gamma_2} \cdot \frac{\gamma_2}{\gamma_1}.
&
\hfill \qedhere
\end{align*}
\end{itemize}
\end{proof}

\subsection{Proof of \Cref{lmm:fd-main}}

\lmmfdmain*

To prove \Cref{lmm:fd-main}, we define
\begin{equation}
\label{eq:V-d}
V_d^L(j) \coloneqq \max_{\xx \in f_d^{-1}(j)} u^L(\xx, j)
\end{equation}
for each $j \in [n]$, 
which is the maximum payoff the leader can obtain by inducing the follower to respond with $j$.
We also let $V_d^L(j) = -\infty$ if $f_d^{-1}(j) = \emptyset$.
By construction $f_d^{-1}(j) = \emptyset$ for all $j \in [n] \setminus \ots$, so only actions $1$ and $2$ can be SSE responses. 
We  prove that $V_d^L(1) \ge V_d^L(2)$ if and only if $d \le d^*$ and $V_d^L(1) \le V_d^L(2)$ if and only if $d \ge d^*$.
Indeed, in the case where $d = d^*$ the following lemma shows that $V_{d^*}^L(1) = V_{d^*}^L(2) = M_\ots$.

\begin{lemma}
\label{lmm:hat-V1-V2-M-12}
$V_{d^*}^L(1) = V_{d^*}^L(2) = M_\ots$.
\end{lemma}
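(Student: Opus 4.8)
The plan is to derive both equalities from a single geometric fact: the ``critical'' hyperplane $H \coloneqq \{\xx \in \Delta_m : \va_2 \cdot \xx = d^*\}$ is a nonempty subset of $\Delta_m$, and it coincides with $\{\xx \in \Delta_m : u^L(\xx,2) = M_\ots\}$ because $u^L(\cdot,2) \equiv \gamma_2\,\va_2\cdot\xx + \beta_2$ with $\gamma_2 > 0$ and $d^* = (M_\ots - \beta_2)/\gamma_2$. Before establishing nonemptiness I would record the easy upper bounds. For action $2$: any $\xx \in f_{d^*}^{-1}(2)$ has $\va_2\cdot\xx \le d^*$, so $u^L(\xx,2) \le \gamma_2 d^* + \beta_2 = M_\ots$; hence $V_{d^*}^L(2) \le M_\ots$. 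For action $1$: any $\xx \in f_{d^*}^{-1}(1)$ has $\va_2\cdot\xx \ge d^*$, so $u^L(\xx,2) \ge M_\ots$; were $u^L(\xx,1) > M_\ots$ as well, then $\min_{k\in\ots} u^L(\xx,k) > M_\ots$, contradicting the definition of $M_\ots$ as $\max_{\xx'}\min_{k\in\ots}u^L(\xx',k)$; thus $u^L(\xx,1) \le M_\ots$ and $V_{d^*}^L(1)\le M_\ots$.

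The main obstacle is showing $H \neq \emptyset$. Since $u^L(\cdot,2)$ is affine (hence continuous) on the connected polytope $\Delta_m$ and $M_\ots \le M_{\{2\}} = \max_\xx u^L(\xx,2)$, by the intermediate value theorem it suffices to prove $\min_\xx u^L(\xx,2) \le M_\ots$. Suppose not. Then $u^L(\xx,2) > M_\ots \ge \min_{k\in\ots} u^L(\xx,k)$ for every $\xx$, which forces $\min_{k\in\ots}u^L(\xx,k) = u^L(\xx,1)$ for all $\xx$, so $M_\ots = \max_\xx u^L(\xx,1) = M_{\{1\}}$. But then $\min_\xx u^L(\xx,2) > M_\ots = M_{\{1\}}$, contradicting $2 \in \widehat{J}$ (\Cref{asn:2-hat-J}), i.e.\ $\min_\xx u^L(\xx,2) < M_{\{1\}}$. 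Hence $H \neq \emptyset$.

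With $H \neq \emptyset$ the claim for action $2$ is immediate: since $\gamma_2 > 0$, $u^L(\cdot,2)$ over $f_{d^*}^{-1}(2) = \{\xx : \va_2\cdot\xx \le d^*\}$ is maximized at any point of $H$, so $V_{d^*}^L(2) = \gamma_2 d^* + \beta_2 = M_\ots$. For action $1$ it remains to exhibit some $\xx \in f_{d^*}^{-1}(1)$ with $u^L(\xx,1) = M_\ots$. I would take a maximin strategy $\xx^* \in \calM_\ots$, so $u^L(\xx^*,1), u^L(\xx^*,2) \ge M_\ots$ with equality in at least one of them. If $u^L(\xx^*,1) = M_\ots$, then $u^L(\xx^*,2) \ge M_\ots$ already puts $\xx^* \in f_{d^*}^{-1}(1)$ and we are done. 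Otherwise $u^L(\xx^*,1) > M_\ots$ and $u^L(\xx^*,2) = M_\ots$; here I invoke \Cref{asn:2-non-dominiated} to pick $i \in I_2$ with $u^L(i,2) > u^L(i,1)$, and observe $u^L(i,1) \le M_\ots$ (otherwise $\min_{k\in\ots}u^L(i,k) = u^L(i,1) > M_\ots$, contradicting the definition of $M_\ots$). Walking along the segment from $\xx^*$ to the vertex $i$, the affine function $u^L(\cdot,1)$ passes from a value $> M_\ots$ to a value $\le M_\ots$, hence equals $M_\ots$ at some $\xx'$ on the segment; at $\xx'$ the affine function $u^L(\cdot,2)$, whose values at the two endpoints are $M_\ots$ and $u^L(i,2) = M_{\{2\}} \ge M_\ots$, satisfies $u^L(\xx',2) \ge M_\ots$, i.e.\ $\va_2\cdot\xx' \ge d^*$ and $\xx' \in f_{d^*}^{-1}(1)$. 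Therefore $V_{d^*}^L(1) = M_\ots$, which together with the bounds above finishes the proof.
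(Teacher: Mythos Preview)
Your upper-bound argument for $V_{d^*}^L(1)$ has a gap. You claim that if $\xx \in f_{d^*}^{-1}(1)$ and $u^L(\xx,1) > M_\ots$, then $\min_{k\in\ots} u^L(\xx,k) > M_\ots$. But membership in $f_{d^*}^{-1}(1)$ only yields $u^L(\xx,2) \ge M_\ots$, not a strict inequality; when $\xx$ lies on $H$ (so $u^L(\xx,2) = M_\ots$) while $u^L(\xx,1) > M_\ots$, the minimum equals $M_\ots$ exactly and no contradiction with the definition of $M_\ots$ arises. Such an $\xx$ is precisely a maximin strategy with $u^L(\cdot,1)$ strictly above $M_\ots$ --- the very ``Otherwise'' scenario you treat later for the lower bound. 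Your argument is thus internally inconsistent: if that scenario can occur (as you tacitly allow), then already $V_{d^*}^L(1) \ge u^L(\xx^*,1) > M_\ots$ at that maximin point $\xx^*\in f_{d^*}^{-1}(1)$, refuting the upper bound you just asserted.

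Ruling out this boundary case genuinely requires \Cref{asn:1-min-M} (i.e., $M_{\{1\}} \le M_{\{2\}}$), which you never invoke for the upper bound. The paper's proof handles it via a dichotomy: either some $\yy \in \Delta_m$ has $u^L(\yy,2) > M_\ots$ strictly, in which case a convex combination of $\yy$ with the offending maximizer pushes both coordinates strictly above $M_\ots$; or $u^L(\cdot,2) \le M_\ots$ on all of $\Delta_m$, whence $M_{\{2\}} \le M_\ots < V_{d^*}^L(1) \le M_{\{1\}}$, contradicting $M_{\{1\}} \le M_{\{2\}}$. Once the upper bound is repaired in this way, your ``Otherwise'' case becomes vacuous, and the remainder of your argument (nonemptiness of $H$ via $2 \in \widehat{J}$, and $V_{d^*}^L(2) = M_\ots$) is correct.
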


We defer the proof of \Cref{lmm:hat-V1-V2-M-12} to \Cref{sec:proof-lmm:hat-V1-V2-M-12} and proceed with the proof of \Cref{lmm:fd-main}.

\begin{proof}[Proof of \Cref{lmm:fd-main}]
Clearly, $V_d^L(1)$ is non-increasing with respect to $d$. Moreover, we have $V_{d^*}^L(1) = V_{d^*}^L(2)$ according to \Cref{lmm:hat-V1-V2-M-12}.
Hence, it suffices to prove that $V_d^L(1) > V_d^L(2)$ for all $d < d^*$, and $V_d^L(1) < V_d^L(2)$ for all $d > d^*$.

If $d < d^*$, then by construction we have $\va_2 \cdot \xx \le d < d^*$ for all $\xx \in f_{d}^{-1}(2)$.
Hence, 
$\max_{\xx \in f_{d}^{-1}(2)} \va_2 \cdot \xx < d^*$, and 
\[
V_d^L(2) 
< \gamma_2 \cdot d^* + \beta_2 
= M_\ots
= V_{d^*}^L(1)
\le V_{d}^L(1),
\]
where $V_{d^*}^L(1) \le V_{d}^L(1)$ as $V_{d}^L(1)$ is non-increasing with respect to $d$.

If $d > d^*$, consider the following two cases.

\begin{itemize}
\item $f_{d}^{-1}(1) = \emptyset$. 
Then it follows immediately that 
$V_{d}^F(2) \ge V_{d^*}^F(2) = M_\ots > -\infty = V_{d}^F(1)$.

\item $f_{d}^{-1}(1) \neq \emptyset$.
Pick arbitrary $\xx \in f_{d}^{-1}(1)$.
By definition, we have $\va_2 \cdot \xx \ge d$.
Since $V_{d^*}^L(2) = M_\ots$, there exists $\yy \in \Delta_m$ such that $u^L(\yy, 2) = M_\ots$, which means that 
$\va_2 \cdot \yy = (M_\ots - \beta_2)/\gamma_2 = d^* < d$.
Hence, there exists a convex combination $\zz$ of $\xx$ and $\yy$ such that $\va_2 \cdot \zz = d$.
We have $\zz \in f_{d}^{-1}(2)$, which implies that
\begin{align*}
V_d^L(2) 
\ge u^L(\zz, 2) 
&= \gamma_2 \cdot d + \beta_2 \\
&> \gamma_2 \cdot d^* + \beta_2
= M_\ots
\geq V_d^L(1).
\end{align*}
\end{itemize}
This completes the proof.
\end{proof}

\subsection{Proof of \Cref{lmm:hat-V1-V2-M-12}}
\label{sec:proof-lmm:hat-V1-V2-M-12}

We prove the lemma in two parts: 
(1) $V_{d^*}^L(1) = M_\ots$ (\Cref{lmm:hat-u1-M-12}), and 
(2) $V_{d^*}^L(2) = M_\ots$ (\Cref{lmm:hat-u2-M-12}).
En route, we also prove a result (stated in \Cref{lmm:hat-u2-M-12}) that will be useful in the next section.

\begin{lemma}
\label{lmm:hat-u1-M-12}
$V_{d^*}^L(1) = M_\ots$.
\end{lemma}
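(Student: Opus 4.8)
**Proof proposal for Lemma (V_{d^*}^L(1) = M_{\{1,2\}}).**

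The plan is to recall the structure of $f_{d^*}$ and the definition $V_{d^*}^L(1) = \max_{\xx \in f_{d^*}^{-1}(1)} u^L(\xx,1)$, where $f_{d^*}^{-1}(1) = \{\xx \in \Delta_m : \va_2 \cdot \xx \ge d^*\}$ and $d^* = (M_{\{1,2\}} - \beta_2)/\gamma_2$. First I would establish the upper bound $V_{d^*}^L(1) \le M_{\{1,2\}}$. For any $\xx \in f_{d^*}^{-1}(1)$ we have $\va_2 \cdot \xx \ge d^*$, so $u^L(\xx,2) = \gamma_2 \cdot \va_2 \cdot \xx + \beta_2 \ge \gamma_2 d^* + \beta_2 = M_{\{1,2\}}$ (using $\gamma_2 > 0$). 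Hence $\min\{u^L(\xx,1), u^L(\xx,2)\} \le u^L(\xx,1)$, but more usefully: if $u^L(\xx,1) > M_{\{1,2\}}$ were possible for some such $\xx$, then $\min_{k \in \{1,2\}} u^L(\xx,k) > M_{\{1,2\}}$, contradicting the definition $M_{\{1,2\}} = \max_{\yy}\min_{k\in\{1,2\}} u^L(\yy,k)$. Therefore $u^L(\xx,1) \le M_{\{1,2\}}$ for all $\xx \in f_{d^*}^{-1}(1)$, giving $V_{d^*}^L(1) \le M_{\{1,2\}}$.

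Next I would establish the matching lower bound $V_{d^*}^L(1) \ge M_{\{1,2\}}$ by exhibiting a witness strategy. Take $\xx^* \in \calM_{\{1,2\}}$, i.e.\ $\min_{k\in\{1,2\}} u^L(\xx^*, k) = M_{\{1,2\}}$, so both $u^L(\xx^*,1) \ge M_{\{1,2\}}$ and $u^L(\xx^*,2) \ge M_{\{1,2\}}$. From $u^L(\xx^*,2) \ge M_{\{1,2\}}$ and $u^L(\xx^*,2) = \gamma_2 \va_2\cdot\xx^* + \beta_2$ we get $\va_2 \cdot \xx^* \ge (M_{\{1,2\}} - \beta_2)/\gamma_2 = d^*$, hence $\xx^* \in f_{d^*}^{-1}(1)$. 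Combined with the upper bound, $u^L(\xx^*,1) = M_{\{1,2\}}$, and therefore $V_{d^*}^L(1) \ge u^L(\xx^*,1) = M_{\{1,2\}}$. The two bounds together give $V_{d^*}^L(1) = M_{\{1,2\}}$.

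The only subtle point — and the place where \Cref{asn:2-non-dominiated} or \Cref{asn:a1-non-uniform} might be needed — is ensuring $u^L(\xx^*,1)$ actually equals $M_{\{1,2\}}$ rather than merely being bounded above by it; but this follows automatically from $\xx^* \in \calM_{\{1,2\}}$ forcing $u^L(\xx^*,1) \ge M_{\{1,2\}}$, so no genuine obstacle arises here. One should double-check that $\va_2$ indeed satisfies \eqref{eq:aj} with positive $\gamma_2$ — this is guaranteed by \ref{asn:aj-Mj}, which provides $\va_j$ for all $j$ with $M_{\{j\}} > M_{[n]}$. The argument is essentially a direct unpacking of definitions once one observes that the half-space $f_{d^*}^{-1}(1) = \{\va_2 \cdot \xx \ge d^*\}$ is precisely the region where $u^L(\cdot,2) \ge M_{\{1,2\}}$, so maximizing $u^L(\cdot,1)$ over it is the same constrained problem whose value is $M_{\{1,2\}}$ by the minimax definition.
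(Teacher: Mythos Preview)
Your lower-bound argument is correct and in fact cleaner than the paper's: you directly exhibit a maximin strategy $\xx^* \in \calM_\ots$, observe that $u^L(\xx^*,2)\ge M_\ots$ forces $\xx^* \in f_{d^*}^{-1}(1)$, and conclude $V_{d^*}^L(1) \ge u^L(\xx^*,1) \ge M_\ots$. The paper instead derives a contradiction from $V_{d^*}^L(1) < M_\ots$, which is more roundabout.

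Your upper-bound argument, however, has a genuine gap. You write that for $\xx \in f_{d^*}^{-1}(1)$ one has $u^L(\xx,2) \ge M_\ots$, and then claim that $u^L(\xx,1) > M_\ots$ would force $\min_{k\in\ots} u^L(\xx,k) > M_\ots$. This implication fails on the boundary hyperplane $\va_2\cdot\xx = d^*$, where $u^L(\xx,2) = M_\ots$ exactly: there the minimum equals $M_\ots$ and no contradiction arises. Since the maximum of the linear function $u^L(\cdot,1)$ over the polytope $f_{d^*}^{-1}(1)$ is attained at a vertex, it may well lie on this boundary, so the gap is not vacuous.

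The paper handles precisely this issue by splitting the case $V_{d^*}^L(1) > M_\ots$ in two. If some $\yy_2 \in f_{d^*}^{-1}(1)$ has $u^L(\yy_2,2) > M_\ots$ strictly, then a convex combination of $\yy_2$ with a maximizer $\yy_1 \in \argmax_{\xx\in f_{d^*}^{-1}(1)} u^L(\xx,1)$ (for which $u^L(\yy_1,2)\ge M_\ots$) yields a point with both coordinates strictly above $M_\ots$, giving the contradiction you intended. If no such $\yy_2$ exists, then $u^L(\cdot,2) \le M_\ots$ on all of $\Delta_m$, whence $M_{\{2\}} \le M_\ots < V_{d^*}^L(1) \le M_{\{1\}}$, contradicting \ref{asn:1-min-M}. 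Your argument can be repaired along the same lines, but as written it omits this boundary analysis and the appeal to \ref{asn:1-min-M}.
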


\begin{proof}
Suppose for the sake of contradiction that $V_{d^*}^L(1) \neq M_\ots$ and consider the following cases.

\paragraph{Case 1.} $V_{d^*}^L(1) < M_\ots$.
Since $V_{d^*}^L(1) < M_\ots$, then for all $\xx \in f_{d^*}^{-1}(1)$ we have
\[
u^L(\xx, 1) \le \max_{\xx' \in f_{d^*}^{-1}(1)} u^L(\xx', 1) = V_{d^*}^L(1) < M_\ots.
\]

For all $\xx \in \Delta_m \setminus  f_{d^*}^{-1}(1)$, by construction we have $\va_2 \cdot \xx < d^*$; hence,
\begin{equation}
\label{eq:lmm:hat-u1-M-12:eq-0}
u^L(\xx, 2) 
= \gamma_2 \cdot \va_2 \cdot \xx + \beta_2 
< \gamma_2 \cdot d^* + \beta_2
= M_\ots.
\end{equation}

Therefore, for all $\xx \in \Delta_m$, we have $\min_{j \in \ots} u^L(\xx, j) < M_\ots$. This leads to a contradiction: 
$M_\ots = \max_{\xx \in \Delta_m} \min_{j \in \ots} u^L(\xx, j) < M_\ots$.

\paragraph{Case 2.} $V_{d^*}^L(1) > M_\ots$.
We further consider the following two cases.
\begin{itemize}
\item[(i)] 
There exists $\yy_2 \in f_{d^*}^{-1}(1)$ such that \begin{equation}
\label{eq:lmm:hat-u1-M-12:eq-1}
u^L(\yy_2, 2) > M_\ots.
\end{equation}
Pick arbitrary $\yy_1 \in \argmax_{\xx \in f_{d^*}^{-1}(1)} u^L(\xx, 1)$.
By assumption, $V_{d^*}^L(1) > M_\ots$, so 
\begin{equation}
\label{eq:lmm:hat-u1-M-12:eq-2}
u^L(\yy_1, 1) = V_{d^*}^L(1) > M_\ots.
\end{equation}
The fact that $\yy_1 \in f_{d^*}^{-1}(1)$ also implies that $\va_2 \cdot \yy_1 \ge d^*$ by construction. Hence,
\begin{equation}
\label{eq:lmm:hat-u1-M-12:eq-3}
u^L(\yy_1, 2) 
= \gamma_2 \cdot \va_2 \cdot \yy_1 + \beta_2
\ge \gamma_2 \cdot d^* + \beta_2
= M_\ots.
\end{equation}
Let $\zz = \lambda \cdot \yy_1 + (1 - \lambda) \cdot \yy_2$, where $\lambda \in (0,1)$.
By continuity, \eqref{eq:lmm:hat-u1-M-12:eq-2} implies that when $\lambda$ is sufficiently close to $1$, we have
$u^L(\zz, 1) > M_\ots$.
Moreover, \eqref{eq:lmm:hat-u1-M-12:eq-1}, \eqref{eq:lmm:hat-u1-M-12:eq-3} and the fact that $\lambda < 1$ imply that 
$u^L(\zz, 2) > M_\ots$.
Thus, $\min_{j \in \ots} u^L(\zz, j) > M_\ots$, which contradicts the definition of $M_\ots$, i.e., $M_\ots = \max_{\xx \in \Delta_m} \min_{j \in \ots} u^L(\xx, j) \ge \min_{j \in \ots} u^L(\zz, j)$.

\item[(ii)] 
$u^L(\xx, 2) \le M_\ots$ for all $\xx \in f_{d^*}^{-1}(1)$.
Note that for all $\xx \in \Delta_m \setminus f_{d^*}^{-1}(1)$, it holds that $u^L(\xx, 2) < M_\ots$ (see \eqref{eq:lmm:hat-u1-M-12:eq-0}).
Hence, now we have 
$u^L(\xx, 2) \le M_\ots$, 
for all $\xx \in \Delta_m$, which implies that
\[
M_{\{2\}} 
= \max_{\xx \in \Delta_m} u^L(\xx, 2) 
\le M_\ots.
\]
According to the assumption of Case~2, we have $V_{d^*}^L(1) > M_\ots$.
It follows that
\[
M_{\{2\}} \le M_\ots < V_{d^*}^L(1)
\le \max_{\xx \in \Delta_m} u^L(\xx, 1)
= M_{\{1\}}.
\]
This contradicts \Cref{asn:1-min-M}.
\end{itemize}

Therefore, both cases lead to contradictions. We have $V_{d^*}^L(1) = M_\ots$.
\end{proof}

\begin{lemma}
\label{lmm:hat-u2-M-12}
$V_{d^*}^L(2) = M_\ots$.
Moreover, there exists $\xx^- \in \Delta_m$ such that $\va_2 \cdot \xx^- < d^*$.
\end{lemma}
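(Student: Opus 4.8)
The plan is to prove the ``moreover'' claim first and then read off $V_{d^*}^L(2) = M_\ots$ from it. Observe that, by \eqref{eq:aj} and \eqref{eq:d-star}, the inequality $\va_2\cdot\xx < d^*$ is equivalent to $u^L(\xx,2) = \gamma_2\,\va_2\cdot\xx + \beta_2 < \gamma_2 d^* + \beta_2 = M_\ots$ (recall $\gamma_2>0$); and since $u^L(\cdot,2)$ attains its minimum over $\Delta_m$ at a pure strategy, the ``moreover'' claim is precisely the statement $\min_{\xx\in\Delta_m} u^L(\xx,2) < M_\ots$. I would establish this by contradiction, assuming $u^L(\xx,2)\ge M_\ots$ for all $\xx\in\Delta_m$.

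Under this assumption I would argue as follows. First, $M_\ots < M_{\{1\}}$: since $2\in\widehat{J}$ (\Cref{asn:2-hat-J}), the definition \eqref{eq:hat-J} gives $\min_\xx u^L(\xx,2) < M_{\{1\}}$, which together with the assumption forces $M_\ots \le \min_\xx u^L(\xx,2) < M_{\{1\}}$. Next, pick any $\hat\xx \in \calM_{\{1\}}$, so $u^L(\hat\xx,1) = M_{\{1\}} > M_\ots$; because $\min\{u^L(\hat\xx,1),u^L(\hat\xx,2)\}\le M_\ots$ by the definition of $M_\ots$, we get $u^L(\hat\xx,2)\le M_\ots$, hence $u^L(\hat\xx,2) = M_\ots$ under the assumption. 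Now invoke \Cref{asn:2-non-dominiated} to take $i^*\in I_2$ with $u^L(i^*,1) < u^L(i^*,2) = M_{\{2\}}$, and note $M_{\{2\}}\ge M_{\{1\}} > M_\ots$ by \Cref{asn:1-min-M}. Consider the segment $\zz_t = (1-t)\hat\xx + t\,\ee^{i^*}$, $t\in[0,1]$, where $\ee^{i^*}\in\Delta_m$ is the pure strategy concentrated on $i^*$. Then $u^L(\zz_t,2)$ is affine in $t$, equal to $M_\ots$ at $t=0$ and to $M_{\{2\}}$ at $t=1$, so $u^L(\zz_t,2) > M_\ots$ for every $t>0$; and $u^L(\zz_t,1)$ is affine in $t$ with value $M_{\{1\}} > M_\ots$ at $t=0$, so by continuity $u^L(\zz_t,1) > M_\ots$ for all sufficiently small $t>0$. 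Fixing such a $t$ yields $\min\{u^L(\zz_t,1),u^L(\zz_t,2)\} > M_\ots$, contradicting $M_\ots = \max_\xx\min\{u^L(\xx,1),u^L(\xx,2)\}$. This proves the ``moreover'' claim.

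Given the ``moreover'' claim, $V_{d^*}^L(2) = M_\ots$ follows quickly. By definition $V_{d^*}^L(2) = \max_{\xx\in f_{d^*}^{-1}(2)} u^L(\xx,2)$ with $f_{d^*}^{-1}(2) = \{\xx\in\Delta_m : \va_2\cdot\xx \le d^*\}$, and every such $\xx$ has $u^L(\xx,2) = \gamma_2\,\va_2\cdot\xx + \beta_2 \le M_\ots$, so $V_{d^*}^L(2) \le M_\ots$. For the reverse inequality, the range $\{\va_2\cdot\xx : \xx\in\Delta_m\}$ is a closed interval whose left endpoint equals $\min_\xx u^L(\xx,2)$ normalized, hence lies strictly below $d^*$ by the ``moreover'' claim, and whose right endpoint equals $\max_\xx u^L(\xx,2)=M_{\{2\}}$ normalized, hence is at least $d^*$ since $M_{\{2\}}\ge M_{\{1\}}\ge M_\ots$. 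Therefore some $\xx^\sharp\in\Delta_m$ satisfies $\va_2\cdot\xx^\sharp = d^*$, so $\xx^\sharp\in f_{d^*}^{-1}(2)$ and $u^L(\xx^\sharp,2) = M_\ots$, giving $V_{d^*}^L(2)\ge M_\ots$.

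I expect the contradiction step to be the only real obstacle: the trick is to combine a maximizer $\hat\xx$ of $u^L(\cdot,1)$ with the vertex $i^*$ supplied by \Cref{asn:2-non-dominiated}, so that moving slightly from $\hat\xx$ toward $i^*$ lifts the follower-$2$ payoff off its floor $M_\ots$ while keeping the follower-$1$ payoff above $M_\ots$. Everything else is routine manipulation of the affine maps $u^L(\cdot,1)$ and $u^L(\cdot,2)$, and — unlike in the proof of \Cref{lmm:hat-u1-M-12} — no further case analysis appears to be needed.
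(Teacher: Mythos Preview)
Your proof is correct, but it diverges from the paper's. The paper leans on the companion result \Cref{lmm:hat-u1-M-12} twice: first to guarantee a point $\xx^+$ with $\va_2\cdot\xx^+ \ge d^*$ (since $f_{d^*}^{-1}(1)\neq\emptyset$), and then, for the contradiction, to argue that if no $\xx^-$ existed one would have $f_{d^*}^{-1}(1)=\Delta_m$, hence $V_{d^*}^L(1)=M_{\{1\}}$; combining with $V_{d^*}^L(1)=M_\ots$ gives $M_{\{1\}}=M_\ots$, which together with the assumed lower bound $u^L(\cdot,2)\ge M_\ots=M_{\{1\}}$ directly contradicts $2\in\widehat{J}$. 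Your route instead avoids invoking \Cref{lmm:hat-u1-M-12} in the contradiction step and brings in \Cref{asn:2-non-dominiated}: you first force $M_\ots < M_{\{1\}}$ from $2\in\widehat{J}$, locate a point $\hat\xx$ with $u^L(\hat\xx,1)>M_\ots$ and $u^L(\hat\xx,2)=M_\ots$, and then slide toward the vertex $i^*$ to lift both coordinates strictly above $M_\ots$. For the lower bound $V_{d^*}^L(2)\ge M_\ots$ you use $M_{\{2\}}\ge M_{\{1\}}\ge M_\ots$ in place of the paper's appeal to $f_{d^*}^{-1}(1)\neq\emptyset$. The paper's argument is shorter and uses one less standing assumption; yours is more self-contained in that the ``moreover'' part does not depend on the previously established value $V_{d^*}^L(1)=M_\ots$.
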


\begin{proof}
According to \Cref{lmm:hat-u1-M-12}, 
$f_{d^*}^{-1} (1) \neq \emptyset$,
which means that there exists $\xx^+ \in \Delta_m$ such that $\va_2 \cdot \xx^+ \ge d^*$.

We next show that there exists $\xx^- \in \Delta_m$ such that $\va_2 \cdot \xx^- < d^*$.
Suppose for the sake of contradiction that $\va_2 \cdot \xx \ge d^*$ for all $\xx \in \Delta_m$.
By definition we then have $f_{d^*}^{-1}(1) = \Delta_m$, so we get that
\[
V_{d^*}^L(1) 
= \max_{\xx \in f_{d^*}^{-1}(1)} u^L(\xx, 1)
= \max_{\xx \in \Delta_m} u^L(\xx, 1)
= M_{\{1\}}.
\]
By \Cref{lmm:hat-u1-M-12}, $V_{d^*}^L(1) = M_\ots$, so we have
\[
M_{\{1\}} = M_\ots.
\]
Now that $\va_2 \cdot \xx \ge d^*$ for all $\xx \in \Delta_m$ by assumption,
we have 
\[
\min_{\xx \in \Delta_m} u^L(\xx, 2) 
= \min_{\xx \in \Delta_m} \gamma_2 \cdot \va_2 \cdot \xx + \beta_2
\ge \gamma_2 \cdot d^* + \beta_2
= M_\ots.
\]
Hence, $\min_{\xx \in \Delta_m} u^L(\xx, 2) \ge M_{\{1\}}$, which contradicts \Cref{asn:2-hat-J} and the definition of $\widehat{J}$.

Therefore, we obtain two points $\xx^+, \xx^- \in \Delta_m$ such that $\va_2 \cdot \xx^+ \ge d^*$ and $\va_2 \cdot \xx^- < d^*$. 
There must be a convex combination $\yy$ of $\xx^-$ and $\xx^+$ such that $\yy \in \Delta_m$ and $\va_2 \cdot \yy = d^*$.
Hence, $\yy \in f_{d^*}^{-1}(2)$ and
\[
d^*
\ge \max_{\xx \in f_{d^*}^{-1}(2)} \va_2 \cdot \xx
\ge \va_2 \cdot \yy 
= d^*,
\]
where $d^*
\ge \max_{\xx \in f_{d^*}^{-1}(2)} \va_2 \cdot \xx$ holds as $f_{d^*}^{-1}(2) = \{\xx \in \Delta_m : \va_2 \cdot \xx \le d^*\}$ by construction.
This means that $\max_{\xx \in f_{d^*}^{-1}(2)} \va_2 \cdot \xx = d^*$.
Consequently, $V_{d^*}^L(2) = \gamma_2 \cdot d^* + \beta_2 = M_\ots$.
\end{proof}

\subsection{Proof of \Cref{thm:compute-d-star}}

\thmcomputedstar*

\begin{proof}
\Cref{lmm:fd-main} implies immediately that we can use binary search and oracle $\AER$ to compute $d^*$ in polynomial time. 
Moreover, both actions $1$ and $2$ are SSE responses of $\calG_{d^*} = (u^L, \tilde{u}_{d^*}^F)$.
Pick arbitrary $\xx \in \argmax_{\xx' \in f_{d^*}^{-1}(1)} u^L(\xx', 1)$ and $\yy \in \argmax_{\yy' \in f_{d^*}^{-1}(2)} u^L(\yy', 2)$.
Then $(\xx, 1)$ and $(\yy, 2)$ are both SSEs; hence, $u^L(\xx,1) = u^L(\yy,2)$.
Moreover, since $u^L(\xx,1) = V_{d^*}^L(1)$ and $u^L(\yy,2) = V_{d^*}^L(2)$,
according to \Cref{lmm:hat-V1-V2-M-12},
we have
$u^L(\xx,1) = u^L(\yy,2) = M_{\{1,2\}}$.

Indeed, since $u^L(\xx, j) = \gamma_j \cdot \xx + \beta_j$, to compute $\xx$ and $\yy$ amounts to solving 
$\max_{\xx' \in f_{d^*}^{-1}(1)} \va_1 \cdot \xx'$ and 
$\max_{\yy' \in f_{d^*}^{-1}(2)} \va_2 \cdot \yy'$, respectively.
Moreover, by definition, the constraints $\xx' \in f_{d^*}^{-1}(1)$ and $\yy' \in f_{d^*}^{-1}(2)$ are further equivalent to $\va_2 \cdot \xx' \ge d^*$ and $\va_2 \cdot \xx' \le d^*$, respectively.
Hence, the task reduces to solving two LPs, which can be done in polynomial time. 
This completes the proof.
\end{proof}

\section{Omitted Proofs in \NoCaseChange{\Cref{sec:second-ref}}}

\subsection{Results When $\ots$ does not Admit a Cover}

\begin{lemma}
\label{lmm:ots-no-cover}
If $\ots$ does not admit a cover, then $J \subseteq [n]$ and $\xx^* \in \Delta_m$ that satisfy \eqref{eq:J-M} can be computed in polynomial time.
\end{lemma}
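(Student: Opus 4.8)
Recall the goal: when $\{1,2\}$ admits no cover, we still want $J \subseteq [n]$ and $\xx^* \in \Delta_m$ with $u^L(\xx^*,j) = M_J = M_{[n]}$ for all $j \in J$. By \Cref{lmm:cover-iff}, the hypothesis that $\{1,2\}$ has no cover is equivalent to $M_{\ots} = M_{[n]}$. Meanwhile, recall \Cref{asn:1-min-M}, which says $1 \in \argmin_{j \in [n]} M_{\{j\}}$, and $2 \in \widehat{J}$ by \Cref{asn:2-hat-J}. The plan is to exhibit $J = \ots$ (or a subset thereof) together with a maximin strategy for $\ots$ as the witness: since $M_{\ots} = M_{[n]}$ already, it suffices to find $\xx^* \in \calM_{\ots}$ and then let $J$ consist of those $j \in \ots$ that are ``tight'' at $\xx^*$, i.e., $u^L(\xx^*, j) = M_{\ots}$.

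**Key steps.** First I would invoke \Cref{thm:compute-d-star} (proved in the previous subsection) to obtain in polynomial time a reference pair $\xx, \yy \in \Delta_m$ with $u^L(\xx,1) = u^L(\yy,2) = M_{\ots}$; this gives concrete strategies attaining the value $M_{\ots}$ along actions $1$ and $2$ respectively. Next I would recover $\calM_{\ots}$ explicitly as a polytope: using the known vectors $\va_1, \va_2$ (available by \ref{asn:aj-Mj}) together with the constants $d_1^*, d_2^*$ from \eqref{eq:dj-star}, one has $u^L(\xx', j) \geq M_{\ots} \iff \va_j \cdot \xx' \ge d_j^*$ for $j \in \ots$, so $\calM_{\ots} = \{\xx' \in \Delta_m : \va_1 \cdot \xx' \ge d_1^*,\ \va_2 \cdot \xx' \ge d_2^*\}$. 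The thresholds $d_1^*, d_2^*$ are learnable by binary search with $\ASSE$ exactly as $d^* = d_2^*$ was learned in \Cref{sec:first-reference-pair}; the analogous search for $d_1^*$ reuses the same BR-correspondence construction with the roles of the coordinates swapped. Then I would pick any $\xx^* \in \calM_{\ots}$ and set $J \coloneqq \{j \in \ots : \va_j \cdot \xx^* = d_j^*\}$, which is computable once $\xx^*$ and the $d_j^*$ are in hand. Finally I would verify \eqref{eq:J-M}: for $j \in J$ we have $u^L(\xx^*,j) = M_{\ots} = M_{[n]}$ by construction, and $M_J = M_{[n]}$ follows because $\xx^* \in \calM_{\ots} \subseteq \calM_J$ certifies $M_J \ge M_{\ots}$ while $M_J \le M_{[n]} = M_{\ots}$ always holds, so $M_J = M_{[n]}$; moreover $J$ is nonempty because $\xx^*$, being in $\calM_{\ots}$, makes $\min_{j \in \ots} u^L(\xx^*, j) = M_{\ots}$, so at least one action of $\ots$ is tight.

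**The main obstacle.** The delicate point is not the algebra but ensuring that $d_1^*$ (the threshold for action $1$) is learnable at all in the partial-information setting — unlike $d_2^* = d^*$, whose learnability was established via \Cref{lmm:fd-main} using the fact that $2 \in \widehat{J}$ (so action $2$'s payoff genuinely varies across $\Delta_m$). For action $1$, if $\va_1$ is the zero vector (the degenerate case excluded by \ref{asn:a1-non-uniform}), $u^L(\cdot,1)$ is constant and the constraint $\va_1 \cdot \xx' \ge d_1^*$ is vacuous; but in that case $u^L(\xx',1) \equiv M_{\{1\}} = M_{\ots}$ for all $\xx'$ precisely when $M_{\{1\}} = M_{\ots}$, and \ref{asn:a1-non-uniform} together with \ref{asn:aj-Mj} tells us how that case was already folded away. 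When $\va_1 \neq \mathbf{0}$, the symmetric binary-search argument of \Cref{sec:first-reference-pair} applies with action $1$ in the role previously played by action $2$; I would spell out that the BR-correspondence $\tilde{u}^F_d$ of \eqref{eq:tilde-uF-first-ref-pair} can be set up with $1$ and $2$ interchanged, and that \Cref{lmm:hat-V1-V2-M-12} and \Cref{lmm:fd-main} go through after this relabeling (here the hypothesis $1 \in \argmin_j M_{\{j\}}$ from \ref{asn:1-min-M} plays the role that \ref{asn:1-min-M} did originally, which is automatic). Once both thresholds are secured, the remainder is a routine polytope computation, so I expect the only real work to be this careful re-derivation of the learnability of $d_1^*$.
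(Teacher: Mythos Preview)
Your verification of \eqref{eq:J-M} contains a genuine error: the inequality ``$M_J \le M_{[n]}$ always holds'' is backwards. Since $J \subseteq [n]$, taking the minimum over fewer actions can only increase the max--min value, so in fact $M_J \ge M_{[n]}$. Thus your sandwiching argument does not close, and if your $J$ turned out to be a strict subset of $\ots$---say $J=\{1\}$---you would have $M_J = M_{\{1\}} > M_{[n]}$ by \ref{asn:aj-Mj}, and \eqref{eq:J-M} would fail. The fix is to show that for \emph{every} $\xx^* \in \calM_{\ots}$ both actions are tight. This is true: if, say, only action~$1$ were tight at $\xx^*$, then $u^L(\xx^*,1)=M_{\ots}<M_{\{1\}}$, so $\xx^*$ is not a maximizer of $u^L(\cdot,1)$ over $\Delta_m$; by linearity there is a feasible direction increasing $u^L(\cdot,1)$, and since $u^L(\xx^*,2)>M_{\ots}$ strictly, a small step in that direction would strictly raise $\min_{j\in\ots}u^L(\cdot,j)$, contradicting $\xx^*\in\calM_{\ots}$. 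Hence $J=\ots$ always, and then $M_J=M_{\ots}=M_{[n]}$ directly.

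Separately, your ``main obstacle'' is a self-imposed one. You already have $d_1^*$ for free: \Cref{thm:compute-d-star} hands you $\xx$ with $u^L(\xx,1)=M_{\ots}$, so $d_1^*=\va_1\cdot\xx$. No symmetric re-derivation of the binary search is needed, and your worry about whether the role-swap preserves the hypotheses of \Cref{lmm:fd-main} (it does not cleanly---the proof of \Cref{lmm:hat-u1-M-12} uses \ref{asn:1-min-M} in a way that does not survive swapping $1$ and $2$) is moot. The paper's own argument is closer to this observation: it fixes $J=\ots$ from the outset and constructs $\xx^*$ by interpolating between a point in $\calM_{\{1\}}$ and the maximizer $\zz$ of $\va_1\cdot\xx$ over $f_{d^*}^{-1}(1)$, landing on the hyperplane $\va_2\cdot\xx=d_2^*$; the equality $u^L(\xx^*,1)=M_{\ots}$ then comes from $u^L(\zz,1)=M_{\ots}$ (via \Cref{lmm:hat-V1-V2-M-12}) combined with optimality of $\zz$, rather than from solving for $d_1^*$ independently.
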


\begin{proof}
By \Cref{lmm:cover-iff}, now that $\ots$ does not admit a cover, it must be that $M_\ots = M_{[n]}$. Let $J = \ots$.

Next, pick arbitrary 
\begin{align*}
\yy \in \argmax_{\xx \in \Delta_m} u^L(\xx, 1) 
\quad \text{ and } \quad 
\zz \in \argmax_{\xx \in f_{d^*}^{-1}(1)} u^L(\xx, 1),
\end{align*}
where $f_{d^*}$ is the BR-correspondence of \eqref{eq:tilde-uF-first-ref-pair}.
By definition, we have $u^L(\zz, 1) = M_{\{1\}}$, and according to \Cref{lmm:hat-V1-V2-M-12}, $u^L(\yy, 1) = M_\ots$.
Now that $M_\ots = M_{[n]}$, by \Cref{asn:aj-Mj}, $M_{\{1\}} > M_\ots$.
This further implies that $\zz \notin f_{d^*}^{-1}(1)$.
Hence, we have  
$\va_2 \cdot \yy \le d^* = d_2^*$ and 
$\va_2 \cdot \zz \le d^* = d_2^*$,
which implies that there exists $\xx^*$ in the line segment between $\yy$ and $\zz$ such that $\va_2 \cdot \xx^* = d_2^*$, or equivalently
\[
u^L(\xx^*, 2) = M_\ots.
\]
Since $u^L(\xx, 1) \ge M_\ots$ for $\xx \in \{\yy, \zz\}$, by linearity of $u^L(\cdot, 1)$, we also get that $u^L(\xx^*, 1) \ge M_\ots$.
At the same time, since $\xx^* \in f_{d^*}^{-1} (1)$, it must be that $u^L(\xx^*, 1) \le u^L(\yy, 1) = M_\ots$, according to the definition of $\yy$.
As a result, 
\[
u^L(\xx^*, 1) = M_\ots.
\]

Hence, $J$ and $\xx^*$ satisfy \eqref{eq:J-M}.
To compute $\xx^*$ amounts to finding an $\xx\in \Delta_m$ such that $u^L(\xx, 1) = u^L(\xx, 2) = M_\ots$, which translates to the linear constraints $\va_1 \cdot \xx = d_1^*$ and $\va_2 \cdot \xx = d_2^*$.
Hence, $\xx^*$ can be computed in polynomial time.
\end{proof}

\subsection{Proof of \Cref{lmm:G-d-eps}}
\label{sec:proof-lmm:G-d-eps}

First, the following lemma shows that Condition~(ii) is equivalent to that $P_1(d_1,d_2) \neq \emptyset$ and $P_2(d_1,d_2) \neq \emptyset$.

\begin{restatable}{lemma}{lmmPjnonempty}
\label{lmm:P-j-non-empty}
Suppose that $d_1 \le d_1^*$, and $d_2 \le d_2^*$.
Then $V_{d_1,d_2}^L(j) = \gamma_j \cdot d_j + \beta_j$ if $P_j(d_1, d_2) \neq \emptyset$ for $j\in \ots$.
\end{restatable}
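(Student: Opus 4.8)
The plan is to exploit the linearity $u^L(\xx,j)\equiv\gamma_j\,\va_j\cdot\xx+\beta_j$ with $\gamma_j>0$ for $j\in\ots$ (guaranteed by \eqref{eq:aj}), together with the identity $f_{d_1,d_2}^{-1}(j)=P_j(d_1,d_2)$ for $j\in\ots$. Since $\gamma_j>0$, maximizing $u^L(\cdot,j)$ over $P_j$ is the same as maximizing the linear form $\va_j\cdot\xx$ over $P_j$, and for both $j=1$ and $j=2$ this form is bounded above on $P_j$ by the threshold $d_j$: for $j=2$ by the defining inequality $\va_2\cdot\xx\le d_2$ of $P_2$, and for $j=1$ by the inequality $\va_1\cdot\xx\le d_1$ of $P_1$. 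Hence $V_{d_1,d_2}^L(j)\le\gamma_j d_j+\beta_j$ automatically (using $d_j^*=(M_\ots-\beta_j)/\gamma_j$ only later), so the whole content of the lemma is to show this bound is attained, i.e.\ that $P_j$ contains a point on which $\va_j\cdot\xx$ equals exactly $d_j$.

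To produce such a point I would run an intermediate value argument along a line segment, and the key is to have a ``high'' witness. I would fix any maximin strategy $\xx^*\in\calM_\ots$, which is nonempty as the argmax of a continuous function on the compact set $\Delta_m$. Since $\min_{j\in\ots}u^L(\xx^*,j)=M_\ots$, we have $u^L(\xx^*,j)\ge M_\ots$ for both $j\in\ots$; plugging \eqref{eq:aj} and $M_\ots=\gamma_j d_j^*+\beta_j$ into this and dividing by $\gamma_j>0$ gives $\va_j\cdot\xx^*\ge d_j^*\ge d_j$, where the last step uses the hypotheses $d_1\le d_1^*$ and $d_2\le d_2^*$. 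So $\xx^*$ sits above both thresholds simultaneously.

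For $j=2$: since $P_2\neq\emptyset$, pick $\xx_a\in P_2$, so $\va_2\cdot\xx_a\le d_2\le\va_2\cdot\xx^*$. Along the segment $[\xx_a,\xx^*]\subseteq\Delta_m$ the continuous function $\xx\mapsto\va_2\cdot\xx$ passes from a value $\le d_2$ to a value $\ge d_2$, so some $\bar\xx$ on it satisfies $\va_2\cdot\bar\xx=d_2$; then $\bar\xx\in P_2$ and $u^L(\bar\xx,2)=\gamma_2 d_2+\beta_2$, which together with the upper bound above yields $V_{d_1,d_2}^L(2)=\gamma_2 d_2+\beta_2$. For $j=1$: since $P_1\neq\emptyset$, pick $\xx_a\in P_1$, so $\va_2\cdot\xx_a\ge d_2$ and $\va_1\cdot\xx_a\le d_1$. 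The segment $[\xx_a,\xx^*]$ stays in the half-space $\{\xx:\va_2\cdot\xx\ge d_2\}$ (both endpoints lie in it and it is convex), while $\xx\mapsto\va_1\cdot\xx$ moves from $\le d_1$ at $\xx_a$ to $\ge d_1^*\ge d_1$ at $\xx^*$; hence some $\bar\xx$ on the segment has $\va_1\cdot\bar\xx=d_1$, so $\bar\xx\in P_1$ and $u^L(\bar\xx,1)=\gamma_1 d_1+\beta_1=V_{d_1,d_2}^L(1)$.

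The only mildly delicate point is the $j=1$ case, where the witness must simultaneously lie in the half-space $\{\va_2\cdot\xx\ge d_2\}$ cut out by the other action and have $\va_1\cdot\xx\ge d_1$; the maximin strategy $\xx^*$ does exactly both, and this is precisely where the two hypotheses $d_1\le d_1^*$ and $d_2\le d_2^*$ enter. Everything else is a routine linearity/IVT computation, so I do not anticipate a genuine obstacle.
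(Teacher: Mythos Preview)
Your proof is correct and follows essentially the same approach as the paper: reduce to showing $\max_{\xx\in P_j}\va_j\cdot\xx=d_j$, note the upper bound is immediate from the definition of $P_j$, and obtain the lower bound by an intermediate value argument along a segment from an arbitrary point of $P_j$ to a ``high'' witness satisfying $\va_j\cdot(\cdot)\ge d_j^*$ for both $j$. The only cosmetic difference is the choice of witness: the paper invokes \Cref{lmm:hat-u1-M-12} to get a point $\yy$ with $\va_1\cdot\yy=d_1^*$ and $\va_2\cdot\yy\ge d_2^*$, whereas you use any maximin strategy $\xx^*\in\calM_\ots$ directly---a slightly more self-contained choice that works equally well.
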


\begin{proof}
We prove the equivalent statement: $\max_{\xx \in P_j(d_1, d_2)} \va_j \cdot \xx = d_j$ if $P_j(d_1, d_2) \neq \emptyset$.
Consider the case where $j = 1$.
Indeed, by definition we have $\va_1 \cdot \xx \le d_1$ for all $\xx \in P_1(d_1, d_2)$, so it suffices to show that $\max_{\xx \in P_1(d_1, d_2)} \va_1 \cdot \xx \ge d_1$.

Choose arbitrary $\xx \in P_1(d_1, d_2)$.
By definition, we have
\[
\va_1 \cdot \xx \le d_1,
\quad \text{and }\quad
\va_2 \cdot \xx \ge d_2.
\]
By \Cref{lmm:hat-u1-M-12}, there exists $\yy \in \Delta_m$ such that 
\[
\va_1 \cdot \yy = d^*_1 \geq d_1,
\quad \text{and }\quad
\va_2\cdot\yy \geq d^*_2\geq d_2.
\]
Hence, there exists a convex combination $\zz$ of $\xx$ and $\yy$, such that $\va_1 \cdot \zz = d_1$ and $\va_2\cdot\zz \geq d_2$, which means $\zz \in P_1(d_1,d_2)$ and hence, 
$\max_{\xx \in P_1(d_1, d_2)} \va_1 \cdot \xx \ge \va_1 \cdot \zz = d_1$.

The case where $j=2$ can be proven analogously.
\end{proof}

Next, we identify a boundary value $\epsilon'$ that makes $P_j(d_1,d_2) \neq \emptyset$ for all $d_j \in [d_1^* - \epsilon', d_j^*]$.
This value is characterized as the optimal solution to the following LP according to \Cref{lmm:epsilon-p-gt-0}:
\begin{subequations}
\label{lp:eps-Pj}
\begin{align}
\max_{\epsilon, \xx_1, \xx_2} \quad 
& \epsilon  \tag{\ref{lp:eps-Pj}} \\
\text{subject to} \quad
& \xx_1 \in \Delta_m \cap P_1(d_1^* - \epsilon, d_2^*) \\
& \xx_2 \in \Delta_m \cap P_1(d_1^*, d_2^* - \epsilon) 
\end{align}
\end{subequations}
The LP characterization is important as it helps us bound the bit-size of $\hat{\epsilon}$.

\begin{restatable}{lemma}{lmmepsilonpgt}
\label{lmm:epsilon-p-gt-0}
Let $\epsilon'$ be the optimal value of LP~\eqref{lp:eps-Pj}.
Then $\epsilon' > 0$. 
Moreover, $P_1(d_1,d_2) \neq \emptyset$ and $P_2(d_1,d_2) \neq \emptyset$ for all $d_1 \in [d_1^* - \epsilon', d_1^*]$ and $d_2 \in [d_2^* - \epsilon', d_2^*]$.
\end{restatable}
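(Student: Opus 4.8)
The plan is to separate the claim into two parts: (i) $\epsilon'>0$, i.e.\ LP~\eqref{lp:eps-Pj} has a feasible solution with strictly positive objective; and (ii) the ``box'' statement, namely $P_1(d_1,d_2)\neq\emptyset$ and $P_2(d_1,d_2)\neq\emptyset$ for all $(d_1,d_2)\in[d_1^*-\epsilon',d_1^*]\times[d_2^*-\epsilon',d_2^*]$. Part (ii) will be a short monotonicity argument; the real content is part (i), and within it the crucial point is producing a leader strategy that lies in $P_1$ at the reference parameters \emph{with strict slack} in the $\va_1$-constraint.

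For (i), recall from \Cref{sec:first-reference-pair} that $f_{d^*}^{-1}(1)=\{\xx\in\Delta_m:\va_2\cdot\xx\ge d^*\}$, and since $u^L(\xx,2)=\gamma_2(\va_2\cdot\xx)+\beta_2$ with $d^*=d_2^*=(M_\ots-\beta_2)/\gamma_2$, this set coincides with $S\coloneqq\{\xx\in\Delta_m:u^L(\xx,2)\ge M_\ots\}$. I would show there exists $\xx^+\in S$ with $u^L(\xx^+,1)<M_\ots$, equivalently $\va_1\cdot\xx^+<d_1^*$: suppose not, so $u^L(\cdot,1)\ge M_\ots$ everywhere on $S$; but $\max_{\xx\in S}u^L(\xx,1)=V_{d^*}^L(1)=M_\ots$ by \Cref{lmm:hat-u1-M-12}, hence $u^L(\cdot,1)$ is identically $M_\ots$ on $S$. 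Since $\calM_{\{2\}}=\conv(I_2)\subseteq S$, this forces $u^L(i,1)=M_\ots$ for every $i\in I_2$. By \Cref{asn:2-non-dominiated} pick $i^*\in I_2$ with $u^L(i^*,2)>u^L(i^*,1)=M_\ots$; as $u^L(i^*,2)=M_{\{2\}}$, this gives $M_{\{2\}}>M_\ots$, so the vertex $\ee^{i^*}$ of $\Delta_m$ satisfies $\va_2\cdot\ee^{i^*}>d_2^*$, i.e.,\ $S$ is full-dimensional in $\Delta_m$. A linear function constant on a full-dimensional subset of $\Delta_m$ is constant on $\Delta_m$, so $u^L(\cdot,1)$ is constant, i.e.,\ $a_{1,1}=\cdots=a_{1,m}$, contradicting \Cref{asn:a1-non-uniform}. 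With $\xx^+$ in hand, and $\xx^-$ from \Cref{lmm:hat-u2-M-12} (so $\va_2\cdot\xx^-<d_2^*$), the triple $(\epsilon_0,\xx^+,\xx^-)$ with $\epsilon_0\coloneqq\min\{d_1^*-\va_1\cdot\xx^+,\ d_2^*-\va_2\cdot\xx^-\}>0$ is feasible for LP~\eqref{lp:eps-Pj} ($\xx^+$ satisfies the first constraint since $\va_2\cdot\xx^+\ge d_2^*$ and $\va_1\cdot\xx^+\le d_1^*-\epsilon_0$, and $\xx^-$ certifies $P_2(d_1^*,d_2^*-\epsilon_0)\neq\emptyset$), hence $\epsilon'\ge\epsilon_0>0$.

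For (ii), let $(\xx_1,\xx_2)$ attain the optimum $\epsilon'$ of LP~\eqref{lp:eps-Pj} (the LP is feasible by (i) and bounded, e.g.,\ $\epsilon\le d_1^*-\min_{\xx\in\Delta_m}\va_1\cdot\xx$ from the first constraint, so the optimum is attained), so $\xx_1\in P_1(d_1^*-\epsilon',d_2^*)$ and $\xx_2$ certifies $P_2(d_1^*,d_2^*-\epsilon')\neq\emptyset$. Straight from the definitions, $P_1(d_1,d_2)$ grows (w.r.t.\ inclusion) as $d_1$ increases and as $d_2$ decreases, while $P_2(d_1,d_2)$ is independent of $d_1$ and grows as $d_2$ increases. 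Hence for $d_1\in[d_1^*-\epsilon',d_1^*]$ and $d_2\in[d_2^*-\epsilon',d_2^*]$ we get $P_1(d_1,d_2)\supseteq P_1(d_1^*-\epsilon',d_2)\supseteq P_1(d_1^*-\epsilon',d_2^*)\ni\xx_1$ and $P_2(d_1,d_2)=P_2(d_1^*,d_2)\supseteq P_2(d_1^*,d_2^*-\epsilon')\ni\xx_2$, so both sets are nonempty.

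The main obstacle is the non-degeneracy step inside part (i): showing that $P_1$ at the reference parameters contains a point strictly satisfying $\va_1\cdot\xx<d_1^*$. The argument above shows that failure of this would collapse $u^L(\cdot,1)$ to a constant function, which is precisely what \Cref{asn:2-non-dominiated} together with \Cref{asn:a1-non-uniform} forbid; everything else is inclusion-monotonicity bookkeeping together with an elementary boundedness check on the LP.
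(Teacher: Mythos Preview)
Your proof is correct. Part~(ii) is the same monotonicity argument as the paper's. For part~(i), however, you take a genuinely different and cleaner route. The paper splits into two cases according to whether $\max_{\xx\in\Delta_m}\va_2\cdot\xx\le d_2^*$ or not; in the second case it builds an explicit interior point of $\Delta_m$ lying in $P_1(d_1^*,d_2^*)$ and then perturbs coordinate-by-coordinate to force a contradiction with \Cref{asn:a1-non-uniform}. You instead argue directly: assuming no $\xx^+\in S$ has $\va_1\cdot\xx^+<d_1^*$, you use \Cref{lmm:hat-u1-M-12} to conclude $u^L(\cdot,1)$ is constant on $S$, then invoke \Cref{asn:2-non-dominiated} to force $M_{\{2\}}>M_\ots$, which makes $S$ full-dimensional in $\Delta_m$, whence $u^L(\cdot,1)$ is globally constant---contradicting \Cref{asn:a1-non-uniform}. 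This subsumes the paper's Case~1 (it simply cannot occur under the contradiction hypothesis) and replaces the perturbation construction of Case~2 with a one-line affine-dimension observation. You also add the small but useful remark that LP~\eqref{lp:eps-Pj} is bounded, so its optimum is attained.
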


\begin{proof}
Indeed, if $\epsilon' > 0$, then according to the constraints of LP~\eqref{lp:eps-Pj}, we have
$P_1(d_1^* - \epsilon', d_2^*) \neq \emptyset$ and $P_2(d_1^*, d_2^* - \epsilon') \neq \emptyset$.
The second part of the statement of this lemma then follows readily: for all $d_1 \in [d_1^* - \epsilon', d_1^*]$ and $d_2 \in [d_2^* - \epsilon', d_2^*]$, we have
\[
P_1(d_1,d_2) \supseteq P_1(d_1^* - \epsilon', d_2^*) \neq \emptyset,
\]
and 
\[
P_2(d_1,d_2) \supseteq P_2(d_1^*, d_2^* - \epsilon') \neq \emptyset.
\]
Hence, it suffices to prove that $\epsilon' > 0$.

According to \Cref{lmm:hat-u2-M-12}, there exists $\xx^- \in \Delta_m$ such that $\va_2 \cdot \xx^- < d^* = d_2^*$.
Let $\epsilon_2' = d_2^* - \va_2 \cdot \xx^-$. 
We have $\epsilon_2' > 0$. Moreover, $\xx^- \in P_2(d_1^*, d_2^* - \epsilon_2')$, which means $P_2(d_1^*, d_2^* - \epsilon_2') \neq \emptyset$.

We next argue that there also exists $\epsilon_1' > 0$ such that $P_1(d_1^* - \epsilon_1', d_2^*)
\neq \emptyset$.
Once this holds, we have $\varepsilon = \min\{\epsilon_1', \epsilon_2'\} > 0$.
Moreover,
\[
P_1(d_1^* - \varepsilon, d_2^*) 
\supseteq P_1(d_1^* - \epsilon_1', d_2^*)
\neq \emptyset,
\]
and 
\[
P_2(d_1^*, d_2^* - \varepsilon)
\supseteq P_2(d_1^*, d_2^* - \epsilon_2')
\neq \emptyset.
\]
Hence, $\varepsilon > 0$, along with two arbitrarily chosen points $\xx_1$ and $\xx_2$ from the above sets, constitutes a feasible solution to \eqref{lp:eps-Pj}, which implies the claimed result.

We next demonstrate the existence of $\epsilon_1'$. Note that it suffices to show that: \begin{equation}
\label{eq:epsilon-p-gt-0-case-2}
\exists\,\xx^* \in P_1(d_1^*, d_2^*): \va_1 \cdot \xx^* < d_1^*.
\end{equation}
Indeed, letting $\epsilon_1' = d_1^* - \va_1 \cdot \xx^*$, we then have $\xx^* \in P_1(d_1^* - \epsilon_1', d_2^*)$, which means $P_1(d_1^* - \epsilon_1', d_2^*) \neq \emptyset$. 
Consider the following two cases.

\paragraph{Case 1.}
$\va_2 \cdot \xx \le d_2^*$ for all $\xx \in \Delta_m$.
In this case, we can prove that 
\[
P_1(d_1^*, d_2^*) = \argmax_{\xx \in \Delta_m} u^L(\xx, 2)
\]
by noting the following facts:
\begin{itemize}
\item     
For all $\xx \in P_1(d_1^*, d_2^*)$, by definition we have $\va_2 \cdot \xx \ge d_2^*$.
Now that $\va_2 \cdot \xx \le d_2^*$, it must be that
$\va_2 \cdot \xx = d_2^*$, which further means 
\[
u^L(\xx, 2) 
= \gamma_2 \cdot d_2^* + \beta_2
= M_\ots
= V_{d_1^*,d_2^*}^L(2)
\]
for all $\xx \in P_1(d_1^*, d_2^*)$,
where $M_\ots = V_{d_1^*,d_2^*}^L(2)$ follows by \Cref{lmm:hat-u2-M-12}.

\item 
For all $\xx \notin P_1(d_1^*, d_2^*)$, we have $\va_2 \cdot \xx < d^*$, so
$u^L(\xx, 2) < M_\ots = V_{d_1^*,d_2^*}^L(2)$.
\end{itemize}

Hence, by \Cref{asn:2-non-dominiated}, there exists $\xx \in \argmax_{\xx' \in \Delta_m} u^L(\xx', 2) = P_1(d_1^*, d_2^*)$ such that 
$u^L(\xx, 1) < u^L(\xx, 2)$.
Now that $\va_2 \cdot \xx \le d^*$ for all $\xx \in \Delta_m$ according to the assumption of Case~1, we have $u^L(\xx, 2) \le M_\ots$, and in turn $u^L(\xx, 1) < M_\ots$.
Therefore, letting $\xx^*$ be the aforementioned $\xx$ gives
$\va_1 \cdot \xx^* < (M_\ots - \beta_1)/\gamma_1 = d_1^*$.

\paragraph{Case 2.}
There exists $\xx \in \Delta_m$ such that $\va_2 \cdot \xx > d_2^*$.
Let 
$\hat{d}_2 = \max_{\xx \in \Delta_m} \va_2 \cdot \xx$; we have $\hat{d}_2 > d_2^*$.
Hence, let $\hat{d}_2' = (\hat{d}_2 + d_2^*)/2$, we have $\hat{d}_2 > \hat{d}_2' > d_2^*$.
By continuity, there exists $\yy \in \Delta_m$ such that $\va_2 \cdot \yy = \hat{d}_2'$.

Now consider a vector $\tilde{\yy} \in \mathbb{R}^m$ such that 
\[
\tilde{y}_i = 
\begin{cases}
y_i + \delta / m' & \text{ if } y_i = 0; \\
y_i - \delta / (m - m') & \text{ if } y_i > 0,
\end{cases}
\]
where $\delta > 0$ and $m' = \left| i \in [m] : y_i = 0 \right|$.
Clearly, $\sum_{i \in [m]} \tilde{y}_i = \sum_{i \in [m]} y_i = 1$, and when $\delta$ is sufficiently close to $0$, we can ensure that: $\tilde{y}_i \in (0, 1)$ for all $i \in [m]$, so $\tilde{\yy}$ is in the interior of $\Delta_m$;
moreover, $\hat{d}_2 > \va_2 \cdot \tilde{\yy} > d_2^*$ by continuity, so $\tilde{\yy} \in P_1(d_1^*, d_2^*)$.

If it happens that $\va_1 \cdot \tilde{\yy} < d_1^*$, then we are done with $\tilde{\yy}$ being a point satisfying \eqref{eq:epsilon-p-gt-0-case-2}.
Hence, in what follows, we assume that $\va_1 \cdot \tilde{\yy} \ge d_1^*$.
According to \Cref{lmm:hat-u1-M-12}, we have 
$\va_1 \cdot \xx \le d_1^*$ for all $\xx \in f_{d^*}^{-1}(1) = P_1(d_1^*, d_2^*)$ (where $f$ is the BR-correspondence of $\tilde{u}^F_d$ defined in \eqref{eq:tilde-uF-first-ref-pair}),
so it must be that $\va_1 \cdot \tilde{\yy} = d_1^*$.

We proceed by defining a set of vectors $\zz^1, \dots, \zz^m \in \mathbb{R}^m$ such that
\[
z_j^i = 
\begin{cases}
\tilde{y}_j - \delta' & \text{ if } j \in [m] \setminus \{i\}; \\
\tilde{y}_j + (m - 1) \cdot \delta' & \text{ if } j = i,
\end{cases}
\]
where $\delta' > 0$.
When $\delta'$ is sufficiently close to $0$, we can ensure that $\hat{d}_2 > \va_2 \cdot \zz^i > d_2^*$.
Moreover, we have $\sum_{j \in [m]} z_j^i = \sum_{j \in [m]} \tilde{y}_j = 1$, and since $\tilde{\yy}$ is in the interior of $\Delta_m$, a sufficiently small $\delta'$ also ensures that $\zz^i \in \Delta_m$.
Hence, $\zz^i \in P_1(d_1^*, d_2^*)$.
Observe that
\begin{align*}
\va_1 \cdot \zz^i 
&= \va_1 \cdot \tilde{\yy} + (m-1) \cdot \delta' \cdot a_{1,i} + \sum_{j \in [m] \setminus \{i\}} \delta' \cdot a_{1,j} \\
&= d_1^* + \delta' \cdot \left[ (m-1) \cdot a_{1,i} - \sum_{j \in [m]} a_{1,j} + a_{1,i} \right] \\
&=d_1^* + \delta' \cdot \left[ m \cdot a_{1,i} - \sum_{j \in [m]} a_{1,j} \right].
\end{align*}
We claim that at least $\va_1 \cdot \zz^i < d_1^*$ for at least one $i \in [m]$.
Indeed, if $\va_1 \cdot \zz^i \ge d_1^*$ for all $i \in [m]$, according to the same argument above via \Cref{lmm:hat-u1-M-12}, we get that $\va_1 \cdot \zz^i = d_1^*$ for all $i \in [m]$.
It follows that 
$a_{1,1} = a_{1,2} = \dots = a_{1,m} = \sum_{j \in [m]} a_{1,j}/m$,
which contradicts \Cref{asn:a1-non-uniform}.
\end{proof}

Using the above results, we now prove \Cref{lmm:G-d-eps}.

\begin{proof}[Proof of \Cref{lmm:G-d-eps}]

We first solve LP~\eqref{lp:eps-Pj} and let the optimal value be $\epsilon'$.
If some $j \in \{1,2\}$ is an SSE response of $\calG_{d_1^*-{\epsilon'},d_2^*-{\epsilon'}}$, then we are done with $\hat{\epsilon} = \epsilon'$ satisfying both conditions in \Cref{lmm:G-d-eps}.
Specifically:
\begin{itemize}
\item 
Condition~(ii) holds according to \Cref{lmm:P-j-non-empty} and \Cref{lmm:epsilon-p-gt-0}.

\item
Further applying \Cref{lmm:P-j-non-empty}, we get that for any $d_j \in [d_j^* - \epsilon', d_j^*]$,
\begin{align}
V_{d_1, d_2}^L(j) 
&= \gamma_j \cdot d_j + \beta_j \nonumber \\
&\ge \gamma_j \cdot (d_j^* - \hat{\epsilon}) + \beta_j 
= V_{d_1^* - \hat{\epsilon}, d_2^* - \hat{\epsilon}}^L(j)
\label{eq:lmm:G-d-eps:eq1}
\end{align}
for both $j \in \ots$, whereas
\begin{equation}
\label{eq:lmm:G-d-eps:eq2}
V_{d_1, d_2}^L(k)
\le V_{d_1^* - \hat{\epsilon}, d_2^* - \hat{\epsilon}}^L(k)
\end{equation}
for all $k \in [n] \setminus \ots$ as $f_{d_1, d_2}^{-1}(k) \subseteq f_{d_1^* - \hat{\epsilon}, d_2^* - \hat{\epsilon}}^{-1}(k)$.
In words, the leader's payoff for playing a strategy that induced actions $1$ or $2$ will not decrease, while that for strategies inducing actions $k \notin \ots$ will not increase. 
Hence, at least one of $j \in \ots$ remains to be an SSE response, and Condition~(i) also holds.
\end{itemize}

Hence, in what follows, we assume that neither action $1$ nor $2$ is an SSE response of $\calG_{d_1^*-{\epsilon'},d_2^*-{\epsilon'}}$.

For each $k \in [n] \setminus\{1,2\}$, we define the following LP, where $\mu$ is the payoff function of $h$.
\begin{subequations}
\label{lp:eps-hat}
\begin{align}
\min_{\epsilon, \xx \in \Delta_m} \quad 
& \epsilon \tag{\ref{lp:eps-hat}} \\
\text{subject to} \quad
& \epsilon \le \epsilon' \label{lp:eps-hat-cons0}\\
& \va_j \cdot \xx \ge d_j^* - \epsilon  && \text{ for all } j \in \{1,2\}  \label{lp:eps-hat-cons1}\\
& \mu(\xx, k) \ge \mu(\xx, k')  && \text{ for all } k' \in [n] \setminus\{1,2\} \label{lp:eps-hat-cons2} \\
& u^L(\xx, k) \ge \gamma_j \cdot (d_j^* - \epsilon) + \beta_j && \text{ for all } j \in \{1,2\} \label{lp:eps-hat-cons3}
\end{align}
\end{subequations}
Hence, we obtain a class of LPs.
Let $\varepsilon$ be the minimum optimal value of LPs in this class that indeed has a feasible solution.
We show that any $\hat{\epsilon} < \varepsilon$ satisfies the conditions in the statement of \Cref{lmm:G-d-eps} via the following claims.

\begin{claim}
If neither action $1$ nor $2$ is an SSE response of $\calG_{d_1^*-{\epsilon'},d_2^*-{\epsilon'}}$, then LP~\eqref{lp:eps-hat} must be feasible for at least one $k \in [n] \setminus \{1,2\}$.
\end{claim}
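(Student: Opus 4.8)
The plan is to produce, for a suitably chosen $k \in [n] \setminus \ots$, an explicit feasible point of LP~\eqref{lp:eps-hat}, namely $(\epsilon, \xx) = (\epsilon', \xx^k)$ where $\xx^k$ is a leader strategy realizing the leader's SSE payoff in $\mathcal{G}_{d_1^* - \epsilon', d_2^* - \epsilon'}$ against a follower response outside $\ots$. So the whole proof reduces to picking this $k$ correctly and then verifying the four constraints.

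First I would gather two facts from the earlier lemmas. By \Cref{lmm:epsilon-p-gt-0}, taking $d_1 = d_1^* - \epsilon'$ and $d_2 = d_2^* - \epsilon'$ gives $P_1(d_1, d_2) \neq \emptyset$ and $P_2(d_1, d_2) \neq \emptyset$; hence by \Cref{lmm:P-j-non-empty}, $V_{d_1,d_2}^L(j) = \gamma_j \cdot (d_j^* - \epsilon') + \beta_j$ for both $j \in \ots$, and in particular these two values are finite. Since a strong Stackelberg equilibrium always exists, the leader's SSE payoff $V^* \coloneqq \max_{j \in [n]} V_{d_1^* - \epsilon', d_2^* - \epsilon'}^L(j)$ is attained by some SSE response; by hypothesis neither action $1$ nor $2$ is an SSE response, so $V^* = V_{d_1^* - \epsilon', d_2^* - \epsilon'}^L(k)$ for some $k \in [n] \setminus \ots$, and moreover $V^* > V_{d_1^* - \epsilon', d_2^* - \epsilon'}^L(j)$ for $j \in \ots$ (a non-strict inequality here would make $j$ an SSE response). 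Fix such a $k$. As $V^* > -\infty$ (recall $V_{d_1^* - \epsilon', d_2^* - \epsilon'}^L(1)$ is finite), $f_{d_1^* - \epsilon', d_2^* - \epsilon'}^{-1}(k) \neq \emptyset$, so I may pick $\xx^k \in \argmax_{\xx \in f_{d_1^* - \epsilon', d_2^* - \epsilon'}^{-1}(k)} u^L(\xx, k)$, which satisfies $u^L(\xx^k, k) = V^*$.

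Then I would check the four constraints of LP~\eqref{lp:eps-hat} at $(\epsilon, \xx) = (\epsilon', \xx^k)$. Constraint~\eqref{lp:eps-hat-cons0} holds with equality. Since $f_{d_1,d_2}^{-1}(k) \subseteq P_0(d_1, d_2)$ for $k \notin \ots$, the point $\xx^k$ lies in $P_0(d_1^* - \epsilon', d_2^* - \epsilon')$, i.e., $\va_j \cdot \xx^k \ge d_j^* - \epsilon'$ for $j \in \ots$, which is \eqref{lp:eps-hat-cons1}. Because $k \in f_{d_1^* - \epsilon', d_2^* - \epsilon'}(\xx^k)$, $k \notin \ots$, and on $P_0$ the follower responses outside $\ots$ are exactly those prescribed by $h$, we get $k \in h(\xx^k)$, hence $\mu(\xx^k, k) \ge \mu(\xx^k, k')$ for all $k' \in [n] \setminus \ots$, which is \eqref{lp:eps-hat-cons2}. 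Finally, $u^L(\xx^k, k) = V^* > V_{d_1^* - \epsilon', d_2^* - \epsilon'}^L(j) = \gamma_j \cdot (d_j^* - \epsilon') + \beta_j$ for $j \in \ots$, which is \eqref{lp:eps-hat-cons3}. Hence $(\epsilon', \xx^k)$ is feasible and the claim follows.

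The argument is essentially bookkeeping, so the only step that needs care is the choice of $k$: concluding from the hypothesis that the leader's SSE payoff in $\mathcal{G}_{d_1^* - \epsilon', d_2^* - \epsilon'}$ is realized by some response $k \notin \ots$ whose value strictly exceeds both $V_{d_1^* - \epsilon', d_2^* - \epsilon'}^L(1)$ and $V_{d_1^* - \epsilon', d_2^* - \epsilon'}^L(2)$. This in turn relies on the characterization of SSE responses of $\mathcal{G}_{d_1,d_2}$ through the quantities $V_{d_1,d_2}^L(\cdot)$ of \eqref{eq:V-d1-d2}, combined with the finiteness of $V_{d_1^* - \epsilon', d_2^* - \epsilon'}^L(1)$ and $V_{d_1^* - \epsilon', d_2^* - \epsilon'}^L(2)$ supplied by \Cref{lmm:epsilon-p-gt-0} and \Cref{lmm:P-j-non-empty}.
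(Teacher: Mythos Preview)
Your argument is correct and follows essentially the same approach as the paper: pick an SSE $(\xx,k)$ of $\calG_{d_1^*-\epsilon',\,d_2^*-\epsilon'}$, note that $k\in[n]\setminus\ots$ by hypothesis, and verify that $(\epsilon',\xx)$ satisfies all constraints of LP~\eqref{lp:eps-hat}. Your version is more explicit about invoking \Cref{lmm:epsilon-p-gt-0} and \Cref{lmm:P-j-non-empty} to justify that the right-hand side of \eqref{lp:eps-hat-cons3} indeed equals $V_{d_1^*-\epsilon',\,d_2^*-\epsilon'}^L(j)$, and your observation that the inequality there is in fact strict is correct (though only the non-strict version is needed).
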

\begin{proof}
Indeed, pick an arbitrary SSE $(\xx, k)$ of $\calG_{d_1^*-{\epsilon'},d_2^*-{\epsilon'}}$.
Since neither action $1$ nor $2$ is an SSE response, it must be that $k \in [n] \setminus \ots$.
Consider the LP~\eqref{lp:eps-hat}.
$\epsilon'$ and $\xx$ form a feasible solution to this LP as all the constraints are satisfied:
Constraints~\eqref{lp:eps-hat-cons1} and \eqref{lp:eps-hat-cons2} holds because $k$ is a best response against $\xx$;
and Constraint~\eqref{lp:eps-hat-cons3} says that $(\xx,k)$ gives the leader a higher payoff than the best payoff attainable by committing to any strategy in $P_1$ or $P_2$ (when $P_1$ and $P_2$ are non-empty, the right hand side of this constraint is the best attainable payoff in the corresponding region as argued in \Cref{lmm:P-j-non-empty}). 
\end{proof}

\begin{claim}
\label{clm:vareps-ge-0}
$\varepsilon > 0$.
\end{claim}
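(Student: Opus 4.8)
The plan is to argue that no feasible point of LP~\eqref{lp:eps-hat} can have $\epsilon \le 0$; since $\varepsilon$ is by definition the minimum of the optimal values of the finitely many LPs in this class (ranging over those $k \in [n]\setminus\ots$ for which \eqref{lp:eps-hat} is feasible, a nonempty set by the previous claim), this immediately yields $\varepsilon > 0$.

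To implement this I would suppose, toward a contradiction, that for some $k \in [n]\setminus\ots$ a pair $(\epsilon, \xx)$ is feasible for \eqref{lp:eps-hat} with $\epsilon \le 0$, and then read off three facts. First, Constraint~\eqref{lp:eps-hat-cons1} with $\epsilon \le 0$ gives $\va_j \cdot \xx \ge d_j^* - \epsilon \ge d_j^*$ for $j \in \ots$, which by the earlier identity $P_0(d_1^*, d_2^*) = \calM_\ots$ means $\xx \in \calM_\ots$. Second, Constraint~\eqref{lp:eps-hat-cons3} together with $\gamma_j > 0$ and $\epsilon \le 0$ gives $u^L(\xx, k) \ge \gamma_j (d_j^* - \epsilon) + \beta_j \ge \gamma_j d_j^* + \beta_j = M_\ots$. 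Third, since the payoff function $\mu$ of $h$ is a \emph{proper} cover of $\ots$, actions $1$ and $2$ are never best responses under $\mu$, so $\max_{j \in [n]} \mu(\xx, j) = \max_{j \in [n]\setminus\ots} \mu(\xx, j)$; hence Constraint~\eqref{lp:eps-hat-cons2} actually certifies $k \in \widetilde{\BR}(\xx) = h(\xx)$.

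These three facts ($\xx \in \calM_\ots$, $k \in h(\xx)$, and $u^L(\xx, k) \ge M_\ots$) directly contradict \Cref{def:cover} applied to the cover $\mu$ of $\ots$, which states $\max_{\xx' \in \calM_\ots}\max_{j \in h(\xx')} u^L(\xx', j) < M_\ots$. So every feasible LP in the class has strictly positive optimal value, and taking the minimum over the finitely many feasible $k$ gives $\varepsilon > 0$.

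The one step that needs a little care is the third fact above — deducing $k \in h(\xx)$ from Constraint~\eqref{lp:eps-hat-cons2}, which on its face only compares $k$ against the other responses outside $\ots$; this is exactly where the \emph{properness} of the cover is used. Everything else is a routine substitution, so I do not expect any real obstacle here.
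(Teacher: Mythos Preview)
Your argument is correct and follows essentially the same route as the paper's proof: derive $\xx \in \calM_\ots$, $k \in h(\xx)$, and $u^L(\xx,k) \ge M_\ots$, then contradict the cover property. The only difference is cosmetic---the paper splits into the subcases $\varepsilon < 0$ (handled by a direct maximin contradiction) and $\varepsilon = 0$ (handled via the cover), whereas you treat both at once using the cover; your explicit invocation of properness to upgrade \eqref{lp:eps-hat-cons2} to $k \in h(\xx)$ is exactly what the paper uses implicitly.
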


\begin{proof}

Suppose for the sake of contradiction that $\varepsilon \le 0$.

If $\varepsilon < 0$, then by Constraint~\eqref{lp:eps-hat-cons1} there exists $\xx \in \Delta_m$ such that $\va_j \cdot \xx > d_j^*$ for $j \in \ots$ (i.e., $u^L(\xx,j) > M_\ots$), which would imply that $\max_{\xx' \in \Delta_m} \min_{j\in \ots} u^L(\xx', j) > M_\ots$---a contradiction.

If $\varepsilon = 0$, then $P_0(d_1^*-\varepsilon, d_2^* - \varepsilon) = \calM_{\{1,2\}}$, in which case all $\xx$ satisfying Constraints~\eqref{lp:eps-hat-cons1} and \eqref{lp:eps-hat-cons2} would violate \eqref{lp:eps-hat-cons3}: Constraints~\eqref{lp:eps-hat-cons1} and \eqref{lp:eps-hat-cons2} require that $\xx \in P_0 = \calM_{\{1,2\}}$ and $k \in \widetilde{BR}(\xx)$, respectively, so 
\[
u^L(\xx, k) \le \max_{\xx' \in \calM_\ots} \max_{k' \in \widetilde{\BR}(\xx')} u^L(\xx', k') < M_\ots = \gamma_j \cdot (d_j^* - \varepsilon) + \beta_j,
\]
where the second transition follows by the definition of a cover (see \eqref{eq:cover}).
\end{proof}

\begin{claim}
Any $\hat{\epsilon} < \varepsilon$ satisfies the conditions in the statement of \Cref{lmm:G-d-eps}.
\end{claim}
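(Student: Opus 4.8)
The plan is to fix an arbitrary $\hat{\epsilon} \in (0, \varepsilon)$ --- such a value exists since $\varepsilon > 0$ by \Cref{clm:vareps-ge-0} --- and to verify conditions (i) and (ii) of \Cref{lmm:G-d-eps} directly for every $d_1 \in [d_1^* - \hat{\epsilon}, d_1^*]$ and $d_2 \in [d_2^* - \hat{\epsilon}, d_2^*]$. First, observe that each linear program in the class \eqref{lp:eps-hat} carries the constraint $\epsilon \le \epsilon'$, so the optimal value of any feasible instance is at most $\epsilon'$; hence $\hat{\epsilon} < \varepsilon \le \epsilon'$. Consequently $d_j \ge d_j^* - \hat{\epsilon} \ge d_j^* - \epsilon'$ for $j \in \ots$, and \Cref{lmm:epsilon-p-gt-0} gives $P_1(d_1, d_2) \neq \emptyset$ and $P_2(d_1, d_2) \neq \emptyset$. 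Plugging this into \Cref{lmm:P-j-non-empty} yields $V_{d_1, d_2}^L(j) = \gamma_j \cdot d_j + \beta_j$ for $j \in \ots$, which is exactly condition~(ii).

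For condition~(i), I would argue by contradiction. Suppose neither action $1$ nor action $2$ is an SSE response of $\mathcal{G}_{d_1, d_2}$, and let $(\xx, k)$ be an SSE. Then $k \in [n] \setminus \ots$; since actions outside $\ots$ can only be best responses inside $P_0$, we get $\xx \in f_{d_1, d_2}^{-1}(k) \subseteq P_0(d_1, d_2)$ and $k \in h(\xx)$, i.e.\ $\mu(\xx, k) \ge \mu(\xx, k')$ for all $k' \in [n] \setminus \ots$. I then claim that $(\hat{\epsilon}, \xx)$ is feasible for the instance of \eqref{lp:eps-hat} associated with this $k$: the constraint $\epsilon \le \epsilon'$ holds since $\hat{\epsilon} < \epsilon'$; the constraints $\va_j \cdot \xx \ge d_j^* - \hat{\epsilon}$ hold because $\xx \in P_0(d_1, d_2)$ forces $\va_j \cdot \xx \ge d_j \ge d_j^* - \hat{\epsilon}$; the $\mu$-constraints hold because $k \in h(\xx)$; and finally, since $(\xx, k)$ is an SSE and $P_j \neq \emptyset$, the leader's SSE payoff $u^L(\xx, k)$ is at least $V_{d_1, d_2}^L(j) = \gamma_j \cdot d_j + \beta_j \ge \gamma_j \cdot (d_j^* - \hat{\epsilon}) + \beta_j$ (using $\gamma_j > 0$ and $d_j \ge d_j^* - \hat{\epsilon}$), which is the last constraint of \eqref{lp:eps-hat}. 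Thus \eqref{lp:eps-hat} for this $k$ is feasible with objective value at most $\hat{\epsilon} < \varepsilon$, contradicting the definition of $\varepsilon$ as the minimum optimal value over the feasible instances in the class.

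Finally I would note that, since every $\hat{\epsilon} \in (0, \varepsilon)$ works, the value promised by \Cref{lmm:G-d-eps} can be taken to be $\hat{\epsilon} = \varepsilon/2$; as $\varepsilon$ is the minimum of the optimal values of at most $n-2$ linear programs whose data has polynomial bit-size, this $\hat{\epsilon}$ is computable in polynomial time with BR-correspondence queries, which completes the proof of \Cref{lmm:G-d-eps}. I expect the only delicate point to be the bookkeeping around $f_{d_1, d_2}$ --- pinning down that an SSE with a response $k \notin \ots$ must lie in $P_0$ with $k \in h(\xx)$, and that $P_1, P_2 \neq \emptyset$ is precisely what makes $V_{d_1, d_2}^L(j)$ the benchmark appearing in the last LP constraint; the rest is direct substitution.
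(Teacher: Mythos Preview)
Your verification of conditions~(i) and~(ii) is correct and essentially matches the paper's proof. The only structural difference is that the paper first shows condition~(i) at the corner point $(d_1^* - \hat{\epsilon}, d_2^* - \hat{\epsilon})$ and then extends to the whole box via the monotonicity argument of \eqref{eq:lmm:G-d-eps:eq1}--\eqref{eq:lmm:G-d-eps:eq2}, whereas you handle an arbitrary $(d_1,d_2)$ in the box directly by feeding $(\hat{\epsilon},\xx)$ into LP~\eqref{lp:eps-hat}. Both routes are fine; yours is slightly more self-contained.

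There is, however, a genuine error in your final paragraph. You claim that $\hat{\epsilon} = \varepsilon/2$ is computable in polynomial time because $\varepsilon$ is the minimum optimal value of finitely many LPs of polynomial bit-size. But those LPs contain the parameters $u^L(\cdot,k)$, $\gamma_j$, and $\beta_j$, none of which are known to the follower; you cannot solve them. The paper explicitly flags this: ``we cannot hope to solve the LPs defined in \eqref{lp:eps-hat} to obtain $\hat{\epsilon}$ since we do not know $u^L$.'' The actual computation proceeds by binary search using $\ASSE$ (with BR-correspondence queries): one repeatedly halves a candidate $\epsilon$ starting from $\epsilon'$ and queries whether some $j \in \ots$ is an SSE response of $\calG_{d_1^*-\epsilon,\, d_2^*-\epsilon}$; the claim you just proved guarantees termination once $\epsilon$ drops below $\varepsilon$, and the polynomial bit-size of $\varepsilon$ (which \emph{does} follow from the LP formulation) bounds the number of halvings. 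So the LP characterization is used only to bound the bit-size of the threshold, not to compute it directly.
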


\begin{proof}
According to \Cref{clm:vareps-ge-0}, $\varepsilon > 0$.
Hence, $\hat{\epsilon} > 0$.
By \Cref{lp:eps-hat-cons0}, we have $\hat{\epsilon} < \varepsilon \le \epsilon'$, so Condition~(ii) holds according to \Cref{lmm:P-j-non-empty} and \Cref{lmm:epsilon-p-gt-0}.

By definition, $\varepsilon$ is the minimum of the optimal solutions of the LPs.
Hence, if we fix $\epsilon$ to $\hat{\epsilon} = \varepsilon / 2$, no $\xx \in \Delta_m$ satisfies the constraints in the above LP for any $k \in [n]\setminus \{1,2\}$. This means that
either $k$ is not a best response against $\xx$ (Constraints~\eqref{lp:eps-hat-cons1} and \eqref{lp:eps-hat-cons2}), or $k$ is a best response but the payoff of $(\xx,k)$ is lower than the highest payoff attainable by committing to a strategy in $P_1$ or $P_2$ (Constraint~\eqref{lp:eps-hat-cons3}).
In summary, $k$ cannot an SSE response in $\calG_{d_1^*-\hat{\epsilon},d_2^*-\hat{\epsilon}}$, so only action $1$ or $2$ can be SSE responses.
The same argument showing \eqref{eq:lmm:G-d-eps:eq1} and \eqref{eq:lmm:G-d-eps:eq2} also shows that only action $1$ or $2$ can be SSE responses of $\calG_{d_1,d_2}$ for all $d_1 \in [d_1^* - \hat{\epsilon}, d_1^*]$ and $d_2 \in [d_1^* - \hat{\epsilon}, d_2^*]$, so Condition~(i) follows.
\end{proof}


We conclude the proof by showing that a desired $\hat{\epsilon}$ can be computed in polynomial time.
Note that we cannot hope to solve the LPs defined in \eqref{lp:eps-hat} to obtain $\hat{\epsilon}$ since we do not know $u^L$. 
Nevertheless, since Condition~(i) holds for $\hat{\epsilon} < \varepsilon$ as we argued above, we can use binary search to find $\hat{\epsilon}$ from the interval $[0, \epsilon']$:
we maintain a candidate value $\epsilon$, query the oracle $\ASSE$ to check if action $1$ or $2$ is an SSE response of $\mathcal{G}_{d_1^*-\epsilon,d_2^*-\epsilon}$, and halve $\epsilon$ if they are not.
The algorithm will terminate when $\epsilon < \varepsilon$, which takes polynomial time as
the bit-size of $\varepsilon$ is bounded from above by a polynomial in the size of the representation of the LPs defined in \eqref{lp:eps-hat}.
\end{proof}

\subsection{Proof of \Cref{thm:second-ref-pair-BR-correspondence}}

\thmsecondrefpairBRcorrespondence*

\begin{proof}
Using \Cref{lmm:G-d-eps}, we first find $\hat{\epsilon}$ satisfying the conditions stated in the lemma.
Let $d_1 = d_1^* - \hat{\epsilon}$ and $d_2 = d_2^* - \hat{\epsilon}$.
Since $d_1 < d_1^*$ and $d_2 < d_2^*$, we have $V_{d_1,d_2}^L(1) < M_\ots$ and $V_{d_1,d_2}^L(2) < M_\ots$.

If both actions $1$ and $2$ are SSE responses of $\calG_{d_1, d_2}$, then we are done:
picking arbitrary $\xx \in \max_{\xx' \in P_1(d_1, d_2)} \va_1 \cdot \xx'$ and $\yy \in \max_{\yy' \in P_2(d_1, d_2)} \va_2 \cdot \yy'$ gives two SSEs $(\xx, 1)$ and $(\yy, 2)$;
we have $u^L(\xx, 1) = u^L(\yy, 2) = V_{d_1,d_2}^L(2) < M_\ots$.

If action $1$ is not an SSE response of $\calG_{d_1, d_2}$, we have $V_{d_1,d_2}^L(1) < V_{d_1,d_2}^L(2)$.
Let 
\[
d_1' = (V_{d_1,d_2}^L(2) - \beta_1) / \gamma_1.
\]

According to Condition~(ii) of \Cref{lmm:G-d-eps}, we have $V_{d,d_2}^L(1) \ge V_{d,d_2}^L(2)$ if and only if $d \ge d_1'$.
This means that action $1$ is an SSE response of $\calG_{d, d_2}$ if and only if $d \ge d_1'$, and we can use oracle $\AER$ and binary search to pin down $d_1'$. 
Meanwhile $V_{d_1',d_2}^L(1) = V_{d_1',d_2}^L(2)$, so both actions $1$ and $2$ are SSE responses when $d = d'_1$. Similarly to the first case, where both actions are SSE responses of $\calG_{d_1,d_2}$, we can obtain a desired reference pair.

Similarly, if action $2$ is not an SSE response of $\calG_{d_1, d_2}$, then we search for a number
\[
d_2' = (V_{d_1,d_2}^L(1) - \beta_2) / \gamma_2.
\]
Both actions $1$ and $2$ are SSE responses of $\calG_{d_1, d_2'}$, and a desired reference pair can be obtained accordingly.
\end{proof}

\subsection{Querying by Using Payoff Matrices}
\label{sec:second-ref-payoff-query}

The above algorithm uses BR-correspondences in the queries.
We next show how to transform the algorithm to one that uses payoff-based queries, thereby solving our problem completely.
Using a similar approach, we first define the following payoff matrix parameterized by two numbers $d_1 < d_1^*$ and $d_2 < d_2^*$. We will show that the BR-correspondence of this payoff matrix functions equivalently to $f_{d_1, d_2}$ defined in \eqref{eq:brc-f-d1-d2-h}.

\begin{boxedtext}
For every $\xx \in \Delta_m$, let
\begin{align}
\label{eq:tilde-uF-d1-d2}
\tilde{u}_{d_1, d_2}^F(\xx,j) \coloneqq 
\begin{cases}
\vb_j \cdot  (\xx - \zz_j) + c_j, & \text{ if } j \in \ots;\\
\mu(\xx,j), & \text{ if } j \in [n] \setminus \ots.
\end{cases}
\end{align}
where $\mu$ is an arbitrary proper cover of $\ots$; and 
$\vb_j, c_j \in \mathbb{R}$ and $\zz_j \in \Delta_m$ are functions of $d_1$ and $d_2$, defined as follows.\footnotemark

\begin{itemize}
\item
First, for each $j \in \ots$, define
\begin{align}
&Z_1(d_1, d_2) \coloneqq \left\{\xx \in \Delta_m : \va_1 \cdot \xx = \bar{d}_1 \text{ and }  \va_2 \cdot \xx \ge \bar{d}_2  \right\}, \label{eq:Z-1} \\
\text{ and }\quad
&Z_2(d_2, d_2) \coloneqq \left\{\xx \in \Delta_m : \va_2 \cdot \xx = \bar{d}_2 \right\} \label{eq:Z-2},
\end{align}
where for $j \in \ots$:
\[
\bar{d}_j \coloneqq (d_j + d_j^*)/2.
\]

\item
Let $K_1 = [n] \setminus \ots$ and $K_2 = [n] \setminus \{2\}$.
We define
\begin{align}
\label{eq:c-1}
c_1 = \max_{\xx \in Z_1, k \in K_1} \tilde{u}_{d_1, d_2}^F(\xx,k),
\end{align}
and let $\zz_1$ and $k_1$ be an arbitrary solution corresponding to the above maximum value, i.e.,
\begin{align}
\label{eq:zz-1}
\zz_1 &\in \argmax_{\xx \in Z_1} \max_{k \in K_1} \tilde{u}_{d_1, d_2}^F(\xx,k), \\
\label{eq:k-1}
\text{and} \quad
k_1 &\in \argmax_{k \in K_1} \tilde{u}_{d_1, d_2}^F(\zz_1,k).
\end{align}
Note that according to \eqref{eq:tilde-uF-d1-d2}, $\tilde{u}_{d_1, d_2}^F(\zz_1,k) = \mu(\zz_1, k)$ for all $k \in K_1$, so the above definitions in fact only rely on $\mu$.
Furthermore, let
\begin{align}
\vb_1 \coloneqq \gg_{k_1} - \frac{6 \cdot W_1}{d_1^* - d_1} \cdot \va_1,
\end{align}

where we use
\[
W_j \coloneqq 1 + \max_{i\in[m], k\in K_j} \tilde{u}_{d_1,d_2}^F(i,k) - \min_{i\in[m], k\in K_j} \tilde{u}_{d_1,d_2}^F(i,k)
\]
for $j \in \ots$, i.e., it is a value strictly larger than the maximum gap between the payoff parameters;
in addition, for all $k \in [n]$, we use
\[
\gg_k \coloneqq \left( \tilde{u}_{d_1,d_2}^F(1,k), \dots, \tilde{u}_{d_1,d_2}^F(m,k) \right).
\]

\item
$c_2$, $\zz_2$, and $k_2$ are defined analogously by changing the labels in the above definitions from $1$ to $2$.
Note that since $K_2 = K_1 \cup \{1\}$, these parameters also depend on $\tilde{u}_{d_1, d_2}^F(\xx,1) = \vb_1 \cdot (\xx - \zz_1) + c_1$, with $c_1$ and $\zz_1$ defined above.

\end{itemize}
\end{boxedtext}

\footnotetext{We omit the dependencies of these parameters on $d_1$ and $d_2$ in the notation to simplify the presentation.}

With the above definitions, we show several properties of $\tilde{u}_{d_1,d_2}^F$ in the subsequent lemmas.
We extend the notation in \Cref{sec:2nd-ref-BR-query}: hereafter, we let $\tilde{f}_{d_1,d_2}$ be the BR-correspondence of $\tilde{u}_{d_1,d_2}^F$, 
let $\widetilde{\calG}_{d_1,d_2} \coloneqq (u^L, \tilde{u}_{d_1,d_2}^F)$,
and let
$\widetilde{V}_{d_1,d_2}^L(j) = \max_{\xx \in \tilde{f}_{d_1,d_2}^{-1}(j)} u^L(\xx, j)$.

\begin{figure}[t]
	\center
	\begin{tikzpicture}
	
	\begin{ternaryaxis}[
	ternary limits relative=false,
	width= 50mm,
 	xtick=\empty,
 	ytick=\empty,
 	ztick=\empty,
	area style,
	grid=none,
	clip=false,
	xmin=0, xmax=1.0,
	ymin=0, ymax=1.0,
	zmin=0, zmax=1.0,
    axis on top,
	]
	
	\addplot3 coordinates {
		(0.12, 0.88, 0)
		(0.32, 0.38, 0.3)
		(0.68, 0.08, 0.24)
        (0.95, 0, 0.05)
        (1, 0, 0)
	};
	
	\addplot3 coordinates {
		(0, 1, 0)
		(0.12, 0.88, 0)
		(0.43, 0.1, 0.47)
		(0.3, 0, 0.7)
		(0, 0, 1)
	};
	
	\addplot3 coordinates {
        (0.95, 0, 0.05)
		(0.68, 0.08, 0.24)
		(0.32, 0.38, 0.3)
		(0.432, 0.1, 0.468)
		(0.302, 0, 0.698)
	};
	
    \draw[black,thick] (0.12, 0.88, 0) -- (0.43, 0.1, 0.47) -- (0.3, 0, 0.7);
	\draw[black,thick] (0.32, 0.38, 0.3) -- (0.68, 0.08, 0.24) -- (0.95, 0, 0.05);
	
 	\draw[black,thick, dotted] (0.05, 1.05, -0.1) -- (0.3, -0.1, 0.8);
 	\draw[gray, thick, dotted] (0.25, 0.85, -0.1) -- (0.47, -0.1, 0.63);
 	
 	\draw[black,thick, dotted] (0.16, 0.6, 0.24) -- (1.1, 0, -0.1);
	\draw[gray, thick, dotted] (0.39, 0.26, 0.35) -- (0.97, -0.1, 0.13);

	\node at (0.56, 0.26, 0.18) {\large $1$};
	\node at (0.15, 0.25, 0.6) {\large $2$};
	\node at (0.54, 0.08, 0.38) {\large $\mu$};
	
	\node [rotate=12.5, anchor=west] at (0.3, -0.1, 0.8) {\small $\va_{2} \cdot \xx = d_2$};
	\node [rotate=12.5, anchor=west] at (1.1, 0, -0.1) {\small $\va_{1} \cdot \xx = d_1$};
	
	\node [gray, rotate=12.5, anchor=west] at (0.47, -0.1, 0.63) {\small $\va_{2} \cdot \xx = \bar{d}_2$};
	\node [gray, rotate=12.5, anchor=west] at (0.97, -0.1, 0.13) {\small $\va_{1} \cdot \xx = \bar{d}_1$};

	\node[circle,inner sep=1.3pt,fill=BrickRed] at (0.43, 0.1, 0.47) {};
	\node[BrickRed] at (0.37, 0.12, 0.51) {\small $\zz_2$};
	
	\node[circle,inner sep=1.3pt,fill=BlueViolet] at (0.68, 0.08, 0.24) {};
	\node[BlueViolet] at (0.68, 0.13, 0.19) {\small $\zz_1$};

	\end{ternaryaxis}
	
	\end{tikzpicture}
	\caption{
	Illustration of $\tilde{f}_{d_1,d_2}$. The triangle represents the strategy space $\Delta_m$ of the leader. The regions labeled $1$ and $2$ are $\tilde{f}_{d_1, d_2}^{-1}(1)$ and $\tilde{f}_{d_1, d_2}^{-1}(1)$, respectively. The region labeled $\mu$ is $\bigcup_{k\in [n] \setminus \ots} \tilde{f}_{d_1, d_2}^{-1}(k)$.
	The dotted lines depicts the boundaries of 
$f_{d_1, d_2}^{-1}(j)$ and $f_{\bar{d}_1, \bar{d}_2}^{-1}(j)$. Recall that $f_{d_1, d_2}^{-1}(j) = P_j(d_1,d_2)$ and $f_{\bar{d}_1, \bar{d}_2}^{-1}(j) = P_j(\bar{d}_1,\bar{d}_2)$. 
	\label{fig:brc-payoff}}
\end{figure}

We make several observations about $\tilde{f}_{d_1,d_2}$ in the following lemmas.
The first lemma \Cref{lmm:2nd-ref-payoff-f-supset} characterizes the structure of $\tilde{f}_{d_1,d_2}$ in comparison to $f_{d_1,d_2}$ defined in \Cref{sec:2nd-ref-BR-query}. 
\Cref{fig:brc-payoff} provides an illustration of this characterization.

\begin{restatable}{lemma}{lmmtndrefpayofffsupset}
\label{lmm:2nd-ref-payoff-f-supset}
Suppose that $d_j \le d_j^*$ and $d'_j \in [\bar{d}_j, d_j^*]$ for each $j \in \ots$.
Then the following statements hold for all $j \in \ots$ and $k \in [n] \setminus \ots$:
\begin{itemize}
\item[(i)] 
$\va_j\cdot\xx \leq d'_j$ for all $\xx \in \tilde{f}_{d_1,d_2}^{-1}(j)$;

\item[(ii)] 
$\bigcup_{j \in \ots} f_{d_1, d_2}^{-1}(j) \subseteq \bigcup_{j \in \ots}\tilde{f}_{d_1,d_2}^{-1}(j)$; and

\item[(iii)]
$f_{d'_1,d'_2}^{-1}(k) \subseteq \tilde{f}_{d_1,d_2}^{-1}(k) \subseteq f_{d_1,d_2}^{-1}(k)$.
\end{itemize}
\end{restatable}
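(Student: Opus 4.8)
The whole argument rests on one algebraic identity. Write $\lambda_j \coloneqq 6W_j/(d_j^* - d_j)$ for $j \in \ots$; since $d_j < d_j^*$ and $W_j \ge 1$, we have $\lambda_j > 0$. The first step is to show that for every $\xx \in \Delta_m$ and every $j \in \ots$,
\[
\tilde{u}_{d_1,d_2}^F(\xx, j) = \tilde{u}_{d_1,d_2}^F(\xx, k_j) - \lambda_j\bigl(\va_j \cdot \xx - \bar{d}_j\bigr).
\]
This follows by expanding $\vb_j = \gg_{k_j} - \lambda_j\va_j$, using that $\gg_{k_j} \cdot \xx = \tilde{u}_{d_1,d_2}^F(\xx, k_j)$ because $\tilde{u}_{d_1,d_2}^F(\cdot, k_j)$ is affine with vertex-value vector $\gg_{k_j}$, that $c_j = \tilde{u}_{d_1,d_2}^F(\zz_j, k_j)$ by the choice of the maximizers $(\zz_j, k_j)$, and that $\va_j \cdot \zz_j = \bar{d}_j$ since $\zz_j \in Z_j$. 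Part (i) is then immediate: if $\xx \in \tilde{f}_{d_1,d_2}^{-1}(j)$ then $\tilde{u}_{d_1,d_2}^F(\xx, j) \ge \tilde{u}_{d_1,d_2}^F(\xx, k_j)$, so the identity forces $\va_j \cdot \xx \le \bar{d}_j$, and $\bar{d}_j \le d'_j$ by hypothesis.

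For (iii) I would treat the two inclusions separately. Left inclusion: take $\xx \in f_{d'_1,d'_2}^{-1}(k)$ with $k \notin \ots$; then $\xx \in P_0(d'_1,d'_2)$, so $\va_j \cdot \xx \ge d'_j \ge \bar{d}_j$ for each $j \in \ots$, whence the identity gives $\tilde{u}_{d_1,d_2}^F(\xx, j) \le \tilde{u}_{d_1,d_2}^F(\xx, k_j)$; unwinding this once more when $k_j \in \ots$ (which can only mean $k_2 = 1$) and using $k \in h(\xx)$ shows $\tilde{u}_{d_1,d_2}^F(\xx, j) \le \mu(\xx, k) = \tilde{u}_{d_1,d_2}^F(\xx, k)$ for both $j \in \ots$; as $k$ also beats every cover action, $k$ is a best response to $\xx$ under $\tilde{u}_{d_1,d_2}^F$, i.e.\ $\xx \in \tilde{f}_{d_1,d_2}^{-1}(k)$. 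Right inclusion: take $\xx \in \tilde{f}_{d_1,d_2}^{-1}(k)$ with $k \notin \ots$. Since cover actions keep their $\mu$-payoffs in $\tilde{u}_{d_1,d_2}^F$, $k$ beating all of them gives $k \in h(\xx)$, so it remains to place $\xx$ in $P_0(d_1,d_2)$. From ``$k$ beats action $1$'' and the identity, $\lambda_1(\va_1 \cdot \xx - \bar{d}_1) \ge \mu(\xx, k_1) - \mu(\xx, k) > -W_1$, hence $\va_1 \cdot \xx > \bar{d}_1 - W_1/\lambda_1 = d_1 + (d_1^*-d_1)/3 > d_1$. The inequality $\va_2 \cdot \xx > d_2$ is the real content.

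If $k_2 \notin \ots$, the same computation applied to action $2$ (now using $W_2 \ge W_1$, since $K_1 \subseteq K_2$) gives $\va_2 \cdot \xx > d_2$. The case $k_2 = 1$ is the main obstacle: action $2$ then inherits the steep slope through $\vb_2 = \gg_1 - \lambda_2\va_2$, and I would argue by contradiction. Assume $\va_2 \cdot \xx \le d_2$, so $\bar{d}_2 - \va_2\cdot\xx \ge (d_2^*-d_2)/2$; then ``$k$ beats action $2$'' and the identity give $\mu(\xx,k) \ge \tilde{u}_{d_1,d_2}^F(\xx,1) + 3W_2$, and combining with ``$k$ beats action $k_1$'' and $|\mu(\xx,k)-\mu(\xx,k_1)| < W_1$ yields $\lambda_1(\va_1\cdot\xx - \bar{d}_1) \ge 3W_2 - W_1 + 1$. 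Bounding $\va_1\cdot\xx \le \max_i a_{1,i}$ turns this into an upper bound on $W_2$, while evaluating the definition of $W_2$ over $K_2 \ni 1$ at the pure strategies extremizing $\va_1$ gives a lower bound $W_2 \ge \lambda_1\bigl(\max_i a_{1,i} - \min_i a_{1,i}\bigr) - W_1 + 2$. Plugging in $\lambda_1 = 6W_1/(d_1^*-d_1)$, together with the facts that $\va_1$ is non-constant by \Cref{asn:three-assumptions}, that $k_2 = 1$ forces $\va_1 \cdot \zz_2 \le \bar{d}_1$ (hence $\min_i a_{1,i} \le \bar{d}_1$, so $d_1^*-d_1 \le 2(\max_i a_{1,i} - \min_i a_{1,i})$), and that $d_1^* \le \max_i a_{1,i}$ (from $M_\ots \le M_{\{1\}}$), the constant $6$ makes the upper and lower bounds on $W_2$ incompatible --- the desired contradiction. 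Finally, (ii) follows by contraposition: if $\xx \notin \tilde{f}_{d_1,d_2}^{-1}(1) \cup \tilde{f}_{d_1,d_2}^{-1}(2)$ then $\xx \in \tilde{f}_{d_1,d_2}^{-1}(k)$ for some $k \notin \ots$, and the strict inequalities $\va_1\cdot\xx > d_1$ and $\va_2\cdot\xx > d_2$ just established put $\xx$ outside $P_1(d_1,d_2) \cup P_2(d_1,d_2) = f_{d_1,d_2}^{-1}(1) \cup f_{d_1,d_2}^{-1}(2)$. The step I expect to be hardest is precisely the $k_2 = 1$ sub-case: it is there that the choice of the constant $6$ and the fact that $W_2$ is measured over $K_2$ --- which contains the steeply varying action $1$ --- are exactly what make the contradiction close.
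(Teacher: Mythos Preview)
Your argument is correct, but it takes a harder road than the paper's. Both approaches rest on the same identity
\[
\tilde{u}_{d_1,d_2}^F(\xx,j)=\tilde{u}_{d_1,d_2}^F(\xx,k_j)-\lambda_j\bigl(\va_j\cdot\xx-\bar d_j\bigr),
\]
and both prove (i) from it in one line. The difference lies in the order in which (ii) and (iii) are established.

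The paper proves (ii) \emph{directly} and then deduces (iii) by complement. For any $j\in\ots$ and $\xx\in P_j(d_1,d_2)$ one has $\va_j\cdot\xx\le d_j$, hence $\lambda_j(\bar d_j-\va_j\cdot\xx)\ge 3W_j$; combined with $|\gg_{k_j}\cdot(\xx-\zz_j)|<W_j$ and $c_j\ge\min_{i,k'\in K_j}\tilde{u}^F(i,k')$, this yields $\tilde{u}^F(\xx,j)>W_j+c_j\ge\tilde{u}^F(\xx,k)$ for every $k\in K_1\subseteq K_j$. The point is that this bound is uniform in $k_j$: because $K_2$ already contains action $1$, the quantity $W_2$ absorbs the large variation of $\tilde{u}^F(\cdot,1)$, so no case split on whether $k_2=1$ is needed. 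Claim (iii) then drops out from (i) and (ii) via $P_0(d'_1,d'_2)\subseteq\widetilde P_0(d_1,d_2)\subseteq P_0(d_1,d_2)$, where $\widetilde P_0$ is the set on which cover actions dominate $\ots$.

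You instead attack the right inclusion of (iii) head-on and recover (ii) by contraposition. That forces you into the $k_2=1$ sub-case, where you must combine (\emph{a}) the lower bound $W_2\ge\lambda_1(\max_i a_{1,i}-\min_i a_{1,i})-W_1+2$, (\emph{b}) the consequence $\min_i a_{1,i}\le\bar d_1$ of $k_2=1$, and (\emph{c}) $d_1^*\le\max_i a_{1,i}$, to squeeze out a contradiction from the constant $6$. Your computation does close (one gets $3(\max a_1-\min a_1)<d_1^*-d_1\le 2(\max a_1-\min a_1)$, contradicting \Cref{asn:a1-non-uniform}), so the proof is valid. But the paper's route sidesteps all of this: the single observation that $W_j$ is measured over $K_j\supseteq K_1$ is exactly what makes the $k_2=1$ case a non-issue, and this is worth internalising as the ``right'' reason the construction works.
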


\begin{proof}
We prove each of the statements.

\paragraph{Claim (i).}
Suppose that $\va_j \cdot \xx > d'_j$. We argue that $\xx \notin  \tilde{f}_{d_1,d_2}^{-1}(j)$ to prove the claim.

Indeed, we have $\va_j \cdot (\xx - \zz_j) > 0$ and hence,

\begin{align}
\tilde{u}_{d_1,d_2}^F(\xx, j) 
&= \vb_j \cdot (\xx - \zz_j) + c_j \nonumber \\
&= \gg_{k_j} \cdot (\xx - \zz_j) - \frac{6 \cdot W_j}{d_j^* - d_j} \cdot \va_j \cdot (\xx - \zz_j) + c_j \nonumber \\
&< \gg_{k_j} \cdot (\xx - \zz_j) + c_j \nonumber \\
&= \tilde{u}_{d_1,d_2}^F(\xx, k_j) - \tilde{u}_{d_1,d_2}^F(\zz_j, k_j)  + c_j \nonumber \\
&= \tilde{u}_{d_1,d_2}^F(\xx, k_j), \label{eq:payoff-claim-i-eq1}
\end{align}
where we used the fact that $\tilde{u}_{d_1,d_2}^F(\zz_j, k_j)  = c_j$ according to the definitions of $\zz_j$, $k_j$ and $c_j$.
Hence, $\xx \notin \tilde{f}_{d_1,d_2}^{-1}(j)$, and Claim~(i) holds.

\paragraph{Claim (ii).}
Pick arbitrary $j \in \ots$ and $\xx \in P_j(d_1, d_2)$.
We argue that 
$\tilde{u}_{d_1,d_2}^F (\xx, j) > \tilde{u}_{d_1,d_2}^F (\xx, k)$, which implies Claim~(ii) immediately.

Indeed, since $\xx \in P_j(d_1, d_2)$, by definition, we have $\va_j \cdot \xx \le d_j$. Hence,
\begin{align}
\tilde{u}_{d_1,d_2}^F(\xx, j) 
&= \vb_j \cdot (\xx - \zz_j) + c_j \nonumber \\
&= \gg_{k_j} \cdot (\xx - \zz_j) - \frac{4 \cdot W_j}{d_j^* - d_j} \cdot \va_j \cdot (\xx - \zz_j) + c_j \nonumber \\
&> - 2 \cdot W_j - \frac{6 \cdot W_j}{d_j^* - d_j} \cdot (d_j - \bar{d}_j) + c_j \nonumber \\
&= W_j + c_j \nonumber \\
&\ge \tilde{u}_{d_1,d_2}^F(\xx, k) \label{eq:payoff-claim-ii-eq1}
\end{align}
for all $k \in [n] \setminus \ots$,
where we used the fact that
\[
\gg_{k_j} \cdot (\xx - \zz_j) \ge \min_{\xx' \in \Delta_m} \gg_{k_j} \cdot \xx' - \max_{\xx' \in \Delta_m} \gg_{k_j} \cdot \xx' > W_j.
\]

\paragraph{Claim (iii).}
Consider the BR-correspondence $h: \Delta_m \to 2^{[n] \setminus \ots}$ of $\mu$.
For any $d_1$ and $d_2$, according to the construction of $f_{d_1,d_2}$ and $\tilde{f}_{d_1,d_2}$, we have 
\begin{align*}
& f_{d_1,d_2}^{-1}(k) = h^{-1}(k) \cap  P_0(d_1, d_2) \\
\text{and }\quad
& \tilde{f}_{d_1,d_2}^{-1}(k) = h^{-1}(k) \cap \widetilde{P}_0(d_1, d_2),
\end{align*}
where 
\[
\widetilde{P}_0(d_1, d_2) 
\coloneqq \left\{\xx \in \Delta_m : \max_{k \in [n] \setminus \ots} \tilde{u}_{d_1, d_2}^F(\xx, k) \ge \max_{j \in \ots} \tilde{u}_{d_1, d_2}^F(\xx, j) \right\}.
\]
We argue that 
\begin{equation}
\label{eq:P0-wtP0}
P_0(d'_1, d'_2) \subseteq \widetilde{P}_0(d_1, d_2) \subseteq P_0(d_1, d_2)
\end{equation}
to complete the proof.

Indeed, we have 
\begin{align*}
&P_0(d'_1, d'_2) = \closure\left(\Delta_m \setminus \bigcup_{j\in \ots} P_j(d'_1, d'_2) \right) \\
\text{and }\quad
&P_0(d_1, d_2) = \closure\left(\Delta_m \setminus \bigcup_{j\in \ots} P_j(d_1, d_2) \right),
\end{align*}
where $\closure(\cdot)$ denotes the closure of a set.
Since \eqref{eq:payoff-claim-i-eq1} and \eqref{eq:payoff-claim-ii-eq1} are strictly satisfied,
we also have 
\[
\widetilde{P}_0(d_1, d_2) = \closure \left(
\Delta_m \setminus \bigcup_{j \in \ots} \tilde{f}_{d_1,d_2}^{-1}(j)
\right).
\]
According to Claims~(i) and (ii) we proved above,
\[
\bigcup_{j\in \ots} P_j(d'_1, d'_2) 
\supseteq
\bigcup_{j \in \ots} \tilde{f}_{d_1,d_2}^{-1}(j)
\supseteq
\bigcup_{j\in \ots} P_j(d_1, d_2).
\]
Hence, \eqref{eq:P0-wtP0} holds, and this proves Claim~(iii).
\end{proof}

The next lemma presents a result similar to \Cref{lmm:P-j-non-empty}.
Regarding \Cref{fig:brc-payoff}, the result indicates that $\zz_j$ indeed hits the line $\va_j \cdot \xx = \bar{d}_j$ whenever $Z_j \neq \emptyset$.

\begin{restatable}{lemma}{lmmtndrefhatuL}
\label{lmm:2nd-ref-hat-u-L}
Suppose that $j \in \ots$ and $Z_j \neq \emptyset$.
Then $\zz_j \in \tilde{f}_{d_1,d_2}^{-1}(j)$ and  $\widetilde{V}_{d_1,d_2}^L(j) = \gamma_j \cdot \bar{d}_j + \beta_j$.
\end{restatable}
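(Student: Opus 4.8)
The plan is to establish both conclusions for $j=1$; the case $j=2$ is symmetric under the relabeling used in the construction (swap $1\leftrightarrow2$, $K_1\leftrightarrow K_2$, etc.), so I would only remark on it. Write $\tilde u^F$ and $\tilde f$ for $\tilde u^F_{d_1,d_2}$ and $\tilde f_{d_1,d_2}$. First I would record the facts driving the argument: since $\zz_j\in Z_j$ we have $\va_j\cdot\zz_j=\bar{d}_j$ for $j\in\ots$; by the definitions of $c_j$, $\zz_j$, $k_j$ we have $\tilde u^F(\zz_j,k_j)=c_j$; for every $k\in K_2=[n]\setminus\{2\}$ we have $\gg_k\cdot\xx=\tilde u^F(\xx,k)$; and $\mu(\xx,k)=\tilde u^F(\xx,k)$ for all $k\in K_1$. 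Note also $K_2=K_1\cup\{1\}$. (Using $\zz_2$ below presupposes $Z_2\neq\emptyset$, which holds in the parameter range $d_2<d_2^*$ for which the construction is invoked, and in any case when $\va_2\cdot\zz_1=\bar{d}_2$ one may take $\zz_1\in Z_2$.)

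For the first assertion, $\zz_1\in\tilde f^{-1}(1)$, I would verify $\tilde u^F(\zz_1,1)\ge\tilde u^F(\zz_1,k)$ for all $k\in[n]$. At $\xx=\zz_1$, $\tilde u^F(\zz_1,1)=\vb_1\cdot(\zz_1-\zz_1)+c_1=c_1$. For $k\in K_1$, $\tilde u^F(\zz_1,k)=\mu(\zz_1,k)\le\max_{\xx\in Z_1,\,k'\in K_1}\tilde u^F(\xx,k')=c_1$ because $\zz_1\in Z_1$. The remaining comparison, with $k=2$, is the crux: since $\zz_1\in Z_1$ gives $\va_2\cdot\zz_1\ge\bar{d}_2=\va_2\cdot\zz_2$, we have $\va_2\cdot(\zz_1-\zz_2)\ge0$, and as $\vb_2=\gg_{k_2}-\tfrac{6W_2}{d_2^*-d_2}\va_2$ with $\tfrac{6W_2}{d_2^*-d_2}>0$,
\begin{align*}
\tilde u^F(\zz_1,2)
&=\vb_2\cdot(\zz_1-\zz_2)+c_2
\;\le\;\gg_{k_2}\cdot(\zz_1-\zz_2)+c_2\\
&=\tilde u^F(\zz_1,k_2)-\tilde u^F(\zz_2,k_2)+c_2
=\tilde u^F(\zz_1,k_2),
\end{align*}
using $\tilde u^F(\zz_2,k_2)=c_2$. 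Finally $k_2\in K_2=K_1\cup\{1\}$: if $k_2=1$ then $\tilde u^F(\zz_1,k_2)=c_1$, and if $k_2\in K_1$ then $\tilde u^F(\zz_1,k_2)=\mu(\zz_1,k_2)\le c_1$ as above; either way $\tilde u^F(\zz_1,2)\le c_1=\tilde u^F(\zz_1,1)$, so $1\in\tilde f(\zz_1)$.

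For the value assertion I would bound $\widetilde{V}_{d_1,d_2}^L(1)=\max_{\xx\in\tilde f^{-1}(1)}u^L(\xx,1)$ from both sides. Claim~(i) of \Cref{lmm:2nd-ref-payoff-f-supset}, applied with $d_1'=\bar{d}_1$ (admissible since $d_1<d_1^*$), gives $\va_1\cdot\xx\le\bar{d}_1$ for every $\xx\in\tilde f^{-1}(1)$. Since $M_{\{1\}}>M_{[n]}$ by \Cref{asn:aj-Mj}, $\va_1$ satisfies \eqref{eq:aj}, i.e.\ $u^L(\xx,1)=\gamma_1\,\va_1\cdot\xx+\beta_1$ with $\gamma_1>0$; hence $u^L(\xx,1)\le\gamma_1\bar{d}_1+\beta_1$ on $\tilde f^{-1}(1)$, so $\widetilde{V}_{d_1,d_2}^L(1)\le\gamma_1\bar{d}_1+\beta_1$. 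Conversely, by the first assertion $\zz_1\in\tilde f^{-1}(1)$, and $\va_1\cdot\zz_1=\bar{d}_1$, so $u^L(\zz_1,1)=\gamma_1\bar{d}_1+\beta_1$ is attained; therefore $\widetilde{V}_{d_1,d_2}^L(1)=\gamma_1\bar{d}_1+\beta_1$, as claimed.

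The only genuinely delicate point is the $k=2$ comparison above: it is the single place where the large coefficient $6W_j/(d_j^*-d_j)$ on $-\va_j$ in $\vb_j$ matters, and one must check that the non-strict inequality $\va_2\cdot(\zz_1-\zz_2)\ge0$ (rather than strict positivity, as used in the proof of \Cref{lmm:2nd-ref-payoff-f-supset}) still suffices, and correctly split on whether $k_2=1$ (landing exactly on $c_1$) or $k_2\in K_1$ (invoking the definition of $c_1$). Everything else is a direct call to Claim~(i) of \Cref{lmm:2nd-ref-payoff-f-supset} together with the identity $\va_1\cdot\zz_1=\bar{d}_1$.
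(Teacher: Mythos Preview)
Your proof is correct and follows essentially the same approach as the paper: establish $\zz_j\in\tilde f^{-1}(j)$ by comparing $\tilde u^F(\zz_j,j)=c_j$ against each $k$, handling the cross-term $k=2$ (for $j=1$) via $\va_2\cdot(\zz_1-\zz_2)\ge0$ and the case split on $k_2\in K_1\cup\{1\}$, then deduce the value equality from Claim~(i) of \Cref{lmm:2nd-ref-payoff-f-supset} together with $\va_j\cdot\zz_j=\bar d_j$. Your parenthetical about needing $Z_2\neq\emptyset$ is a fair observation the paper leaves implicit (the construction of $\tilde u^F_{d_1,d_2}$ itself presupposes both $Z_1,Z_2\neq\emptyset$), and your remark that $j=2$ is ``symmetric'' slightly undersells that it is in fact easier, since $K_2=[n]\setminus\{2\}$ already covers all competitors.
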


\begin{proof}
It suffices to prove that $\zz_j \in \tilde{f}_{d_1,d_2}^{-1}(j)$.
Indeed, this implies that 
\[
\widetilde{V}_{d_1,d_2}^L(j) \ge u^L(\zz_j, j) = \gamma_j \cdot \bar{d}_j + \beta_j
\]
given that $\va_j \cdot \zz_j = \bar{d}_j$ by definition; 
Moreover, according to \Cref{lmm:2nd-ref-payoff-f-supset}, 
$\va_j \cdot \xx \le \bar{d}_j$ for all $\xx \in \tilde{f}_{d_1,d_2}^{-1}(j)$; hence,
\[
\widetilde{V}_{d_1,d_2}^L(j) = \max_{\xx \in \tilde{f}_{d_1,d_2}^{-1}(j)} u^L(\xx, j) \le \gamma_j \cdot \bar{d}_j + \beta_j.
\]
As a result, $\widetilde{V}_{d_1,d_2}^L(j) = \gamma_j \cdot \bar{d}_j + \beta_j$.

Next, we prove that $\zz_j \in \tilde{f}_{d_1,d_2}^{-1}(j)$ to complete the proof.
By definition (i.e., \eqref{eq:tilde-uF-d1-d2}),
\begin{align}
\label{eq:payoff-claim-i}
\tilde{u}_{d_1,d_2}^F(\zz_j, j) 
= \vb_j \cdot (\zz_j - \zz_j) + c_j 
= c_j 
\ge \tilde{u}_{d_1,d_2}^F(\zz_j, k) 
\end{align}
for all $k \in K_j$.
Since $K_2 = [n] \setminus \{1\}$, this immediately implies that action $2$ is a best response to $\zz_2$, so
$\zz_2 \in \tilde{f}_{d_1,d_2}^{-1}(2)$.

To argue that $\zz_1 \in \tilde{f}_{d_1,d_2}^{-1}(1)$, since $K_1 = [n] \setminus \ots$, we need to prove in addition that $\tilde{u}_{d_1,d_2}^F(\zz_1, 1) \ge \tilde{u}_{d_1,d_2}^F(\zz_1, 2)$.
Indeed, by definition, we have $\va_2 \cdot \zz_1 \ge \bar{d}_2$ (i.e., see \eqref{eq:Z-1} and \eqref{eq:zz-1}), so similar to \eqref{eq:payoff-claim-i-eq1}, we have that $\tilde{u}_{d_1,d_2}^F(\zz_1, 2) 
\le \tilde{u}_{d_1,d_2}^F(\zz_1, k_2)$. Since $k_2 \in K_1 \cup \{1\}$, applying \eqref{eq:payoff-claim-i} then gives
\[
\tilde{u}_{d_1,d_2}^F(\zz_1, 2) 
\leq \tilde{u}_{d_1,d_2}^F(\zz_1, k_2)
\le \tilde{u}_{d_1,d_2}^F(\zz_1, 1).
\]
Consequently, $\tilde{u}_{d_1,d_2}^F(\zz_1, 1) \ge \tilde{u}_{d_1,d_2}^F(\zz_1, k)$ for all $k \in [n]$, so $\zz_1 \in \tilde{f}_{d_1,d_2}^{-1}(1)$.
This completes the proof.
\end{proof}

Finally, the following lemma presents a result similar to \Cref{lmm:G-d-eps}. The result is key to the first step of our approach, where we restrict the search space in a one-dimensional interval.

\begin{restatable}{lemma}{lmmtndrefpayoffotSSEresponse}
\label{lmm:2nd-ref-payoff-1-2-SSE-response}
Suppose that $\epsilon' > 0$ is the optimal value of LP~\eqref{lp:eps-Pj}.
It holds for all $\delta_1 \in [d_1^* - \epsilon', d_1^*]$ and $\delta_2 \in [d_2^* - \epsilon', d_2^*]$ that: 
\begin{itemize}
\item[(i)] 
$Z_j(\delta_1,\delta_2) \neq \emptyset$ for both $j \in \ots$; and

\item[(ii)] 
if at least one of $j \in \ots$ is an SSE response of $\widetilde{\calG}_{\delta_1, \delta_2}$, then the same holds for $\widetilde{\calG}_{\delta'_1, \delta'_2}$, with arbitrary $\delta'_1 \in [\bar{\delta}_1, d_1^*]$ and $\delta'_2 \in [\bar{\delta}_2, d_2^*]$, where $\bar{\delta}_j = (\delta_j + d_j^*)/2$.
\end{itemize}
\end{restatable}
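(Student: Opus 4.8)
The plan is to prove (i) and then (ii), leaning on \Cref{lmm:epsilon-p-gt-0}, \Cref{lmm:P-j-non-empty}, \Cref{lmm:2nd-ref-hat-u-L} and \Cref{lmm:2nd-ref-payoff-f-supset}. Throughout I would fix $\delta_1 \in [d_1^* - \epsilon', d_1^*]$, $\delta_2 \in [d_2^* - \epsilon', d_2^*]$, and record the elementary observation that the midpoint $\bar{\delta}_j = (\delta_j + d_j^*)/2$ lies in $[d_j^* - \epsilon'/2, d_j^*] \subseteq [d_j^* - \epsilon', d_j^*]$ for $j \in \ots$, and likewise any $\delta'_j \in [\bar{\delta}_j, d_j^*]$ and its midpoint $\bar{\delta}'_j$ stay in that same box. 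This keeps \Cref{lmm:epsilon-p-gt-0} and \Cref{lmm:P-j-non-empty} applicable to every parameter pair that comes up.

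For part (i) I would note that $Z_1(\delta_1,\delta_2) = \{\xx \in P_1(\bar{\delta}_1,\bar{\delta}_2) : \va_1\cdot\xx = \bar{\delta}_1\}$ and $Z_2(\delta_1,\delta_2) = \{\xx\in\Delta_m : \va_2\cdot\xx = \bar{\delta}_2\}$. By \Cref{lmm:epsilon-p-gt-0} the polytopes $P_1(\bar{\delta}_1,\bar{\delta}_2)$ and $P_2(d_1^*,\bar{\delta}_2)$ are nonempty, and then \Cref{lmm:P-j-non-empty} (whose hypotheses $\bar{\delta}_j \le d_j^*$ hold) identifies $\max_{\xx\in P_1(\bar{\delta}_1,\bar{\delta}_2)}\va_1\cdot\xx$ with $\bar{\delta}_1$ and $\max_{\xx\in P_2(d_1^*,\bar{\delta}_2)}\va_2\cdot\xx$ with $\bar{\delta}_2$; a maximizer of the first lies in $Z_1(\delta_1,\delta_2)$ (it meets $\va_2\cdot\xx \ge \bar{\delta}_2$ by membership in $P_1$), and a maximizer of the second lies in $Z_2(\delta_1,\delta_2)$, so both sets are nonempty.

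For part (ii) the first step is to recast the claim: since the leader's SSE payoff in $\widetilde{\calG}_{d_1,d_2}$ equals $\max_{k\in[n]}\widetilde{V}_{d_1,d_2}^L(k)$, an action $j$ is an SSE response of $\widetilde{\calG}_{d_1,d_2}$ iff $\widetilde{V}_{d_1,d_2}^L(j)$ attains this maximum. I would then establish two monotonicity facts as the parameters move toward $(d_1^*,d_2^*)$. \emph{Fact A}: $\widetilde{V}_{\delta'_1,\delta'_2}^L(j) \ge \widetilde{V}_{\delta_1,\delta_2}^L(j)$ for $j\in\ots$ --- part (i) makes $Z_j(\delta_1,\delta_2)$ and $Z_j(\delta'_1,\delta'_2)$ nonempty (the pair $(\delta'_1,\delta'_2)$ lies in the same box), so \Cref{lmm:2nd-ref-hat-u-L} evaluates the two sides as $\gamma_j\bar{\delta}_j+\beta_j$ and $\gamma_j\bar{\delta}'_j+\beta_j$, and $\bar{\delta}'_j \ge \bar{\delta}_j$ together with $\gamma_j>0$ closes it. \emph{Fact B}: $\widetilde{V}_{\delta'_1,\delta'_2}^L(k) \le \widetilde{V}_{\delta_1,\delta_2}^L(k)$ for $k\in[n]\setminus\ots$ --- applying \Cref{lmm:2nd-ref-payoff-f-supset}(iii) with parameters $(\delta'_1,\delta'_2)$ gives $\tilde{f}_{\delta'_1,\delta'_2}^{-1}(k)\subseteq f_{\delta'_1,\delta'_2}^{-1}(k)$, and applying it with parameters $(\delta_1,\delta_2)$ and the choice $d'=(\delta'_1,\delta'_2)$ (legal precisely because $\delta'_j\in[\bar{\delta}_j,d_j^*]$) gives $f_{\delta'_1,\delta'_2}^{-1}(k)\subseteq\tilde{f}_{\delta_1,\delta_2}^{-1}(k)$; composing the inclusions and maximizing $u^L(\cdot,k)$ over the smaller set yields Fact B.

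To finish, suppose $j^\circ\in\ots$ is an SSE response of $\widetilde{\calG}_{\delta_1,\delta_2}$, so $\widetilde{V}_{\delta_1,\delta_2}^L(j^\circ)\ge\widetilde{V}_{\delta_1,\delta_2}^L(k)$ for every $k\in[n]$. Then for each $k\in[n]\setminus\ots$, Facts A and B give $\max_{j\in\ots}\widetilde{V}_{\delta'_1,\delta'_2}^L(j)\ge\widetilde{V}_{\delta'_1,\delta'_2}^L(j^\circ)\ge\widetilde{V}_{\delta_1,\delta_2}^L(j^\circ)\ge\widetilde{V}_{\delta_1,\delta_2}^L(k)\ge\widetilde{V}_{\delta'_1,\delta'_2}^L(k)$, while for $k\in\ots$ the bound $\max_{j\in\ots}\widetilde{V}_{\delta'_1,\delta'_2}^L(j)\ge\widetilde{V}_{\delta'_1,\delta'_2}^L(k)$ is trivial; hence $\max_{j\in\ots}\widetilde{V}_{\delta'_1,\delta'_2}^L(j)=\max_{k\in[n]}\widetilde{V}_{\delta'_1,\delta'_2}^L(k)$ and the maximizing action in $\ots$ is an SSE response of $\widetilde{\calG}_{\delta'_1,\delta'_2}$. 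I expect no genuinely new idea is needed beyond these four lemmas; the only real care is bookkeeping --- keeping $\delta'$, $\bar{\delta}$, $\bar{\delta}'$ inside $[d_j^*-\epsilon',d_j^*]$ and making each invocation of \Cref{lmm:2nd-ref-payoff-f-supset}(iii) respect its hypotheses $d_j\le d_j^*$ and $d'_j\in[\bar{d}_j,d_j^*]$, with the double invocation in Fact B (the two parameter pairs playing swapped roles) being the trickiest point.
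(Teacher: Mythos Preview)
Your proposal is correct and follows essentially the same approach as the paper's own proof: part (i) via \Cref{lmm:epsilon-p-gt-0} and \Cref{lmm:P-j-non-empty}, and part (ii) via the two monotonicity facts you call Fact~A (from \Cref{lmm:2nd-ref-hat-u-L}) and Fact~B (from the double invocation of \Cref{lmm:2nd-ref-payoff-f-supset}(iii)). If anything, your write-up is slightly more careful than the paper's---you cite \Cref{lmm:epsilon-p-gt-0} directly for the nonemptiness of $P_j(\bar{\delta}_1,\bar{\delta}_2)$ (the paper cites \Cref{lmm:G-d-eps}, which is less precise since that lemma only yields nonemptiness indirectly through its proof), and you spell out explicitly that Fact~B requires two separate applications of \Cref{lmm:2nd-ref-payoff-f-supset}(iii) with the parameter pairs playing different roles.
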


\begin{proof}
First, consider Condition~(i).
If $\epsilon \le \epsilon'$, we have $\bar{\delta}_j \in [\delta_j, d_j^*] \subseteq [d_j^* - \epsilon', d_j^*]$.
Hence, according to \Cref{lmm:G-d-eps}, we have $P_j(\bar{\delta}_1, \bar{\delta}_2) \neq \emptyset$.
By definition, 
\[
Z_j(\delta_1, \delta_2) = P_j(\bar{\delta}_1, \bar{\delta}_2) \cap \{\xx \in \Delta_m : \va_j \cdot \xx = \bar{\delta}_j \}.
\]
It then suffices to argue that there exists $\xx \in P_j(\bar{\delta}_1, \bar{\delta}_2)$ such that $\va_j \cdot \xx = \bar{\delta}_j$.
Indeed, now that $P_j(\bar{\delta}_1, \bar{\delta}_2) \neq \emptyset$, applying \Cref{lmm:P-j-non-empty}, we get that there exists $\xx \in P_j(\bar{\delta}_1, \bar{\delta}_2)$ such that $u^L(\xx, j) = \gamma_j \cdot \bar{\delta}_j + \beta_j$, which immediately implies that $\va_j \cdot \xx = \bar{\delta}_j$.

Now consider Condition~(ii).
Now that Condition~(i) holds, we have $Z_j(\delta_1, \delta_2) \neq \emptyset$ and $Z_j(\delta'_1, \delta'_2) \neq \emptyset$ for both $j \in \ots$. 
According to \Cref{lmm:2nd-ref-hat-u-L}, we then have
\begin{align}
\label{eq:lmm:2nd-ref-payoff-1-2-SSE-response-eq1}
\widetilde{V}_{\delta_1, \delta_2}^L(j)
&= \gamma_j \cdot \bar{\delta}_j + \beta_j \nonumber \\
&\le \gamma_j \cdot \bar{\delta}'_j + \beta_j
= \widetilde{V}_{\delta'_1, \delta'_2}^L(j),
\end{align}
where $\bar{\delta}'_j = (\delta'_j + d_j^*)/2$.

Moreover, since some $j \in \ots$ is an SSE response of $\widetilde{\calG}_{\delta_1, \delta_2}$, by definition,
this means that
\begin{align}
\label{eq:lmm:2nd-ref-payoff-1-2-SSE-response-eq2}
\widetilde{V}_{\delta_1, \delta_2}^L(j) \ge \widetilde{V}_{\delta_1, \delta_2}^L(k)
\end{align}
for all $k \in [n] \setminus \ots$.
According to Claim~(iii) of \Cref{lmm:2nd-ref-payoff-f-supset} (and note that $\delta'_j \in [\bar{\delta}_j, d_j^*]$), we have $\tilde{f}_{\delta'_1, \delta'_2}^{-1}(k) \subseteq f_{\delta'_1, \delta'_2}^{-1}(k) \subseteq \tilde{f}_{\delta_1, \delta_2}^{-1}(k)$.
As a result, $\tilde{f}_{\delta'_1, \delta'_2}^{-1}(k) \subseteq \tilde{f}_{\delta_1, \delta_2}^{-1}(k)$, which implies that
\begin{align}
\label{eq:lmm:2nd-ref-payoff-1-2-SSE-response-eq3}
\widetilde{V}_{\delta'_1, \delta'_2}^L(k) \le 
\widetilde{V}_{\delta_1, \delta_2}^L(k).
\end{align}
Combining \eqref{eq:lmm:2nd-ref-payoff-1-2-SSE-response-eq1}--\eqref{eq:lmm:2nd-ref-payoff-1-2-SSE-response-eq3},
we get that
\[
\widetilde{V}_{\delta'_1, \delta'_2}^L(j)
\ge
\widetilde{V}_{\delta'_1, \delta'_2}^L(k)
\]
for all $k \in [n] \setminus \ots$.
Therefore, only actions $1$ and $2$ can be SSE responses of $\widetilde{\calG}_{\delta_1, \delta_2}$.
\end{proof}

With the above results in hand, we summarize our approach in the following theorem, which completes our task.
The approach is similar to that used to prove \Cref{thm:second-ref-pair-BR-correspondence},
where we first search for values for $d_1$ and $d_2$ that make at least one of action $1$ and $2$ an SSE response. Then we further expand the best response region (see \Cref{fig:brc-payoff}) corresponding to the action that is not yet an SSE response while fixing the region of the other action, until both actions are SSE responses. 

We remark that despite the similarity, one difference between the approaches in \Cref{thm:second-ref-pair-payoff-query} and  \Cref{thm:second-ref-pair-BR-correspondence} is that once we find a position where one of $j\in \ots$, say action $1$, is an SSE response, we cannot simply fix $d_1$ and and increase $d_2$ in the hope of finding another position where both of them are SSE responses.
This is because the way $\tilde{u}_{d_1,d_2}^F$ is constructed does {\em not} ensure that 
$\tilde{f}_{d_1, d_2}^{-1}(2) \subseteq \tilde{f}_{d_1, d'_2}^{-1}(2)$ for any $d_2 \le d'_2$ as our choice of $\zz_j$ is arbitrary from the set $Z_j$.\footnote{It might be possible to ensure this by using more carefully selected $\zz_j$. However, this introduces other complexities in the approach, as well as the presentation of it.}
Consequently, we can not ensure that $\tilde{f}_{d_1, d_2}^{-1}(k) \supseteq \tilde{f}_{d_1, d'_2}^{-1}(k)$ for the other actions $k \in [n]\setminus$, which means $\widetilde{V}_{d_1,d_2}^L(k)$ may grow when we increase $d_2$, potentially exceeding $\widetilde{V}_{d_1,d_2}^L(1)$ and $\widetilde{V}_{d_1,d_2}^L(2)$, in which case $1$ and $2$ cannot be SSE responses anymore.
To resolve this issue, once we find a position $(d_1,d_2)$ such that one of $j\in \ots$ is an SSE response, our approach is to first ``jump''  to $(\bar{d}_1, \bar{d}_2)$.
This ensures that $\tilde{f}_{d_1, d_2}^{-1}(k) \supseteq \tilde{f}_{d_1, d'_2}^{-1}(k)$ 
for all $k \in [n] \setminus \ots$ according to the structure of $\tilde{f}_{d_1, d_2}$ we demonstrated in \Cref{lmm:2nd-ref-payoff-f-supset} (and \Cref{fig:brc-payoff}).

\thmsecondrefpayoffquery*

\begin{proof}
The algorithm is similar to the approach used in \Cref{sec:2nd-ref-BR-query} and proceeds as follows.

\smallskip
\begin{itemize}
\item Step 1. 
Let $\epsilon'$ be the optimal value of LP~\eqref{lp:eps-Pj}.
We first search for an $\epsilon \in (0, \epsilon']$ such that one of $j \in \ots$ is an SSE response of $\widetilde{\calG}_{d_1^* - \epsilon, d_2^* - \epsilon}$.
\end{itemize}
\smallskip

Specifically, we use binary search to find such an $\epsilon$:
we start with the candidate value $\varepsilon = \epsilon'$ and halve $\varepsilon$ if none of $j \in \ots$ is an SSE response of $\widetilde{\calG}_{d_1^* - {\varepsilon}, d_2^* - {\varepsilon}}$.
Note that by \Cref{lmm:epsilon-p-gt-0}, $\epsilon' > 0$, so applying \Cref{lmm:2nd-ref-payoff-1-2-SSE-response} we get that $Z_j(d_1^* - \varepsilon, d_2^* - \varepsilon) \neq \emptyset$, which means that $\widetilde{\calG}_{d_1^* - {\varepsilon}, d_2^* - {\varepsilon}}$ is well defined.

Moreover, the binary search procedure will terminate in polynomial time.
This is due to the following fact.
Let $\hat{\epsilon}$ be a number satisfying the conditions stated in \Cref{lmm:G-d-eps}, which according to the same lemma exists and has a polynomial bit-size.
For any $\varepsilon \le \hat{\epsilon}$, we can show that at least one of $j \in \ots$ must be an SSE response of $\widetilde{\calG}_{d_1^* - {\varepsilon}, d_2^* - {\varepsilon}}$, so it follows immediately that the binary search procedure terminates in polynomial time.
Indeed, according to \Cref{lmm:G-d-eps}, some $j \in \ots$ is an SSE response of ${\calG}_{d_1^* - {\varepsilon}, d_2^* - {\varepsilon}}$, which means that
\[
\max_{j \in \ots} V_{d_1^* - \varepsilon, d_2^* - \varepsilon}^L(j)
\ge
\max_{k \in [n] \setminus \ots} V_{d_1^* - \varepsilon, d_2^* - \varepsilon}^L(k).
\]
We can show that the same holds with respect to $\widetilde{V}_{d_1^* - \varepsilon, d_2^* - \varepsilon}^L$.
Specifically, it suffices to show that
\begin{align}
& \widetilde{V}_{d_1^* - \varepsilon, d_2^* - \varepsilon}^L(j) 
\ge V_{d_1^* - \varepsilon, d_2^* - \varepsilon}^L(j) \quad \text{ for all } j \in \ots, 
\label{eq:thm:second-ref-pair-payoff:eq1} \\
\text{ and }\quad
& \widetilde{V}_{d_1^* - \varepsilon, d_2^* - \varepsilon}^L(k) 
\le V_{d_1^* - \varepsilon, d_2^* - \varepsilon}^L(k) \quad \text{ for all } k \in [n] \setminus \ots.
\label{eq:thm:second-ref-pair-payoff:eq2}
\end{align}
\eqref{eq:thm:second-ref-pair-payoff:eq2} follows directly by \Cref{lmm:2nd-ref-payoff-f-supset}, i.e., $\tilde{f}_{d_1^* - \varepsilon, d_2^* - \varepsilon}^{-1}(k) \subseteq f_{d_1^* - \varepsilon, d_2^* - \varepsilon}^{-1}(k)$.

To see that \eqref{eq:thm:second-ref-pair-payoff:eq1} holds, note that since $Z_j(d_1^* - {\varepsilon}, d_2^* - {\varepsilon}) \neq \emptyset$, 
by \Cref{lmm:2nd-ref-hat-u-L}, we have
\[
\widetilde{V}_{d_1^* - \varepsilon, d_2^* - \varepsilon}^L(j) 
= \gamma_j \cdot \frac{d_j^* + (d_j^* - \varepsilon)}{2} + \beta_j.
\]
Since $\varepsilon > 0$, it follows that 
\[
\widetilde{V}_{d_1^* - \varepsilon, d_2^* - \varepsilon}^L(j)
> \gamma_j \cdot (d_j^* - \varepsilon) + \beta_j
\ge V_{d_1^* - \varepsilon, d_2^* - \varepsilon}^L(j),
\]
where the second inequality follows by the fact that $\va_j \cdot \xx \le d_j^* - \varepsilon$ for all $\xx \in f_{d_1^* - \varepsilon, d_2^* - \varepsilon}^{-1}(j) = P_j(d_1^* - \varepsilon, d_2^* - \varepsilon)$.

\smallskip
\begin{itemize}
\item Step 2.
Let $\epsilon$ be the outcome of Step~1.
Let $d_j = d_j^* - {\epsilon}$ and $\bar{d}_j = (d^* + d_j)/2$ for $j \in \ots$.
We search for $\delta_j \in [\bar{d}_j, d_j^*)$ such that both actions $1$ and $2$ are SSE responses of $\widetilde{\calG}_{\delta_1, \delta_2}$.
\end{itemize}
\smallskip

Indeed, if both $1$ and $2$ are SSE responses of $\widetilde{\calG}_{\bar{d}_1, \bar{d}_2}$, then we are done.
Otherwise, suppose that only action $1$ is an SSE response.
Then we fix $\delta_1 = \bar{d}_1$ and search for a value $\delta_2$ such that
\[
(\delta_2+ d^*_2)/2 = (\widetilde{V}_{\bar{d}_1, \bar{d}_2}^L(1) - \beta_2)/\gamma_2.
\]
Given \Cref{lmm:2nd-ref-hat-u-L}, and now that $Z_j(\delta_1, \delta_2) \neq \emptyset$ according to \Cref{lmm:2nd-ref-payoff-1-2-SSE-response},
only when $\delta_2' = \delta_2$, we have $\widetilde{V}_{\delta_1, \delta_2'}^L(1) = \widetilde{V}_{\delta_1, \delta_2'}^L(2)$, in which case both actions are SSE responses of $\widetilde{\calG}_{\delta_1, \delta_2'}$.
Hence, with access to $\AER$, we can use binary search to find out $\delta_2$: whenever action $2$ is not an SSE response of $\widetilde{\calG}_{\delta_1, \delta_2'}$, we know that $\delta'_2 < \delta_2$; and whenever action $1$ is not, we know that $\delta'_2 > \delta_2$.
Similarly, if only action $2$ is an SSE response of $\widetilde{\calG}_{\bar{d}_1, \bar{d}_2}$, we fix $\delta_2 = \bar{d}_2$ and search for $\delta_1 = (\widetilde{V}_{\bar{d}_1, \bar{d}_2}^L(2) - \beta_1)/\gamma_1$. 

\smallskip
\begin{itemize}
\item Step 3.
Finally, consider $\tilde{u}_{\delta_1, \delta_2}^F$ and let $\zz_1$ and $\zz_2$ be the parameters corresponding to this payoff matrix.
It can be verified that $(\zz_1, 1)$ and $(\zz_2, 2)$ are a reference pair such that $u^L(\zz_1, 1) = u^L(\zz_2, 2) < M_\ots$.
\end{itemize}
\smallskip

Indeed, according to \Cref{lmm:2nd-ref-hat-u-L}, $j$ is a best response of $\zz_j$ for both $j \in \ots$; moreover, 
\begin{align*}
u^L(\zz_j, j) 
= \widetilde{V}_{\delta_1,\delta_2}^L(j) 
&= \gamma_j \cdot \bar{\delta}_j + \beta_j \\
&< \gamma_j \cdot d_j^* + \beta_j 
= M_\ots,
\end{align*}
where $\bar{\delta}_j \coloneqq (d_j^* + \delta_j)/2 < d_j^*$.
Now that both actions are SSE responses, we have $\widetilde{V}_{\delta_1,\delta_2}^L(1) = \widetilde{V}_{\delta_1,\delta_2}^L(2)$.
Hence, $u^L(\zz_1, 1) = u^L(\zz_2, 2) < M_\ots$. 
\end{proof}

\end{document}